\definecolor{mydarkblue}{rgb}{0,0.08,0.45}
\definecolor{antiquegold}{RGB}{205,127,50} 
\definecolor{deepskyblue}{RGB}{0,191,255} 
\definecolor{deepgreen}{RGB}{34, 139, 34}
\definecolor{deepred}{RGB}{178, 34, 34}
\definecolor{customBlue}{RGB}{24, 116, 205}
    \newcommand{\BC}{{\mathbb {C}}}
     \newcommand{\BN}{{\mathbb {N}}}
     \newcommand{\BR}{{\mathbb {R}}}
    \newcommand{\CA}{{\mathcal {A}}} 
    \newcommand{\CC}{{\mathcal {C}}}
     \newcommand{\CF}{{\mathcal {F}}}
    \newcommand{\CI}{{\mathcal {I}}} \newcommand{\CJ}{{\mathcal {J}}}
    \newcommand{\CM}{{\mathcal {M}}} 
    \newcommand{\CO}{{\mathcal {O}}} \newcommand{\CP}{{\mathcal {P}}}
    \newcommand{\CQ}{{\mathcal {Q}}} \newcommand{\CR}{{\mathcal {R}}}
    \newcommand{\CS}{{\mathcal {S}}} \newcommand{\CT}{{\mathcal {T}}}
    \newcommand{\CU}{{\mathcal {U}}} \newcommand{\CV}{{\mathcal {V}}}
    \newcommand{\CW}{{\mathcal {W}}}
\theoremstyle{plain}
\newtheorem{thm}{Theorem}[section] 
\newtheorem{cor}[thm]{Corollary}
\newtheorem{lem}[thm]{Lemma} 
\newtheorem{prop}[thm]{Proposition}
\newtheorem {rem}[thm]{Remark}
\newtheorem {assu}[thm]{Assumption}
\newtheorem {example}[thm]{Example}
\icmltitlerunning{Continuous-Time Analysis of Heavy Ball Momentum in Min-Max Games}
\begin{document}

\twocolumn[
\icmltitle{Continuous-Time Analysis of Heavy Ball Momentum in Min-Max Games}




\begin{icmlauthorlist}
\icmlauthor{Yi Feng}{yyy}
\icmlauthor{Kaito Fujii}{yyy2}
\icmlauthor{Stratis Skoulakis}{yyy3}
\icmlauthor{Xiao Wang}{yyy,yyy4}
\icmlauthor{Volkan Cevher}{yyy5}
\end{icmlauthorlist}

\icmlaffiliation{yyy}{Shanghai University of Finance and Economics, Shanghai, China}
\icmlaffiliation{yyy2}{National Institute of Informatics, Tokyo, Japan}
\icmlaffiliation{yyy3}{Aarhus University, Aarhus, Denmark}
\icmlaffiliation{yyy4}{Key Laboratory of Interdisciplinary Research of Computation and Economics, China}
\icmlaffiliation{yyy5}{EPFL, Lausanne, Switzerland}

\icmlcorrespondingauthor{Xiao Wang}{wangxiao@sufe.edu.cn}

\icmlkeywords{Machine Learning, ICML}

\vskip 0.3in
]



\printAffiliationsAndNotice{The first four authors are listed in alphabetical order.} 

\begin{abstract}

Since Polyak's pioneering work, heavy ball (HB) momentum has been widely studied in minimization. However, its role in min-max games remains largely unexplored. As a key component of practical min-max algorithms like Adam, this gap limits their effectiveness. In this paper, we present a continuous-time analysis for HB with simultaneous and alternating update schemes in min-max games. Locally, we prove \textit{smaller} momentum enhances algorithmic stability by enabling local convergence across a wider range of step sizes, with alternating updates generally converging faster. Globally, we study the implicit regularization of HB, and find \textit{smaller} momentum guides algorithms trajectories towards shallower slope regions of the loss landscapes, with alternating updates amplifying this effect. Surprisingly, all these phenomena differ from those observed in minimization, where \textit{larger} momentum yields similar effects. Our results reveal fundamental differences between HB in min-max games and minimization, and numerical experiments further validate our theoretical results.

\end{abstract}

\section{Introduction}
\label{introduction}
The Heavy Ball momentum (a.k.a. Polyak's momentum) \cite{polyak1964some} has played an important role in modern optimization for machine learning \cite{pmlr-v28-sutskever13}. In the context of minimization, the mechanism of heavy ball momentum is typically explained by modeling it as a continuous-time differential equation that describes the physical process of a mass moving through a frictional medium \cite{qian1999momentum}. Due to its intuitive appeal and the valuable insights it provides into the behavior of discrete-time algorithms, the approach of using continuous-time equations to analyze momentum-based methods in minimization has seen significant advancements in recent years \cite{JMLR:v17:15-084,JMLR:v22:19-466,shi2022understanding}.

Min-max games model optimization scenarios where two players interact such that one’s gain is the other’s loss, with the objective of finding a local Nash equilibrium. They are widely applied in tasks like GANs \cite{goodfellow2014generative} and adversarial training \cite{madry2017towards}. Due to their extreme importance, many specialized algorithms like Extra-gradient  \cite{korpelevich1976extragradient} and optimistic methods \cite{daskalakis2017training} have been proposed to solve this type of problems. However, heavy ball momentum, as a crucial component of practical min-max algorithms such as Adam, is extensively employed in solving min-max games \cite{wang2021convergence,do2022momentum}. Our understanding of heavy ball momentum in min-max games remains relatively limited compared to minimization. 

For example, \citet{gidel2019negative} observed that in min-max games, negative momentum can enhance training outcomes, which is surprising given that positive momentum is typically employed in minimization. Moreover, in min-max games, players can update their strategies simultaneously, as in repeated Rock-Paper-Scissors, or alternately, as in GANs training, where the discriminator is updated firstly, and then the generator updates based on the feedback it received from the discriminator. It turns out that this subtle difference on the update rules can greatly affect the results of learning process, and \citet{gidel2019negative} proved in the special cases of bilinear games, only combining negative momentum with the alternating updates can lead to convergence. These discoveries motivate the following questions:

\textit{What is the impact of momentum and update rules on the  learning dynamics in min-max games?}

In this paper, we investigate min-max learning dynamics through the lens of continuous-time analysis, focusing on both \textit{local} and \textit{global} aspects. Locally, we quantitatively analyze how momentum and update rules influence the convergence rates of continuous-time models near a local Nash equilibrium—a crucial first step in understanding algorithmic behavior \cite{ochs2018local}. Globally, given the limited understanding of heavy ball momentum’s properties even in minimization \cite{lessard2016analysis,wang2022provable}, we take a qualitative approach to explore global dynamics. In minimization, such global dynamics can be interpreted through implicit gradient regularization \cite{barrett2020implicit}, which describes how optimization trajectories interact with loss landscape geometry. We extend this perspective to heavy ball momentum in min-max games.

\subsection{Summary of Contributions}
In this paper, we investigate local convergence and implicit gradient regularization of heavy ball momentum in min-max games through their continuous-time models. Our main contributions can be summarized as: 
\begin{itemize}[leftmargin=*]
\item \textbf{Local Convergence}: Our results highlight two insights. Firstly, there exists a trade-off between positive and negative momentum: positive momentum achieves an optimal local convergence rate under small step sizes (Theorem~\ref{t42}), whereas negative momentum enables convergence across a broader range of step sizes (Corollary~\ref{nmics}).  This contrasts with minimization, where large positive momentum can facilitate local convergence over a broader range of step sizes \cite{o2015adaptive}. Secondly, we establish conditions under which alternating updates outperform simultaneous ones (Theorem~\ref{t43}). These conditions capture important min-max game scenarios where players' optimization dynamics have strong interaction, including bilinear games in \cite{gidel2019negative}.
\item \textbf{Implicit Gradient Regularization}: Our analysis of continuous-time models reveals that \textit{smaller} momentum and alternating updates steer algorithm trajectories toward regions with shallower slopes in the min-max loss landscape, characterized by smaller gradient norms. Interestingly, this contrasts with minimization problems, where \textit{larger} momentum facilitates convergence to shallow slope regions \cite{ghosh2023implicit}. To validate these findings, we conduct numerical experiments exploring various dynamical behaviors, including convergence to equilibrium, limit cycles, and GAN training dynamics. Furthermore, since sharp gradient regions are a major source of instability in min-max training \cite{thanh2019improving,wu2021gradient}, our results provide an explanation for the observed benefits of negative momentum in improving stability, as noted in \cite{gidel2019negative}.

\end{itemize}

\subsection{Related Works}

 \paragraph{Momentum in Games Dynamics.} 
 The role of momentum in min-max games has received considerable attention. \cite{gidel2019negative} empirically observed that negative momentum enhances min-max training and established its theoretical benefits in bilinear games.
\cite{zhang2021suboptimality} investigated the local convergence rate of \textit{simultaneous} HB in strongly-convex-strongly-concave (SCSC) min-max games, highlighting its suboptimality, the result later extended to global convergence under the same condition by \cite{JMLR:v22:20-1068}. \cite{azizian2020accelerating} demonstrated that HB accelerates local convergence when the underlying games closely resemble to minimization problems.  \cite{fang2024rapid} investigated the no-regret properties of heavy ball momentum. \cite{lotidisaccelerated} examined Nesterov's momentum in general games with strict equilibrium. Our work differs from the studies above. We do not assume SCSC conditions and instead focus on more representative min-max games, where strong player interactions fundamentally distinguish them from minimization. Furthermore, our goal is \textit{not} to prove that HB achieves a better convergence rate than other algorithms like Extra-gradient, which has been shown to be impossible by \cite{azizian2020accelerating}. Instead, we aim to explore how the parameters of momentum algorithms influence the dynamics of min-max games.

\paragraph{Alternating Updates in Game Dynamics.} Recent studies have demonstrated the benefits of alternating updates in various game-theoretic settings.
\cite{bailey2020finite,wibisono2022alternating,hait2025alternating} showed that alternating updates lead to a slower regret growth rate from an online learning perspective. \cite{feng2024prediction} studied the complex dynamical behaviors of alternating updates from the prediction accuracy perspective. \cite{rosca2021discretization} investigated different discretization drifts arising from simultaneous and alternating GDA, i.e., algorithm without momentum. \cite{yang2022faster} showed that alternating GDA achieves a faster convergence rate under PL condition. \cite{pmlr-v151-zhang22e,lee2024fundamental} proved that alternating GDA converges more rapidly than simultaneous GDA in SCSC settings.  To the best of our knowledge, the convergence separation result between simultaneous and alternating HB in bilinear games \cite{gidel2019negative} is the only known theoretical result on alternating HB. 

\paragraph{Differential Equations for Optimization.} Differential equations are powerful tools for studying optimization algorithms. In the following, we present a very brief summary, and future related works can be found in the references therein. For minimization, \cite{JMLR:v17:15-084,wibisono2016variational} purposed ODEs to investigate momentum in convex minimization. The use of SDEs for the stochastic setting was developed by \cite{li2017stochastic} and future developed by \cite{compagnoni2024adaptive} recently. For min-max games, \cite{lu2022sr} introduced methods for deriving high resolution ODEs to study algorithms' convergence behaviors. \cite{hsieh2021limits} purposed mean dynamics for Robbins–Monro algorithms to study algorithms' long-time behaviors. For the stochastic setting, recent work of \cite{compagnoni2024sdes} derived SDEs to model the behaviors of several min-max algorithms under the weak approximation framework.


\subsection{Organization}

In Section~\ref{preliminaries}, we outline the background for this paper. Section~\ref{continuous-time models} introduces our continuous-time models for momentum algorithms. In Section~\ref{theoretical results}, we present the local analysis results, followed by discussions on implicit gradient regularization in Section~\ref{discussions on implicit regularization}. Numerical experiments are integrated into the main text, while additional experiments and proofs are provided in the appendix due to space constraints.

\section{Preliminaries}
\label{preliminaries}
\paragraph{Min-Max Games.} 
A min-max game with smooth payoff function $f(x,y)$ can be formulated as
\begin{align}\label{minmaxprel}
    \min_{x \in \BR^n} \max_{y \in \BR^m} f(x,y)\tag{Min-Max Games}
\end{align}
If a pair of strategies $(x^*,y^*)$ satisfies $\forall x \in \CU,\ f(x,y^*) \ge f(x^*,y^*)$  and $\forall y \in \CV,\ f(x^*,y^*) \ge f(x^*,y)$ for some $x^*$'s neighborhood $\CU \subseteq \BR^n$ and $y^*$'s neighborhood $\CV \subseteq \BR^m$, then $(x^*,y^*)$ is called a \textit{local Nash equilibrium}, which is a standard solution concept in min-max games \cite{ratliff2016characterization}. 
 \paragraph{Heavy Ball Momentum.} In this paper, we both consider the simultaneous heavy ball momentum method: 
\begin{align*}\label{sim}
     x_{n+1} &= x_n - h \nabla_x f(x_n,y_n) + \beta (x_n - x_{n-1}) \\
     y_{n+1} &= y_n + h \nabla_y f(\textcolor{customBlue}{x_n},y_n) + \beta (y_n - y_{n-1}) \tag{\textcolor{customBlue}{Sim-HB}}
\end{align*}
and the alternating heavy ball momentum method:
\begin{align*}\label{alt}
     x_{n+1} &= x_n - h \nabla_x f(x_n,y_n) + \beta (x_n - x_{n-1}) \\
     y_{n+1} &= y_n + h \nabla_y f(\textcolor{deepred}{x_{n+1}},y_n) + \beta (y_n - y_{n-1}) \tag{\textcolor{deepred}{Alt-HB}}
\end{align*}
Here $\beta \in (-1,1)$ is the momentum parameter and $h>0$ is the step size. We refer to $\beta<0$ as negative momentum. We use the word ``smaller momentum'' to mean that the value of $\beta$ is smaller, not its absolute value.

\paragraph{Local Behaviors of Dynamical System.}
For a system of differential equations $\dot{x}(t)=g(x)$ where $g:\mathbb{R}^n\rightarrow\mathbb{R}^n$ is a differentiable function, let $\tilde{x} \in \BR^n$ satisfy $g(\tilde{x}) = 0$. Then the local behavior of the system near $\tilde{x}$ is determined by the eigenvalues of Jacobian $\CJ_g(\tilde{x}) = \left( \frac{\partial g_i}{\partial x_j} (\tilde{x})\right)_{i,j}$. In particular, we have the following standard result:
\begin{prop}[\citet{muehlebach2021optimization}]\label{Jacobiancr}
If $\alpha = \max_{\lambda \in \mathrm{Sp}(\CJ_g)} \Re(\lambda) < 0$, then there exist constants $\delta > 0$ and $C > 0$ such that for all initial conditions satisfying $\lVert x(0) - \tilde{x} \rVert \le \delta$, we have
$\lVert x(t) - \tilde{x} \rVert \le C e^{t \alpha}, \forall t>0.$
\end{prop}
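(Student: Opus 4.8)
The plan is to run Lyapunov's indirect (linearization) method: near $\tilde{x}$ the system is a lower-order perturbation of the constant-coefficient system $\dot{u}=Au$ with $A:=\CJ_g(\tilde{x})$, and I will dominate that perturbation using a quadratic Lyapunov function tailored to $A$. Set $u(t):=x(t)-\tilde{x}$. Since $g$ is differentiable and $g(\tilde{x})=0$, Taylor's theorem gives $\dot{u}=g(\tilde{x}+u)=Au+R(u)$ with $\lVert R(u)\rVert=o(\lVert u\rVert)$ as $u\to 0$; equivalently, for every $\eta>0$ there is $\rho_{\eta}>0$ with $\lVert R(u)\rVert\le\eta\lVert u\rVert$ whenever $\lVert u\rVert\le\rho_{\eta}$. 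Now fix any $\alpha'$ with $\alpha<\alpha'<0$. Because every eigenvalue of $A-\alpha' I$ has strictly negative real part, the Lyapunov equation $(A-\alpha' I)^{\top}P+P(A-\alpha' I)=-I$ has a unique symmetric positive definite solution $P$, which I rewrite as $A^{\top}P+PA=2\alpha' P-I$.

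Then I would take $V(u):=u^{\top}Pu$ and differentiate along trajectories: $\dot V=u^{\top}(A^{\top}P+PA)u+2u^{\top}PR(u)=2\alpha' V-\lVert u\rVert^{2}+2u^{\top}PR(u)$. Choosing $\eta:=1/(2\lVert P\rVert)$ and restricting to the ball $\lVert u\rVert\le\rho_{\eta}$, the last two terms sum to at most $(2\lVert P\rVert\eta-1)\lVert u\rVert^{2}\le 0$, so $\dot V\le 2\alpha' V$ there. A trapping argument then makes this differential inequality global in time: pick $c>0$ small enough that the sublevel set $\{V\le c\}$ lies inside $\{\lVert u\rVert\le\rho_{\eta}\}$ (e.g.\ $c<\lambda_{\min}(P)\rho_{\eta}^{2}$), and pick $\delta>0$ small enough that $\lVert u(0)\rVert\le\delta$ implies $V(u(0))\le c$ (e.g.\ $\delta\le\sqrt{c/\lambda_{\max}(P)}$). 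Since $\dot V<0$ on $\{0<V\le c\}$, this set is forward invariant, so $\dot V\le 2\alpha' V$ holds for all $t\ge 0$, whence $V(u(t))\le V(u(0))e^{2\alpha' t}$ and therefore $\lVert u(t)\rVert\le\sqrt{\lambda_{\max}(P)/\lambda_{\min}(P)}\,\lVert u(0)\rVert\,e^{\alpha' t}\le Ce^{\alpha' t}$ with $C:=\sqrt{\lambda_{\max}(P)/\lambda_{\min}(P)}\,\delta$.

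I expect the main obstacle to be the trapping step: verifying that trajectories starting in a suitable sublevel set of $V$ never leave the ball on which the nonlinear remainder is controlled, which is exactly what upgrades $\dot V\le 2\alpha' V$ from a statement valid only until first exit to one valid for all $t\ge 0$. A secondary, more cosmetic point is that this scheme yields the decay rate $\alpha'$ for an arbitrary $\alpha'\in(\alpha,0)$ rather than $\alpha$ itself, because a nontrivial Jordan block at a dominant eigenvalue forces a polynomial prefactor into $\lVert e^{tA}\rVert$; when the eigenvalues of $A$ attaining real part $\alpha$ are semisimple one can instead build a norm directly from a Jordan/Schur basis in which $e^{tA}$ contracts at exactly rate $\alpha$, and otherwise the statement should be read with $\alpha$ replaced by any fixed $\alpha'\in(\alpha,0)$, which is all the subsequent local-convergence results use. (An equivalent route avoids $V$ altogether: combine $\lVert e^{tA}\rVert\le Me^{\alpha' t}$ with the variation-of-constants formula $u(t)=e^{tA}u(0)+\int_{0}^{t}e^{(t-s)A}R(u(s))\,ds$ and Grönwall's inequality on $e^{-\alpha' t}\lVert u(t)\rVert$, again after the same bootstrap to stay inside $\{\lVert u\rVert\le\rho_{\eta}\}$.)
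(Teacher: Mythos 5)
The paper never proves this proposition: it is imported verbatim as a standard linearization result with a citation to \citet{muehlebach2021optimization}, so there is no in-paper argument to compare yours against. Your Lyapunov-equation proof is the standard and correct way to establish it: the Taylor-remainder bound $\lVert R(u)\rVert\le\eta\lVert u\rVert$ on a small ball, the solution $P$ of the shifted Lyapunov equation, the choice $\eta=1/(2\lVert P\rVert)$, and the forward invariance of a sublevel set of $V$ (the trapping step you correctly identify as the crux) together give the exponential estimate; the variation-of-constants/Gr\"onwall route you sketch in parentheses is an equally valid alternative. Your caveat about the rate is also well taken and is, if anything, a point the cited statement glosses over: with only differentiability of $g$ the argument yields decay at any rate $\alpha'\in(\alpha,0)$, and the exact rate $e^{t\alpha}$ requires both that the eigenvalues of $\CJ_g(\tilde{x})$ attaining the spectral abscissa be semisimple (a nontrivial Jordan block forces a polynomial prefactor that cannot be absorbed into a constant $C$ valid for all $t>0$) and that the remainder be $O(\lVert u\rVert^{1+\gamma})$ for some $\gamma>0$, which does hold in this paper since $f$ is smooth, so $g$ is at least $C^2$ near the equilibrium. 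None of the paper's downstream uses of the proposition depend on attaining $\alpha$ exactly rather than any $\alpha'\in(\alpha,0)$ — only the sign of $\max_{\lambda\in\mathrm{Sp}(\CJ_g)}\Re(\lambda)$ is ever used — so your proof, read with $\alpha'$ in place of $\alpha$, fully supports everything the paper builds on this result.
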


\paragraph{Notation.} 
We denote the set of real numbers by $\BR$ and the set of complex numbers by $\BC$. For any matrix $\CM \in \BR^{d \times d}$ or $\CM \in \BC^{d \times d}$, we use $\mathrm{Sp}(\CM)$ to represent the set of its eigenvalues in $\BC$. Given $\lambda \in \mathrm{Sp}(\CM)$, we use $\Re(\lambda)$ and $\Im(\lambda)$ to denote the real and imaginary parts of $\lambda$, respectively. The notation $\CM \preccurlyeq \textbf{0}$ or $\CM \succcurlyeq \textbf{0}$ indicates that $\CM$ is a negative or positive semi-definite matrix, respectively. We use $\mathrm{EigVec}(\CM)$ to denote the eigenspace of $\CM$, and $\mathrm{Ker}(\CM)$ to represent the kernel space of $\CM$, i.e., $\mathrm{Ker}(\CM) = \{ z \in \BC^d \mid \CM z = \textbf{0}\}$. 

\section{Continuous-Time Models}
\label{continuous-time models}
Continuous-time models have proven to be effective in analyzing momentum-based algorithms in minimization problems. We extend this approach to heavy ball momentum algorithms in min-max games. 
We firstly define the modified loss function $\CF(x,y)$ as follows:
\begin{align*}
    \CF(x,y) =&\left( \frac{1}{1-\beta} \right)  f(x,y) \\
    &+\frac{h  (1+\beta)}{4(1-\beta)^3} \left( \lVert \nabla_{x} f(x,y) \lVert^2 - \lVert \nabla_{y} f(x,y) \lVert^2 \right).
\end{align*}
For \ref{sim}, the continuous-time models are given by:
\begin{align*}\label{continuousSim}
\dot{x}(t) = -  \nabla_{x}\CF(x,y),\ \ \ 
\dot{y}(t) = \nabla_{y}\CF(x,y).\tag{\textcolor{customBlue}{Continuous Sim-HB}}
\end{align*}
For \ref{alt}, the continuous-time models are:
\begin{align*}\label{continuousAlt}
\dot{y}(t) &= \nabla_{y}
\left(\CF(x,y) - \frac{h}{2(1-\beta)^2} \lVert \nabla_x f(x,y) \lVert^2 \right), \\
\dot{x}(t) &= -  \nabla_{x}\CF(x,y).\tag{\textcolor{deepred}{Continuous Alt-HB}}
\end{align*}


It may seem surprising that our equations are first-order rather than the commonly used second-order equations for modeling momentum in minimization \cite{JMLR:v17:15-084}. However, we chose not to adopt their approach for two reasons.  First, their method couples the step sizes and momentum parameter, which hinders the independent analysis of the impact of each term. Second, their approach requires $\beta$ to be positive to align with physical intuition, whereas our goal is to also explore algorithms with negative momentum.

The derivation of our equations is inspired by \cite{ghosh2023implicit}, which employed continuous-time models to study the implicit regularization of the heavy ball method in minimization. However, our equations differ significantly due to the presence of two interacting players and the distinct update sequences inherent to min-max games.

We provide analysis of the approximation error between continuous-time equations and original algorithms in the following Theorem.
The proof is deferred to Appendix~\ref{plocalerror}.

\begin{thm}\label{localerror}
    Let $f(x,y)$ be a smooth function with bounded  derivatives up to the fourth order.
    Then for any given finite time horizon, the solution trajectories $\left(x(t),y(t) \right)$ of \ref{continuousSim} (resp. \ref{continuousAlt}) is locally $\CO(h^3)$-close to the trajectories of \ref{sim} (resp. \ref{alt}) after $ 4\log h/\log \lvert \beta \lvert$ steps. 
\end{thm}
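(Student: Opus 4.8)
The plan is to run a \emph{backward error analysis} (a modified-equation argument) adapted to the two-step, and possibly alternating, structure of heavy ball. The first step is to turn the two-step recursion into a first-order one in an augmented variable: for \ref{sim} set $v_n:=x_n-x_{n-1}$, so that $v_{n+1}=\beta v_n-h\nabla_x f(x_n,y_n)$ and, unrolling,
\begin{equation*}
v_{n+1}=\beta^{\,n+1}v_0-h\sum_{k=0}^{n}\beta^{\,n-k}\nabla_x f(x_k,y_k),
\end{equation*}
with the analogous identity for $w_n:=y_n-y_{n-1}$. This already accounts for the ``after $4\log h/\log\lvert\beta\rvert$ steps'' qualifier: writing $N_0:=4\log h/\log\lvert\beta\rvert$ one has $\lvert\beta\rvert^{\,n}\le\lvert\beta\rvert^{\,N_0}=h^{4}$ for $n\ge N_0$, so the transient $\beta^{\,n+1}v_0$ — the parasitic second mode inherent to any two-step scheme — falls below the target accuracy and the velocity becomes, up to $\CO(h^4)$, a fixed functional of the current state $(x_n,y_n)$ and the recent gradient history.

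The technical heart is the expansion of the geometrically weighted sum $\sum_{k\le n}\beta^{\,n-k}\nabla_x f(x_k,y_k)$ about $(x_n,y_n)$. Since each step moves the iterate by $\CO(h)$ and the weights $\beta^{\,n-k}$ decay geometrically — so the effective memory length is $\CO(1)$ — one substitutes $x_{n-j}-x_n=-\sum_{i=1}^{j}v_{n-j+i}$ with $v_{m}=-\tfrac{h}{1-\beta}\nabla_x f+\CO(h^2)$ (and the sign-flipped version in $y$, reflecting the max-player's ascent) and Taylor-expands to second order. Resumming the elementary series $\sum_{j\ge0}\beta^{j}=\tfrac1{1-\beta}$ and $\sum_{j\ge0}j\beta^{j}=\tfrac{\beta}{(1-\beta)^2}$ yields
\begin{equation*}
x_{n+1}-x_n=-\frac{h}{1-\beta}\nabla_x f-\frac{h^2\beta}{(1-\beta)^3}\bigl(\nabla^2_{xx}f\,\nabla_x f-\nabla^2_{xy}f\,\nabla_y f\bigr)+\CO(h^3),
\end{equation*}
together with the mirror identity for $y_{n+1}-y_n$; the quadratic Taylor terms and the $\CO(h^2)$ errors in $v_m$ only affect the $\CO(h^3)$ remainder because $\sum_{j}j^2\beta^{j}$ is still finite. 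For \ref{alt} one repeats this, additionally expanding $\nabla_y f(x_{n+1},y_n)$ about $(x_n,y_n)$ via the $x$-increment just obtained; this adds an explicitly known $\CO(h^2)$ term $-\tfrac{h^2}{2(1-\beta)^2}\nabla_y\lVert\nabla_x f\rVert^2$ to the $y$-update.

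It remains to match this with the continuous model and close the argument. Expanding the time-$h$ flow map of \ref{continuousSim} (resp.\ \ref{continuousAlt}) to second order, $z(h)=z_0+h\dot z_0+\tfrac{h^2}{2}\ddot z_0+\CO(h^3)$ with $z=(x,y)$, $\dot z_0$ the modified vector field, and $\ddot z_0$ its derivative along the flow, a short computation shows that it agrees term-by-term with the discrete increment above: the coefficient $\tfrac{h(1+\beta)}{4(1-\beta)^3}$ inside $\CF$ (and the extra $-\tfrac{h}{2(1-\beta)^2}\lVert\nabla_x f\rVert^2$ term in the $y$-potential for the alternating case) is precisely the unique choice for which the $\tfrac{h^2}{2}\ddot z_0$ term, together with the $\CO(h)$ part of $-h\nabla\CF$, reproduces the $-\tfrac{h^2\beta}{(1-\beta)^3}$ coefficient found above. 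Hence, after the burn-in, one discrete step equals the time-$h$ flow up to $\CO(h^3)$; propagating this local estimate over any fixed finite time horizon is then a standard discrete Gronwall argument, and the hypothesis that $f$ has bounded derivatives up to fourth order is exactly what controls the Lipschitz constants of the modified vector fields and their Jacobians (the $\CO(h^2)$ correction already contains third derivatives of $f$) and hence the error accumulation.

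I expect the main obstacle to be the bookkeeping in the middle step: $v_{n+1}$ is defined self-referentially through past positions, which themselves depend on past velocities, so the Taylor expansion has to be organized as a fixed-point expansion in powers of $h$ with every remainder tracked uniformly in $j$ against the geometric weights; in the alternating case the additional coupling through $x_{n+1}$ in the $y$-update nests this one layer deeper. Everything else — resummation of geometric series and triangle-inequality estimates — is routine.
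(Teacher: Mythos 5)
Your argument is correct in outline and the key computations check out: the resummed one-step increment $-\frac{h}{1-\beta}\nabla_x f-\frac{h^2\beta}{(1-\beta)^3}\bigl(\nabla^2_{x}f\,\nabla_x f-\nabla_{xy}f\,\nabla_y f\bigr)$ does agree with the second-order expansion of the time-$h$ flow of \ref{continuousSim} (the coefficient identity $\frac{1}{2(1-\beta)^2}-\frac{1+\beta}{2(1-\beta)^3}=-\frac{\beta}{(1-\beta)^3}$ is exactly the matching you invoke), and the extra $-\frac{h^2}{2(1-\beta)^2}\nabla_y\lVert\nabla_x f\rVert^2$ contribution reproduces the alternating model. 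Your organization, however, differs genuinely from the paper's. The paper posits on each interval $[t_n,t_{n+1})$ an $n$-dependent modified ODE with unknown coefficient functions, matches its Taylor expansion against the two-step relation $\bigl(x(t_{n+1})-x(t_n)\bigr)-\beta\bigl(x(t_n)-x(t_{n-1})\bigr)=-h\nabla_x f+\CO(h^3)$, solves the resulting recurrences (yielding $\nabla_x G_n=\frac{1-\beta^{n+1}}{1-\beta}\nabla_x f$ and the coefficients $\gamma_n,\delta_n$ of Lemmas~\ref{mainlemmat31} and~\ref{mainlemmat32}), and then shows in Lemmas~\ref{tildeandthmclose} and~\ref{tildeandthmclose2} that after the burn-in this $n$-indexed family differs from the $n$-independent models by only $\CO(h^3)$ per step, since the $n$-dependence enters solely through $\beta^n$. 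You instead work entirely on the discrete side first: unroll the two-step recursion into a geometric sum over past gradients, discard the parasitic mode using $\lvert\beta\rvert^{N_0}=h^{4}$, resum the series into an explicit modified one-step map, and only then compare with the ODE flow map. Your route avoids the piecewise family of ODEs and the coefficient recurrences, at the price of the bootstrap you yourself flag (the ansatz $v_m=-\frac{h}{1-\beta}\nabla_x f+\CO(h^2)$ must be extracted from the same unrolled sum, and pre-burn-in history must be controlled against the weights $\beta^{j}$, which works because those indices carry weights $\beta^{j}\le\beta^{\,n-N_0'}$); the paper's recurrences absorb that bookkeeping automatically but carry the $n$-indexed family and a separate convergence lemma. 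One caveat: the theorem, and the paper's proof, assert only the \emph{local} one-step $\CO(h^3)$ bound after the burn-in, so your closing Gronwall step is unnecessary for the stated claim; moreover, if it were meant to give $\CO(h^3)$ closeness uniformly over a fixed horizon it would not, since accumulating a local $\CO(h^3)$ error over $\CO(1/h)$ steps yields only a global $\CO(h^2)$ bound.
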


To highlight the advantages of our $\mathcal{O}(h^3)$ local error models, we compare them with existing continuous-time approximations of the gradient descent-ascent (GDA) method ($\beta=0$ case). The $\mathcal{O}(h^2)$ model proposed by \cite{bailey2020finite} introduces larger local errors, failing to accurately capture GDA dynamics. For example, while simultaneous GDA is known to diverge in bilinear games, their model incorrectly predicts cyclic behavior. In contrast, as illustrated in Figure~\ref{s23s23}, the $\mathcal{O}(h^3)$ models address these shortcomings. Future experiments ar presented in Appendix \ref{mectbce}.

\begin{figure}[t]
    \centering
    \includegraphics[width=0.4\textwidth]{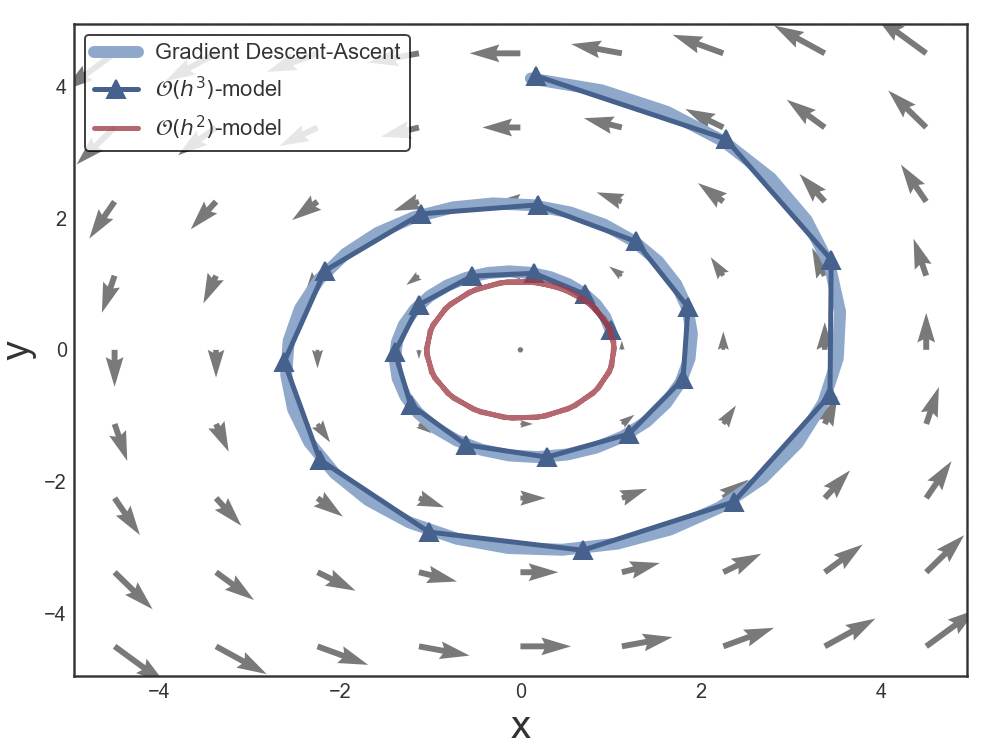}
    \caption{Comparison of $\CO(h^3)$ and $\CO(h^2)$-local error models with payoff function $f(x,y) = xy$.}
    \label{s23s23}
\end{figure}

Our continuous-time models also have strong consistency with prior works. First, \cite{rosca2021discretization} purposed continuous-time equations to model the dynamics of GDA algorithm. Notably, when $\beta =0$, our equations reduce to their formulation. Second, \cite{gidel2019negative} proved that in bilinear games, \ref{alt} with a negative $\beta$ converges to equilibrium at an exponential rate, whereas \ref{sim} diverges regardless the choice of $h$ and $\beta$. In Proposition~\ref{ft}, we show that our continuous-time equations yield the same conclusion. The proof is provided in Appendix~\ref{pft}.

\begin{prop}[Consistency with \citet{gidel2019negative}]\label{ft}
In matrix games $f(x,y) = x^{\top}Ay$,  \ref{continuousSim} diverges at an exponential rate, regardless of the values of $h$ and $\beta$. In contrast, \ref{continuousAlt} converges exponentially when $\beta$ is negative and $h$ is sufficiently small.
\end{prop}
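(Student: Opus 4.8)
The plan is to use that for a bilinear payoff $f(x,y)=x^{\top}Ay$ both continuous-time models become \emph{linear} autonomous systems $\dot z = M z$ with $z=(x,y)$, so everything is decided by the spectrum of $M$: divergence at an exponential rate from every initial condition outside $\ker M$ is equivalent to $M$ having an eigenvalue with positive real part, while exponential convergence (modulo $\ker M$) is equivalent to every nonzero eigenvalue of $M$ having negative real part. First I would substitute $\nabla_x f = Ay$, $\nabla_y f = A^{\top}x$, $\lVert\nabla_x f\rVert^2 = y^{\top}A^{\top}Ay$ and $\lVert\nabla_y f\rVert^2 = x^{\top}AA^{\top}x$ into $\CF$ and differentiate. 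For \ref{continuousSim} this yields
\begin{align*}
\dot x &= \tfrac{h(1+\beta)}{2(1-\beta)^3}\,AA^{\top}x - \tfrac{1}{1-\beta}\,Ay, \\
\dot y &= \tfrac{1}{1-\beta}\,A^{\top}x + \tfrac{h(1+\beta)}{2(1-\beta)^3}\,A^{\top}Ay,
\end{align*}
whereas \ref{continuousAlt} has the same $x$-equation but $\dot y = \tfrac{1}{1-\beta}A^{\top}x + \tfrac{h(3\beta-1)}{2(1-\beta)^3}A^{\top}Ay$, where the coefficient $\tfrac{h(3\beta-1)}{2(1-\beta)^3} = \tfrac{h(1+\beta)}{2(1-\beta)^3} - \tfrac{h}{(1-\beta)^2}$ records the extra $-\tfrac{h}{2(1-\beta)^2}\lVert\nabla_x f\rVert^2$ term.

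Next I would decouple both systems through the singular value decomposition $A = U\Sigma V^{\top}$, passing to coordinates $\tilde x = U^{\top}x$, $\tilde y = V^{\top}y$. This diagonalizes $AA^{\top}$, $A^{\top}A$ and the coupling matrix $A$ at once, so the dynamics split into invariant $2\times2$ blocks, one per singular value $\sigma_i>0$; the remaining coordinates ($x\in\ker A^{\top}$, $y\in\ker A$) are stationary and form exactly the equilibrium set $\ker M$. The $i$-th block of \ref{continuousSim} is $\left(\begin{smallmatrix} a_i & -b_i \\ b_i & a_i\end{smallmatrix}\right)$ with $a_i = \tfrac{h(1+\beta)\sigma_i^2}{2(1-\beta)^3}$ and $b_i = \tfrac{\sigma_i}{1-\beta}$, and the $i$-th block of \ref{continuousAlt} is $\left(\begin{smallmatrix} a_i & -b_i \\ b_i & c_i\end{smallmatrix}\right)$ with $c_i = \tfrac{h(3\beta-1)\sigma_i^2}{2(1-\beta)^3}$.

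The conclusions then follow from elementary two-dimensional spectral analysis. For \ref{continuousSim} the block eigenvalues are $a_i \pm i b_i$; since $h>0$ and $\beta\in(-1,1)$, $a_i>0$ for every $\sigma_i>0$ no matter what $h$ and $\beta$ are, so each nontrivial block blows up exponentially and the system diverges exponentially from any initial condition outside $\ker M$. For \ref{continuousAlt} I would invoke that a real $2\times2$ matrix has both eigenvalues in the open left half-plane iff its trace is negative and its determinant positive. Here $\mathrm{tr}=a_i+c_i=\tfrac{2h\beta\sigma_i^2}{(1-\beta)^3}$, negative precisely because $\beta<0$ (and $1-\beta>0$); and $\det=a_ic_i+b_i^2=\tfrac{h^2(1+\beta)(3\beta-1)\sigma_i^4}{4(1-\beta)^6}+\tfrac{\sigma_i^2}{(1-\beta)^2}$, whose first term is $\CO(h^2)$ and (for $\beta\in(-1,0)$) negative while the second is an $h$-independent positive constant, hence $\det>0$ as soon as $h<\tfrac{2(1-\beta)^2}{\sigma_{\max}\sqrt{(1+\beta)(1-3\beta)}}$, where $\sigma_{\max}$ is the largest singular value of $A$. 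Below this threshold every block is linearly stable, and since each block is linear its solution decays like $e^{t\,\Re(\lambda_i)}$, giving exponential convergence.

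I do not expect a genuine obstacle, since the whole argument is linear-algebraic; the step needing the most care is the first one — differentiating $\CF$ and the alternating correction correctly, tracking the sign of $3\beta-1$, and keeping the block entries $a_i,b_i,c_i$ straight so that the trace/determinant conditions land exactly on $\beta<0$. One point worth stating explicitly in the write-up is that when $A$ is singular, the notions of divergence and convergence are understood modulo the $M$-invariant equilibrium subspace $\ker A^{\top}\times\ker A$, on which both flows are trivial; when $A$ is invertible this subspace collapses to $\{0\}$ and the statements hold verbatim.
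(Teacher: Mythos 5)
Your proposal is correct and follows essentially the same route as the paper: both write the bilinear dynamics as a linear system, use the SVD of $A$ to reduce to $2\times 2$ blocks indexed by singular values, and read off the sign of the real parts (the paper computes the block eigenvalues explicitly via the quadratic formula, whereas you invoke the equivalent trace/determinant criterion for the alternating case). Your explicit treatment of the $\ker A^{\top}\times\ker A$ degeneracy and the quantitative threshold on $h$ are small refinements the paper leaves implicit, but the substance is the same.
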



\section{Analysis of Local Dynamics}
\label{theoretical results}

In this section, we analyze the local behavior of the continuous-time equations introduced in Section~\ref{continuous-time models} near equilibrium. Additional numerical experiments validating our theoretical findings and demonstrating the accuracy of continuous-time models in predicting discrete-time dynamics are provided in Appendix~\ref{moreexperiments}.


\subsection{Jacobian of Continuous-time Models}

This section examines the Jacobian of the continuous-time models. As demonstrated in Proposition~\ref{Jacobiancr}, the Jacobian matrices play a crucial role in analyzing the local behavior of these equations. We start by presenting the Jacobian for the  Gradient Descent-Ascent flow, given by:
\begin{align}\label{GF}
   \dot{x} = - \nabla_x f(x,y),\ \dot{y} =  \nabla_y f(x,y) \tag{Gradient Flow}
\end{align}
\begin{lem}\label{yakebi}The Jacobian of \ref{GF} at a local equilibrium point $(x^*,y^*)$ is given by:
\begin{align*}\label{Jacobian}
    \CJ =   \begin{bmatrix}
        -\nabla^2_{x}f(x^*,y^*) & -\nabla_{xy}f(x^*,y^*) \\
        \\
        \nabla_{yx}f(x^*,y^*) & \nabla^2_{y}f(x^*,y^*)
    \end{bmatrix}\tag{Jacobian}
\end{align*}
By the definition of local Nash equilibrium, we also have
\begin{align*}
    \nabla^2_{x}f(x^*,y^*) \succcurlyeq \textbf{0},\  \nabla^2_{y}f(x^*,y^*) \preccurlyeq \textbf{0}.
\end{align*}
\end{lem}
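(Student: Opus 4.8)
The plan is to verify the two assertions in Lemma~\ref{yakebi} directly from the definition of the \ref{GF} vector field, treating the Jacobian computation and the semidefiniteness conditions separately. Write $g(x,y) = (-\nabla_x f(x,y),\ \nabla_y f(x,y))$, so that the Gradient Flow reads $\dot z = g(z)$ with $z = (x,y)$. First I would compute $\CJ_g$ block by block: the $(x,x)$ block is $\partial_x(-\nabla_x f) = -\nabla^2_x f$, the $(x,y)$ block is $\partial_y(-\nabla_x f) = -\nabla_{xy} f$, the $(y,x)$ block is $\partial_x(\nabla_y f) = \nabla_{yx} f$, and the $(y,y)$ block is $\partial_y(\nabla_y f) = \nabla^2_y f$. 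Evaluating at $(x^*,y^*)$ gives exactly the displayed block matrix \ref{Jacobian}. This is a routine application of the chain rule and needs only that $f$ is $C^2$ near $(x^*,y^*)$, which is covered by the smoothness assumption carried throughout the paper; I would state it in one or two lines.

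The second part is the extraction of the first- and second-order optimality conditions from the definition of local Nash equilibrium. Since $(x^*,y^*)$ is a local Nash equilibrium, $x \mapsto f(x,y^*)$ has a local minimum at $x = x^*$ over the neighborhood $\CU$, and $y \mapsto f(x^*,y)$ has a local maximum at $y = y^*$ over $\CV$. By the standard second-order necessary conditions for unconstrained local extrema of a $C^2$ function, the local minimum in $x$ forces $\nabla_x f(x^*,y^*) = 0$ and $\nabla^2_x f(x^*,y^*) \succcurlyeq \mathbf{0}$, while the local maximum in $y$ forces $\nabla_y f(x^*,y^*) = 0$ and $\nabla^2_y f(x^*,y^*) \preccurlyeq \mathbf{0}$. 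The vanishing of the partial gradients also confirms that $(x^*,y^*)$ is indeed a stationary point of $g$, so that Proposition~\ref{Jacobiancr} applies there; I would note this in passing even though the lemma statement only records the semidefiniteness inequalities.

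There is essentially no hard step here: the result is a bookkeeping lemma that sets up notation and records the linearization of the game dynamics together with the constraints that equilibrium imposes on the Hessian blocks. If anything merited a word of care, it would be the implicit assumption that the neighborhoods $\CU, \CV$ in the definition of local Nash equilibrium are open (or at least contain $x^*, y^*$ in their interior) so that the unconstrained second-order conditions genuinely apply rather than their constrained (KKT-type) analogues; I would either take this as part of the definition or remark that we restrict attention to interior equilibria, which is the relevant case for the subsequent local convergence analysis. Beyond that, the proof is a direct computation and a citation of elementary calculus facts, so I would keep it to a short paragraph in the appendix.
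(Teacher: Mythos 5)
Your proposal is correct and matches the paper's treatment: the paper states this lemma without a written proof, implicitly relying on exactly the block-by-block Jacobian computation and the standard second-order necessary conditions for the local min in $x$ and local max in $y$ that you spell out (the latter is restated verbatim in the appendix when introducing the quadratic approximation). Your side remark about the neighborhoods $\CU,\CV$ needing to contain $(x^*,y^*)$ in their interior is a reasonable point of care, and is consistent with the paper's unconstrained setting over $\BR^n\times\BR^m$.
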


We now present the Jacobian of \ref{continuousSim} and \ref{continuousAlt} at equilibrium. The detailed calculations are deferred to Appendix~\ref{pjsja}.

\begin{prop}[Jacobian for simultaneous updates]\label{jsjsjs} Let $\mathcal{J}_S$ denote the Jacobian of \ref{continuousSim} at the local equilibrium $(x^*, y^*)$. Then, $\mathcal{J}_S$ can be expressed as a quadratic polynomial in terms of \eqref{Jacobian}:
\begin{align*}
    \CJ_S = \left(\frac{1}{1-\beta} \CI - \frac{h(1+\beta)}{2(1-\beta)^3} \CJ \right) \cdot \CJ
\end{align*}
where $\CI$ denotes the identity matrix.
\end{prop}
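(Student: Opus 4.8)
The plan is to compute the Jacobian of the vector field defining \ref{continuousSim} directly and to recognize it as a quadratic polynomial in \eqref{Jacobian}. Write $g(x,y) = \big(-\nabla_x\CF(x,y),\, \nabla_y\CF(x,y)\big)$ for the right-hand side, and abbreviate $c = \tfrac{1}{1-\beta}$ and $d = \tfrac{h(1+\beta)}{2(1-\beta)^3}$. Applying the product rule to the squared-gradient terms of $\CF$ (using $\nabla_x\lVert\nabla_x f\rVert^2 = 2(\nabla^2_x f)\nabla_x f$ and the three analogous identities) yields
\begin{align*}
    \nabla_x\CF &= c\,\nabla_x f + d\big((\nabla^2_x f)\,\nabla_x f - (\nabla_{xy} f)\,\nabla_y f\big), \\
    \nabla_y\CF &= c\,\nabla_y f + d\big((\nabla_{yx} f)\,\nabla_x f - (\nabla^2_y f)\,\nabla_y f\big).
\end{align*}

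Next I would differentiate $g$ once more in $(x,y)$. Each summand on the right is a product of a Hessian block (a matrix-valued function) with a gradient factor, so differentiation produces two kinds of terms: those in which the Hessian block is differentiated — carrying a third derivative of $f$ multiplied by an \emph{undifferentiated} gradient — and those in which the gradient factor is differentiated, producing a product of two Hessian blocks. Now evaluate at $(x^*,y^*)$: since $(x^*,y^*)$ is a local Nash equilibrium, $\nabla_x f(x^*,y^*) = \textbf{0}$ and $\nabla_y f(x^*,y^*) = \textbf{0}$, so every term of the first kind vanishes and only the Hessian–Hessian products survive. Writing the blocks of \eqref{Jacobian} as $A = -\nabla^2_x f(x^*,y^*)$, $B = -\nabla_{xy} f(x^*,y^*)$, $C = \nabla_{yx} f(x^*,y^*)$, $D = \nabla^2_y f(x^*,y^*)$ and using $\partial_x\nabla_x f = \nabla^2_x f$, $\partial_y\nabla_x f = \nabla_{xy} f$, $\partial_x\nabla_y f = \nabla_{yx} f$, $\partial_y\nabla_y f = \nabla^2_y f$, a short block computation of $\CJ^2$ shows that the surviving terms assemble exactly into $\CJ_S = c\,\CJ - d\,\CJ^2 = (c\,\CI - d\,\CJ)\,\CJ$, which is the claimed formula.

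The only real work is the bookkeeping: one must track which mixed Hessian block lands in each of the four sub-blocks of $\CJ_S$ and respect the convention $\nabla_{yx} f = (\nabla_{xy} f)^{\top}$, so that the off-diagonal blocks of the computed Jacobian coincide with those of $\CJ^2$. No analytic subtlety arises beyond the vanishing of the third-order terms at equilibrium, so in the full proof I would carry out the four block matchings compactly rather than expanding every product entry by entry.
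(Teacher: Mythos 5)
Your proposal is correct and follows essentially the same route as the paper's proof: compute $\nabla_x\CF$ and $\nabla_y\CF$ via the chain rule, differentiate once more, use the first-order optimality condition $\nabla_x f(x^*,y^*)=\nabla_y f(x^*,y^*)=\mathbf{0}$ to kill the third-derivative terms, and assemble the surviving Hessian--Hessian products block by block into $\frac{1}{1-\beta}\CJ - \frac{h(1+\beta)}{2(1-\beta)^3}\CJ^2$. No gaps; the remaining work is exactly the block bookkeeping you describe.
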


\begin{prop}[Jacobian for alternating updates]\label{jajaja}
Let $\CJ_A$ denote the Jacobian of \ref{continuousAlt} at the local equilibrium $(x^*, y^*)$. Then, $\CJ_A$ can be expressed as
\begin{align*}
&\CJ_{A} = \CJ_{S} - \frac{h}{(1-\beta)^2}
\begin{bmatrix}
\textbf{0} & \textbf{0} \\
\\
 \nabla_{yx}f  \nabla^2_{x}f & \nabla_{yx}f \nabla_{xy}f
\end{bmatrix}_{(x^*,y^*)} 
\end{align*}
\end{prop}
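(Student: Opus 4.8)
\textbf{Proof proposal for Proposition~\ref{jajaja}.}

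The plan is to compute the Jacobian of \ref{continuousAlt} directly from its definition and compare term by term with the Jacobian $\CJ_S$ of \ref{continuousSim} obtained in Proposition~\ref{jsjsjs}. The two systems differ only in the $\dot y$ equation: \ref{continuousAlt} replaces $\nabla_y \CF(x,y)$ with $\nabla_y\bigl(\CF(x,y) - \tfrac{h}{2(1-\beta)^2}\lVert \nabla_x f(x,y)\lVert^2\bigr)$, while the $\dot x$ equation is unchanged. Hence the $x$-rows of $\CJ_A$ and $\CJ_S$ coincide, and the $y$-rows of $\CJ_A$ equal the $y$-rows of $\CJ_S$ minus the Jacobian (in $(x,y)$) of the extra vector field $-\tfrac{h}{2(1-\beta)^2}\nabla_y\lVert \nabla_x f(x,y)\lVert^2$, evaluated at $(x^*,y^*)$. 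So the whole task reduces to differentiating that one correction term.

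First I would expand $\nabla_y \lVert \nabla_x f\lVert^2 = 2\,(\nabla_{yx}f)\,\nabla_x f$, where $\nabla_{yx}f$ is the $m\times n$ mixed Hessian block and $\nabla_x f$ is the $n$-vector of partials. Then I would take the derivative of the map $(x,y)\mapsto (\nabla_{yx}f(x,y))\,\nabla_x f(x,y)$ with respect to $x$ and with respect to $y$. By the product rule this gives two contributions: one where the derivative hits $\nabla_{yx}f$ (producing a third-order tensor of $f$ contracted against $\nabla_x f$) and one where it hits $\nabla_x f$ (producing $(\nabla_{yx}f)(\nabla^2_x f)$ for the $x$-derivative and $(\nabla_{yx}f)(\nabla_{xy}f)$ for the $y$-derivative). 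The key simplification is that at a local equilibrium $(x^*,y^*)$ we have $\nabla_x f(x^*,y^*) = 0$, so every third-order term contracted against $\nabla_x f$ vanishes, leaving only $(\nabla_{yx}f)(\nabla^2_x f)$ in the $x$-column and $(\nabla_{yx}f)(\nabla_{xy}f)$ in the $y$-column of the correction block, both evaluated at $(x^*,y^*)$. Multiplying by the scalar $-\tfrac{h}{2(1-\beta)^2}$ from the vector field and the factor $2$ from the gradient of the squared norm produces exactly the claimed matrix $-\tfrac{h}{(1-\beta)^2}\begin{bmatrix}\textbf{0} & \textbf{0} \\ \nabla_{yx}f\,\nabla^2_x f & \nabla_{yx}f\,\nabla_{xy}f\end{bmatrix}$.

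The only mildly delicate point is bookkeeping of the tensor indices and matrix dimensions when differentiating $(\nabla_{yx}f)\nabla_x f$, so that the "third-order" terms are seen to be genuinely of the form $(\text{something})\cdot\nabla_x f$ and hence killed at equilibrium; I would either carry this out componentwise ($\partial_{y_i}\partial_{x_j} f$ times $\partial_{x_j} f$, summed over $j$, then differentiated) or invoke the same reasoning already used in Appendix~\ref{pjsja} for the $\lVert\nabla f\lVert^2$ terms inside $\CF$. No new analytic input is needed beyond smoothness of $f$ and the equilibrium condition $\nabla_x f(x^*,y^*) = 0$; the result then follows by subtracting this correction block from $\CJ_S$. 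I expect this index-level argument to be the main (and essentially only) obstacle, and it is routine once the vanishing-at-equilibrium observation is in place.
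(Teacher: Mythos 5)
Your proposal is correct and is essentially the paper's own argument: the appendix proof likewise notes that only the $\dot y$-equation changes, differentiates the extra term $-\tfrac{h}{2(1-\beta)^2}\lVert\nabla_x f\rVert^2$, and uses the first-order optimality condition $\nabla_x f(x^*,y^*)=0$ to kill the third-order tensor contributions, leaving exactly the block $\tfrac{h}{(1-\beta)^2}[\nabla_{yx}f\,\nabla^2_x f \;\; \nabla_{yx}f\,\nabla_{xy}f]$ subtracted from the $y$-rows of $\CJ_S$. Your sign and factor-of-two bookkeeping are also right, so nothing further is needed.
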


In Sections~\ref{aosu} and \ref{aoau}, we use these Jacobian matrices to study the local convergence behavior of the continuous-time models.  To facilitate this analysis, we first revisit the decomposition of \eqref{Jacobian} introduced in \cite{JMLR:v20:19-008}:
\begin{align}\label{deco}
    \CJ = \CS + \CA
\tag{Decomposition}
\end{align}
where 
\begin{align}\label{Potential part}
\CS = \begin{bmatrix}
 -\nabla^2_{x}f(x^*,y^*) & \textbf{0}\\
 \\
 \textbf{0}^{\top} & \nabla^2_{y}f(x^*,y^*)
\end{bmatrix}\tag{Potential}
\end{align}
is called the \textit{potential part} of the game dynamics, and
\begin{align}\label{Interaction part}
\CA = \begin{bmatrix}
\textbf{0} & -\nabla_{xy}f(x^*,y^*)  \\
\\
\nabla_{yx}f(x^*,y^*)  & \textbf{0}^{\top}
\end{bmatrix}\tag{Hamiltonian}
\end{align}
is called the \textit{Hamiltonian part}. Recall that the local structure of a game near an equilibrium is essentially captured by the quadratic form:
\begin{align}\label{LA}
    \frac{1}{2}\tilde{x} ^{\top} \nabla^2_{x} f(x^*,y^*) \tilde{x} 
    +
    \frac{1}{2} & \tilde{y} ^{\top} \nabla^2_{y} f(x^*,y^*)\tilde{y} \nonumber\\
   & +
    \tilde{x}^{\top} \nabla_{xy} f(x^*,y^*)\tilde{y},
\end{align}
where $\tilde{x} = x- x^*,\ \tilde{y} = y- y^*$. From \eqref{LA}, \eqref{Potential part} represents the non-interactive aspects of player behavior, while \eqref{Interaction part} captures player interactions. In the context of min-max games, the most interesting scenarios arise when there is strong interaction between the two players, i.e.,  $\CA \gg \CS$, which is the focus of this paper. This contrasts with cases where players independently optimize their objective functions, such as in games with payoff function $f(x,y) = f_1(x) + f_2(y)$, where $\CA = \textbf{0}$. We formulate this in the following assumption:



\begin{assu}\label{ibr} In the following, we assume $\CA \gg \CS$ in \eqref{deco}. Since $\CA$ is an antisymmetric matrix and has purely imaginary eigenvalues,  $\CA \gg \CS$ implies $\lvert \Im(\lambda)\lvert > \lvert\Re(\lambda)\lvert,\  \forall \lambda \in \mathrm{Sp}(\CJ).$
\end{assu}

\subsection{Analysis of Simultaneous Updates}\label{aosu}

In this section, we study the local behavior of \ref{continuousSim}. Unlike minimization, where local convergence of algorithms is usually guaranteed, the scenario for min-max games is more complex. As shown in Proposition~\ref{ft}, heavy ball momentum may fail in bilinear games. This observation leads to our first question:

\textit{Does the failure of \ref{continuousSim} in bilinear games indicate a general limitation of this method, or is it simply because bilinear games are not representative?}

Furthermore, \citet{gidel2019negative} observe that \textbf{negative} momentum can significantly enhance training performance in practice. This is in sharp contrast to minimization, where \textbf{positive} momentum is used to accelerate convergence. This observation motivates our second question:

\textit{What are the respective benefits of using \ref{continuousSim} with negative versus positive momentum for local dynamics?}

Our answers to above questions can be summarized as: 
\begin{itemize}[leftmargin=*]
\item The failure of \ref{continuousSim} in bilinear games is not representative. In Theorem~\ref{t41}, we prove that simultaneous heavy ball momentum generically achieves local convergence under appropriate parameter choices.
\item Negative momentum and positive momentum benefit game dynamics in distinct ways. Negative momentum enhances algorithm stability by enabling convergence even with large step sizes (Corollary~\ref{nmics}). Conversely, if a sufficiently small step size $h$ is employed, positive momentum can achieve an optimal convergence rate (Theorem~\ref{t42}).
\end{itemize}



We first recall the following assumption from \cite{wang2024local}, which is used to guarantee the local convergence of \eqref{GF} in their work:
\begin{assu}\label{Assu1}
In \eqref{deco}, we assume
$$
\mathrm{EigVec} \left( \CA \right)  \cap
\mathrm{Ker} \left( \CS \right)
= \{\textbf{0}\}.
$$
\end{assu}
Assumption~\ref{Assu1} holds \textbf{generically} in the following sense: for any fixed $\CS \ne \textbf{0}$, if $\CA$ is sampled independently from an absolutely continuous distribution, then Assumption~\ref{Assu1} holds with probability one. Compared to the strongly-convex-strongly-concave setting, which requires $\textit{all}$ eigenvalues of block matrices of $\CS$ to be strictly negative, Assumption~\ref{Assu1} permits \textit{many} eigenvalues of these matrices to be zero. This significantly broadens the applicability beyond the strongly-convex-strongly-concave framework.
\begin{thm}\label{t41}
Let $\CJ$ be the \eqref{Jacobian} defined in Lemma~\ref{yakebi}. If the momentum parameter $\beta \in (-1,1)$ and the step size $h>0$ satisfy the inequality
\begin{equation}\label{simcon}
h <\min_{\lambda \in \mathrm{Sp}(\CJ)} \frac{2 (1-\beta)^2}{(1+\beta)}  \frac{\lvert\Re(\lambda)\lvert}{(\Im(\lambda)^2 - \Re(\lambda)^2)}
\end{equation}
and if $f(x,y)$ satisfies Assumption~\ref{ibr} and \ref{Assu1} at the local equilibrium $(x^*,y^*)$, then \ref{continuousSim} achieves local convergence.
\end{thm}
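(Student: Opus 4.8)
The plan is to invoke Proposition~\ref{Jacobiancr}, so the whole task reduces to showing that every eigenvalue $\mu \in \mathrm{Sp}(\CJ_S)$ has $\Re(\mu) < 0$ under the stated hypotheses. By Proposition~\ref{jsjsjs}, $\CJ_S = p(\CJ)$ where $p(z) = \tfrac{1}{1-\beta} z - \tfrac{h(1+\beta)}{2(1-\beta)^3} z^2$ is a fixed quadratic polynomial, so the spectral mapping theorem gives $\mathrm{Sp}(\CJ_S) = \{ p(\lambda) : \lambda \in \mathrm{Sp}(\CJ)\}$. Hence it suffices to prove $\Re\bigl(p(\lambda)\bigr) < 0$ for each $\lambda \in \mathrm{Sp}(\CJ)$. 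Writing $\lambda = a + bi$ with $a,b \in \BR$, a direct computation gives $\Re(p(\lambda)) = \tfrac{1}{1-\beta} a - \tfrac{h(1+\beta)}{2(1-\beta)^3}(a^2 - b^2)$.

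The key structural input is that, under Assumptions~\ref{ibr} and \ref{Assu1}, every eigenvalue $\lambda$ of $\CJ$ satisfies $\Re(\lambda) \le 0$, and moreover $\Re(\lambda) = 0$ forces $\lambda = 0$. The first fact is standard for game Jacobians of this block form together with the semidefiniteness in Lemma~\ref{yakebi} (one pairs $\CJ v = \lambda v$ with the appropriate Hermitian form and reads off the sign of the real part from $-v_x^* \nabla_x^2 f\, v_x + v_y^* \nabla_y^2 f\, v_y \le 0$). The second fact — that purely imaginary eigenvalues of $\CJ$ must be zero — is exactly where Assumption~\ref{Assu1} (the eigenspaces of the Hamiltonian part $\CA$ meet the kernel of the potential part $\CS$ only at the origin) enters; I would cite or reproduce the short argument from \cite{wang2024local} showing that a nonzero purely imaginary eigenvalue of $\CJ = \CS + \CA$ would produce a nonzero vector in $\mathrm{EigVec}(\CA) \cap \mathrm{Ker}(\CS)$. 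So I split into two cases for each $\lambda = a+bi \in \mathrm{Sp}(\CJ)$.

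\emph{Case 1: $a < 0$.} I want $\tfrac{1}{1-\beta} a - \tfrac{h(1+\beta)}{2(1-\beta)^3}(a^2-b^2) < 0$. Since $\beta \in (-1,1)$ we have $1-\beta > 0$ and $1+\beta > 0$, so all the coefficients are positive and the first term is negative. If $a^2 - b^2 \le 0$ the second term is $\ge 0$ and there is nothing to do; if $a^2 - b^2 > 0$, rearranging the desired strict inequality yields precisely $h < \tfrac{2(1-\beta)^2}{1+\beta}\cdot\tfrac{|a|}{\,b^2 - a^2\,}$ after noting $b^2 - a^2 = -(a^2-b^2) = \Im(\lambda)^2 - \Re(\lambda)^2$ (I should double-check the sign: Assumption~\ref{ibr} guarantees $|\Im(\lambda)| > |\Re(\lambda)|$ so $\Im(\lambda)^2 - \Re(\lambda)^2 > 0$, and the case $a^2 - b^2 > 0$ actually cannot occur — so the argument is even cleaner, but I will keep the bound because that is what the theorem statement uses and it handles any edge case). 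This is implied by hypothesis~\eqref{simcon}, which takes a minimum over all eigenvalues. \emph{Case 2: $a = 0$.} Then $\lambda = 0$ by the argument above, so $\mu = p(0) = 0$, and $\CJ_S$ has a zero eigenvalue. Here Proposition~\ref{Jacobiancr} does not directly apply, so I would instead argue that local convergence still holds by restricting to the complement of the (neutral) kernel: the zero eigenvalue of $\CJ$ corresponds to directions that are genuinely stationary, and on the orthogonal complement the reduced Jacobian has all eigenvalues with strictly negative real part by Case 1, giving convergence to the equilibrium manifold in the transverse directions — or, if the paper's convention is that ``local convergence'' means convergence of trajectories started near $(x^*,y^*)$, I would note that the zero eigenspace of $\CJ$ is invariant and trajectories on it are constant, reducing to the transverse dynamics. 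I'd align this with how \cite{wang2024local} handle the same degeneracy.

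The main obstacle I anticipate is the treatment of the zero eigenvalue in Case 2 and making the notion of ``local convergence'' precise when $\CJ_S$ is not Hurwitz — the clean eigenvalue bound only yields asymptotic stability on the nonzero-eigenvalue part, and one must argue the kernel directions are inert (using that $\mathrm{Ker}(\CJ)$ is preserved by the linearized flow and that Assumption~\ref{Assu1} controls how $\CS$ and $\CA$ interact on it). The computation in Case 1 is routine once the spectral mapping theorem and the sign bookkeeping ($1 \pm \beta > 0$, $\Im(\lambda)^2 - \Re(\lambda)^2 > 0$ from Assumption~\ref{ibr}) are in place; the only subtlety there is confirming that the minimum in \eqref{simcon} is over a finite nonempty set of well-defined positive quantities, which follows because $\CJ$ acts on a finite-dimensional space and $\Im(\lambda)^2 - \Re(\lambda)^2 > 0$ for every eigenvalue by Assumption~\ref{ibr}.
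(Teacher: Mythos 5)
Your proposal follows the same route as the paper: reduce to $\max_{\mu\in\mathrm{Sp}(\CJ_S)}\Re(\mu)<0$ via Proposition~\ref{Jacobiancr}, use the spectral mapping theorem (the paper's Lemma~\ref{grahammatrix}) to write each $\mu$ as $p(\lambda)$ with $\lambda\in\mathrm{Sp}(\CJ)$, and then check $\Re(p(\lambda))<0$ by the sign bookkeeping in your Case~1. That computation is exactly the paper's, and it is correct.

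The one point you have wrong is the role of Assumption~\ref{Assu1}, and it matters because it is what generates your entire Case~2. You claim the assumption yields ``$\Re(\lambda)\le 0$, and $\Re(\lambda)=0$ forces $\lambda=0$.'' In fact the Hermitian-form argument you sketch proves something stronger: if $\CJ v=\lambda v$ with $\Re(\lambda)=0$ and $v\ne 0$, then $v^*\CS v=0$, hence (by negative semidefiniteness) $v\in\mathrm{Ker}(\CS)$, hence $\CA v=\CJ v=\lambda v$, so $v\in\mathrm{EigVec}(\CA)\cap\mathrm{Ker}(\CS)=\{\textbf{0}\}$ --- a contradiction, since eigenvectors are nonzero. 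This applies to $\lambda=0$ as well. So Assumption~\ref{Assu1} excludes \emph{all} eigenvalues with zero real part, which is precisely the paper's Lemma~\ref{wangresult} ($\Re(\lambda)<0$ for every $\lambda\in\mathrm{Sp}(\CJ)$), and your Case~2 is vacuous. The paper's proof is therefore just your Case~1.

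This matters beyond tidiness: the Case~2 argument you offer as a fallback would not actually close the gap if the case were nonempty. A zero eigenvalue of the linearization puts you in a center-manifold situation for the \emph{nonlinear} system, and observing that the linearized flow is constant on $\mathrm{Ker}(\CJ_S)$ does not by itself give local convergence of \ref{continuousSim}; Proposition~\ref{Jacobiancr} genuinely requires the strict inequality. Fortunately you never need that argument --- replace your two-case split with the observation above (or a citation to Lemma~\ref{wangresult}) and the proof is complete and coincides with the paper's.
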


Since Assumption~\ref{ibr} captures the critical scenarios where player interactions dominate game dynamics, and Assumption~\ref{Assu1} holds generically, Theorem~\ref{t41} implies that \ref{continuousSim} achieves local convergence under fairly mild conditions. Furthermore, the bilinear games studied in Proposition~\ref{ft} do not satisfy Assumption~\ref{Assu1} as $\CS$ is always a zero matrix in that case. 



The proof of Theorem~\ref{t41} is based on the analysis of the Jacobian of \ref{continuousSim}. By Proposition~\ref{jsjsjs}, this Jacobian can be expressed as a quadratic polynomial in terms of $\CJ$, establishing a connection between $\mathrm{Sp}\left(\CJ_S\right)$ and $\mathrm{Sp}\left(\CJ\right)$, which simplifies the analysis. The detailed proof is deferred to the Appendix~\ref{pt41}. 

\begin{figure}[t]
    \centering
    \includegraphics[width=0.45\textwidth]{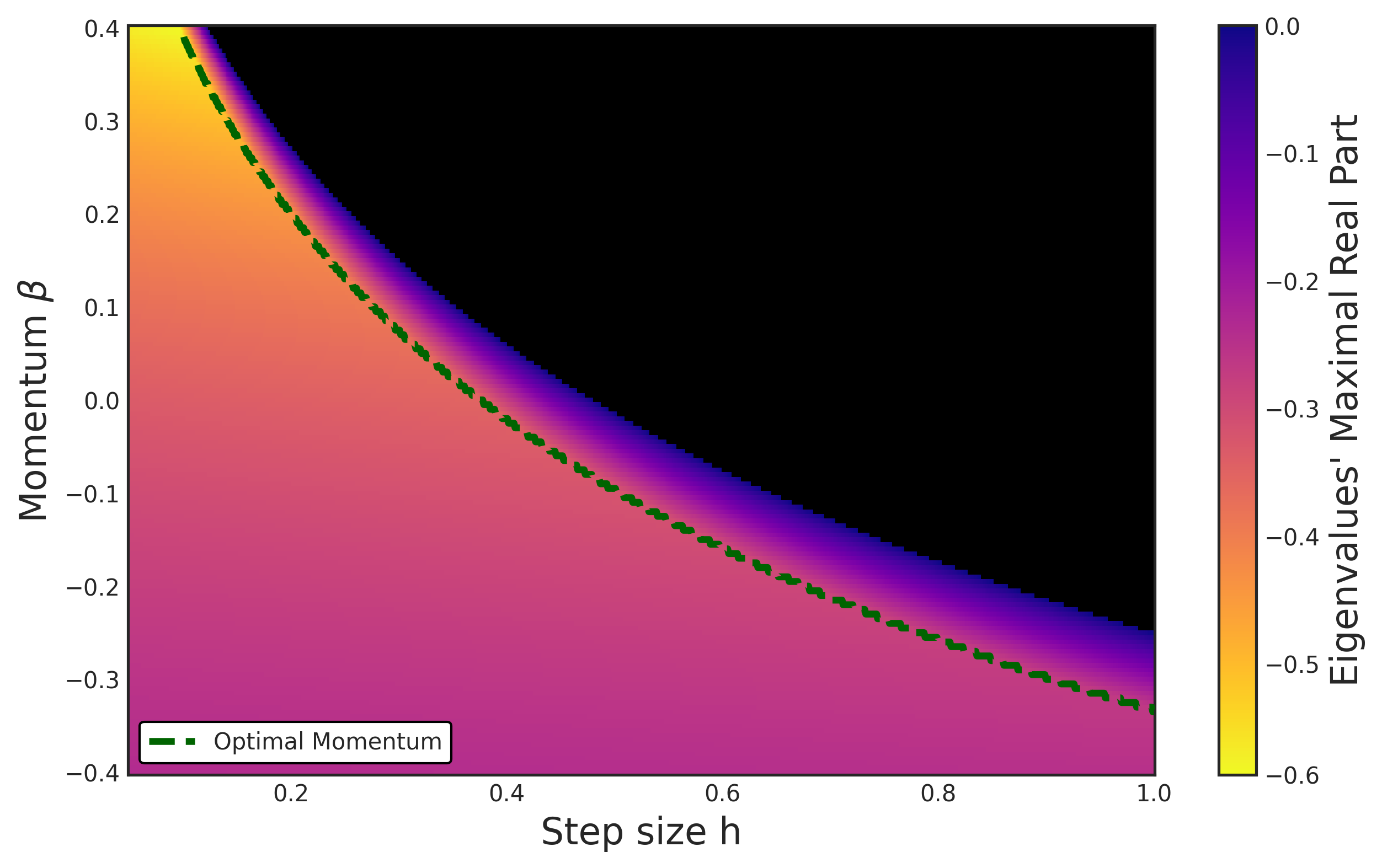}
    \caption{Distribution on the eigenvalues' maximal real part of $\CJ_S$, which governs the local  behaviors according to Proposition \ref{Jacobiancr}. The \textbf{black} region indicates divergence for the corresponding parameters. Smaller momentum expands the range of step sizes for convergence, supporting Corollary~\ref{nmics}. For small step sizes, the optimal momentum is positive, consistent with Theorem~\ref{t42}. }
    \label{UI}
\end{figure}

\begin{cor}\label{nmics}
    Since the function $2(1-\beta)^2/(1+\beta)$ in Theorem~\ref{t41} is a decreasing function of $\beta \in (-1,1)$, a smaller value of $\beta \in (-1,1)$ enables local convergence of \ref{continuousSim} across a broader range of step sizes.
\end{cor}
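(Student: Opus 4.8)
The plan is to show that Corollary~\ref{nmics} follows essentially immediately from Theorem~\ref{t41} together with an elementary monotonicity computation, so the only real ``work'' is establishing that the coefficient function is monotone and interpreting what this monotonicity means for the admissible range of step sizes.

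First I would isolate the $\beta$-dependent factor in the bound \eqref{simcon}. Writing the right-hand side of \eqref{simcon} as $g(\beta) \cdot m(\CJ)$ where $m(\CJ) = \min_{\lambda \in \mathrm{Sp}(\CJ)} |\Re(\lambda)| / (\Im(\lambda)^2 - \Re(\lambda)^2)$ is a fixed positive quantity (positive because Assumption~\ref{ibr} guarantees $\Im(\lambda)^2 > \Re(\lambda)^2$ for every eigenvalue, and nonzero because Assumption~\ref{Assu1} together with $\CA \gg \CS$ rules out purely imaginary eigenvalues), the entire $\beta$-dependence sits in $g(\beta) = 2(1-\beta)^2/(1+\beta)$. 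So it suffices to prove $g$ is strictly decreasing on $(-1,1)$.

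Next I would carry out the one-line calculus check: differentiating, $g'(\beta) = 2 \cdot \frac{-2(1-\beta)(1+\beta) - (1-\beta)^2}{(1+\beta)^2} = \frac{2(1-\beta)\bigl(-2(1+\beta) - (1-\beta)\bigr)}{(1+\beta)^2} = \frac{2(1-\beta)(\beta - 3)}{(1+\beta)^2} \cdot (-1)$; more directly, $-2(1+\beta)-(1-\beta) = -3-\beta < 0$ for $\beta \in (-1,1)$, and $1-\beta > 0$ there, so the numerator is negative and $g'(\beta) < 0$ throughout $(-1,1)$. Hence $g$ is strictly decreasing.

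Finally I would assemble the conclusion: for a smaller momentum value $\beta_1 < \beta_2$ (both in $(-1,1)$) we get $g(\beta_1) > g(\beta_2)$, so the upper bound $g(\beta_1) m(\CJ)$ on admissible step sizes strictly exceeds $g(\beta_2) m(\CJ)$; any step size $h$ admissible for $\beta_2$ (i.e.\ satisfying \eqref{simcon} with $\beta = \beta_2$) remains admissible for $\beta_1$, and there is a nonempty interval of step sizes admissible for $\beta_1$ but not $\beta_2$. By Theorem~\ref{t41}, local convergence of \ref{continuousSim} holds on this larger range, which is exactly the claimed statement. I do not anticipate a genuine obstacle here; the only point requiring a word of care is confirming $m(\CJ) > 0$ and finite so that ``a broader range of step sizes'' is a meaningful (nondegenerate) statement, and this is precisely what Assumptions~\ref{ibr} and \ref{Assu1} supply.
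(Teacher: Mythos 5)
Your proposal is correct and takes essentially the same route as the paper, which offers no separate proof beyond the monotonicity assertion: the whole content is that $g(\beta)=2(1-\beta)^2/(1+\beta)$ is strictly decreasing on $(-1,1)$ while the eigenvalue-dependent factor $m(\CJ)$ is a fixed positive finite number (guaranteed by Assumptions~\ref{ibr} and \ref{Assu1} via Lemma~\ref{wangresult}), so the admissible step-size interval from \eqref{simcon} strictly grows as $\beta$ decreases. One intermediate expression in your derivative chain, $\frac{2(1-\beta)(\beta-3)}{(1+\beta)^2}\cdot(-1)$, carries a sign slip (it equals $\frac{2(1-\beta)(3-\beta)}{(1+\beta)^2}$ rather than the correct $\frac{2(1-\beta)(-3-\beta)}{(1+\beta)^2}$), but your direct computation $-2(1+\beta)-(1-\beta)=-3-\beta<0$ immediately afterwards is right, so the conclusion is unaffected.
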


Corollary~\ref{nmics} highlights an interesting contrast between the role of momentum in min-max games and in minimization. In minimization, it is well known that \textit{larger} momentum can enable local convergence with a broader range of step sizes.
We also provide proof and experiments on this point in Appendix~\ref{HBMMP}. 

However, Corollary~\ref{nmics} does not imply that the smallest possible momentum should always be used. Theorem~\ref{t42} shows that excessively negative momentum can slow convergence. Additionally, for sufficiently small step sizes, the optimal momentum is positive.

\begin{thm}\label{t42}
If $f(x,y)$ satisfies Assumption~\ref{ibr} and \ref{Assu1}, and if
$h \le \min_{\lambda \in \mathrm{Sp}(\CJ)} \frac{\lvert\Re(\lambda)\lvert}{2 ( \Im(\lambda)^2 - \Re(\lambda)^2)},$
then the momentum parameter $\tilde{\beta}(h)$ that achieves optimal local convergence rate for  \ref{continuousSim} is a positive number.
\end{thm}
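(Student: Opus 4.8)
\textbf{Proof proposal for Theorem~\ref{t42}.}

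The plan is to work entirely with the spectral relationship from Proposition~\ref{jsjsjs}: for each $\lambda \in \mathrm{Sp}(\CJ)$, the matrix $\CJ_S$ has a corresponding eigenvalue
$$
\mu(\lambda,\beta,h) = \frac{1}{1-\beta}\lambda - \frac{h(1+\beta)}{2(1-\beta)^3}\lambda^2.
$$
(This needs the usual caveat that a polynomial in $\CJ$ has eigenvalues obtained by applying the polynomial to the eigenvalues of $\CJ$, counted with algebraic multiplicity; since we only care about $\max \Re$, this is harmless.) The local convergence rate is governed by $\alpha(\beta,h) = \max_{\lambda \in \mathrm{Sp}(\CJ)} \Re\,\mu(\lambda,\beta,h)$ via Proposition~\ref{Jacobiancr}, so $\tilde\beta(h) = \arg\min_{\beta \in (-1,1)} \alpha(\beta,h)$, and the goal is to show this minimizer is positive once $h$ is small enough.

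First I would reduce to a per-eigenvalue computation. Writing $\lambda = a + bi$ with $a = \Re(\lambda) \le 0$ (negativity of the real parts of $\CJ$ under Assumptions~\ref{ibr},~\ref{Assu1} is exactly the content used in Theorem~\ref{t41}; I would cite that), one gets $\Re(\lambda^2) = a^2 - b^2$, hence
$$
\Re\,\mu(\lambda,\beta,h) = \frac{a}{1-\beta} - \frac{h(1+\beta)}{2(1-\beta)^3}(a^2 - b^2).
$$
Under Assumption~\ref{ibr} we have $b^2 > a^2$, so $a^2 - b^2 < 0$ and the second term is a \emph{positive} contribution (making convergence worse) of order $h$. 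The leading term $a/(1-\beta)$ is negative and $O(1)$ in $h$. So for small $h$ the rate is dominated by $\max_\lambda a/(1-\beta) = -|\Re(\lambda)|_{\min}/(1-\beta)$ where $|\Re(\lambda)|_{\min} = \min_\lambda |\Re(\lambda)|$, and this is a strictly decreasing function of $\beta$ on $(-1,1)$ — pushing $\beta$ up toward $1$. The $O(h)$ correction is the only thing that can push back toward negative $\beta$, and the hypothesis $h \le \min_\lambda |\Re(\lambda)|/(2(\Im(\lambda)^2 - \Re(\lambda)^2))$ is precisely calibrated to bound that correction.

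The cleanest way to finish is a derivative/perturbation argument at $h=0$ combined with continuity. At $h = 0$, $\alpha(\beta,0) = -|\Re(\lambda)|_{\min}/(1-\beta)$, which is smooth and strictly decreasing in $\beta$, with derivative $-|\Re(\lambda)|_{\min}/(1-\beta)^2 < 0$ bounded away from $0$ on compact subsets of $(-1,1)$; in particular $\partial_\beta \alpha(\beta,0) < 0$ at $\beta = 0$. I would then argue that $\alpha(\beta,h)$ is (for fixed $\beta$) affine in $h$ with a slope that is bounded uniformly for $\beta$ in a neighborhood of $0$, so $\partial_\beta \alpha(\beta,h)\big|_{\beta = 0}$ is continuous in $h$ and stays strictly negative for $h$ below the stated threshold — a max of affine functions is piecewise affine and convex in $h$, and one has to check the slope estimate survives the max, which is where the explicit threshold $|\Re(\lambda)|/(2(\Im(\lambda)^2 - \Re(\lambda)^2))$ enters. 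Hence $\alpha(\cdot,h)$ is still decreasing through $\beta = 0$, so its minimizer over $(-1,1)$ lies strictly to the right of $0$, i.e. $\tilde\beta(h) > 0$. (One should also note $\alpha(\beta,h) \to +\infty$ or at least fails to beat $\beta=0$ as $\beta \to 1^-$ because the $O(h)$ penalty term blows up like $(1-\beta)^{-3}$, so the minimizer is an interior point and the argument is not vacuous.)

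The main obstacle I expect is handling the $\max$ over the spectrum cleanly: the function $\alpha(\beta,h)$ is a pointwise maximum of smooth functions indexed by $\lambda$, so it is only piecewise smooth, and the eigenvalue achieving the max can switch as $\beta$ or $h$ varies. The fix is to avoid differentiating $\alpha$ directly and instead show the \emph{one-sided} statement: for every $\lambda$, $\Re\,\mu(\lambda,0,h) < \Re\,\mu(\lambda,\beta,h)$ for all small $\beta < 0$ when $h$ is below the threshold — i.e. each branch individually prefers $\beta = 0$ over any slightly negative $\beta$ — which then passes to the max for free. Concretely this is the inequality
$$
\frac{a}{1} - \frac{h}{2}(a^2-b^2) \;<\; \frac{a}{1-\beta} - \frac{h(1+\beta)}{2(1-\beta)^3}(a^2-b^2)
$$
for $\beta < 0$ small; dividing by $(-\beta)>0$ and letting $\beta \to 0^-$ reduces it to a first-order condition at $\beta=0$, namely $-a - \tfrac{h}{2}(a^2 - b^2)\cdot(-1+ \text{lower order}) $-type comparison, and the threshold on $h$ is exactly what makes the $a$-term dominate the $(a^2-b^2) = -(b^2-a^2)$-term. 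Verifying this elementary one-variable inequality for each $\lambda$, uniformly, is the only real work, and it is a routine calculus exercise given the explicit bound on $h$.
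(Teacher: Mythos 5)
Your proposal follows essentially the same route as the paper: pass from $\CJ_S$ to the scalar functions $g_\lambda(\beta)=\frac{1}{1-\beta}\Re(\lambda)+\frac{h(1+\beta)}{2(1-\beta)^3}\bigl(\Im(\lambda)^2-\Re(\lambda)^2\bigr)$ via the polynomial spectral mapping of Proposition~\ref{jsjsjs}, use $\Re(\lambda)<0$ (Lemma~\ref{wangresult}, strict, not merely $\le 0$) and $\Im(\lambda)^2>\Re(\lambda)^2$ from the assumptions, and then analyze the $\beta$-derivative. Your key computation is also the right one: $g'_\lambda(0)=\Re(\lambda)+2h\bigl(\Im(\lambda)^2-\Re(\lambda)^2\bigr)$, whose sign is controlled exactly by the stated threshold on $h$. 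This is in fact a cleaner way to locate the threshold than the paper's route, which writes $g'_\lambda(\beta)\ge 0\Leftrightarrow D_\lambda(\beta)\ge 0$ for an explicit concave quadratic $D_\lambda$, identifies the optimal momentum as the smaller root of $D_\lambda$, computes that root by the quadratic formula, and only then checks when it is positive.

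The one genuine gap is your final inference: from ``each branch, hence the max, is decreasing through $\beta=0$'' you conclude ``its minimizer over $(-1,1)$ lies strictly to the right of $0$.'' For a general function a negative derivative at one point says nothing about where the global minimizer sits; a branch could in principle have a deeper local minimum somewhere in $(-1,0)$. What rescues the argument here is structure you never invoke: $g'_\lambda(\beta)$ has the sign of $D_\lambda(\beta)=\Re(\lambda)(1-\beta)^2+h\bigl(\Im(\lambda)^2-\Re(\lambda)^2\bigr)(\beta+2)$, a \emph{concave} quadratic (leading coefficient $\Re(\lambda)<0$) with $D_\lambda(1)=3h\bigl(\Im(\lambda)^2-\Re(\lambda)^2\bigr)>0$. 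Hence $\{D_\lambda>0\}$ is an interval containing $1$, so $D_\lambda(0)<0$ forces $D_\lambda<0$ on all of $(-1,0]$: each $g_\lambda$ is strictly decreasing on $(-1,0]$ and unimodal with its minimizer at the smaller root of $D_\lambda$, which is then positive, and this passes to the max over the finite spectrum. You need to add this step (or something equivalent); the local inequality ``$\beta=0$ beats slightly negative $\beta$'' that you propose to verify is not sufficient on its own. Separately, your digression about $\alpha(\beta,h)$ being a max of affine functions of $h$ and a continuity-in-$h$ argument is unnecessary --- the threshold enters purely through the sign of $g'_\lambda(0)$ --- and note that at the boundary case $h=\min_\lambda|\Re(\lambda)|/(2(\Im(\lambda)^2-\Re(\lambda)^2))$ the first-order condition gives $g'_\lambda(0)=0$ for the extremal $\lambda$, so strict positivity of the optimal $\beta$ really requires the strict inequality $h<\cdots$, exactly as the paper's own computation produces.
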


Theorem~\ref{t42} is proved by analyzing the derivative of the eigenvalue of 
 $\CJ_S$ with respect to the momentum parameter $\beta$ and examining the location of the roots of this derivative. The detailed proof is provided in Appendix~\ref{pt42}.

In Figure~\ref{UI}, we presents numerical experiments validating the theoretical results in this section. The findings show that smaller momentum allows the algorithm to achieve local convergence over a broader range of step sizes. Additionally, for sufficiently small step sizes, positive momentum enables the algorithm to achieve the optimal convergence rate. These results align with the theoretical predictions in this section.


\subsection{Analysis of Alternating Updates}\label{aoau}

In this section, we investigate the convergence rate of \ref{continuousAlt} near a local Nash equilibrium $(x^*,y^*)$. Specifically, drawing on the observation from Proposition~\ref{ft}, which highlights the separation between \ref{continuousSim} and \ref{continuousAlt} in bilinear games, we ask the following questions:

\textit{Given the same step size $h$ and momentum $\beta$, which types of games enable \ref{continuousAlt} to achieve a better convergence rate compared to \ref{continuousSim}? Moreover, how can we quantitatively characterize the difference in their convergence rates?}

To answer above questions, we first present a more refined version of Assumption~\ref{ibr}. Especially, 
for the Jacobian decomposition $\CJ = \CS + \CA$ from \eqref{deco}, we introduce a small nonnegative number $\alpha \ge 0$ to scale $\CS$, leading to the refined version of \eqref{deco}:
\begin{align}
\CJ =  \alpha 
\begin{bmatrix}
- \CQ & 0 \\
\\
0^{\top} & \CR
\end{bmatrix} + 
\begin{bmatrix}
\textbf{0} & - \nabla_{xy}f \\
\\
\nabla_{yx}f & \textbf{0}
\end{bmatrix}_{(x^*,y^*)} \label{alpha}
\end{align}
where $\alpha \CQ = \nabla^2_{x}f(x^*,y^*)$, and $\alpha \CR =  \nabla^2_{y}f(x^*,y^*)$. This decomposition is also utilized to establish the convergence rate of \eqref{GF} in \cite{wang2024local}.

In Theorem~\ref{t43}, we characterize the largest real part of the eigenvalues of $\CJ_A$,  the Jacobian of \ref{continuousAlt}. According to Proposition~\ref{Jacobiancr}, a smaller real part implies a faster convergence rate. Additionally, for technical reasons, we assume $m=n$ for \eqref{minmaxprel}.

\begin{thm}\label{t43}Suppose $\nabla_{xy}f(x^*,y^*)$ is full-rank and has distinct singular values. Let $\lambda'_A$ be the eigenvalue of $\CJ_A$ with the largest real part. Then there exist an eigenvalue $\lambda_S$ of $\CJ_S$ and a singular value $\sigma$ of $\nabla_{xy}f(x^*,y^*)$ such that
\begin{align}\label{asyformulaas}
    \Re(\lambda'_A) = \Re(\lambda_S) - \frac{h}{(1-\beta)^2} \lvert \sigma \lvert^2 + \CO(h^2 + h\alpha).
\end{align}
\end{thm}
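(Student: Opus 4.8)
\textbf{Proof plan for Theorem~\ref{t43}.}

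The plan is to treat $\CJ_A$ as a perturbation of a base matrix by the two parameters $h$ and $\alpha$, and to extract the leading-order behaviour of its rightmost eigenvalue. Starting from the formula in Proposition~\ref{jajaja}, I would substitute the $\alpha$-scaled decomposition \eqref{alpha} and expand $\CJ_A$ as $\CJ_A = \CJ_A^{(0)} + h\,B + \alpha\,C + \CO(h^2 + h\alpha + \alpha^2)$, where $\CJ_A^{(0)}$ is the value at $h=0,\alpha=0$. When $\alpha = 0$ the potential part vanishes, so $\CJ = \CA$ is the purely antisymmetric (Hamiltonian) block matrix built from $N := \nabla_{xy}f(x^*,y^*)$; its eigenvalues are $\pm i\sigma_k$ where $\sigma_k$ are the singular values of $N$. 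Since $N$ is full-rank with distinct singular values, these $2n$ eigenvalues $\{\pm i\sigma_k\}$ are all simple (here I use $m=n$), which is exactly the nondegeneracy needed to apply first-order analytic eigenvalue perturbation theory: each eigenvalue $\lambda(h,\alpha)$ is analytic near $(0,0)$ with $\lambda(0,0) = i\sigma_k$ (or $-i\sigma_k$), and $\frac{\partial \lambda}{\partial h}\big|_0 = v^* B w$, $\frac{\partial\lambda}{\partial\alpha}\big|_0 = v^* C w$, where $w,v$ are the corresponding right/left eigenvectors of $\CJ_A^{(0)}$ normalized so $v^* w = 1$.

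The technical heart is to compute $\Re\big(v^* B w\big)$ for the $h$-perturbation and show it equals $-\lvert\sigma_k\rvert^2/(1-\beta)^2$. I would first diagonalize the $\alpha=0$ Hamiltonian block using the SVD $N = U\Sigma V^\top$: in the basis adapted to this SVD the matrix $\CA$ decouples into $2\times 2$ blocks $\begin{bmatrix} 0 & -\sigma_k \\ \sigma_k & 0\end{bmatrix}$, with explicit eigenvectors $w_k = \tfrac{1}{\sqrt2}(u_k, i v_k)$ type vectors. Then I need the $h$-linear term of $\CJ_A$ in this same basis. From Propositions~\ref{jsjsjs} and \ref{jajaja}, the $h$-coefficient of $\CJ_A$ at $\alpha=0$ is
\begin{align*}
B = -\frac{(1+\beta)}{2(1-\beta)^3}\,\CA^2 \;-\; \frac{1}{(1-\beta)^2}\begin{bmatrix}\textbf{0} & \textbf{0}\\ \textbf{0} & N^\top N\end{bmatrix},
\end{align*}
using that $\nabla_{yx}f\nabla^2_x f = \CO(\alpha)$ and $\nabla_{yx}f\nabla_{xy}f = N^\top N$ (with $\nabla_{yx}f = N^\top$ at $\alpha=0$). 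Now $\CA^2$ is block-diagonal with blocks $-NN^\top$ and $-N^\top N$, and on the eigenvector $w_k$ it acts as multiplication by $-\sigma_k^2$, contributing a purely real quantity $\tfrac{(1+\beta)\sigma_k^2}{2(1-\beta)^3}$ to $v_k^* B w_k$; the lower-block term $N^\top N$ acts on the "$y$-half" of $w_k$, and a short computation using $\|v_k\|=1$ gives its contribution as $-\tfrac{\sigma_k^2}{2(1-\beta)^2}$. Summing: $\Re(v_k^* B w_k) = \tfrac{(1+\beta)\sigma_k^2}{2(1-\beta)^3} - \tfrac{\sigma_k^2}{2(1-\beta)^2}$. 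Hmm — I would need to double-check this against the claimed $-\sigma_k^2/(1-\beta)^2$; the discrepancy suggests I am mislabeling which object "$\CJ_S$" contributes: in fact $\Re(\lambda_S)$ in \eqref{asyformulaas} is itself the rightmost eigenvalue of $\CJ_S$, which at $\alpha=0$ already carries the $\tfrac{(1+\beta)\sigma^2}{2(1-\beta)^3}h$ term (since $\CJ_S$ is quadratic in $\CJ=\CA$). So the correct accounting is: write $\Re(\lambda_A') = \Re(\lambda_S) + h\cdot[\text{extra term from the alternating correction in Proposition~\ref{jajaja}}] + \CO(h^2+h\alpha)$, and the extra term is precisely $-\tfrac{1}{(1-\beta)^2}v_k^*\begin{bmatrix}\textbf{0}&\textbf{0}\\ N^\top N & \textbf{0}\end{bmatrix}_{\text{(}h\text{-indep part)}}w_k$, whose real part I must show equals $-\sigma_k^2$. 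This is the main obstacle: getting the eigenvector normalization and the real-part extraction exactly right so the factor of $2$ and the cross-block structure produce $\lvert\sigma\rvert^2$ rather than $\tfrac12\lvert\sigma\rvert^2$.

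Finally, I would assemble the pieces. Having shown each perturbed eigenvalue satisfies $\Re(\lambda(h,\alpha)) = \Re(\lambda_S) - \tfrac{h}{(1-\beta)^2}\sigma_k^2 + \CO(h^2+h\alpha)$ for the matching $(\lambda_S,\sigma)$ pair, I select the index $k$ (and sign) maximizing the right-hand side; because all unperturbed eigenvalues are simple and the error terms are uniform over the finitely many eigenvalues, the rightmost eigenvalue $\lambda_A'$ of $\CJ_A$ for small $h,\alpha$ is one of these analytic branches, giving \eqref{asyformulaas}. One subtlety to address: the $\CO(h\alpha)$ bucket must absorb the cross term coming from the $\alpha$-dependent piece $\nabla_{yx}f\nabla^2_x f$ inside the alternating correction and any interaction between the $h$- and $\alpha$-perturbations; this follows from the bilinearity of the second-order term in the analytic expansion of $\lambda(h,\alpha)$, so no separate estimate is needed beyond noting $\lambda$ is jointly analytic. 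Assumption~\ref{ibr} is not strictly needed for the expansion itself but guarantees the $\alpha$-perturbation is indeed small relative to $\CA$, making the "$\alpha$ small" regime the meaningful one.
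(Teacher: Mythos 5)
Your overall route is the same as the paper's: treat $\CJ_A$ as a first-order perturbation, reduce to the Hamiltonian block matrix built from $N=\nabla_{xy}f(x^*,y^*)$, and evaluate the diagonal correction $w^*\CP v$ on the explicit SVD eigenvectors of that block. The paper organizes this slightly more cleanly than your first attempt: rather than expanding around $(h,\alpha)=(0,0)$ and then re-attributing the $\CA^2$ contribution to $\lambda_S$, it perturbs $\CJ_S$ directly by the alternating correction of Proposition~\ref{jajaja}, and uses the observation that $\CJ_S$ is a quadratic polynomial in $\CJ$, so every (left/right) eigenvector of $\CJ$ --- and hence, to $\CO(\alpha)$, of the antisymmetric part $\CA$ --- is an eigenvector of $\CJ_S$. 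Your ``correct accounting'' paragraph arrives at exactly this bookkeeping, so up to that point the two arguments coincide.

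The genuine gap is the step you explicitly leave open, and it does not close the way you hope. With the unit-norm eigenvectors $r_j=\tfrac{1}{\sqrt2}(\mp\mathrm{i}u_j,\,v_j)$ of $\CA$ (which satisfy the reciprocal basis constraint $r_j^*r_j=1$ because $\CA$ is real antisymmetric, hence normal, so left and right eigenvectors coincide), the alternating correction at $\alpha=0$ is $\CP_0=\bigl[\begin{smallmatrix}\textbf{0}&\textbf{0}\\ \textbf{0}&N^\top N\end{smallmatrix}\bigr]$ (lower-\emph{right} block; your ``extra term'' display misplaces it in the lower-left), and one gets $r_j^*\CP_0 r_j=\tfrac12 v_j^\top N^\top N v_j=\tfrac12\sigma_j^2$, exactly as your own computation indicates. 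So the first-order shift is $-\tfrac{h}{2(1-\beta)^2}\lvert\sigma\rvert^2$, not $-\tfrac{h}{(1-\beta)^2}\lvert\sigma\rvert^2$; there is no normalization or cross-block effect that recovers the missing factor of $2$. (The paper's own final display obtains $\lvert\sigma\rvert^2$ only by dropping the $1/\sqrt2$ factors and by inconsistencies in the second components of the eigenvectors used there, so your suspicion is well founded rather than a sign that you mislabeled something.) As written, your proposal therefore does not establish the stated constant: you must either supply an argument the paper does not contain, or accept that the coefficient in \eqref{asyformulaas} should carry an extra $\tfrac12$ --- which leaves the qualitative conclusion (exponential acceleration of alternating over simultaneous updates for small $h,\alpha$) intact. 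The remaining pieces of your plan --- simplicity of the eigenvalues $\pm\mathrm{i}\sigma_k$, joint analyticity in $(h,\alpha)$ absorbing the cross terms into $\CO(h^2+h\alpha)$, and maximizing over the finitely many branches to identify $\lambda_A'$ --- are fine and match the paper.
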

Theorem~\ref{t43} demonstrates that when $h$ and $\alpha$ are sufficiently small, such that the term $\CO(h^2 + h\alpha)$ becomes negligible,  \ref{continuousAlt} exhibits an exponentially accelerated local convergence rate compared to \ref{continuousSim}. The algebraic condition for $\nabla_{xy}f(x^*,y^*)$ is mild, as it holds with probability one for randomly generated matrices from absolutely continuous distribution.  The numerical experiments for Theorem~\ref{t43} are provided in Appendix~\ref{exp_t43}.

To establish Theorem~\ref{t43}, we regard $\CJ_A$ as perturbation of $\CJ_S$ with respect to parameters $\alpha$ and $h$. The theorem can be proved using tools from matrix perturbation theory. The detailed proof is provided in Appendix~\ref{pt43}. Additionally, computing the coefficients of $\CO(h\alpha)$ in \eqref{asyformulaas} requires calculating perturbed normalized eigenvectors, which is widely recognized as a challenging problem \cite{bamieh2020tutorial}, and we choose to leave this topic for further work.



\begin{rem}\label{examplesimbetter}
The requirement that $h$ and $\alpha$ in Theorem~\ref{t43} be sufficiently small is crucial for establishing the superiority of \ref{continuousAlt}. Appendix~\ref{Example in Remark} presents an example illustrating that this superiority may not hold if the condition is not satisfied.
\end{rem}



\section{Implicit Gradient Regularization}
\label{discussions on implicit regularization}

Recent studies in minimization have shown that continuous-time models can reveal how algorithms like gradient descent implicitly regularize optimization trajectories, guiding them toward shallower slope regions of the loss landscape, i.e., areas with smaller $\lVert \nabla f(x) \lVert^2$ when $f(x)$ is the objective function.  This phenomenon, known as \textit{implicit gradient regularization} \cite{barrett2020implicit}, is crucial for understanding the effectiveness of optimization algorithms in modern deep learning, as flatter solutions are  linked to  better generalization \cite{foret2020sharpness}. Additionally, \cite{ghosh2023implicit} investigated the implicit gradient regularization of heavy ball momentum algorithm in minimization and demonstrated that a \textbf{\textit{larger}} momentum parameter $\beta$ can amplify this effect. Motivated by these findings, a natural question arises:

\textit{How does momentum influence the interaction between algorithm trajectories and the shallow slope regions of the loss landscapes in min-max games?}

This section aims to argue the following thesis, highlighting that the phenomenon in min-max games is opposite to that in minimization \cite{ghosh2023implicit}:

\textbf{Thesis.} In min-max games where players' interactions dominate the overall dynamics as described in Assumption~\ref{ibr}, \textbf{\textit{smaller}} momentum guides algorithms' trajectories towards shallower slope regions of loss landscapes, i.e., regions with lower values of $\lVert \nabla_{x} f \rVert^2 + \lVert \nabla_{y} f \rVert^2$. Moreover, alternating updates amplify this effect compared to simultaneous updates.


In Section~\ref{Insights from Continuous-time Models}, we present observations from our continuous-time models of momentum methods that support this thesis. Section~\ref{Experimental Results} provides an experimental validation of this thesis. Additional numerical results are provided in Appendix~\ref{moreexperimentsigr}.

\subsection{Observations from Continuous-time Models}\label{Insights from Continuous-time Models}

To justify the above thesis, we first examine \ref{continuousSim}:
\begin{align*}
    \dot{x} &= 
    \left(\frac{1}{\beta-1}\right) \nabla_{x} f
    - \frac{h(1+\beta)}{4(1-\beta)^3} \nabla_{x} \left(\lVert \nabla_{x} f \rVert^2
    - \lVert \nabla_{y} f \rVert^2\right) \\
    \dot{y} &= 
    \left(\frac{1}{1-\beta}\right) \nabla_{y} f
    - \frac{h(1+\beta)}{4(1-\beta)^3} \nabla_{y} \left(\lVert \nabla_{y} f \rVert^2
    - \lVert \nabla_{x} f \rVert^2\right)
\end{align*}
The evolution of the slope $\lVert \nabla_{x} f \rVert^2$ in these equations is influenced by two key terms.  The first equation includes a gradient descent term for $\lVert \nabla_{x} f \rVert^2$ with respect to $x$:
\begin{align}
    - \frac{h(1+\beta)}{4(1-\beta)^3} \nabla_x \lVert \nabla_{x} f \rVert^2
    = - \frac{h(1+\beta)}{4(1-\beta)^3} \nabla^2_{x}f \cdot \nabla_{x}f,\label{xgn2}
\end{align}
which drives $\lVert \nabla_{x} f \rVert^2$ to decrease. Conversely, the second equation contains a gradient ascent term with respect to $y$:
\begin{align}
   \frac{h(1+\beta)}{4(1-\beta)^3} \nabla_y\lVert \nabla_{x} f \rVert^2
   =  \frac{h(1+\beta)}{4(1-\beta)^3} \nabla_{yx}f \cdot \nabla_{x}f,\label{ygn2}
\end{align}
which tends to increase $\lVert \nabla_{x} f \rVert^2$. Since the coefficient $h(1+\beta)/4(1-\beta)^3$ is an increasing function of $\beta$, a smaller $\beta$ reduces both effects.

In games satisfying Assumption~\ref{ibr}, the interaction between players dominates game dynamics, that is, $\nabla^2_{x}f(x,y) \ll \nabla_{yx}f(x,y)$. Thus, with a smaller momentum parameter $\beta$, the reduction in the gradient ascent effect from \eqref{ygn2} outweighs the diminished descent effect from \eqref{xgn2}. Consequently, smaller $\beta$ leads to trajectories in \ref{continuousSim} with a shallower slope in terms of $\lVert \nabla_{x} f \rVert^2$. A similar conclusion applies to $\lVert \nabla_{y} f \rVert^2$.

Next, we consider \ref{continuousAlt}. Since the $x$-player's equation remains the same as in simultaneous updates, we focus on the $y$-player's equation:
\begin{align}\label{althasnegative}
    \dot{y} = 
    \left(\frac{1}{1-\beta}\right) \nabla_{y} f
    - &\frac{h(1+\beta)}{4(1-\beta)^3}  \nabla_{y} \lVert \nabla_{y} f \rVert^2 \nonumber\\
    & + \frac{h(3\beta-1)}{4(1-\beta)^3} \nabla_{y} \lVert \nabla_{x} f \rVert^2
\end{align}
The key difference between the $y$-player's equation in alternating and simultaneous updates lies in the coefficient of the $\nabla_{y} \lVert \nabla_{x} f \rVert^2$ term. In simultaneous updates, this coefficient is $h(1+\beta)/4(1-\beta)^3$, which is always positive for $\beta \in (-1,1)$. However, in alternating updates, the coefficient changes to $h(3\beta-1)/4(1-\beta)^3$, which becomes \textit{negative} when $\beta \in (-1,\frac{1}{3})$. This implies that for small values of $\beta$, the $y$-player performs gradient descent on $\lVert \nabla_{x} f \rVert^2$, in order to minimize it. This is in contrast to simultaneous updates, where the $y$-player always maximizes $\lVert \nabla_{x} f \rVert^2$. Consequently, alternating updates encourage trajectories to explore regions with shallower slopes.

\subsection{Experimental Results}\label{Experimental Results}
Unlike minimization, where algorithms almost always converge to a local minimum \cite{lee2019first}, the limit sets of algorithms in min-max games are more complex. Beyond convergence to local equilibrium, typical behaviors include convergence to limit cycles and other non-convergent dynamics, commonly observed in practical tasks such as GAN training \cite{mescheder2018training}. Our experiments show that the thesis holds across these diverse dynamical patterns. Additional experiments are provided in Appendix~\ref{moreexperimentsigr}.

\paragraph{Experiments on 2D test functions.} We present experiments on two 2D test functions, which lead the algorithms' trajectories converge to either a local equilibrium  or a limit cycle, and most parts of these trajectories satisfy Assumption~\ref{ibr}. The test functions are described as follows: 

\textit{Test function 1.}  The test function is $f(x,y) = -xy^2, x \ge 0$, where all points on the line $(x,0)$ are equilibrium points. As shown in Figure~\ref{2D_shallow_slopee1}, regions near equilibrium with smaller  $x$ values have lower gradient norms.  


\textit{Test function 2.}  The test function is $f(x,y) = 3x(4y-0.45)+g(x)-g(y)$, where $g(z) = \frac{1}{2}z^2-\frac{1}{2}z^4+\frac{1}{6}z^6$. This construction was also used by \cite{hsieh2021limits,pethick2023stable} to show that certain min-max algorithms exhibit limit cycles. As shown in Figure~\ref{2Dcyc_traj1}, the game has a unique equilibrium, and regions near the equilibrium exhibit lower gradient norms. Furthermore, under this test function, heavy ball momentum converges to limit cycles, highlighted by the white curves in Figure~\ref{2Dcyc_traj1}.



For each function, we present sample trajectories and the evolution of trajectories' average slopes as $\beta$ changes. The average slopes are numerically calculated through
\begin{align*}
    \mathrm{AvgSlope}(\beta) = \frac{1}{\lvert S(\beta) \lvert} \int_{S(\beta)} \left(\lVert 
\nabla_xf\lVert^2 +  \lVert 
\nabla_yf\lVert^2 \right)\mathrm{ds}\end{align*}
where $|S(\beta)|$ denotes  the total trajectory length for a given momentum $S(\beta)$. \textit{Lower} average slope values indicate that algorithms' trajectories explore \textit{shallower} slope regions of min-max loss landscapes.


 Figure~\ref{2D_shallow_slopee1} and \ref{2Dcyc_traj1} show that smaller $\beta$ guides the sample trajectories toward regions with lower slopes. Figure~\ref{2D_shallow_slopee2} and \ref{2Dcyc_traj2} demonstrate that for both simultaneous and alternating updates, smaller $\beta$ results in trajectories with lower average slopes. Additionally, for a fixed $\beta$, \ref{alt} produces trajectories with lower average slopes than \ref{sim}. These experimental results fully support our thesis.



\begin{figure}[t]
    \centering
    \subfigure[Convergence to equilibrium]{
        \includegraphics[width=1.5in]{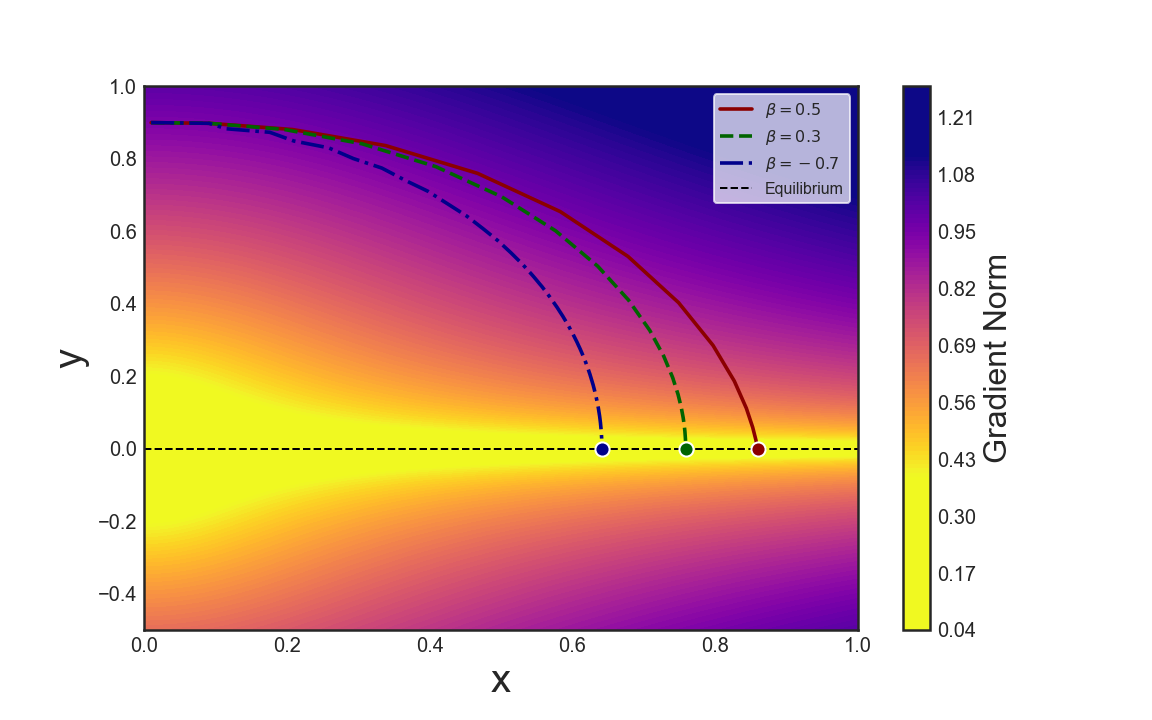}
        \label{2D_shallow_slopee1}
    }
    \subfigure[Average slopes]{
	\includegraphics[width=1.5in]{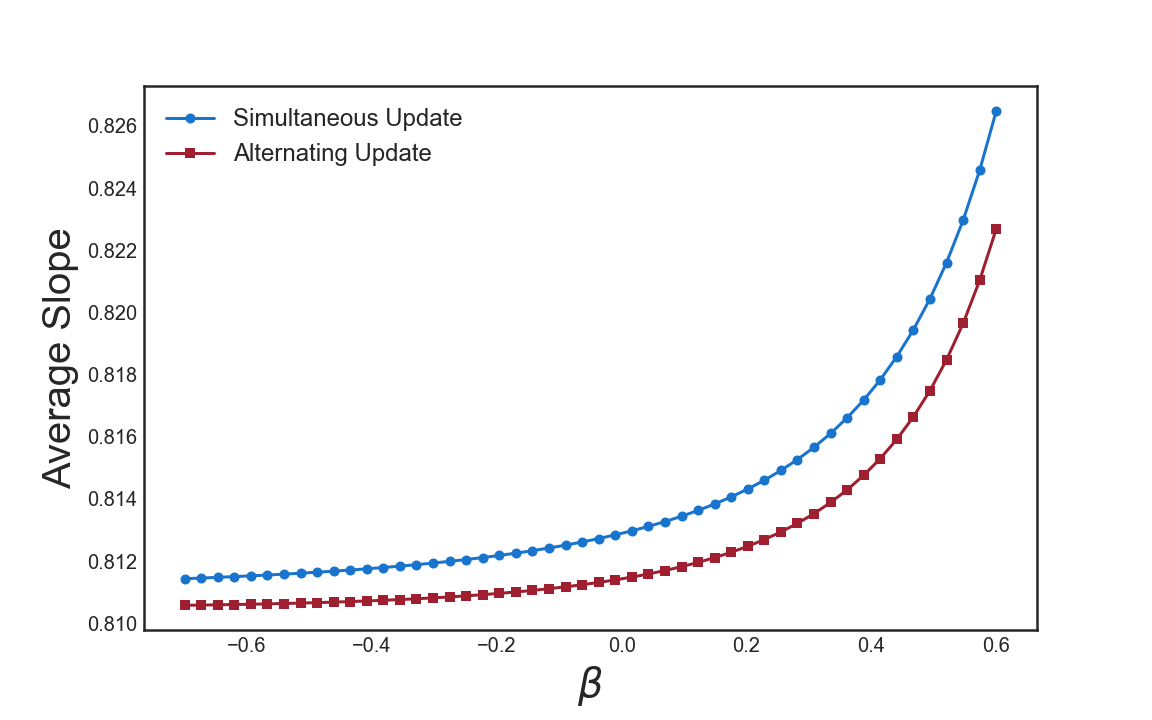}
        \label{2D_shallow_slopee2}
    }
    \subfigure[Convergence to limit cycles]{
        \includegraphics[width=1.51in]{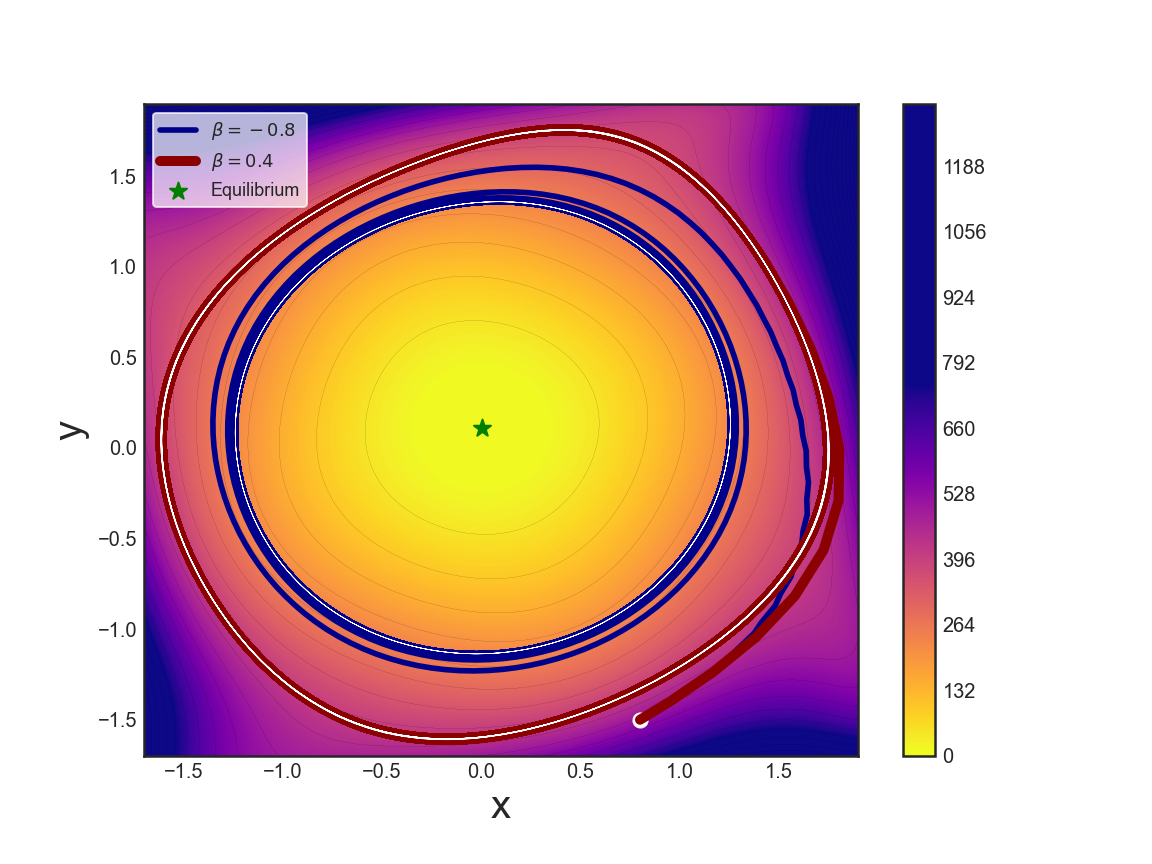}
        \label{2Dcyc_traj1}
    }
    \subfigure[Average slopes]{
	\includegraphics[width=1.42in]{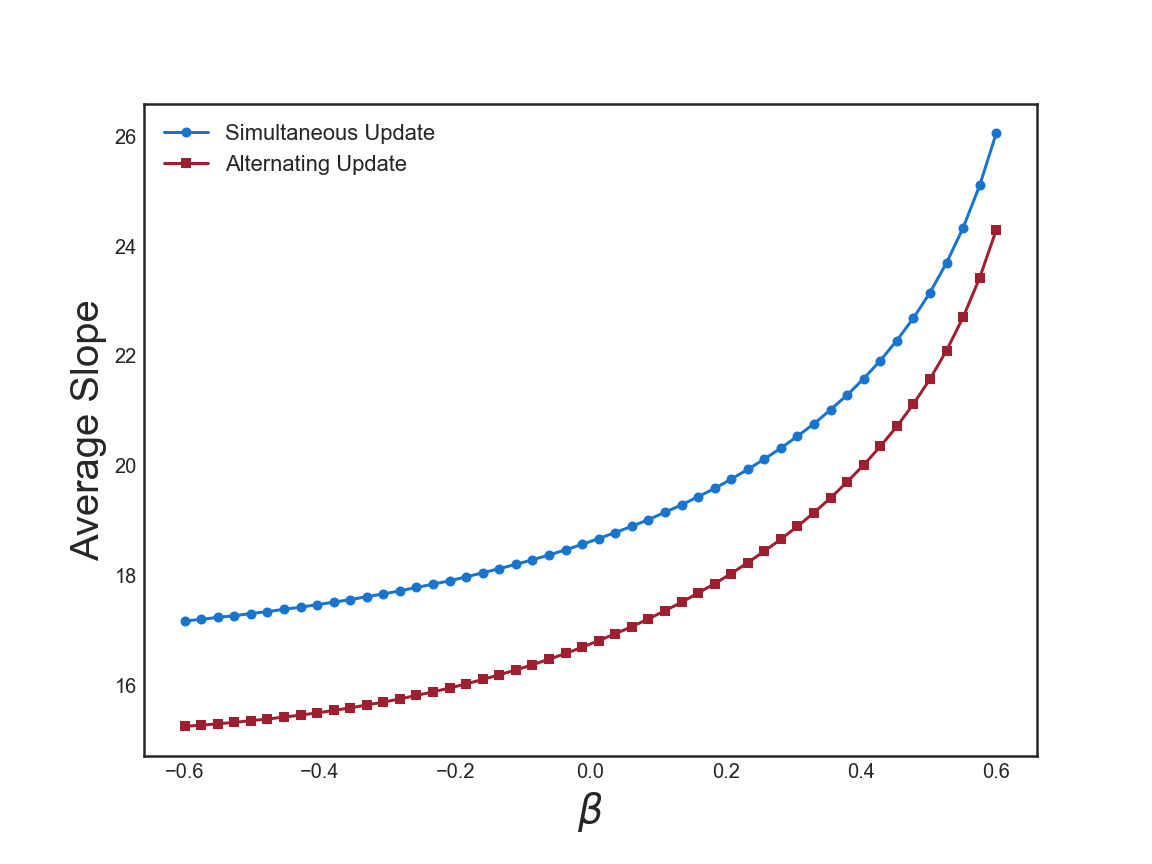}
        \label{2Dcyc_traj2}
    }
    \caption{Trajectories and average slopes for test function 1 and 2. The background color in trajectories' pictures represent the magnitude of the gradient norm, i.e., $\lVert \nabla_x f\lVert^2 + \lVert \nabla_y f\lVert^2$. }
    \label{2dflat1}
\end{figure}


\paragraph{Experiments on GANs training.} 
We provide experiments on  GANs training dynamics. The aim of these experiments is not to beat the state-of-the-art, but rather to validate our thesis on the implicit gradient regularization effect of momentum in GANs training setting. Future details and experiments are presented in Appendix~\ref{GTE}. Our setup generally follows the Wasserstein GANs framework \cite{gulrajani2017improved} using the CIFAR-10 dataset. We train GANs using the Adam algorithm with different heavy ball momentum parameters, updating the generator and discriminator either alternately or simultaneously. Future details include: Neural network architecture: Both generator and discriminator use the ResNet-32 architecture. Both generator and discriminator use the learning rate 2e-4, with a linearly decreasing step size schedule. The batch size is 64. During the training, we update both the generator and discriminator in each iteration, which is consistent with the algorithms investigated in this work.

To measure the implicit regularization effects, we evaluate the average slopes of algorithm trajectories, as in previous experiments. Since integrating over high-dimensional GANs training trajectories is challenging, we use cumulative average slopes as an alternative measure. At iteration $t$, the cumulative average slope of algorithms' trajectory is calculated as
\begin{align*}
\mathrm{AvgS}_{\beta}(t) = \frac{1}{t} 
\sum^t_{s=1} \left( \lVert \nabla_x f(x_s,y_s) \lVert^2 +  \lVert \nabla_y f(x_s,y_s) \lVert^2  \right).
\end{align*}

Figure~\ref{gd_cifar1} shows the evolution of $\mathrm{AvgS}_{\beta}(t)$ for trajectories with different 
$\beta$ under alternating updates. The results indicate that smaller momentum leads to smaller average slopes. Figure~\ref{gd_sa1} compares two trajectories that differ only in their update rules, and alternating update has lower average slopes than simultaneous update. All of these results support the predictions made by the thesis. 

Interestingly, the implicit gradient regularization effect appears to be linked to GAN training quality. In Figures~\ref{gd_cifar1fid} and \ref{gd_sa1fid}, we use the Fréchet Inception Distance (FID) to evaluate GANs performance, where lower FID values indicate better GANs performance \cite{heusel2017gans}. The results show that trajectories with lower average slopes correspond to better training outcomes. This may be attributed to the implicit gradient regularization effect, which guides optimization trajectories away from sharp gradient regions in the min-max loss landscape, and these regions are believed to be a major factor in GAN training failures \cite{thanh2019improving,wu2021gradient}.


\begin{figure}[t]
    \centering
    \subfigure[Average Slopes, different $\beta$]{
        \includegraphics[width=1.4in]{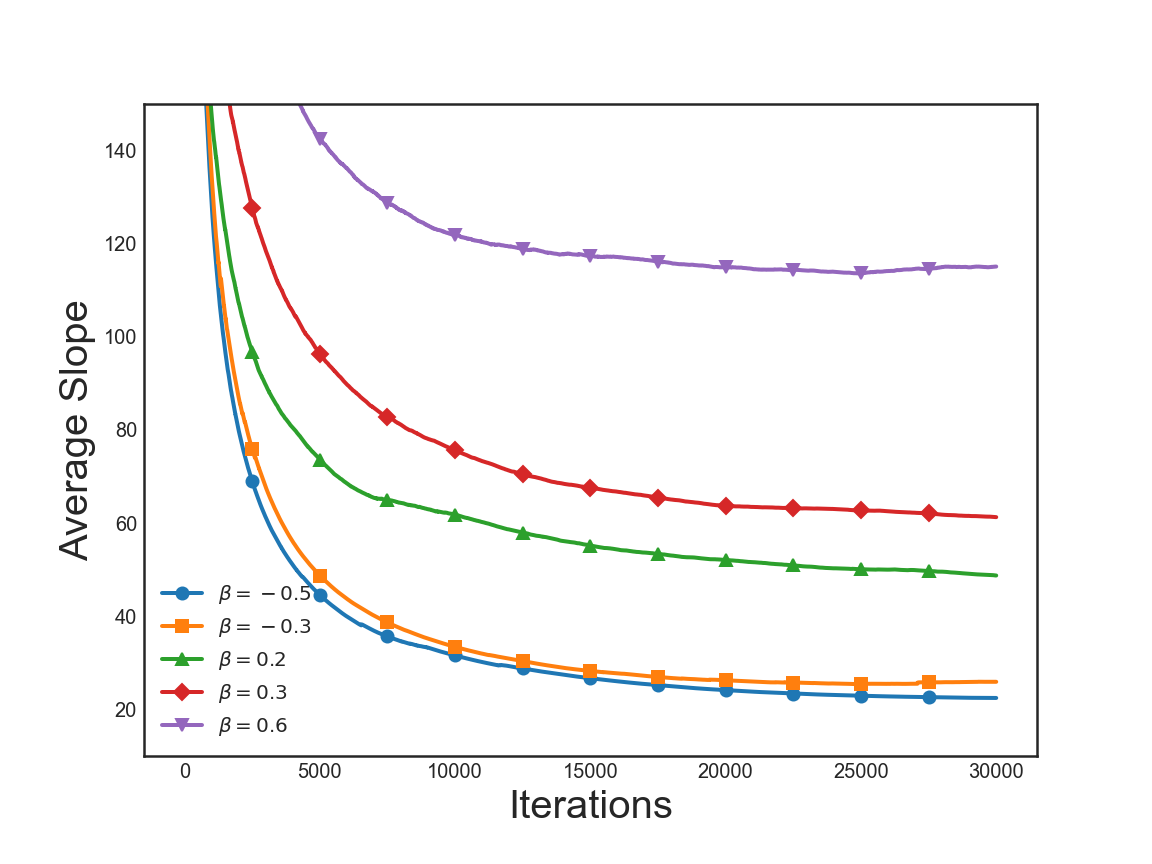}
        \label{gd_cifar1}
    }
    \subfigure[Average Slopes, alt vs. sim]{
        \includegraphics[width=1.5in]{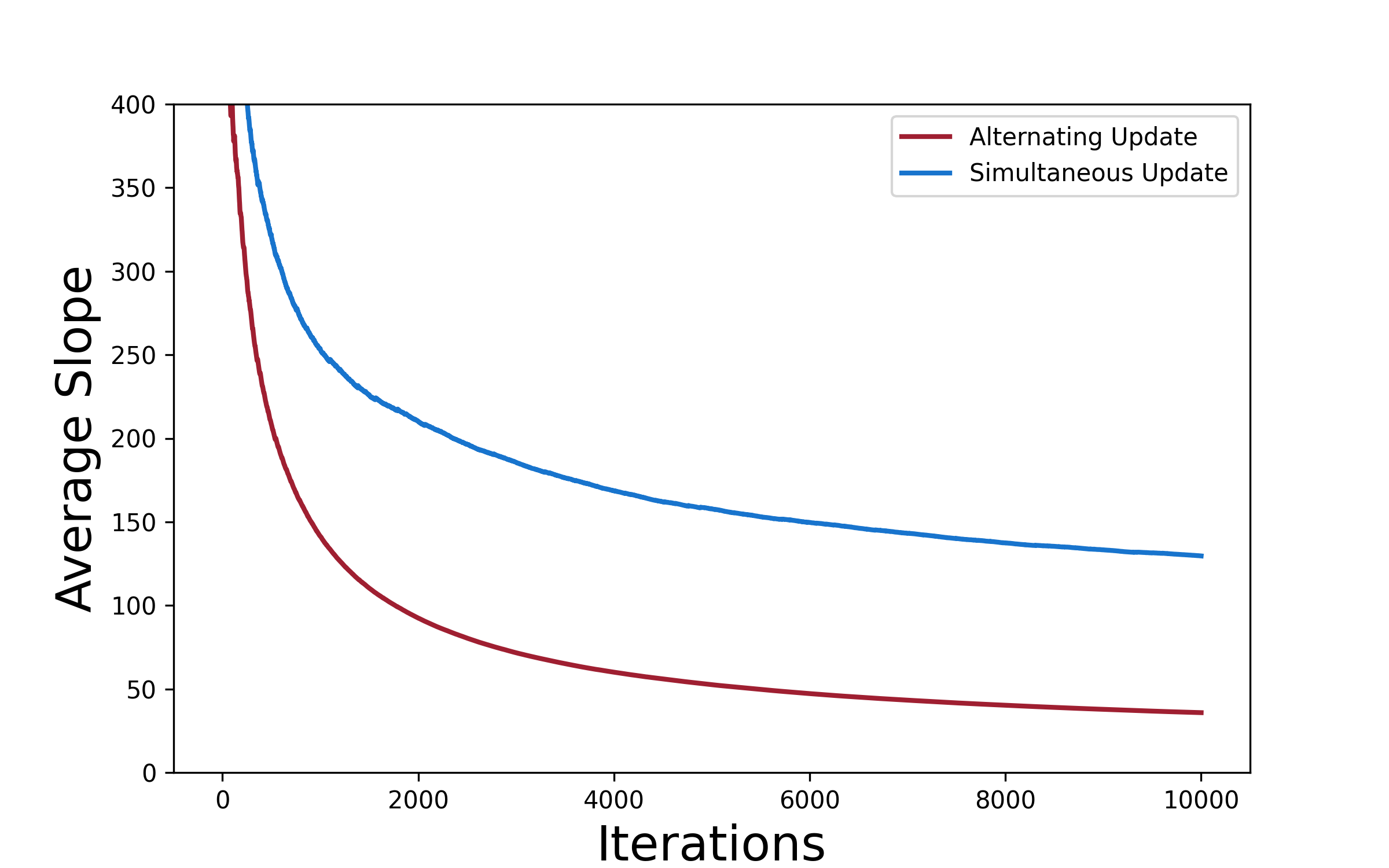}
        \label{gd_sa1}
    }
    \subfigure[FID, different $\beta$]{
        \includegraphics[width=1.5in]{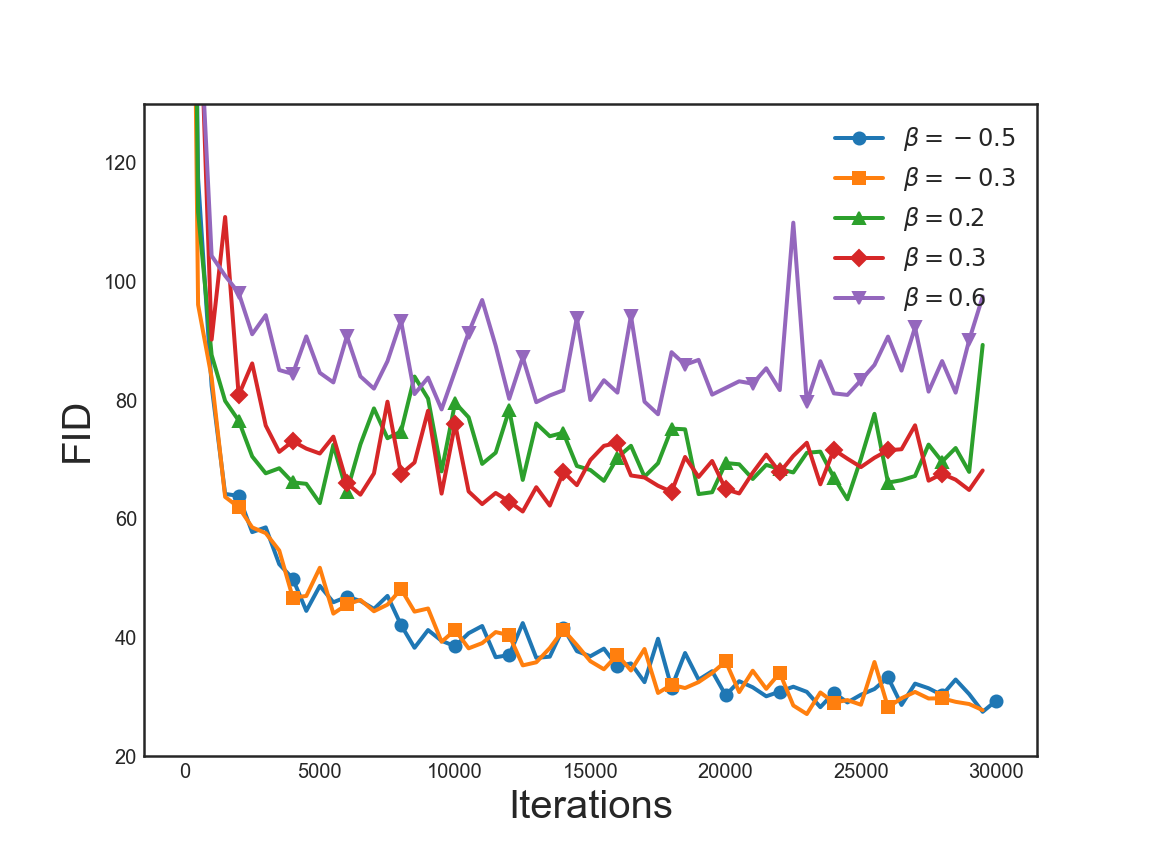}
        \label{gd_cifar1fid}
    }
    \subfigure[FID, alt vs. sim]{
        \includegraphics[width=1.5in]{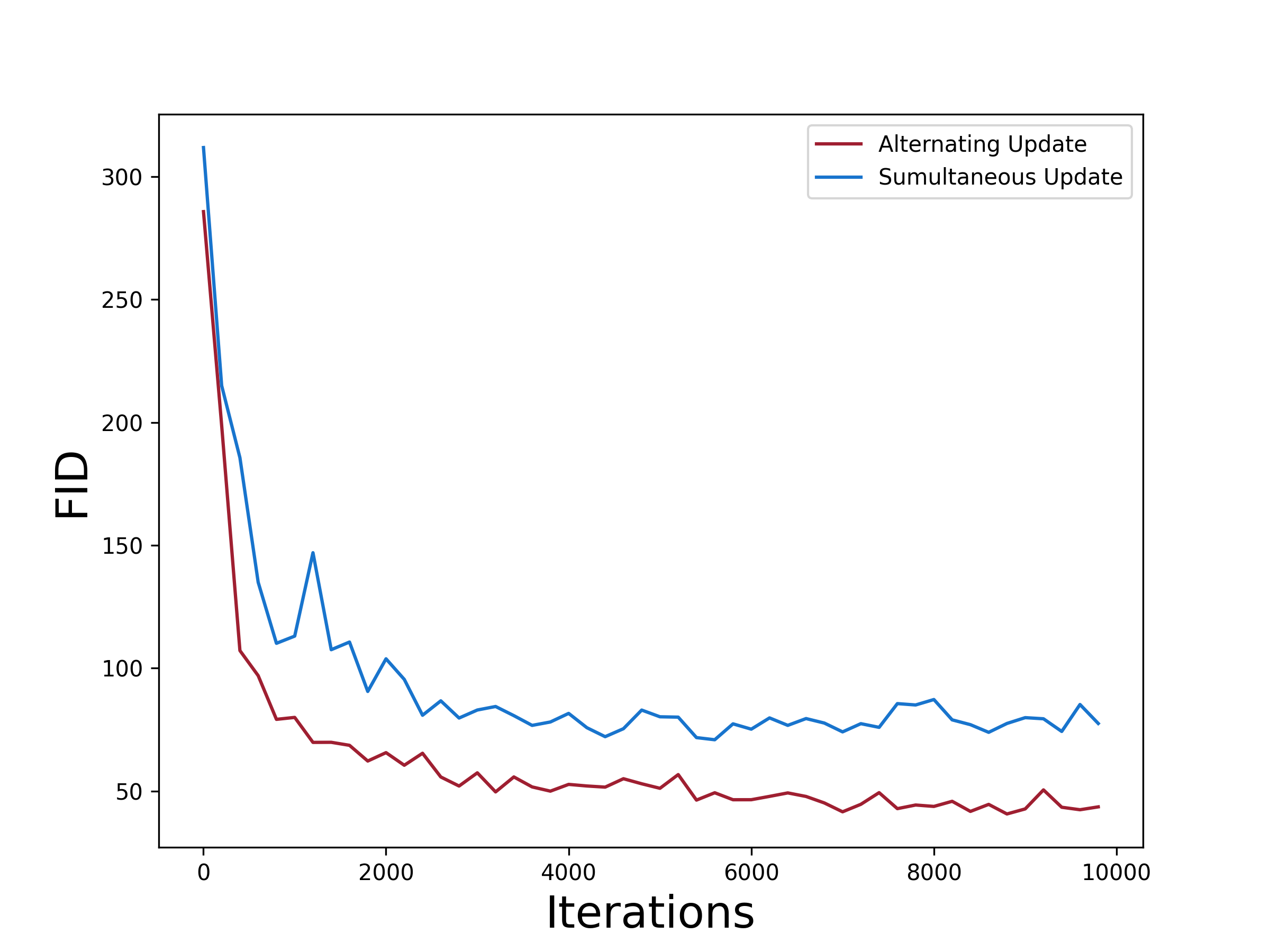}
        \label{gd_sa1fid}
    }
    \caption{Experimental results for GANs training dynamics. Smaller momentum and alternating updates lead the trajectories to lower average slopes. Trajectories with lower average slopes also have lower FID, indicating the better GANs training outcome.}
    \label{cifar}
\end{figure}

\section{Conclusions and Future Directions}
In this paper, we explored the effects of momentum and updates rules in min-max games using continuous-time models. We analyzed the local convergence properties and the implicit gradient regularization effects of momentum algorithms, revealing fundamental differences in the role of momentum between min-max games and minimization. One of the limitations of this work is that we do not consider the potential impact of using stochastic gradient methods on algorithms' dynamics.
 Our results also suggest a potential link between implicit regularization and training outcomes in min-max games. Exploring this connection further and developing a general theoretical framework, similar to that established in the minimization setting, presents an exciting avenue for future research.


\section*{Acknowledgments}
The authors sincerely thank the valuable comments from the anonymous reviewers. Kaito Fujii was supported by FY2024 Researcher Exchange Program between JSPS and ETH and JSPS KAKENHI Grant Number 22K17857. Stratis Skoulakis was supported by Villum Young Investigator Award. Xiao Wang acknowledges Grant 202110458 from SUFE and support from the Shanghai Research Center for Data Science and Decision Technology. This work was supported by Hasler Foundation Program: Hasler Responsible AI (project number 21043). Research was sponsored by the Army Research Office and was accomplished under Grant Number W911NF-24-1-0048. This work was supported by the Swiss National Science Foundation (SNSF) under grant number 200021$\_$205011.

\section*{Impact Statement}
This paper presents work whose goal is to advance the field of Machine Learning. There are many potential societal consequences of our work, none of which we feel must be specifically highlighted here.

\bibliography{Bibli}

\begin{thebibliography}{56}
\providecommand{\natexlab}[1]{#1}
\providecommand{\url}[1]{\texttt{#1}}
\expandafter\ifx\csname urlstyle\endcsname\relax
  \providecommand{\doi}[1]{doi: #1}\else
  \providecommand{\doi}{doi: \begingroup \urlstyle{rm}\Url}\fi

\bibitem[Azizian et~al.(2020)Azizian, Scieur, Mitliagkas, Lacoste-Julien, and Gidel]{azizian2020accelerating}
Azizian, W., Scieur, D., Mitliagkas, I., Lacoste-Julien, S., and Gidel, G.
\newblock Accelerating smooth games by manipulating spectral shapes.
\newblock In \emph{International Conference on Artificial Intelligence and Statistics}, pp.\  1705--1715. PMLR, 2020.

\bibitem[Bailey et~al.(2020)Bailey, Gidel, and Piliouras]{bailey2020finite}
Bailey, J.~P., Gidel, G., and Piliouras, G.
\newblock Finite regret and cycles with fixed step-size via alternating gradient descent-ascent.
\newblock In \emph{Conference on Learning Theory}, pp.\  391--407. PMLR, 2020.

\bibitem[Bamieh(2020)]{bamieh2020tutorial}
Bamieh, B.
\newblock A tutorial on matrix perturbation theory (using compact matrix notation).
\newblock \emph{arXiv preprint arXiv:2002.05001}, 2020.

\bibitem[Barrett \& Dherin(2021)Barrett and Dherin]{barrett2020implicit}
Barrett, D.~G. and Dherin, B.
\newblock Implicit gradient regularization.
\newblock \emph{ICLR}, 2021.

\bibitem[Compagnoni et~al.(2024{\natexlab{a}})Compagnoni, Liu, Islamov, Proske, Orvieto, and Lucchi]{compagnoni2024adaptive}
Compagnoni, E.~M., Liu, T., Islamov, R., Proske, F.~N., Orvieto, A., and Lucchi, A.
\newblock Adaptive methods through the lens of sdes: Theoretical insights on the role of noise.
\newblock \emph{arXiv preprint arXiv:2411.15958}, 2024{\natexlab{a}}.

\bibitem[Compagnoni et~al.(2024{\natexlab{b}})Compagnoni, Orvieto, Kersting, Proske, and Lucchi]{compagnoni2024sdes}
Compagnoni, E.~M., Orvieto, A., Kersting, H., Proske, F., and Lucchi, A.
\newblock Sdes for minimax optimization.
\newblock In \emph{International Conference on Artificial Intelligence and Statistics}, pp.\  4834--4842. PMLR, 2024{\natexlab{b}}.

\bibitem[Daskalakis et~al.(2017)Daskalakis, Ilyas, Syrgkanis, and Zeng]{daskalakis2017training}
Daskalakis, C., Ilyas, A., Syrgkanis, V., and Zeng, H.
\newblock Training {GAN}s with optimism.
\newblock \emph{arXiv preprint arXiv:1711.00141}, 2017.

\bibitem[Do et~al.(2022)Do, Le, Nguyen, Nguyen, Harikumar, Tran, Rana, and Venkatesh]{do2022momentum}
Do, K., Le, T.~H., Nguyen, D., Nguyen, D., Harikumar, H., Tran, T., Rana, S., and Venkatesh, S.
\newblock Momentum adversarial distillation: Handling large distribution shifts in data-free knowledge distillation.
\newblock \emph{Advances in Neural Information Processing Systems}, 35:\penalty0 10055--10067, 2022.

\bibitem[Fang et~al.(2025)Fang, Liu, Yu, and Hu]{fang2024rapid}
Fang, Z., Liu, Z., Yu, C., and Hu, C.
\newblock Rapid learning in constrained minimax games with negative momentum.
\newblock \emph{AAAI}, 2025.

\bibitem[Feng et~al.(2024)Feng, Piliouras, and Wang]{feng2024prediction}
Feng, Y., Piliouras, G., and Wang, X.
\newblock Prediction accuracy of learning in games: Follow-the-regularized-leader meets heisenberg.
\newblock \emph{ICML}, 2024.

\bibitem[Foret et~al.(2021)Foret, Kleiner, Mobahi, and Neyshabur]{foret2020sharpness}
Foret, P., Kleiner, A., Mobahi, H., and Neyshabur, B.
\newblock Sharpness-aware minimization for efficiently improving generalization.
\newblock \emph{ICLR}, 2021.

\bibitem[Ghosh et~al.(2023)Ghosh, Lyu, Zhang, and Wang]{ghosh2023implicit}
Ghosh, A., Lyu, H., Zhang, X., and Wang, R.
\newblock Implicit regularization in heavy-ball momentum accelerated stochastic gradient descent.
\newblock \emph{ICLR}, 2023.

\bibitem[Gidel et~al.(2019)Gidel, Hemmat, Pezeshki, Le~Priol, Huang, Lacoste-Julien, and Mitliagkas]{gidel2019negative}
Gidel, G., Hemmat, R.~A., Pezeshki, M., Le~Priol, R., Huang, G., Lacoste-Julien, S., and Mitliagkas, I.
\newblock Negative momentum for improved game dynamics.
\newblock In \emph{The 22nd International Conference on Artificial Intelligence and Statistics}, 2019.

\bibitem[Goh(2017)]{goh2017why}
Goh, G.
\newblock Why momentum really works.
\newblock \emph{Distill}, 2017.
\newblock \doi{10.23915/distill.00006}.
\newblock URL \url{http://distill.pub/2017/momentum}.

\bibitem[Goodfellow et~al.(2014)Goodfellow, Pouget-Abadie, Mirza, Xu, Warde-Farley, Ozair, Courville, and Bengio]{goodfellow2014generative}
Goodfellow, I.~J., Pouget-Abadie, J., Mirza, M., Xu, B., Warde-Farley, D., Ozair, S., Courville, A., and Bengio, Y.
\newblock Generative adversarial nets.
\newblock In \emph{Proceedings of the 27th International Conference on Neural Information Processing Systems - Volume 2}, NIPS’14, pp.\  2672–2680, Cambridge, MA, USA, 2014. MIT Press.

\bibitem[Graham(2018)]{graham2018matrix}
Graham, A.
\newblock \emph{Matrix Theory and Applications for Scientists and Engineers}.
\newblock Courier Dover Publications, 2018.

\bibitem[Gulrajani et~al.(2017)Gulrajani, Ahmed, Arjovsky, Dumoulin, and Courville]{gulrajani2017improved}
Gulrajani, I., Ahmed, F., Arjovsky, M., Dumoulin, V., and Courville, A.~C.
\newblock Improved training of {W}asserstein {GAN}s.
\newblock \emph{Advances in neural information processing systems}, 30, 2017.

\bibitem[Hait et~al.(2025)Hait, Li, Luo, and Zhang]{hait2025alternating}
Hait, S., Li, P., Luo, H., and Zhang, M.
\newblock Alternating regret for online convex optimization.
\newblock \emph{arXiv preprint arXiv:2502.12529}, 2025.

\bibitem[Heusel et~al.(2017)Heusel, Ramsauer, Unterthiner, Nessler, and Hochreiter]{heusel2017gans}
Heusel, M., Ramsauer, H., Unterthiner, T., Nessler, B., and Hochreiter, S.
\newblock {GAN}s trained by a two time-scale update rule converge to a local {N}ash equilibrium.
\newblock \emph{Advances in Neural Information Processing Systems}, 30, 2017.

\bibitem[Hsieh et~al.(2021)Hsieh, Mertikopoulos, and Cevher]{hsieh2021limits}
Hsieh, Y.-P., Mertikopoulos, P., and Cevher, V.
\newblock The limits of min-max optimization algorithms: Convergence to spurious non-critical sets.
\newblock In \emph{International Conference on Machine Learning}, pp.\  4337--4348. PMLR, 2021.

\bibitem[Kingma \& Ba(2014)Kingma and Ba]{kingma2014adam}
Kingma, D.~P. and Ba, J.
\newblock Adam: A method for stochastic optimization.
\newblock \emph{arXiv preprint arXiv:1412.6980}, 2014.

\bibitem[Korpelevich(1976)]{korpelevich1976extragradient}
Korpelevich, G.~M.
\newblock The extragradient method for finding saddle points and other problems.
\newblock \emph{Matecon}, 12:\penalty0 747--756, 1976.

\bibitem[Kovachki \& Stuart(2021)Kovachki and Stuart]{JMLR:v22:19-466}
Kovachki, N.~B. and Stuart, A.~M.
\newblock Continuous time analysis of momentum methods.
\newblock \emph{Journal of Machine Learning Research}, 22\penalty0 (17):\penalty0 1--40, 2021.

\bibitem[Lee et~al.(2024)Lee, Cho, and Yun]{lee2024fundamental}
Lee, J., Cho, H., and Yun, C.
\newblock Fundamental benefit of alternating updates in minimax optimization.
\newblock In \emph{Proceedings of the 41st International Conference on Machine Learning}, 2024.

\bibitem[Lee et~al.(2019)Lee, Panageas, Piliouras, Simchowitz, Jordan, and Recht]{lee2019first}
Lee, J.~D., Panageas, I., Piliouras, G., Simchowitz, M., Jordan, M.~I., and Recht, B.
\newblock First-order methods almost always avoid strict saddle points.
\newblock \emph{Mathematical programming}, 176:\penalty0 311--337, 2019.

\bibitem[Lessard et~al.(2016)Lessard, Recht, and Packard]{lessard2016analysis}
Lessard, L., Recht, B., and Packard, A.
\newblock Analysis and design of optimization algorithms via integral quadratic constraints.
\newblock \emph{SIAM Journal on Optimization}, 26\penalty0 (1):\penalty0 57--95, 2016.

\bibitem[Letcher et~al.(2019)Letcher, Balduzzi, Racani{{\`e}}re, Martens, Foerster, Tuyls, and Graepel]{JMLR:v20:19-008}
Letcher, A., Balduzzi, D., Racani{{\`e}}re, S., Martens, J., Foerster, J., Tuyls, K., and Graepel, T.
\newblock Differentiable game mechanics.
\newblock \emph{Journal of Machine Learning Research}, 20\penalty0 (84):\penalty0 1--40, 2019.

\bibitem[Li et~al.(2017)Li, Tai, et~al.]{li2017stochastic}
Li, Q., Tai, C., et~al.
\newblock Stochastic modified equations and adaptive stochastic gradient algorithms.
\newblock In \emph{International Conference on Machine Learning}, pp.\  2101--2110. PMLR, 2017.

\bibitem[Lotidis et~al.(2024)Lotidis, Giannou, Mertikopoulos, and Bambos]{lotidisaccelerated}
Lotidis, K., Giannou, A., Mertikopoulos, P., and Bambos, N.
\newblock Accelerated regularized learning in finite n-person games.
\newblock In \emph{The Thirty-eighth Annual Conference on Neural Information Processing Systems}, 2024.

\bibitem[Lu(2022)]{lu2022sr}
Lu, H.
\newblock An o (sr)-resolution ode framework for understanding discrete-time algorithms and applications to the linear convergence of minimax problems.
\newblock \emph{Mathematical Programming}, 194\penalty0 (1):\penalty0 1061--1112, 2022.

\bibitem[Madry(2017)]{madry2017towards}
Madry, A.
\newblock Towards deep learning models resistant to adversarial attacks.
\newblock \emph{arXiv preprint arXiv:1706.06083}, 2017.

\bibitem[Mescheder et~al.(2018)Mescheder, Geiger, and Nowozin]{mescheder2018training}
Mescheder, L., Geiger, A., and Nowozin, S.
\newblock Which training methods for {GAN}s do actually converge?
\newblock In \emph{International Conference on Machine Learning}, pp.\  3481--3490. PMLR, 2018.

\bibitem[Muehlebach \& Jordan(2019)Muehlebach and Jordan]{muehlebach2019dynamical}
Muehlebach, M. and Jordan, M.
\newblock A dynamical systems perspective on {N}esterov acceleration.
\newblock In \emph{International Conference on Machine Learning}, pp.\  4656--4662. PMLR, 2019.

\bibitem[Muehlebach \& Jordan(2021)Muehlebach and Jordan]{muehlebach2021optimization}
Muehlebach, M. and Jordan, M.~I.
\newblock Optimization with momentum: Dynamical, control-theoretic, and symplectic perspectives.
\newblock \emph{Journal of Machine Learning Research}, 22\penalty0 (73):\penalty0 1--50, 2021.

\bibitem[Ochs(2018)]{ochs2018local}
Ochs, P.
\newblock Local convergence of the heavy-ball method and i{P}iano for non-convex optimization.
\newblock \emph{Journal of Optimization Theory and Applications}, 177\penalty0 (1):\penalty0 153--180, 2018.

\bibitem[O’Donoghue \& Candes(2015)O’Donoghue and Candes]{o2015adaptive}
O’Donoghue, B. and Candes, E.
\newblock Adaptive restart for accelerated gradient schemes.
\newblock \emph{Foundations of Computational Mathematics}, 15:\penalty0 715--732, 2015.

\bibitem[Pethick et~al.(2023)Pethick, Xie, and Cevher]{pethick2023stable}
Pethick, T., Xie, W., and Cevher, V.
\newblock Stable nonconvex-nonconcave training via linear interpolation.
\newblock \emph{Advances in Neural Information Processing Systems}, 36:\penalty0 49830--49841, 2023.

\bibitem[Polyak(1964)]{polyak1964some}
Polyak, B.~T.
\newblock Some methods of speeding up the convergence of iteration methods.
\newblock \emph{USSR Computational Mathematics and Mathematical Physics}, 4\penalty0 (5):\penalty0 1--17, 1964.

\bibitem[Qian(1999)]{qian1999momentum}
Qian, N.
\newblock On the momentum term in gradient descent learning algorithms.
\newblock \emph{Neural networks}, 12\penalty0 (1):\penalty0 145--151, 1999.

\bibitem[Ratliff et~al.(2016)Ratliff, Burden, and Sastry]{ratliff2016characterization}
Ratliff, L.~J., Burden, S.~A., and Sastry, S.~S.
\newblock On the characterization of local {N}ash equilibria in continuous games.
\newblock \emph{IEEE Transactions on Automatic Control}, 61\penalty0 (8):\penalty0 2301--2307, 2016.

\bibitem[Rosca et~al.(2021)Rosca, Wu, Dherin, and Barrett]{rosca2021discretization}
Rosca, M.~C., Wu, Y., Dherin, B., and Barrett, D.
\newblock Discretization drift in two-player games.
\newblock In \emph{International Conference on Machine Learning}, pp.\  9064--9074. PMLR, 2021.

\bibitem[Salimans et~al.(2016)Salimans, Goodfellow, Zaremba, Cheung, Radford, and Chen]{salimans2016improved}
Salimans, T., Goodfellow, I., Zaremba, W., Cheung, V., Radford, A., and Chen, X.
\newblock Improved techniques for training gans.
\newblock \emph{Advances in neural information processing systems}, 29, 2016.

\bibitem[Shi et~al.(2022)Shi, Du, Jordan, and Su]{shi2022understanding}
Shi, B., Du, S.~S., Jordan, M.~I., and Su, W.~J.
\newblock Understanding the acceleration phenomenon via high-resolution differential equations.
\newblock \emph{Mathematical Programming}, pp.\  1--70, 2022.

\bibitem[Su et~al.(2016)Su, Boyd, and Cand{{\`e}}s]{JMLR:v17:15-084}
Su, W., Boyd, S., and Cand{{\`e}}s, E.~J.
\newblock A differential equation for modeling nesterov's accelerated gradient method: Theory and insights.
\newblock \emph{Journal of Machine Learning Research}, 17\penalty0 (153):\penalty0 1--43, 2016.

\bibitem[Sutskever et~al.(2013)Sutskever, Martens, Dahl, and Hinton]{pmlr-v28-sutskever13}
Sutskever, I., Martens, J., Dahl, G., and Hinton, G.
\newblock On the importance of initialization and momentum in deep learning.
\newblock In \emph{Proceedings of the 30th International Conference on Machine Learning}, Proceedings of Machine Learning Research, pp.\  1139--1147. PMLR, 17--19 Jun 2013.

\bibitem[Thanh-Tung et~al.(2019)Thanh-Tung, Tran, and Venkatesh]{thanh2019improving}
Thanh-Tung, H., Tran, T., and Venkatesh, S.
\newblock Improving generalization and stability of generative adversarial networks.
\newblock \emph{ICLR}, 2019.

\bibitem[Wang \& Chizat(2024)Wang and Chizat]{wang2024local}
Wang, G. and Chizat, L.
\newblock Local convergence of gradient methods for min-max games: partial curvature generically suffices.
\newblock \emph{Advances in Neural Information Processing Systems}, 36, 2024.

\bibitem[Wang et~al.(2022)Wang, Lin, Wibisono, and Hu]{wang2022provable}
Wang, J.-K., Lin, C.-H., Wibisono, A., and Hu, B.
\newblock Provable acceleration of heavy ball beyond quadratics for a class of {P}olyak-{L}ojasiewicz functions when the non-convexity is averaged-out.
\newblock In \emph{International Conference on Machine Learning}, pp.\  22839--22864. PMLR, 2022.

\bibitem[Wang et~al.(2019)Wang, Ma, Bailey, Yi, Zhou, and Gu]{wang2021convergence}
Wang, Y., Ma, X., Bailey, J., Yi, J., Zhou, B., and Gu, Q.
\newblock On the convergence and robustness of adversarial training.
\newblock In \emph{International Conference on Machine Learning}, pp.\  6586--6595. PMLR, 2019.

\bibitem[Wibisono et~al.(2016)Wibisono, Wilson, and Jordan]{wibisono2016variational}
Wibisono, A., Wilson, A.~C., and Jordan, M.~I.
\newblock A variational perspective on accelerated methods in optimization.
\newblock \emph{proceedings of the National Academy of Sciences}, 113\penalty0 (47):\penalty0 E7351--E7358, 2016.

\bibitem[Wibisono et~al.(2022)Wibisono, Tao, and Piliouras]{wibisono2022alternating}
Wibisono, A., Tao, M., and Piliouras, G.
\newblock Alternating mirror descent for constrained min-max games.
\newblock \emph{Advances in Neural Information Processing Systems}, 35:\penalty0 35201--35212, 2022.

\bibitem[Wu et~al.(2021)Wu, Shuai, Tam, and Chiu]{wu2021gradient}
Wu, Y.-L., Shuai, H.-H., Tam, Z.-R., and Chiu, H.-Y.
\newblock Gradient normalization for generative adversarial networks.
\newblock In \emph{Proceedings of the IEEE/CVF International Conference on Computer Vision}, pp.\  6373--6382, 2021.

\bibitem[Yang et~al.(2022)Yang, Orvieto, Lucchi, and He]{yang2022faster}
Yang, J., Orvieto, A., Lucchi, A., and He, N.
\newblock Faster single-loop algorithms for minimax optimization without strong concavity.
\newblock In \emph{International Conference on Artificial Intelligence and Statistics}, pp.\  5485--5517. PMLR, 2022.

\bibitem[Zhang \& Wang(2021)Zhang and Wang]{zhang2021suboptimality}
Zhang, G. and Wang, Y.
\newblock On the suboptimality of negative momentum for minimax optimization.
\newblock In \emph{International Conference on Artificial Intelligence and Statistics}, pp.\  2098--2106. PMLR, 2021.

\bibitem[Zhang et~al.(2021)Zhang, Bao, Lessard, and Grosse]{JMLR:v22:20-1068}
Zhang, G., Bao, X., Lessard, L., and Grosse, R.
\newblock A unified analysis of first-order methods for smooth games via integral quadratic constraints.
\newblock \emph{Journal of Machine Learning Research}, 22\penalty0 (103):\penalty0 1--39, 2021.

\bibitem[Zhang et~al.(2022)Zhang, Wang, Lessard, and Grosse]{pmlr-v151-zhang22e}
Zhang, G., Wang, Y., Lessard, L., and Grosse, R.~B.
\newblock Near-optimal local convergence of alternating gradient descent-ascent for minimax optimization.
\newblock In \emph{International Conference on Artificial Intelligence and Statistics}, pp.\  7659--7679. PMLR, 2022.

\end{thebibliography}
\bibliographystyle{icml2025}

\newpage
\appendix
\onecolumn
\newpage

\section{More Experiments on Comparing Trajectories between Continuous-time Equations and Discrete-time Algorithms}\label{mectbce}

In this section, we provide experiments in  examples provided in Figure 1 of \cite{compagnoni2024sdes}.

\begin{figure}[h]
    \centering
    \subfigure[Trajectories of \textcolor{customBlue}{Continuous Sim-HB} and \textcolor{customBlue}{Sim-HB}.]{
        \includegraphics[width=2.5in]{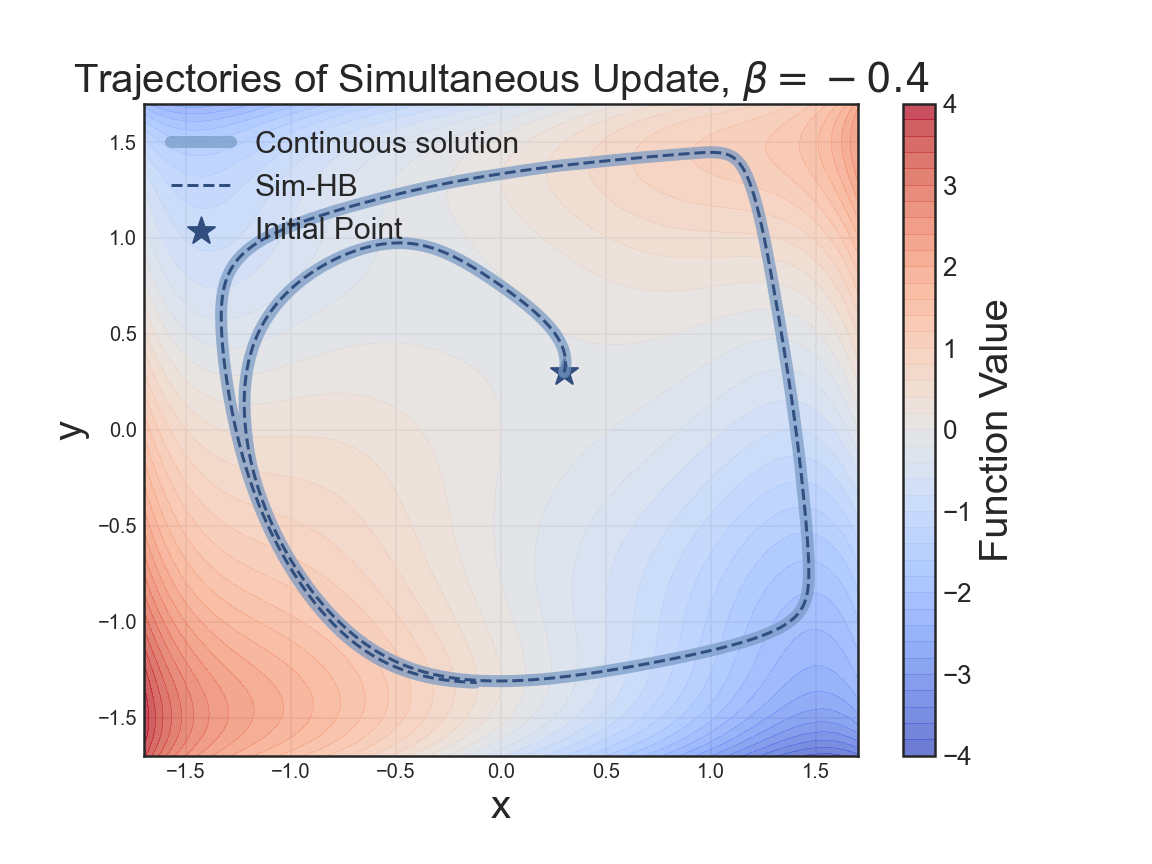}
        \label{t1}
    }
    \subfigure[Distance between ODEs and algorithms, \textcolor{customBlue}{simultaneous update}.]{
	\includegraphics[width=2.5in]{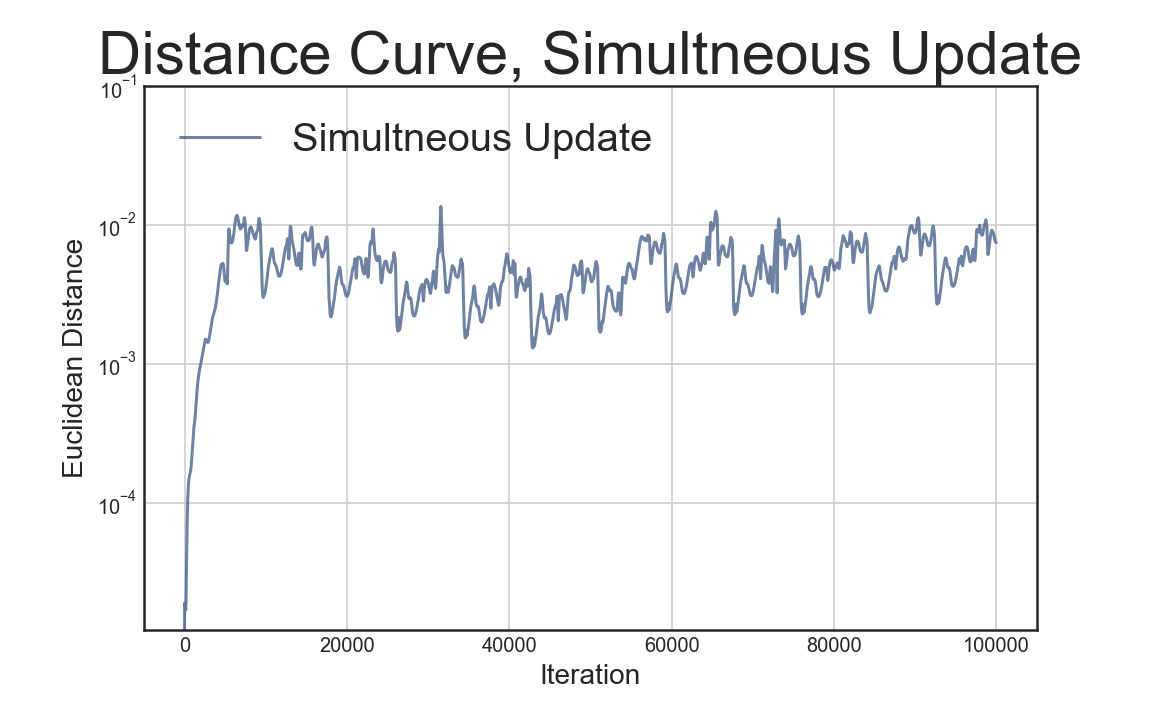}
        \label{d1}
    }
    \subfigure[Trajectories of \textcolor{deepred}{Continuous Alt-HB} and \textcolor{deepred}{Alt-HB}.]{
        \includegraphics[width=2.5in]{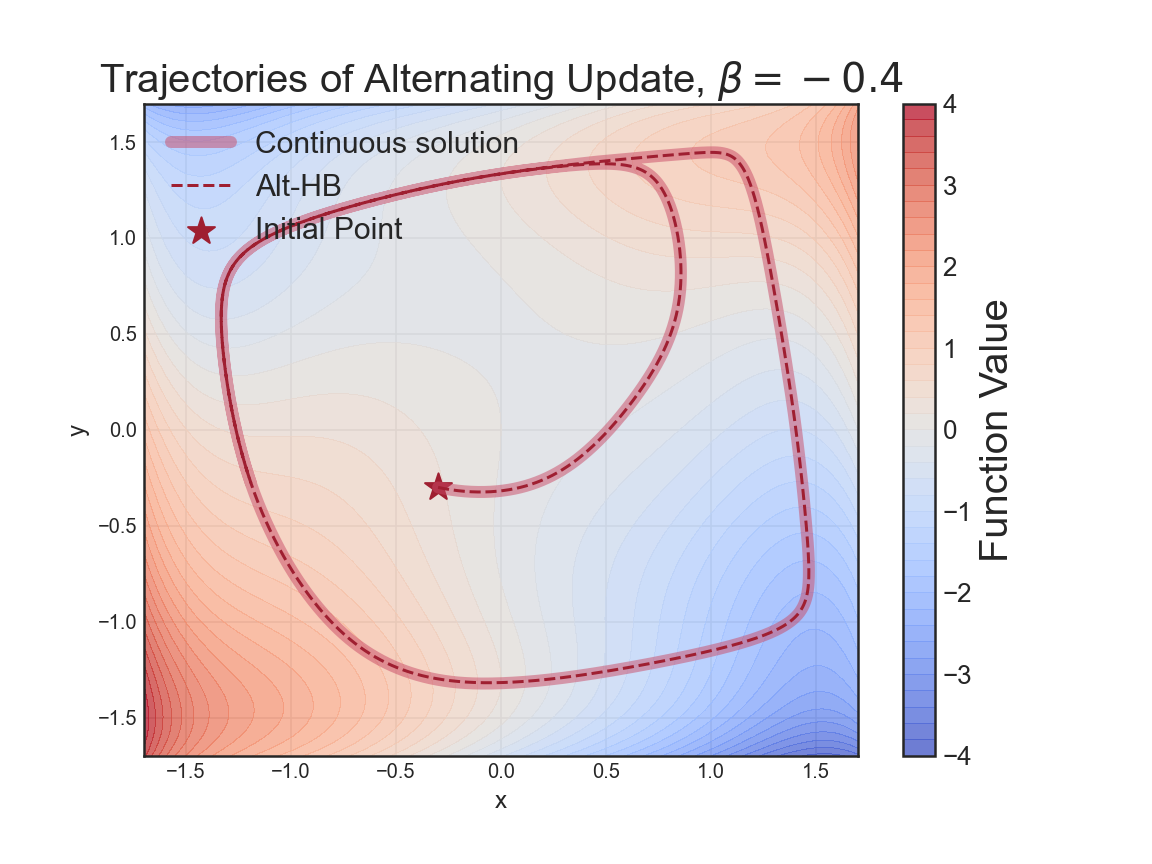}
        \label{t1a}
    }
    \subfigure[Distance between ODEs and algorithms, \textcolor{deepred}{alternating update}.]{
	\includegraphics[width=2.5in]{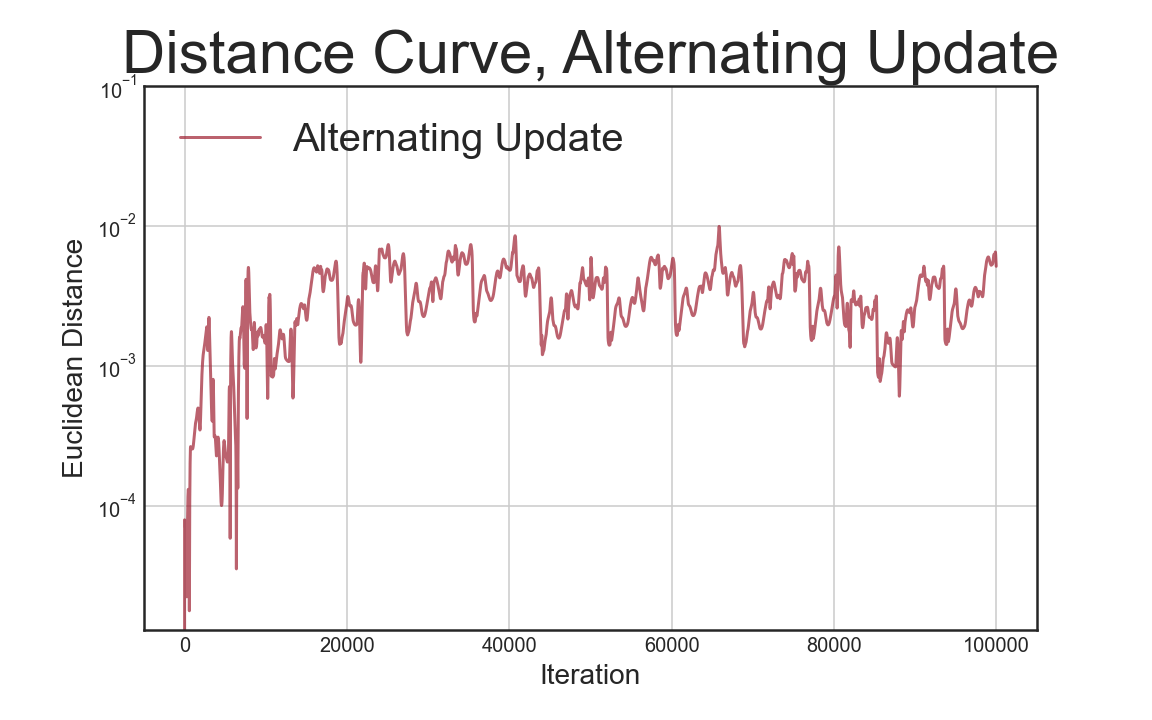}
        \label{d1a}
    }
  \caption{The test function is $f(x,y) = x(y-0.45) + \phi(x) - \phi(y),\ \phi(z) = \frac{1}{4}z^2 - \frac{1}{2}z^4+\frac{1}{6}z^6$. This function is also used by \textbf{top left of Figure 1 in (Compagnoni et al., 2024)} to compare the trajectories between algorithms and their SDE models. Here we set step size $h=0.001$ and $\beta = -0.4$ for the heavy ball method. The trajectories converge to limit cycles. From \ref{t1} and \ref{t1a}, the trajectories of the discrete-time algorithms closely match our continuous-time equations. In \ref{d1} and \ref{d1a}, we show the Euclidean distance between these trajectories. Specially, the distance between trajectories remains around $0.01$ after $100000$ iterations}.
\end{figure}


\begin{figure}[h]
    \centering
    \subfigure[Trajectories of \textcolor{customBlue}{Continuous Sim-HB} and \textcolor{customBlue}{Sim-HB}.]{
        \includegraphics[width=2.5in]{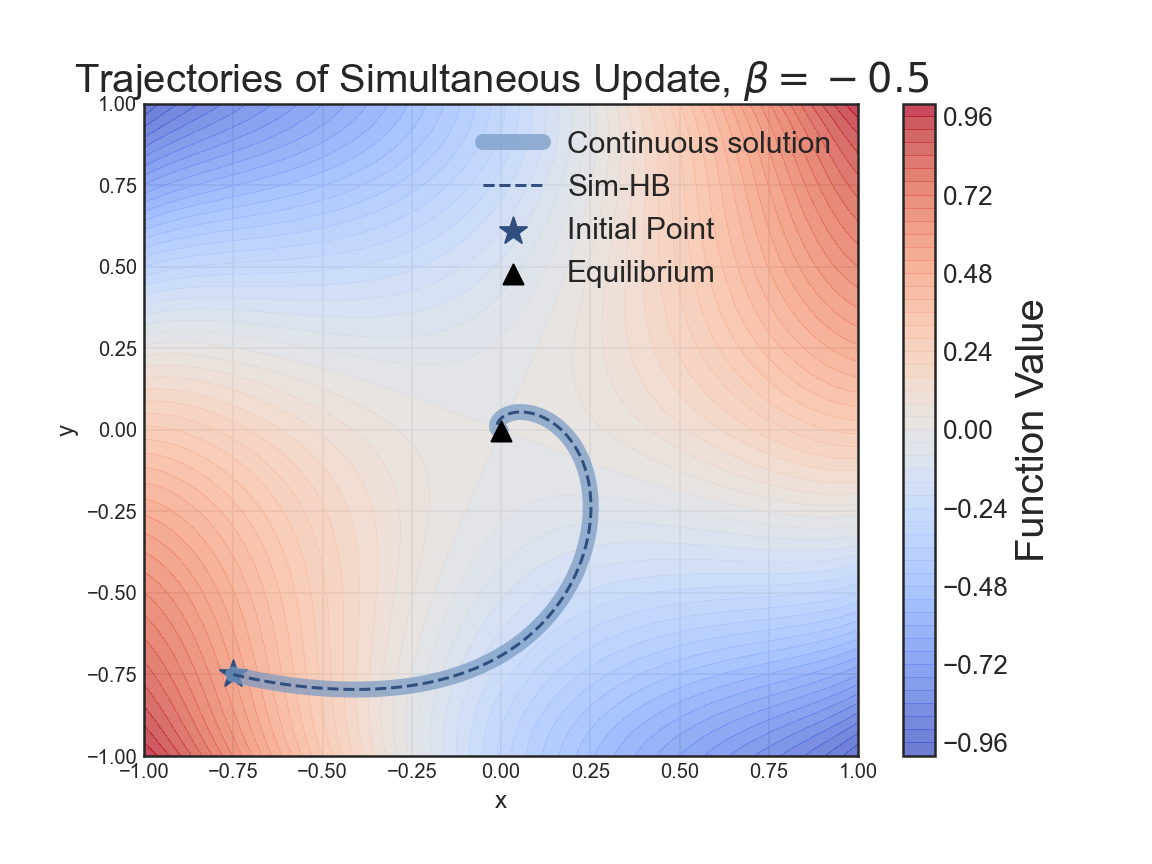}
        \label{t3}
    }
    \subfigure[Distance between ODEs and algorithms, \textcolor{customBlue}{simultaneous update}.]{
	\includegraphics[width=2.5in]{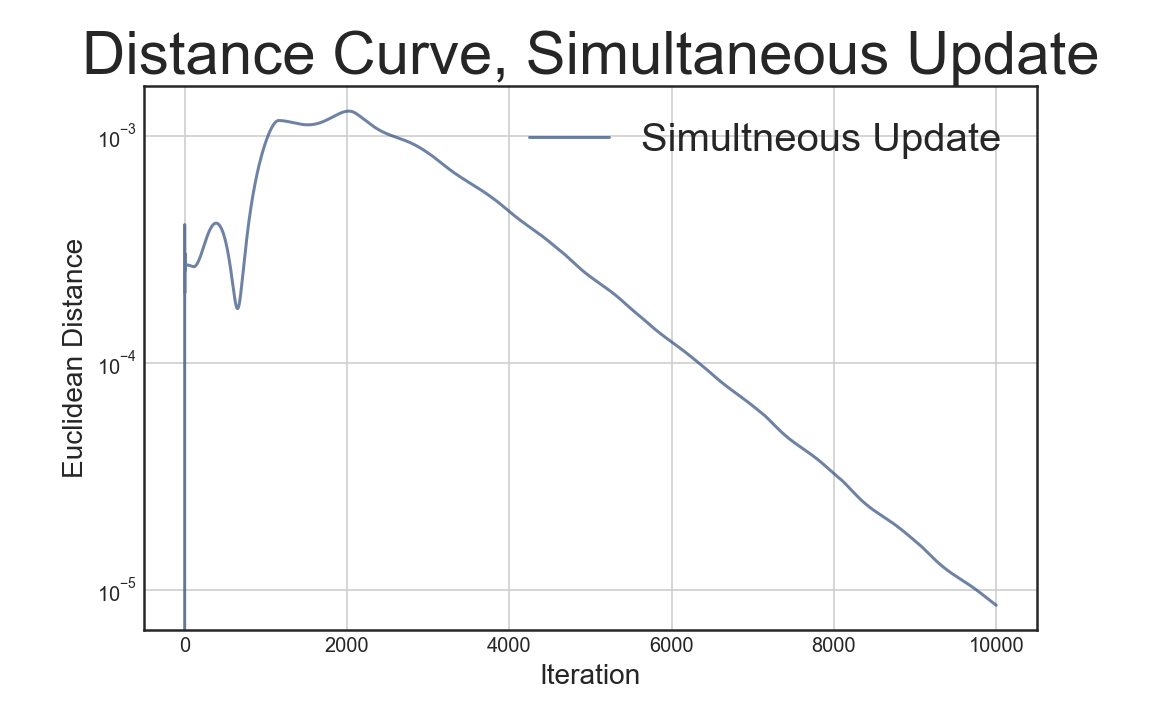}
        \label{d3}
    }
    \subfigure[Trajectories of \textcolor{deepred}{Continuous Alt-HB} and \textcolor{deepred}{Alt-HB}.]{
        \includegraphics[width=2.5in]{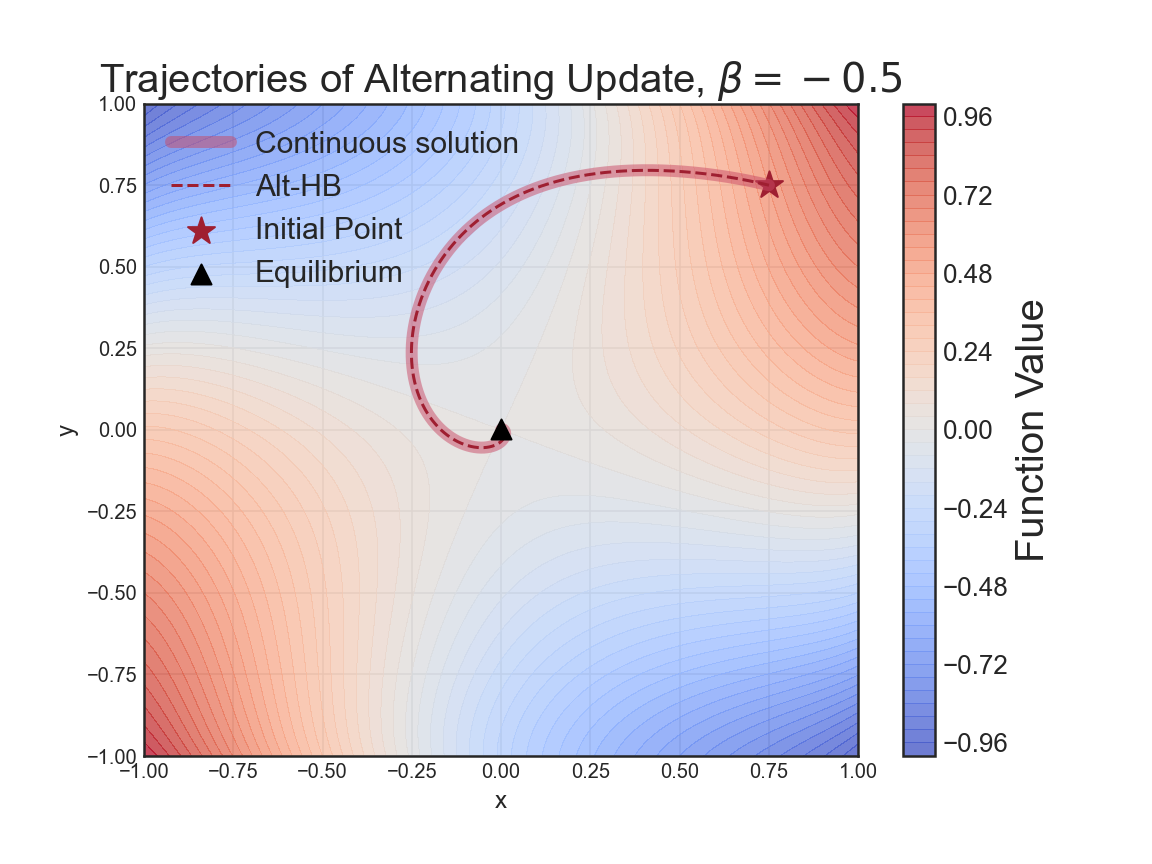}
        \label{t3a}
    }
    \subfigure[Distance between ODEs and algorithms, \textcolor{deepred}{alternating update}.]{
	\includegraphics[width=2.5in]{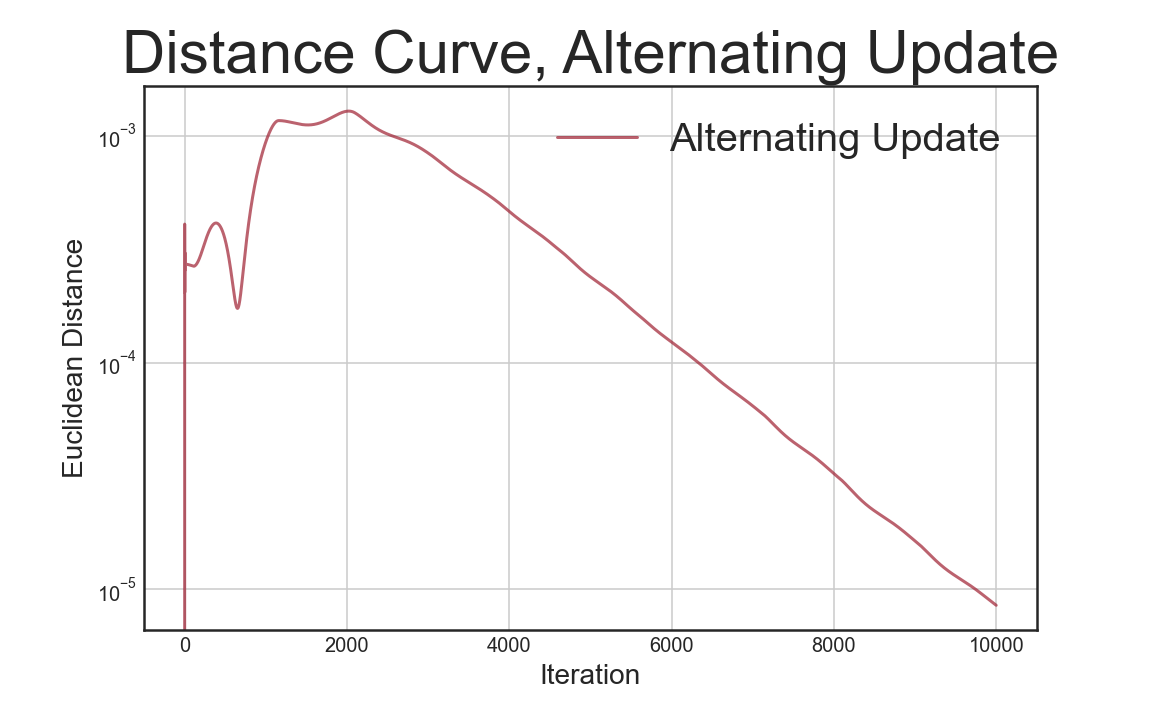}
        \label{d3a}
    }
  \caption{The test function is $f(x,y) = xy + \phi(x) - \phi(y),\ \phi(z) = \frac{1}{2}z^2 - \frac{1}{4}z^4+\frac{1}{6}z^6 - \frac{1}{8}z^8$. This function is also used by \textbf{bottom left of Figure 1 in (Compagnoni et al., 2024)} to compare the trajectories between algorithms and their SDE models. Here we set step size $h=0.001$ and $\beta = -0.5$ for the heavy ball method. The trajectories converge to equilibrium. From \ref{t3} and \ref{t3a}, the trajectories of the discrete-time algorithms closely match our continuous-time equations. In \ref{d3} and \ref{d3a}, we show the Euclidean distance between these trajectories. We find these two curves only have \textbf{very slightly differences}. Moreover, the distances between these curves are close to $0.00001$ after $10000$ iterations}.
\end{figure}



\section{Proof of Theorem~\ref{localerror}}\label{plocalerror}

\subsection{Proof Overview and Discussions} There are several methods in the minimization literature to provide continuous-time equations to model the discrete-time momentum methods, for example, \cite{JMLR:v17:15-084,shi2022understanding} and \cite{muehlebach2019dynamical}. Most of these methods are specially designed for the setting where the objective function $f(x)$ is a (strongly) convex function, and the choice of step size and momentum are coupled together using numerical conditions of $f(x)$ such as the condition numbers. These methods are not suitable for our purposes because we aim to develop models that allow the momentum and step size to change independently, thus enabling us to better understand the effects of each term on the learning dynamics in min-max games.

The continuous-time models used in the current paper is inspired by recent work of \cite{ghosh2023implicit}, which try to use continuous-time models of heavy ball momentum to understand their implicit gradient regularization in minimization problems. The  proof strategies of Theorem~\ref{localerror} can be summarized in two steps:
\begin{itemize}
    \item Step 1: Firstly, we calculate a family of piecewise differential equations indexed by $n$, and each one is defined on the finite time region $[t_n,t_{n+1})$ where $t_n = hn$, and $h$ is the step size of the discrete-time algorithms. The solution of these equations can be linked to continuous curves that serve as good approximations of the original  discrete-time algorithms. In our case, it can be proved that the local error between these continuous-time solutions and the discrete-time trajectories belongs to $\CO(h^3)$. However, this family of equations has several drawbacks: they are not differentiable on the time nodes, and their formulation is dependent on $n$, which lead them to be complex to analysis. 
    \item Step 2: Secondly, we show that this family of differential equations can converge to a differential equation which is independent of $n$ and is differentiable everywhere with an exponential rate. This exponential rate convergence enables us to consider the solution of this single differential equation instead of the family of equations defined in Step 1. In fact, we can prove that after a very short time region, the distance between the trajectories of this single equation and the solution of the equations defined in step 1 stops growth and can be bounded in $\CO(h^3)$ local error. This provide the theoretical guarantee that this single equation can serve as a good approximation of the original  discrete-time algorithms.
\end{itemize}

The above two steps are needed both for simultaneous updates and alternating updates of heavy ball momentum in min-max games. In Lemma~\ref{mainlemmat31} and Lemma~\ref{mainlemmat32}, we will calculate the family of piecewise differentiable equations for \ref{sim} and \ref{alt}. We will see that because there are two players in a min-max game, the continuous-time equations we get for heavy ball momentum methods are very different from the one in \cite{ghosh2023implicit}. 

For example, if we consider the simultaneous update, as Lemma~\ref{mainlemmat31} shows, the $x$-player's equation is written as 
\begin{align*}
    \dot{x}(t) &= - \left(\frac{1-\beta^n}{1-\beta} \right) \nabla_{x} f(x(t),y(t)) - \frac{h  \gamma_n}{2(1-\beta)^2}  \left( \nabla^2_{x} f(x(t),y(t)) \cdot \nabla_{x} f(x(t),y(t)) -  \nabla_{xy} f(x(t),y(t)) \cdot \nabla_{y} f(x(t),y(t)) \right),
\end{align*}
and the equations in \cite{ghosh2023implicit} (Theorem 4.1 in their paper) are written as
\begin{align*}
    \dot{x}(t) &= - \left(\frac{1-\beta^n}{1-\beta} \right) \nabla_{x} f(x(t)) - \frac{h  \gamma_n}{2(1-\beta)^2}   \nabla^2_{x} f(x(t)) \cdot \nabla_{x} f(x(t)),
\end{align*}
the key term $\nabla_{xy} f(x(t),y(t)) \cdot \nabla_{y} f(x(t),y(t))$ is important for understanding the great differences on the role of momentum in min-max games and minimization. 

In Lemma~\ref{tildeandthmclose} and \ref{tildeandthmclose2}, we prove the family of piecewise differentiable equations defined in Lemma~\ref{mainlemmat31} and Lemma~\ref{mainlemmat32} can quickly converge to equations independent of index $n$. This can be seen from the fact that the dependence of $n$ in Lemma~\ref{mainlemmat31} and Lemma~\ref{mainlemmat32} appears only in the form of $\beta^n$. Since $\beta \in (-1,1)$, $\beta^n$ vanishes with an exponential rate.

\subsection{Simultaneous Updates}

\begin{lem}\label{mainlemmat31} Let $t_n = nh$ where $h$ is the step size used by \ref{sim}. Then in time region $[t_n,t_{n+1})$, the solution of the following ODEs:
\begin{align*}
   \dot{x}(t) &= - \left(\frac{1-\beta^n}{1-\beta} \right) \nabla_{x} f(x(t),y(t)) \nonumber\\
   & - \frac{h  \gamma_n}{2(1-\beta)^2}  \left( \nabla^2_{x} f(x(t),y(t)) \cdot \nabla_{x} f(x(t),y(t)) -  \nabla_{xy} f(x(t),y(t)) \cdot \nabla_{y} f(x(t),y(t)) \right) \\
   \\
   \dot{y}(t) &=  \left(\frac{1-\beta^n}{1-\beta} \right) \nabla_{y} f(x(t),y(t))\nonumber \\
   & + \frac{h \gamma_n}{2(1-\beta)^2}  \left( \nabla_{yx} f(x(t),y(t)) \cdot \nabla_{x} f(x(t),y(t)) -  \nabla^2_{y} f(x(t),y(t)) \cdot \nabla_{y} f(x(t),y(t)) \right),
\end{align*}
can approximate the trajectories of \ref{sim} with an $\CO(h^3)$-local error. 

Here $\gamma_n = \sum^n_{i=0} \beta^{n-i}\left[ (1+\beta)(1+\beta^{2i+1}) - 4\beta^{i+1} \right]$.
\end{lem}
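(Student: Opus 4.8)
\textbf{Proof proposal for Lemma~\ref{mainlemmat31}.}

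The plan is to follow the two-step strategy outlined in the proof overview, but concretely carried out for the $n$-th time interval $[t_n, t_{n+1})$. First I would set up the candidate continuous curve: define $(x(t), y(t))$ on $[t_n, t_{n+1})$ as the solution of the stated ODE system with initial condition $(x(t_n), y(t_n)) = (x_n, y_n)$ taken from the discrete iterates of \ref{sim}. The goal is to Taylor-expand $x(t_{n+1})$ and $y(t_{n+1})$ in powers of $h$ and match them against the one-step update of \ref{sim}, showing agreement up to $\CO(h^3)$. The key bookkeeping device is the momentum-accumulation identity: unrolling the recursion $x_{n+1} - x_n = -h\nabla_x f(x_n,y_n) + \beta(x_n - x_{n-1})$ gives $x_{n+1} - x_n = -h\sum_{i=0}^n \beta^{n-i}\nabla_x f(x_i, y_i)$, so the ``effective'' first-order drift after $n$ steps carries the geometric factor $\sum_{i=0}^n \beta^{n-i} = (1-\beta^{n+1})/(1-\beta)$, which is why the coefficient $(1-\beta^n)/(1-\beta)$ appears. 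I would then expand each $\nabla_x f(x_i, y_i)$ around the current point $(x_n, y_n)$ using $x_i - x_n = \CO(h)$, picking up Hessian-times-gradient correction terms $\nabla^2_x f \cdot \nabla_x f$ and $\nabla_{xy} f \cdot \nabla_y f$ at order $h^2$; the precise combinatorial coefficient that multiplies these is exactly $h\gamma_n / 2(1-\beta)^2$, where $\gamma_n = \sum_{i=0}^n \beta^{n-i}[(1+\beta)(1+\beta^{2i+1}) - 4\beta^{i+1}]$ emerges from summing the weighted contributions of all past steps.

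Concretely, the main computation proceeds as follows. On the ODE side, write $\dot x = -a_n \nabla_x f - h b_n (\nabla^2_x f\,\nabla_x f - \nabla_{xy} f\,\nabla_y f)$ with $a_n = (1-\beta^n)/(1-\beta)$ and $b_n = \gamma_n/2(1-\beta)^2$, and integrate over $[t_n, t_{n+1})$: the leading term contributes $-h a_n \nabla_x f(x_n,y_n)$, the $h b_n$ term contributes $-h^2 b_n(\cdots)$, and the implicit $h^2$ correction from the time-variation of $\nabla_x f$ along the flow contributes $+\tfrac{h^2}{2} a_n^2(\nabla^2_x f\,\nabla_x f + \nabla_{xy} f\,\nabla_y f \cdot(\text{sign from }\dot y))$ — all higher terms are $\CO(h^3)$. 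On the discrete side, I would expand the double sum $-h\sum_{i=0}^n\beta^{n-i}\nabla_x f(x_i,y_i)$ by Taylor-expanding each summand to first order in $(x_i - x_n, y_i - y_n)$ and using the telescoped formula for $x_i - x_n$ (itself $\CO(h)$ with a known leading coefficient). Matching the $h$ terms is immediate; matching the $h^2$ terms forces the definition of $\gamma_n$, and one verifies the stated closed form by a direct (if tedious) manipulation of nested geometric sums. The $y$-equation is handled identically, with the sign flip on $\nabla_y f$ and the cross term $\nabla_{yx} f\,\nabla_x f$ replacing $\nabla_{xy} f\,\nabla_y f$; note that because \ref{sim} evaluates $\nabla_y f$ at $(x_n, y_n)$ (not $(x_{n+1}, y_n)$), no extra coupling term appears, unlike the alternating case.

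Finally, the local-error claim: since the ODE and the discrete step agree to $\CO(h^3)$ after one step, and $f$ has bounded derivatives up to fourth order (which controls the $\CO(h^3)$ remainder uniformly via Grönwall-type stability of the ODE flow over a bounded number of steps), the accumulated error over the finitely many steps inside any fixed time horizon divided into $h$-length intervals remains $\CO(h^2)$ globally — but the per-interval ($\CO(h^3)$) \emph{local} error is what is asserted, so only the one-step estimate plus smoothness is needed here. I expect the main obstacle to be the combinatorial identity establishing that the $h^2$ coefficient arising from the expansion equals precisely $\gamma_n/2(1-\beta)^2$: this requires carefully tracking three sources of order-$h^2$ terms (the discretization of the momentum sum, the second-order Taylor terms of each gradient evaluation, and the leading-order displacement $x_i - x_n$ expressed through its own momentum recursion), and then collapsing the resulting triple-indexed sum $\sum_{i}\sum_{j\le i}\beta^{n-i}\beta^{i-j}(\cdots)$ into the stated single-sum closed form. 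The structure $(1+\beta)(1+\beta^{2i+1}) - 4\beta^{i+1}$ is suggestive — it factors as $(1-\beta^{i+1})^2 + \beta(1-\beta^i)(1-\beta^{i+1})$-type expressions related to $a_i a_{i+1}$ and $a_i^2$ — so the cleanest route is probably to guess this factorization from the ODE-side $h^2$ coefficient first and then verify the discrete-side sum matches it, rather than deriving $\gamma_n$ blindly from the combinatorics.
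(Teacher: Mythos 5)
Your plan is sound and would yield the lemma, but it runs the computation in a genuinely different way from the paper. You fully unroll the momentum recursion into $x_{n+1}-x_n=-h\sum_{i=0}^{n}\beta^{n-i}\nabla_x f(x_i,y_i)$, Taylor-expand each past gradient around $(x_n,y_n)$, and match the resulting nested geometric sums against the forward Taylor expansion of the ODE flow; the coefficient $\gamma_n$ then has to be extracted by collapsing a triple-indexed sum. The paper instead keeps the recursion in its two-term form, posits the ansatz $\dot x=-\nabla_x G_n+hA_n$, $\dot y=\nabla_y F_n+hB_n$ on each interval, Taylor-expands the piecewise curve both forward and backward from $t_n$ (the backward expansion using the index-$(n-1)$ ODE), and matches coefficients of $(x(t_{n+1})-x(t_n))-\beta(x(t_n)-x(t_{n-1}))=-h\nabla_x f+\CO(h^3)$. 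This produces first-order linear recurrences $\nabla_xG_n-\beta\nabla_xG_{n-1}=\nabla_x f$ and $A_n-\beta A_{n-1}=-\tfrac12\bigl[\bigl(\tfrac{1-\beta^{n+1}}{1-\beta}\bigr)^2+\beta\bigl(\tfrac{1-\beta^{n}}{1-\beta}\bigr)^2\bigr](\nabla_x^2f\,\nabla_xf-\nabla_{xy}f\,\nabla_yf)$, whose solution is $\gamma_n$ directly — no triple sum ever appears. Your guessed factorization is essentially right: the exact identity is $(1+\beta)(1+\beta^{2i+1})-4\beta^{i+1}=(1-\beta^{i+1})^2+\beta(1-\beta^{i})^2$, i.e.\ $(1-\beta)^2(a_{i+1}^2+\beta a_i^2)$ in your notation, which is exactly the forcing term of the paper's recurrence. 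What your route buys is a formulation closer to the standard local-truncation-error framework and no need for left/right derivatives at the nodes; what the paper's route buys is that the combinatorics you flag as the main obstacle is replaced by solving a scalar linear recurrence.

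Two small cautions if you carry your version out. First, $x_i-x_n$ is $\CO((n-i)h)$, not $\CO(h)$ uniformly in $i$; your expansion survives only because the geometric weights $\beta^{n-i}$ make $\sum_i\beta^{n-i}(n-i)^k h^k=\CO(h^k)$, and this must be said explicitly (it is also where the hypothesis $\beta\in(-1,1)$ and the bounded-derivative assumption, cf.\ Lemma~\ref{boundedhighderivate}, enter). Second, the sign of the cross term in $\ddot x$ that you leave as ``(sign from $\dot y$)'' resolves to $\ddot x = a_n^2(\nabla_x^2f\,\nabla_xf-\nabla_{xy}f\,\nabla_yf)+\CO(h)$, i.e.\ a minus sign on $\nabla_{xy}f\,\nabla_yf$, since $\dot y=+a_n\nabla_yf+\CO(h)$; getting this sign wrong would corrupt the $\gamma_n$ identity.
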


\begin{proof}
Assume in time region $[t_n,t_{n+1})$, we have
\begin{align}
    & \dot{x}(t) = - \nabla_{x}G_n(x(t),y(t)) + h A_n(x(t),y(t)) \label{x1} \\
    \nonumber\\
    & \dot{y}(t) =  \nabla_{y}F_n(x(t),y(t)) + h B_n(x(t),y(t)) \label{y1}
\end{align}

Our aim is to find functions $G_n(x(t),y(t)), F_n(x(t),y(t)), A_n(x(t),y(t))$, and $B_n(x(t),y(t))$ such that
\begin{align}\label{aim11}
    \left( x(t_{n+1}) - x(t_{n}) \right)
    - \beta  \left( x(t_{n}) - x(t_{n-1}) \right)
    =
    - h \nabla_{x} f(x(t_n),y(t_n) ) + \CO(h^3)
\end{align}
\begin{align}\label{aim12}
    \left( y(t_{n+1}) - y(t_{n}) \right)
    - \beta  \left( y(t_{n}) - y(t_{n-1}) \right)
    =
     h \nabla_{y} f(x(t_n),y(t_n) ) + \CO(h^3)
\end{align}
hold.

In the following, we will use the notation $\dot{x}(t^+_n)\  \left( \text{resp.}\ \ddot{x}(t^+_n) \right)$ to denote the right first-order derivative (resp. second-order derivative) of $x$ at $t_n$. Similarly,  the notation $\dot{x}(t^-_n)\  \left( \text{resp.}\ \ddot{x}(t^-_n) \right)$ is used to denote the left first-order derivative (resp. second-order derivative) of $x$ at $t_n$.

From Lemma~\ref{boundedhighderivate}, we have $\dot{x}(t^+_n), \ddot{x}(t^+_n)$, and $\dddot{x}(t^+_n)$ are bounded independent of $h$. Thus, by Taylor expansion, we have
\begin{align}
    x(t_{n+1}) - x(t_{n}) & = x(t_{n}+h) - x(t_{n})\nonumber\\
    & = h \dot{x}(t^+_n) + \frac{h^2}{2} \ddot{x}(t^+_n) +
    \CO(h^3),\label{vajs}
\end{align}

From \eqref{x1} and \eqref{y1}, we can calculate $\ddot{x}(t^+_n)$ as
\begin{align}
     \ddot{x}(t^+_n) &= - \nabla^2_{x} G_n(x(t_n), y(t_n)) \cdot \dot{x}(t^+_n) - 
    \nabla_{xy}G_n(x(t_n), y(t_n))  \cdot
    \dot{y}(t^+_n) + \CO(h)\nonumber \\
    \nonumber \\
    & = \nabla^2_{x} G_n(x(t_n), y(t_n)) \cdot \nabla_{x} G_n (x(t_n), y(t_n)) 
    - \nabla_{xy}G_n(x(t_n), y(t_n)) \cdot \nabla_{y} F_n (x(t_n), y(t_n))
     + \CO(h) \label{ddx+}
\end{align}

Take \eqref{x1} and \eqref{ddx+} into \eqref{vajs}, we get 
\begin{align}
    x& (t_{n+1}) -  x(t_{n})\nonumber \\
    & = -h \nabla_{x} G_n (x(t_n), y(t_n)) 
    + h^2 A_n(x(t_n), y(t_n))   \nonumber\\
    &+ \frac{h^2}{2} \left( \nabla^2_{x} G_n (x(t_n), y(t_n))  \nabla_{x} G_n (x(t_n), y(t_n)) -  \nabla_{xy}G_n (x(t_n), y(t_n)) \nabla_{y} F_n (x(t_n), y(t_n))  \right) + \CO(h^3)\label{eqxn}
\end{align}

With a similar calculation on the Taylor expansion of $x(t_{n-1})$ at point $x(t_n)$, we get
\begin{align}
     &-\beta \left(x (t_{n}) -  x(t_{n-1})\right)
     = - \beta \left(h \dot{x}(t^-_n) - \frac{h^2}{2}\ddot{x}(t^-_n) \right)\nonumber \\
    & = \beta h \nabla_{x}G_{n-1}(x(t_{n}), y(t_{n}))\nonumber \\
    & -
    \beta h^2 [ A_{n-1}(x(t_{n}), y(t_{n})) \nonumber \\
    & - \frac{1}{2} ( 
    \nabla^2_{x}G_{n-1}(x(t_{n}), y(t_{n})) \nabla_{x}G_{n-1} (x(t_{n}), y(t_{n}))
    - \nabla_{xy}G_{n-1}(x(t_{n}), y(t_{n})) \nabla_{y}F_{n-1} (x(t_{n}), y(t_{n}))
    )
    ]\nonumber\\
    &+ \CO(h^3) \label{eqxn-1}
\end{align}

Recall that our aim is to determine the formulation of $G_n(x,y),F_n(x,y),A_n(x,y),B_n(x,y)$ such that \eqref{aim11} holds. From \eqref{eqxn} and \eqref{eqxn-1}, we have calculated the Taylor expansion of
$x (t_{n+1}) -  x(t_{n}) -\beta \left(x (t_{n}) -  x(t_{n-1})\right)$. 

Next, we will compare the coefficients of this Taylor expansion with \eqref{aim11}, which gives us
\begin{align}\label{dt1}
    & - h \left( \nabla_{x}G_n(x(t_{n}), y(t_{n})) - \beta \nabla_{x}G_{n-1}(x(t_{n}), y(t_{n})) \right) = - h \nabla_{x} f\left( x(t_n), y(t_n)  \right),
\end{align}
and
\begin{align}\label{dt2}
     &A_n(x(t_{n}), y(t_{n}))  + \frac{1}{2} \left(
    \nabla^2_{x}G_n(x(t_{n}), y(t_{n})) \cdot \nabla_{x}G_n (x(t_{n}), y(t_{n}))
    -
    \nabla_{xy}G_n(x(t_{n}), y(t_{n})) \cdot \nabla_{y}F_n (x(t_{n}), y(t_{n}))
    \right)\nonumber\\
    &- \beta [
    A_{n-1}(x(t_{n}), y(t_{n})) - \frac{1}{2}(  
     \nabla^2_{x}G_{n-1}(x(t_{n}), y(t_{n})) \cdot \nabla_{x}G_{n -1}(x(t_{n}), y(t_{n})) \nonumber \\
    & -
    \nabla_{xy}G_{n-1}(x(t_{n}), y(t_{n})) \cdot \nabla_{y}F_{n-1}(x(t_{n}), y(t_{n}))
    )
    ] \nonumber\\
    &= 0.
\end{align}

Note that \eqref{dt1} gives a recurrence relation between
$\nabla_{x}G_n\left(x(t_n), y(t_n)\right)$ and $\nabla_{x}G_{n-1}\left(x(t_n), y(t_n)\right)$ for $n \in \BN$, and solving this relation gives
\begin{align}\label{gn}
    \nabla_{x}G_n\left(x(t_n), y(t_n)\right)
    = \frac{1-\beta^{n+1}}{1-\beta} \nabla_{x} f\left(x(t_n), y(t_n)\right).
\end{align}

With a similar argument for $y$-player's equation \eqref{y1}, we can get
\begin{align}\label{fn}
    \nabla_{y}F_n\left(x(t_n), y(t_n)\right)
    = \frac{1-\beta^{n+1}}{1-\beta} \nabla_{y} f\left(x(t_n), y(t_n)\right).
\end{align}

Next we solve $A_n(x,y)$ from \eqref{dt2}. We first rewrite \eqref{dt2} as the following form:
\begin{align}\label{appart0}
     &A_n\left(x(t_n), y(t_n)\right) - \beta A_{n-1} \left(x(t_n), y(t_n)\right) \nonumber\\
    & = - \frac{1}{2} (
    \nabla^2_{x}G_n\left(x(t_n), y(t_n)\right) \cdot \nabla_{x}G_n \left(x(t_n), y(t_n)\right)
    +
    \beta \nabla^2_{x}G_{n-1}\left(x(t_n), y(t_n)\right) \cdot \nabla_{x}G_{n-1} \left(x(t_n), y(t_n)\right) \nonumber\\
    & -
    \left(
    \nabla_{xy}G_n \left(x(t_n), y(t_n)\right) \cdot \nabla_{y}F_n \left(x(t_n), y(t_n)\right)
    +\beta \nabla_{xy}G_{n-1}\left(x(t_n), y(t_n)\right) \cdot \nabla_{y}F_{n-1}\left(x(t_n), y(t_n)\right)
    \right)
    )
\end{align}

From \eqref{gn} and \eqref{fn}, we have
\begin{align}\label{appart1}
    & \nabla^2_{x}G_n\left(x(t_n), y(t_n)\right) \cdot \nabla_{x}G_n \left(x(t_n), y(t_n)\right)+ \beta \nabla^2_{x}G_{n-1} \left(x(t_n), y(t_n)\right) \cdot \nabla_{x}G_{n-1}\left(x(t_n), y(t_n)\right)\nonumber \\
    &= \left( 
    \left( \frac{1-\beta^{n+1}}{1-\beta} \right)^2
    + \beta \left( \frac{1-\beta^n}{1-\beta} \right)^2
    \right) \nabla^2_{x}f\left(x(t_n), y(t_n)\right) \cdot \nabla_{x} f\left(x(t_n), y(t_n)\right)
\end{align}
and
\begin{align}\label{appart2}
    & \nabla_{xy}G_n \left(x(t_n), y(t_n)\right)  \cdot \nabla_{y}F_n\left(x(t_n), y(t_n)\right)  + \beta \nabla_{xy}G_{n-1}\left(x(t_n), y(t_n)\right) \cdot \nabla_{y}F_{n-1}\left(x(t_n), y(t_n)\right) \nonumber \\
    &= \left( 
    \left( \frac{1-\beta^{n+1}}{1-\beta} \right)^2
    + \beta \left( \frac{1-\beta^n}{1-\beta} \right)^2
    \right) \nabla_{xy}f\left(x(t_n), y(t_n)\right) \cdot \nabla_{y} f\left(x(t_n), y(t_n)\right)
\end{align}

Taking \eqref{appart1} and \eqref{appart2} into \eqref{appart0}, we obtain a recurrence relation between $A_n\left(x(t_n), y(t_n)\right)$ and $A_{n-1}\left(x(t_n), y(t_n)\right)$. By solving this 
recurrence relation, we can derive the formulation of $A_n\left(x(t_n), y(t_n)\right)$:
\begin{align}
    & A_n\left(x(t_n), y(t_n)\right) \nonumber\\ 
    & = -\frac{1}{2(1-\beta)^2}  \gamma_n(
    \nabla^2_{x}f\left(x(t_n), y(t_n)\right) \cdot  \nabla_{x}f\left(x(t_n), y(t_n)\right) -
    \nabla_{xy}f\left(x(t_n), y(t_n)\right) \cdot \nabla_{y}f\left(x(t_n), y(t_n)\right)
    ),
\end{align}
where $\gamma_n =  \sum^n_{i=0} \beta^{n-i}\left[ (1+\beta)(1+\beta^{2i+1}) - 4\beta^{i+1} \right]$.

Combining the above, we have determined the formulation of $\dot{x}$ in time region $[t_n, t_{n+1})$ as follows:
\begin{align}\label{finalformula}
    \dot{x}(t) &= \left(\frac{1-\beta^{n+1}}{1-\beta} \right)\nabla_{x} f\left(x(t_n), y(t_n)\right) \nonumber\\
    &- \frac{h}{2(1-\beta)^2}  \gamma_n\left(
    \nabla^2_{x}f\left(x(t_n), y(t_n)\right) \cdot \nabla_{x}f\left(x(t_n), y(t_n)\right) -
    \nabla_{xy}f\left(x(t_n), y(t_n)\right) \cdot  \nabla_{y}f\left(x(t_n), y(t_n)\right)
    \right).
\end{align}
This completes the proof of the part of Lemma~\ref{mainlemmat31}. Since in Simultaneous update, $x(t)$ and $y(t)$ are symmetrical, the part for $y(t)$ is similar to the above, and we omit it here. 
\end{proof}

\begin{lem}\label{boundedhighderivate}Under the assumption that $f(x,y)$ is a smooth function with bounded derivatives up to the fourth order as stated in Theorem~\ref{localerror}, then $\dot{x}(t), \ddot{x}(t)$ and $\dddot{x}(t)$ are bounded independent of $h$.
\end{lem}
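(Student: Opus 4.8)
\textbf{Proof proposal for Lemma~\ref{boundedhighderivate}.}

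The plan is to proceed inductively on the piecewise structure of the trajectory, exploiting the fact that on each interval $[t_n,t_{n+1})$ the curve $x(t)$ solves an explicit ODE whose right-hand side is built from $f$ and its derivatives evaluated at the \emph{fixed} frozen point $(x(t_n),y(t_n))$ (after the reformulation in Lemma~\ref{mainlemmat31}, the driving terms are constant in $t$ on each subinterval), so differentiation in $t$ only pulls down the chain-rule factors coming from $\dot{x},\dot{y}$. First I would record that the coefficients appearing in \eqref{x1}--\eqref{y1}, namely $(1-\beta^n)/(1-\beta)$ and $h\gamma_n/(2(1-\beta)^2)$, are uniformly bounded in $n$ and in $h$: since $|\beta|<1$ we have $|1-\beta^n|\le 1+|\beta|$, and $\gamma_n = \sum_{i=0}^n \beta^{n-i}[(1+\beta)(1+\beta^{2i+1})-4\beta^{i+1}]$ is a convergent geometric-type sum, hence bounded by a constant depending only on $\beta$; the extra factor $h$ in front of $A_n,B_n$ only helps. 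Combined with the hypothesis that $\nabla f,\nabla^2 f,\nabla^3 f,\nabla^4 f$ are bounded, this shows $\lVert\dot{x}(t)\rVert \le C_1$ with $C_1$ independent of $h$ and $n$, uniformly on the finite time horizon.

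Next I would differentiate the ODE again. Writing $\dot{x} = -\nabla_x G_n(x,y) + hA_n(x,y)$ with $G_n,A_n$ assembled from $f$ and bounded coefficients, the chain rule gives $\ddot{x} = -\nabla^2_x G_n\cdot\dot{x} - \nabla_{xy}G_n\cdot\dot{y} + h(\nabla_x A_n\cdot\dot{x}+\nabla_y A_n\cdot\dot{y})$; every Hessian/Jacobian block here is a bounded-coefficient combination of second (and, in the $A_n$ terms, third) derivatives of $f$, hence bounded, and $\dot{x},\dot{y}$ are bounded by the previous step, so $\lVert\ddot{x}(t)\rVert\le C_2$ uniformly. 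Differentiating once more produces $\dddot{x}$ as a sum of terms each of which is a product of (i) a bounded-coefficient combination of derivatives of $f$ up to fourth order, evaluated at $(x(t_n),y(t_n))$ — note that derivatives of $G_n,A_n$ of total order $k$ involve derivatives of $f$ of order at most $k+1$, and $\dddot{x}$ needs at most third derivatives of $G_n$, i.e.\ fourth derivatives of $f$ — and (ii) products of $\dot{x},\dot{y},\ddot{x},\ddot{y}$, all already bounded. Hence $\lVert\dddot{x}(t)\rVert\le C_3$ uniformly in $h$ and $n$.

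One technical point to handle carefully: the bounds must be uniform \emph{across} subintervals, so that $\dot x(t_n^+)$ etc.\ used in the Taylor expansions \eqref{vajs}, \eqref{eqxn}, \eqref{eqxn-1} are genuinely $\CO(1)$ and not growing with $n$. Here the key observation is that the values $x(t_n),y(t_n)$ enter the right-hand sides only through $f$ and its derivatives, which are globally bounded by hypothesis regardless of where the iterates are; so the constants $C_1,C_2,C_3$ depend only on $\beta$ and on the sup-norms of $\nabla f,\dots,\nabla^4 f$, not on the particular interval. The one-sided derivatives $\dot x(t_n^\pm)$ are the limits of $\dot x$ from within the adjacent subintervals and are bounded by the same constants. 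I do not expect a genuine obstacle here; the only mild subtlety is bookkeeping which order of derivatives of $f$ is needed for each of $\dot x,\ddot x,\dddot x$, which is exactly why the theorem hypothesis asks for bounded derivatives up to the \emph{fourth} order. The argument for $y(t)$ is identical by the symmetry of the simultaneous update (and, for the alternating case, requires only the analogous ODE from Lemma~\ref{mainlemmat32}, with the same bounded-coefficient structure).
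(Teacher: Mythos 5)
Your proposal is correct and takes essentially the same route as the paper: bound $\dot{x}$ directly from the explicit piecewise ODE using that the coefficients $(1-\beta^{n})/(1-\beta)$ and $h\gamma_n/(2(1-\beta)^2)$ are bounded uniformly in $n$ and $h$ together with the assumed bounds on the derivatives of $f$, and then obtain $\ddot{x}$ and $\dddot{x}$ by chain-rule differentiation of the ODE as bounded-coefficient combinations of derivatives of $f$ up to fourth order multiplied by already-bounded lower-order time derivatives, which is exactly the paper's argument (stated there more tersely). One small remark: your parenthetical claim that the driving terms are ``constant in $t$ on each subinterval'' is inconsistent with the chain-rule computation you then (correctly) perform --- the right-hand side depends on the current state $(x(t),y(t))$, as in Lemma~\ref{mainlemmat31} and in your second paragraph --- but this slip does not affect the validity of the argument.
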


\begin{proof}
    By assumption, we have $\lVert \nabla_{a,b,c,d}f(x,y)\lVert$ are bounded independent of $h$, where $a,b,c,d$ equal to $x$ or $y$. We also assume the derivatives of $f(x,y)$ up to the fourth order are bounded by a constant $\CC$.
    
    The aim of this Lemma is to guarantee that we can faithfully truncate the Taylor-expansion in \eqref{vajs} to get an error of $\CO(h^3)$. To achieve this, we need to make sure that the terms $\dot{x}(t), \ddot{x}(t)$ and $\dddot{x}(t)$ are bounded independent of $h$. 
    
    Let firstly consider $\dot{x}(t)$. From \eqref{finalformula} and the boundedness assumption of the first- and second-order derivatives, we have
    \begin{align*}
    \lVert \dot{x}(t) \lVert & \le \left(\frac{1-\beta^{n+1}}{1-\beta} \right) \lVert \nabla_{x} f\left(x(t_n), y(t_n)\right) \lVert + \frac{h}{2(1-\beta)^2}  \gamma_n \lVert  \nabla^2_{x}f\left(x(t_n), y(t_n)\right) \lVert \cdot 
    \lVert \nabla_{x}f\left(x(t_n), y(t_n)\right) \lVert \\
    & + \frac{h}{2(1-\beta)^2}  \gamma_n \lVert  \nabla_{xy}f\left(x(t_n), y(t_n)\right) \lVert \cdot 
    \lVert \nabla_{y}f\left(x(t_n), y(t_n)\right) \lVert\\
    &\le \left(\frac{1-\beta^{n+1}}{1-\beta} \right) \cdot \CC + \frac{h}{(1-\beta)^2}  \gamma_n \CC^2 \le \CC_2.
    \end{align*}
    Here we use the boundedness of $h$, which is a small positive number, and $\CC_2$ is a constant number independent of $h$.
    Note that from \eqref{vajs},  $\ddot{x}(t)$ and $\dddot{x}(t)$ can also be formulated as combinations of the derivatives of $f(x,y)$ up to the fourth order. Therefore, their boundedness can be proved in a similar way.
\end{proof}

\begin{lem}\label{tildeandthmclose}
Let $\CT$ be a fixed time horizon, and $t \in [0,\CT]$. Let $(\tilde{x}(t),\tilde{y}(t))$ denote the solution trajectory of the equations defined in Lemma~\ref{mainlemmat31}, and $(x(t), y(t))$ denote the solution trajectory of the equations defined in Theorem~\ref{localerror} for \ref{continuousSim}. If $n > 2 \log(h)/\log(\beta)$, and at time $t_n = nh$ we have 
\begin{align*}
    (\tilde{x}(t_n),\tilde{y}(t_n)) = (x(t_n), y(t_n)),
\end{align*}
then at time $t_{n+1} = (n+1)h$, we have
\begin{align*}
    \lVert \left(x(t_{n+1}),y(t_{n+1}) \right) - \left(\tilde{x}(t_{n+1}),\tilde{y}(t_{n+1}) \right) \lVert = \CO(h^3).
\end{align*}
\end{lem}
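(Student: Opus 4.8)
\textbf{Proof proposal for Lemma~\ref{tildeandthmclose}.}

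The plan is to compare the two vector fields directly. The equations from Lemma~\ref{mainlemmat31} governing $(\tilde{x},\tilde{y})$ on $[t_n,t_{n+1})$ and the equations \ref{continuousSim} governing $(x,y)$ differ only through the coefficients: Lemma~\ref{mainlemmat31} carries the factors $(1-\beta^{n+1})/(1-\beta)$ and $\gamma_n/(1-\beta)^2$, whereas \ref{continuousSim} (equivalently the modified loss $\CF$) carries the limiting factors $1/(1-\beta)$ and $(1+\beta)/(1-\beta)^2$. So the first step is to show that these coefficient pairs agree up to an exponentially small error: $(1-\beta^{n+1})/(1-\beta) = 1/(1-\beta) + \CO(\beta^{n})$, and a short computation of the geometric-type sum $\gamma_n = \sum_{i=0}^n \beta^{n-i}[(1+\beta)(1+\beta^{2i+1}) - 4\beta^{i+1}]$ shows $\gamma_n \to (1+\beta) \cdot \frac{1}{1-\beta}$... wait, more carefully, $\sum_{i=0}^n \beta^{n-i}(1+\beta) = (1+\beta)\frac{1-\beta^{n+1}}{1-\beta}$ and the remaining terms are $\sum \beta^{n-i}(\beta^{2i+1}(1+\beta) - 4\beta^{i+1})$, each of which is $\CO(n\beta^{n})$ or smaller; hence $\gamma_n = (1+\beta) + \CO(n|\beta|^{n})$ after accounting for the $1/(1-\beta)$ normalization already present. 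Under the hypothesis $n > 2\log h/\log\beta$ (i.e. $|\beta|^{n} < h^2$, noting $\log\beta < 0$), all these discrepancies are $\CO(h^2)$, even after multiplication by the polynomial factor $n$, since $n|\beta|^n$ is still $\CO(h^{2-\epsilon})$ and in fact one can absorb the polynomial factor by using $n > (2+\delta)\log h/\log\beta$ or simply noting $n |\beta|^n \le C |\beta|^{n/2} \cdot |\beta|^{n/2} n \le C' |\beta|^{n/2}$ which is $o(h)$; to be safe I would just track that the coefficient mismatch is $\CO(h^2)$.

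The second step is a standard Gr\"onwall / one-step-error estimate. Write the two systems as $\dot{u} = G(u)$ (the limiting field, from \ref{continuousSim}) and $\dot{\tilde u} = G(\tilde u) + h\, E_n(\tilde u)$ where $u = (x,y)$, $\tilde u = (\tilde x,\tilde y)$, and $E_n$ is the error field coming from the coefficient mismatch just analyzed. By Step~1, $\|h E_n(\cdot)\| = h \cdot \CO(h^2) = \CO(h^3)$ uniformly on the compact region traversed in time $[0,\CT]$ (trajectories stay bounded because $f$ has bounded derivatives, so $G$ is Lipschitz with constant $L$ there). Since $u(t_n) = \tilde u(t_n)$ by hypothesis, subtracting the two ODEs and integrating over the single step $[t_n,t_{n+1}]$ of length $h$ gives
\begin{align*}
\|u(t_{n+1}) - \tilde u(t_{n+1})\| \le \int_{t_n}^{t_{n+1}} \left( L\|u(s) - \tilde u(s)\| + \|h E_n(\tilde u(s))\| \right) ds,
\end{align*}
and Gr\"onwall's inequality over an interval of length $h$ yields $\|u(t_{n+1}) - \tilde u(t_{n+1})\| \le (e^{Lh}-1)/L \cdot \CO(h^3) \cdot L/(e^{Lh}-1) \cdot \ldots$; more cleanly, $\|u(t_{n+1})-\tilde u(t_{n+1})\| \le h \cdot \sup\|hE_n\| \cdot e^{Lh} = \CO(h^4) \cdot e^{Lh} = \CO(h^3)$ once we note $e^{Lh} = 1 + \CO(h)$. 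This gives the claimed $\CO(h^3)$ bound.

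The main obstacle, and the part requiring genuine care rather than routine estimation, is Step~1: verifying that the sum defining $\gamma_n$ has the right limit $(1+\beta)$ and that the convergence is fast enough that the polynomial-in-$n$ prefactors (the sum has $n+1$ terms) do not spoil the $\CO(h^2)$ coefficient bound under the stated hypothesis $n > 2\log h/\log\beta$. One must be slightly delicate because $n = \Theta(\log h)$ grows as $h\to 0$, so a naive bound $|\gamma_n - (1+\beta)| \le (n+1)\max_i|\text{term}_i|$ gives $\CO(n|\beta|^n) = \CO(\log(1/h)\cdot h^2)$, which is $\CO(h^{2-\epsilon})$ for any $\epsilon>0$ and hence still yields $\CO(h^{3-\epsilon})$ after multiplying by $h$ — strictly speaking slightly weaker than $\CO(h^3)$, so one either sharpens the geometric-sum estimate (splitting the sum and bounding the tail by $|\beta|^{n/2}$, which kills the log factor entirely) or one notes that the lemma statement's $\CO(h^3)$ is meant in the sense absorbing sub-polynomial factors, consistent with how Theorem~\ref{localerror} phrases the $4\log h/\log|\beta|$ step count. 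I would present the sharpened geometric-sum estimate to be safe. Everything else — boundedness of trajectories, Lipschitz continuity of $G$, and the single-step Gr\"onwall argument — is routine given the fourth-order boundedness hypothesis on $f$ and Lemma~\ref{boundedhighderivate}.
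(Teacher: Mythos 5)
Your proposal is correct in substance but takes a genuinely different route from the paper. The paper does not use Gr\"onwall: it Taylor-expands \emph{both} solution curves at $t_n$ to second order with $\CO(h^3)$ remainders (justified by Lemma~\ref{boundedhighderivate}), and then shows term-by-term that $h\lVert\dot{\tilde{x}}(t_n^+)-\dot{x}(t_n)\rVert$ and $h^2\lVert\ddot{\tilde{x}}(t_n^+)-\ddot{x}(t_n)\rVert$ are each $\CO(h^3)$. Your vector-field-comparison-plus-one-step-Gr\"onwall argument reaches the same conclusion and is arguably cleaner, since it handles all orders at once rather than requiring a separate (and in the paper only sketched) bound on the difference of second derivatives. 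Both proofs ultimately rest on the same two facts: the threshold $n>2\log h/\log\lvert\beta\rvert$ gives $\lvert\beta\rvert^n<h^2$, and $t_n=nh\le\CT$ gives $n\le\CT/h$, so $n\lvert\beta\rvert^n\le\CT h$. That second observation is how the paper disposes of the polynomial-in-$n$ prefactor in $\gamma_n$; it kills your worry about the $\log(1/h)$ factor immediately and makes the sharpened geometric-sum estimate unnecessary.

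Two bookkeeping slips worth fixing, neither fatal. First, the total field discrepancy is $\CO(h^2)$, not $\CO(h^3)$: the zeroth-order term $\frac{\beta^n}{1-\beta}\nabla_x f$ contributes $\CO(\beta^n)=\CO(h^2)$ directly, with no extra factor of $h$ attached; it is the subsequent integration over an interval of length $h$ that supplies the missing power, yielding the claimed $\CO(h^3)$ (and your intermediate assertion $\CO(h^4)\cdot e^{Lh}=\CO(h^3)$ is also off — $e^{Lh}=1+\CO(h)$ cannot lose a power of $h$; the correct chain is $h\cdot\CO(h^2)\cdot e^{Lh}=\CO(h^3)$). Second, note that $\gamma_n\to(1+\beta)/(1-\beta)$, so the matching of coefficients is $\gamma_n/(1-\beta)^2$ against $(1+\beta)/(1-\beta)^3$; your normalization wobble should be resolved explicitly, e.g.\ via the closed form $\gamma_n=\frac{(1+\beta)(1-\beta^{2n+2})}{1-\beta}-4(n+1)\beta^{n+1}$, which is what the paper's computation amounts to.
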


\begin{proof}

We will prove 
\begin{align*}
    \lVert x(t_{n+1}) -\tilde{x}(t_{n+1}) \lVert = \CO(h^3),
\end{align*}
and the argument for $y$-player follows similarly.

Using Taylor expansion, we have
\begin{align*}
        &\tilde{x}(nh+h) - \tilde{x}(nh) = h \dot{\tilde{x}}(nh^+) + \frac{h^2}{2} \ddot{\tilde{x}}(nh^+) + \CO(h^3),\\
        &x(nh+h) - x(nh) = h \dot{x}(nh) + \frac{h^2}{2} \ddot{x}(nh) + \CO(h^3).
\end{align*}

Thus, we obtain
\begin{align*}
    \lVert x(t_{n+1}) -\tilde{x}(t_{n+1}) \lVert = h \lVert(\dot{\tilde{x}}(nh^+) - \dot{x}(nh) )\lVert +  \frac{h^2}{2} \lVert( \ddot{\tilde{x}}(nh^+) - \ddot{x}(nh)) \lVert + \CO(h^3).
\end{align*}

In the following, we will show both of the terms $ h \lVert(\dot{\tilde{x}}(nh^+) - \dot{x}(nh) )\lVert$ and $\frac{h^2}{2} \lVert( \ddot{\tilde{x}}(nh^+) - \ddot{x}(nh)) \lVert$ belong to $\CO(h^3)$.

From \ref{continuousSim}, we have 
\begin{align}\label{lemmaa31}
    h \dot{x}(nh) = - \left( \frac{h}{1-\beta} \right) \nabla_xf(x(nh),y(nh)) &- \frac{h^2(1+\beta)}{2(1-\beta)^3} (
    \nabla_x^2f(x(nh),y(nh)) \cdot  \nabla_xf(x(nh),y(nh))\nonumber\\
    &-
    \nabla_{xy}f(x(nh),y(nh)) \cdot  \nabla_yf(x(nh),y(nh))
    ).
\end{align}
From Lemma~\ref{mainlemmat31}, we have
\begin{align}\label{lemmaa32}
   h \dot{\tilde{x}}(nh)  = & - \frac{h(1-\beta^n)}{1-\beta} \nabla_xf(x(nh),y(nh)) \nonumber\\
   & - \frac{h^2(1+\beta)}{2(1-\beta)^3} \left[
   (1-\beta^{2n+2}) - 4(n+1)\beta^{n+1} \left(\frac{1-\beta}{1+\beta}\right)
   \right] \cdot [\nabla^2_{x}f(x(nh),y(nh)) \cdot  
   \nabla_{x}f(x(nh),y(nh)) \nonumber\\
   &- \nabla_{xy}f(x(nh),y(nh)) \cdot \nabla_{y}f(x(nh),y(nh)) ].
\end{align}
Using \eqref{lemmaa31} minus \eqref{lemmaa32}, we get
\begin{align}
  h \dot{x}(nh) - & h \dot{\tilde{x}}(nh) = -\frac{h}{1-\beta} \beta^n \nabla_xf(x(nh),y(nh))
   - \frac{h^2(1+\beta)}{2(1-\beta)^3} \left[
  \beta^{2n+2} + 4(n+1)\beta^{n+1} \left( \frac{1-\beta}{1+\beta} \right) 
  \right] \cdot \nonumber\\
  &\left( \nabla^2_xf(x(nh),y(nh)) \cdot \nabla_xf(x(nh),y(nh))  -
  \nabla_{xy}f(x(nh),y(nh)) \cdot \nabla_yf(x(nh),y(nh))\right).
\end{align}
Since $n > 2 \log(h)/\log(\beta) \Leftrightarrow \beta^n < h^2$ and $f(x,y)$ have bounded first- and second-order derivatives, we have the first term in the above equality, i.e., $-\frac{h}{1-\beta} \beta^n \nabla_xf$, belongs to $\CO(h^3)$.

Similarly, for the term 
\begin{align*}
    \frac{h^2(1+\beta)}{2(1-\beta)^3} &\left[
  \beta^{2n+2} + 4(n+1)\beta^{n+1} \left( \frac{1-\beta}{1+\beta} \right) 
  \right] \cdot \nonumber\\
  &\qquad \left( \nabla^2_xf(x(nh),y(nh)) \cdot \nabla_xf(x(nh),y(nh))  -
  \nabla_{xy}f(x(nh),y(nh)) \cdot \nabla_yf(x(nh),y(nh))\right)
\end{align*}
in $ h \dot{x}(nh) -  h \dot{\tilde{x}}(nh)$, due to the boundedness of first- and second-order derivatives, we only need to show that the coefficient 
\begin{align*}
    \frac{h^2(1+\beta)}{2(1-\beta)^3} \left[
  \beta^{2n+2} + 4(n+1)\beta^{n+1} \left( \frac{1-\beta}{1+\beta} \right) 
  \right]
\end{align*}
belongs to $\CO(h^3)$. This can be guaranteed since we have $n > 2 \log(h)/\log(\beta) \Leftrightarrow \beta^n < h^2$, and this implies
\begin{align*}
    (n+1)\beta^{n+1} < \frac{\CT}{h} \beta^{n+1} < \CT \beta h,\ \ \ \beta^{2n+2} < h^2\beta^2.
\end{align*}
Similarly, under the same condition $n > 2 \log(h)/\log(\beta)$, we can prove $ h^2 \ddot{x}(nh) -  h^2 \ddot{\tilde{x}}(nh) \in \CO(h^3)$ given the condition that $f(x,y)$ has bounded derivatives up to the third order. 
\end{proof}

\subsection{Alternating Updates}

\begin{lem}\label{mainlemmat32} Let $t_n = nh$ where $h$ is the step size used by \ref{alt}. Then in time region $[t_n,t_{n+1})$,  the solution of the following ODEs:
\begin{align*}
   \dot{x}(t) &= - \left(\frac{1-\beta^n}{1-\beta} \right) \nabla_{x} f (x(t),y(t))  \nonumber\\
   & - \frac{h  \gamma_n}{2(1-\beta)^2}  \left( \nabla^2_{x} f(x(t),y(t)) \cdot \nabla_{x} f(x(t),y(t)) -  \nabla_{xy} f(x(t),y(t)) \cdot \nabla_{y} f(x(t),y(t)) \right), \\
   \dot{y}(t) &=  \left(\frac{1-\beta^n}{1-\beta} \right) \nabla_{y} f(x(t),y(t))\nonumber \\
   & + \frac{h \gamma_n}{2(1-\beta)^2}  \left( \nabla_{yx} f(x(t),y(t)) \cdot \nabla_{x} f(x(t),y(t)) -  \nabla^2_{y} f(x(t),y(t)) \cdot \nabla_{y} f(x(t),y(t)) \right) \\
    &- \frac{h \delta_n}{2(1-\beta)^2}  \nabla_{yx} f(x(t),y(t)) \cdot \nabla_{x} f(x(t),y(t)).
\end{align*}
can approximate the trajectories of \ref{alt} with a $\CO(h^3)$-local error. Here $\gamma_n = \sum^n_{i=0} \beta^{n-i}\left[ (1+\beta)(1+\beta^{2i+1}) - 4\beta^{i+1} \right]$, and 
$\delta_n = 2 \sum^n_{i=0}\beta^{n-i}(1-\beta^{i+1})(1-\beta).$
\end{lem}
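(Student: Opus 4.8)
\textbf{Proof plan for Lemma~\ref{mainlemmat32}.}

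The plan is to mimic the two-block ansatz used in the proof of Lemma~\ref{mainlemmat31}, but to account carefully for the fact that the $y$-update in \ref{alt} evaluates the gradient at $x_{n+1}$ rather than $x_n$. Concretely, I would posit that on each interval $[t_n,t_{n+1})$ the trajectory satisfies
\begin{align*}
\dot{x}(t) &= -\nabla_x G_n(x(t),y(t)) + h\,A_n(x(t),y(t)),\\
\dot{y}(t) &= \nabla_y F_n(x(t),y(t)) + h\,B_n(x(t),y(t)),
\end{align*}
and determine $G_n,F_n,A_n,B_n$ so that the Taylor expansions of $x(t_{n+1})-x(t_n)-\beta(x(t_n)-x(t_{n-1}))$ and the analogous $y$-difference reproduce the \ref{alt} update up to $\CO(h^3)$. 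The $x$-player's update in \ref{alt} is \emph{identical} to that in \ref{sim}, so the $x$-equation and the functions $G_n$, $A_n$ are obtained verbatim from Lemma~\ref{mainlemmat31}: $\nabla_x G_n = \frac{1-\beta^{n+1}}{1-\beta}\nabla_x f$ and $A_n$ carries the coefficient $\gamma_n/(2(1-\beta)^2)$ as before. The genuinely new work is entirely in the $y$-equation.

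For the $y$-player, the matching condition becomes
\begin{align*}
\big(y(t_{n+1})-y(t_n)\big) - \beta\big(y(t_n)-y(t_{n-1})\big) = h\,\nabla_y f\big(x(t_{n+1}),y(t_n)\big) + \CO(h^3),
\end{align*}
and the key step is to expand the right-hand side as
\begin{align*}
h\,\nabla_y f\big(x(t_{n+1}),y(t_n)\big) = h\,\nabla_y f\big(x(t_n),y(t_n)\big) + h\,\nabla_{yx} f\big(x(t_n),y(t_n)\big)\big(x(t_{n+1})-x(t_n)\big) + \CO(h^3),
\end{align*}
and then substitute $x(t_{n+1})-x(t_n) = -h\,\nabla_x G_n + \CO(h^2) = -h\frac{1-\beta^{n+1}}{1-\beta}\nabla_x f + \CO(h^2)$ from the $x$-analysis. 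This produces an \emph{extra} $\CO(h^2)$ forcing term $-h^2\frac{1-\beta^{n+1}}{1-\beta}\nabla_{yx} f\cdot\nabla_x f$ in the $y$-difference that has no counterpart in the simultaneous case. Collecting the $\CO(h)$ coefficients gives the same recurrence $\nabla_y F_n - \beta\nabla_y F_{n-1} = \nabla_y f$, hence $\nabla_y F_n = \frac{1-\beta^{n+1}}{1-\beta}\nabla_y f$; collecting the $\CO(h^2)$ coefficients gives a recurrence for $B_n$ of the form $B_n - \beta B_{n-1} = (\text{same right-hand side as in Lemma~\ref{mainlemmat31}}) - \frac{1-\beta^{n+1}}{1-\beta}\nabla_{yx} f\cdot\nabla_x f$. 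Solving this recurrence splits $B_n$ into the old $\gamma_n$-part plus a new part $-\frac{\delta_n}{2(1-\beta)^2}\nabla_{yx} f\cdot\nabla_x f$ where $\delta_n$ solves $\delta_n - \beta\delta_{n-1} = 2(1-\beta^{n+1})(1-\beta)$ with $\delta_0 = 2(1-\beta)^2$; unrolling this geometric-type recurrence yields $\delta_n = 2\sum_{i=0}^n \beta^{n-i}(1-\beta^{i+1})(1-\beta)$, as claimed. The only subtlety in the $\gamma_n$-part is verifying that the $\frac{1}{2}(\nabla_{yx} f\cdot\nabla_x f)$ contribution coming from $\ddot y(t_n^+)$ combines with the analogous $\beta$-shifted term exactly as in the simultaneous case, so that $A_n$'s companion for $y$ indeed has the same $\gamma_n$; this is a routine bookkeeping check using $\nabla_y F_n = \nabla_x G_n$ (as functions evaluated along the trajectory) established above.

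The main obstacle I anticipate is the careful handling of the cross-term expansion $\nabla_y f(x(t_{n+1}),y(t_n))$: one must track that the $x$-increment contributes at order $h^2$ (not $h$, because $\nabla_x G_n$ itself carries a factor $h$ when multiplied into the $y$-increment formula) and that all \emph{higher} cross-corrections — e.g. the second-order term $\frac12 (x(t_{n+1})-x(t_n))^\top \nabla^2_{x}\nabla_y f\,(x(t_{n+1})-x(t_n))$, and the contribution of the $hA_n$ drift of $x$ — are genuinely $\CO(h^3)$ and can be discarded. This requires the bounds from Lemma~\ref{boundedhighderivate} (extended to the alternating equations, which is immediate since they differ only by a bounded $\CO(1)$ term in $\dot y$) to justify truncating the Taylor expansions. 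Once the recurrences for $\nabla_x G_n,\nabla_y F_n, A_n, B_n$ are set up correctly, solving them is elementary: each is a first-order linear recurrence with forcing, solved by the variation-of-constants formula $u_n = \sum_{i=0}^n \beta^{n-i} (\text{forcing}_i)$, and the stated closed forms for $\gamma_n$ and $\delta_n$ follow by inspection. By symmetry with Lemma~\ref{mainlemmat31}, the $x$-equation requires no new argument, so the proof reduces to the $y$-equation computation outlined above.
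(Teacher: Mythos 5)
Your proposal follows essentially the same route as the paper's proof: the same piecewise ansatz $\dot x = -\nabla_x G_n + hA_n$, $\dot y = \nabla_y F_n + hB_n$, the same observation that the $x$-block is unchanged from the simultaneous case, the same key Taylor expansion of $h\,\nabla_y f(x(t_{n+1}),y(t_n))$ around $t_n$ using $\dot x(t_n^+) = -\tfrac{1-\beta^{n+1}}{1-\beta}\nabla_x f + \CO(h)$ to produce the extra $\CO(h^2)$ forcing term, and the same first-order linear recurrences for $\nabla_y F_n$ and $B_n$ whose unrolling yields the $\gamma_n$- and $\delta_n$-coefficients. The plan is correct and matches the paper's argument, including the recurrence $\delta_n - \beta\delta_{n-1} = 2(1-\beta)(1-\beta^{n+1})$ and the appeal to the boundedness of higher derivatives to justify truncation.
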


\begin{proof}
Recall that \ref{alt} is written as
\begin{align}
    x_{n+1} & = x_n - h \nabla_{x} f(x_n,y_n) + \beta \left(x_n - x_{n-1}\right) \\
    y_{n+1} & = y_n + h \nabla_{x} f(x_{\textcolor{deepred}{n+1}},y_n) + \beta \left(y_n - y_{n-1}\right)
\end{align}

Assume in time region $[t_n,t_{n+1})$, we have
\begin{align}
    & \dot{x}(t) = - \nabla_{x}G_n(x,y) + h A_n(x,y) \label{x121} \\
    & \dot{y}(t) =  \nabla_{y}F_n(x,y) + h B_n(x,y) \label{y121}
\end{align}

Our aim is to find functions $G_n(x,y), F_n(x,y), A_n(x,y)$, and $B_n(x,y)$ such that
\begin{align}\label{aaim11}
    \left( x(t_{n+1}) - x(t_{n}) \right)
    - \beta  \left( x(t_{n}) - x(t_{n-1}) \right)
    =
    - h \nabla_{x} f(x(t_n),y(t_n) ) + \CO(h^3)
\end{align}
and
\begin{align}\label{aaim12}
    \left( y(t_{n+1}) - y(t_{n}) \right)
    - \beta  \left( y(t_{n}) - y(t_{n-1}) \right)
    =
     h \nabla_{y} f(x(t_{\textcolor{deepred}{n+1}}),y(t_n) ) + \CO(h^3)
\end{align}
hold.

In this proof, we will focus on the derivation of $F_n(x, y)$ and $B_n(x, y)$, since the
derivation of $G_n(x, y)$ and $A_n(x, y)$ is the same as in Lemma~\ref{mainlemmat31}. 

In the following, we will  use the notation $\dot{y}(t^+_n)\  \left( \text{resp.}\ \ddot{y}(t^+_n) \right)$ to denote the right first-order derivative (resp. second-order derivative) of $x$ at $t_n$. Similarly, the notation $\dot{y}(t^-_n)\  \left( \text{resp.}\ \ddot{y}(t^-_n) \right)$ is used to denote the left first-order derivative (resp. second-order derivative) of $y$ at $t_n$.

We begin by performing a Taylor expansion of $ h \nabla_{y} f(x(t_{n+1}),y(t_n) ) $ at time $t_n$. Recall that $t_n = nh$, then
\begin{align}
    h \nabla_{y}& f(x(t_n+h),y(t_n) )  =
    h \nabla_{y} f(x(t_n),y(t_n) ) + h^2 \nabla_{yx}f(x(t_n), y(t_n)) \cdot \dot{x}(t^{+}_n) + \CO(h^3) \nonumber\\
    & = h \nabla_{y} f(x(t_n),y(t_n) ) - h^2 \left( \frac{1-\beta^{n+1}}{1-\beta}\right)\nabla_{yx}f(x(t_n), y(t_n))  \nabla_{x}f(x(t_n), y(t_n)) + \CO(h^3).\label{fiufr}
\end{align}
Note that in \eqref{fiufr}, we use the fact that 
\begin{align*}
    \dot{x}(t^+_n) = - \left( \frac{1-\beta^{n+1}}{1-\beta}\right)
    \nabla_{x}f(x(t_n), y(t_n)) + \CO(h),
\end{align*}
which can be derived in the same way as in Lemma~\ref{mainlemmat31}.

Using the Taylor expansion of $ y(t_{n}+h)$ at time $t_n$, we also have
\begin{align}\label{(t_{n}+h}
    y(t_{n}+h) = y(t_{n}) + h \dot{y}(t_n^+) + \frac{h^2}{2} \ddot{y}(t_n^+) + \CO(h^3).
\end{align}
From \eqref{y121}, we have
\begin{align}
    \dot{y}(t^+_n) &= \nabla_{y}F_n(x(t_n),y(t_n)) + h B_n (x(t_n),y(t_n))
\end{align}
and
\begin{align}
    \ddot{y}(t^+_n) &= \nabla^2_{y}F_n(x(t_n),y(t_n)) \cdot \dot{y}(t^+_n) + \nabla_{yx}F_n(x(t_n),y(t_n)) \cdot \dot{x}(t^+_n) + \CO(h) \nonumber\\
    & = \nabla^2_{y}F_n(x(t_n),y(t_n)) \cdot \nabla_{y} F_n (x(t_n),y(t_n)) - \nabla_{yx} F_n(x(t_n),y(t_n)) \cdot \nabla_{x} G_n(x(t_n),y(t_n)) + \CO(h)
\end{align}

Taking above into \eqref{(t_{n}+h}, we can get the following formula of $y(t_{n+1}) - y(t_{n})$ :
\begin{align}
    y(t_{n}+h) - &y(t_{n})  = h \dot{y}(t_n^+) + \frac{h^2}{2} \ddot{y}(t_n^+) + \CO(h^3)\nonumber\\
    & = h \nabla_{y}F_n (x(t_n),y(t_n))+ h^2 B_n(x(t_n),y(t_n)) \nonumber\\
    &+ \frac{h^2}{2} \left(
    \nabla^2_{y} F_n(x(t_n),y(t_n)) \cdot \nabla_{y}F_n(x(t_n),y(t_n)) - \nabla_{yx}F_n(x(t_n),y(t_n)) \cdot \nabla_{x}G_n (x(t_n),y(t_n))
    \right) + \CO(h^3). \label{oweripgut}
\end{align}

Similarly, we can get the Taylor expansion of $y(t_n-h)$ as time $t_n$. Note that here we will use the left derivatives rather than the right derivatives as in \eqref{oweripgut}. The calculation will be omitted, and we only present the final results here:
\begin{align}\label{irufhw}
    &\beta  \left(y(t_{n})  -  y(t_{n} - h) \right) = 
    \beta h \nabla_{y}F_{n-1}(x(t_n),y(t_n)) \nonumber\\
    &+ \beta h^2 
     [
     B_{n-1}(x(t_n),y(t_n)) - \frac{1}{2} (
     \nabla^2_{y} F_{n-1}(x(t_n),y(t_n)) \cdot \nabla_{y}F_{n-1} (x(t_n),y(t_n))\nonumber \\
    & - \nabla_{yx} F_{n-1}(x(t_n),y(t_n)) \cdot \nabla_{x}G_{n-1}(x(t_n),y(t_n)) 
     )
     ].
\end{align}
Combine \eqref{irufhw} and \eqref{oweripgut}, we get
\begin{align}
    &\left( y(t_{n+1})  - y(t_{n}) \right)- \beta 
    \left( y(t_{n}) - y(t_{n-1}) \right)\nonumber\\  
    & = h  \left[ \nabla_{y}F_n(x(t_n),y(t_n)) - \beta \nabla_{y}F_{n-1}(x(t_n),y(t_n)) \right]\nonumber \\
    &+ h^2  \left[ B_n(x(t_n),y(t_n)) - \beta B_{n-1} (x(t_n),y(t_n)\right] \nonumber \\
    &+ \frac{h^2}{2} \left[ \nabla^2_{y}F_n(x(t_n),y(t_n)) \cdot \nabla_{y}F_n (x(t_n),y(t_n))
    + \beta  \nabla^2_{y}F_{n-1}(x(t_n),y(t_n)) \cdot \nabla_{y}F_{n-1} (x(t_n),y(t_n))
   \right] \nonumber \\
   & - \frac{h^2}{2} \left[ \nabla_{yx}F_n (x(t_n),y(t_n)) \cdot \nabla_x G_n  (x(t_n),y(t_n)) + \beta \nabla_{yx}F_{n-1} (x(t_n),y(t_n)) \cdot \nabla_x G_{n-1}  (x(t_n),y(t_n)) \right] \\
   & + \CO(h^3). \label{rqop}
\end{align}
Now we compare \eqref{fiufr} and \eqref{rqop}, we get the following two equalities:
\begin{align}\label{rr12}
    \nabla_{y}F_n (x_n, y_n)- \beta \nabla_{y}F_{n-1}(x_n, y_n) = \nabla_{y}f(x_n, y_n)
\end{align}
and 
\begin{align}\label{soution2}
&\left[ B_n (x_n, y_n) - \beta B_{n-1} (x_n, y_n)\right] \nonumber\\
&+ \frac{1}{2} \left[ \nabla^2_{y}F_n(x(t_n),y(t_n)) \cdot \nabla_{y}F_n (x(t_n),y(t_n)) + \beta  \nabla^2_{y}F_{n-1}(x(t_n),y(t_n)) \cdot \nabla_{y}F_{n-1} (x(t_n),y(t_n))  \right] \nonumber\\
& - \frac{1}{2}\left[ \nabla_{yx}F_n (x(t_n),y(t_n)) \cdot \nabla_x G_n  (x(t_n),y(t_n)) + \beta \nabla_{yx}F_{n-1} (x(t_n),y(t_n)) \cdot \nabla_x G_{n-1}  (x(t_n),y(t_n)) \right] \nonumber \\
&= - \frac{1-\beta^{n+1}}{1-\beta} \nabla_{yx}f(x_n, y_n)\cdot\nabla_{x}f(x_n, y_n).
\end{align}

\eqref{rr12} gives a recurrence relation between $F_n$ and $F_{n-1}$, and solving this relation gives
\begin{align}
    \nabla_{y}F_n(x,y) = \frac{1-\beta^{n+1}}{1-\beta} \nabla_{y}f(x,y).
\end{align}
As we have also known 
\begin{align}
    \nabla_{x}G_n(x,y) = \frac{1-\beta^{n+1}}{1-\beta} \nabla_{x}f(x,y),
\end{align}
taking $ \nabla_{y}F_n(x,y),  \nabla_{x}G_n(x,y)$
into \eqref{soution2}, we can get
\begin{align}
    B_n(x,y) = - \frac{1+\beta}{2(1-\beta)^3} \nabla^2_{y}f(x,y) \cdot \nabla_{y}f(x,y) 
    +
    \frac{3\beta-1}{2(1-\beta)^3} \nabla_{yx}f(x,y) \cdot\nabla_{x}f(x,y),
\end{align}
and this finished the proof.
\end{proof}

\begin{lem}\label{tildeandthmclose2}
Let $(\tilde{x}(t),\tilde{y}(t))$ denote the solution trajectory of the equations defined in Lemma~\ref{mainlemmat32}, and $(x(t), y(t))$ denotes the solution trajectory of the equations defined in Theorem~\ref{localerror} for \ref{continuousAlt}. If $n > 2 \log(h)/\log(\beta)$, then at time $t_n = nh$ we have 
\begin{align*}
    (\tilde{x}(t_n),\tilde{y}(t_n)) = (x(t_n), y(t_n)),
\end{align*}
then at time $t_{n+1} = (n+1)h$, we have
\begin{align*}
    \lVert \left(x(t_{n+1}),y(t_{n+1}) \right) - \left(\tilde{x}(t_{n+1}),\tilde{y}(t_{n+1}) \right) \lVert = \CO(h^3).
\end{align*}
\end{lem}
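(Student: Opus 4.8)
\emph{Proof plan.} The plan is to follow the argument of Lemma~\ref{tildeandthmclose} almost verbatim for the $x$-coordinate and to carry out the genuinely new bookkeeping for the $y$-coordinate. Indeed, the $x$-player's equation in Lemma~\ref{mainlemmat32} is identical to the one in Lemma~\ref{mainlemmat31}, and the $x$-part of \ref{continuousAlt} coincides with the $x$-part of \ref{continuousSim}; hence $\lVert x(t_{n+1}) - \tilde x(t_{n+1})\rVert = \CO(h^3)$ follows exactly as in Lemma~\ref{tildeandthmclose}. For the $y$-coordinate, I would Taylor-expand both trajectories around $t_n$ using right derivatives, $\tilde y(nh+h) = \tilde y(nh) + h\dot{\tilde y}(nh^+) + \tfrac{h^2}{2}\ddot{\tilde y}(nh^+) + \CO(h^3)$ and likewise for $y$, where the $\CO(h^3)$ remainders are legitimate by the alternating analogue of Lemma~\ref{boundedhighderivate} (bounded $\dddot y$, which holds since $f$ has bounded derivatives up to fourth order). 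Since $\tilde y(t_n) = y(t_n)$ by hypothesis, the goal reduces to showing $h\lVert \dot{\tilde y}(nh^+) - \dot y(nh)\rVert = \CO(h^3)$ and $h^2\lVert \ddot{\tilde y}(nh^+) - \ddot y(nh)\rVert = \CO(h^3)$.

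For the first-derivative term I would write $\dot y(nh)$ from \ref{continuousAlt} in the expanded form \eqref{althasnegative} and $\dot{\tilde y}(nh^+)$ from Lemma~\ref{mainlemmat32}, then group the difference into three families of terms. The $\nabla_y f$ term has coefficients $\tfrac{1}{1-\beta}$ and $\tfrac{1-\beta^{n+1}}{1-\beta}$, differing by a quantity of order $\beta^n$. The $\nabla^2_y f\cdot\nabla_y f$ term has single-ODE coefficient $\tfrac{1+\beta}{(1-\beta)^3}$ (up to the common factor $h$, using $\nabla_y\lVert\nabla_y f\rVert^2 = 2\nabla^2_y f\cdot\nabla_y f$) and family coefficient $\tfrac{\gamma_n}{(1-\beta)^2}$, whose difference is of order $(n+1)\lvert\beta\rvert^{n+1} + \lvert\beta\rvert^{2n+2}$ by the closed form of $\gamma_n$ already used in \eqref{lemmaa32}. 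The $\nabla_{yx} f\cdot\nabla_x f$ term has single-ODE coefficient $\tfrac{3\beta-1}{(1-\beta)^3}$ and family net coefficient $\tfrac{\gamma_n-\delta_n}{(1-\beta)^2}$; here I would record the closed form $\delta_n = 2(1-\beta^{n+1}) - 2(n+1)(1-\beta)\beta^{n+1}$, obtained by summing the geometric and arithmetic--geometric pieces of $\delta_n = 2\sum_{i=0}^n\beta^{n-i}(1-\beta^{i+1})(1-\beta)$, and verify directly that $\lim_{n\to\infty}\tfrac{\gamma_n-\delta_n}{(1-\beta)^2} = \tfrac{1+\beta}{(1-\beta)^3}-\tfrac{2}{(1-\beta)^2} = \tfrac{3\beta-1}{(1-\beta)^3}$, with deviation again of order $(n+1)\lvert\beta\rvert^{n+1} + \lvert\beta\rvert^{2n+2}$. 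Using boundedness of the first- and second-order derivatives of $f$, the whole first-derivative gap is then bounded by $h$ times a sum of these $n$-dependent deviations.

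The decisive step, and the main obstacle, is converting the hypothesis $n > 2\log h/\log\lvert\beta\rvert$, i.e.\ $\lvert\beta\rvert^n < h^2$, into $\CO(h^2)$ bounds for those deviations, exactly as in Lemma~\ref{tildeandthmclose}. Over the fixed horizon $[0,\CT]$ one also has $n \le \CT/h$, so $(n+1)\lvert\beta\rvert^{n+1} \le \tfrac{\CT}{h}\lvert\beta\rvert\cdot\lvert\beta\rvert^n < \CT\lvert\beta\rvert\,h$ and $\lvert\beta\rvert^{2n+2} < \lvert\beta\rvert^2 h^2$; multiplying by the outer $h$ (and noting that the terms which already carry an extra $h$ from the coefficient only need the cruder $\CO(h)$ bound on the deviation) places $h\lVert\dot{\tilde y}(nh^+)-\dot y(nh)\rVert$ in $\CO(h^3)$. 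The second-derivative term is handled the same way: $\ddot y(nh)$ and $\ddot{\tilde y}(nh^+)$ agree at the leading, $h$-independent order up to coefficient differences of order $\beta^n = \CO(h^2)$, while their $\CO(h)$ parts contribute $\CO(h^3)$ after the $h^2$ prefactor, using boundedness of the third-order derivatives of $f$. Combining the $x$- and $y$-estimates gives $\lVert (x(t_{n+1}),y(t_{n+1})) - (\tilde x(t_{n+1}),\tilde y(t_{n+1}))\rVert = \CO(h^3)$. The only genuinely non-mechanical part is the bookkeeping of the closed forms of $\gamma_n$ and $\delta_n$ and confirming that their deviations from the limiting constants are precisely the arithmetic--geometric quantities above; once that is pinned down, the remainder of the argument runs parallel to the simultaneous case.
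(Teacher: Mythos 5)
Your proposal is correct and follows exactly the route the paper intends: the paper omits this proof entirely, stating only that it ``follows in the same way as that of Lemma~\ref{tildeandthmclose},'' and your writeup supplies precisely those missing details, with the closed forms $\gamma_n = \tfrac{(1+\beta)(1-\beta^{2n+2})}{1-\beta} - 4(n+1)\beta^{n+1}$ and $\delta_n = 2(1-\beta^{n+1}) - 2(n+1)(1-\beta)\beta^{n+1}$ both checking out and their limits matching the coefficients $-\tfrac{1+\beta}{2(1-\beta)^3}$ and $\tfrac{3\beta-1}{2(1-\beta)^3}$ in \eqref{althasnegative}. The conversion of $\lvert\beta\rvert^n < h^2$ and $n \le \CT/h$ into the $\CO(h^3)$ bound is the same estimate as in the simultaneous case, so the argument is complete.
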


The proof of this lemma follows in the same way as that of Lemma~\ref{tildeandthmclose}; thus, we omit it here.


\section{Proof of Proposition~\ref{ft}}\label{pft}
The proof of Proposition~\ref{ft} is divided into two parts. In Lemma~\ref{Proposition3.21}, we prove the part for \ref{continuousSim}.  In Lemma~\ref{Proposition3.22}, we prove the part for \ref{continuousAlt}.  Proposition~\ref{ft} directly follows
from these two lemmas.

\begin{lem}\label{Proposition3.21}
In matrix games $f(x,y) = x^{\top}Ay$,  \ref{continuousSim} diverges at an exponential rate regardless of the choice of $h$ and $\beta$. 
\end{lem}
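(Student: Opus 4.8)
\emph{Plan.} For a bilinear payoff $f(x,y)=x^{\top}Ay$ the modified loss $\CF$ is a homogeneous quadratic in $(x,y)$: since $\nabla_x f = Ay$ and $\nabla_y f = A^{\top}x$, the extra terms $\|\nabla_x f\|^2 = y^{\top}A^{\top}Ay$ and $\|\nabla_y f\|^2 = x^{\top}AA^{\top}x$ are quadratic with no linear part. Hence $\nabla\CF$ is \emph{linear} and \ref{continuousSim} is a constant-coefficient linear system $\dot z = M z$ with $z=(x,y)$; its behaviour is completely governed by $\mathrm{Sp}(M)$, so the statement reduces to exhibiting, for every $h>0$ and $\beta\in(-1,1)$, an eigenvalue of $M$ with strictly positive real part.

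The first step is to identify $M$. The origin is a (weak) local Nash equilibrium of the bilinear game, and since $\CF$ has no linear part the vector field of \ref{continuousSim} equals its own linearization, so $M=\CJ_S$, the Jacobian from Proposition~\ref{jsjsjs}. By Lemma~\ref{yakebi} the base Jacobian is $\CJ=\begin{pmatrix}0&-A\\ A^{\top}&0\end{pmatrix}$, whose square is the block-diagonal matrix $\mathrm{diag}(-AA^{\top},-A^{\top}A)$; therefore $\mathrm{Sp}(\CJ)=\{\pm i\sigma_j\}\cup\{0\}$, where $\sigma_1\ge\cdots\ge\sigma_r>0$ are the nonzero singular values of $A$ (here $\sigma_1>0$ as long as $A\neq\mathbf{0}$, which we assume). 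Proposition~\ref{jsjsjs} writes $\CJ_S=g(\CJ)$ for the quadratic polynomial $g(t)=\frac{t}{1-\beta}-\frac{h(1+\beta)}{2(1-\beta)^3}t^2$, so by the spectral mapping theorem $\mathrm{Sp}(\CJ_S)=\{g(\lambda):\lambda\in\mathrm{Sp}(\CJ)\}$.

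The second step is the one-line evaluation $g(i\sigma_1)=\frac{i\sigma_1}{1-\beta}+\frac{h(1+\beta)\sigma_1^2}{2(1-\beta)^3}$, which gives
\[
\max_{\mu\in\mathrm{Sp}(\CJ_S)}\Re(\mu)\ \ge\ \Re\big(g(i\sigma_1)\big)\ =\ \frac{h(1+\beta)\,\sigma_1^2}{2(1-\beta)^3}\ >\ 0,
\]
and this quantity is strictly positive for \emph{every} admissible pair $(h,\beta)$ because $h>0$, $1+\beta>0$, and $1-\beta>0$. Applying the converse of Proposition~\ref{Jacobiancr} for linear systems (a spectral abscissa strictly above zero forces exponential growth of generic trajectories) then yields exponential divergence uniformly in $h$ and $\beta$; concretely, on the two-real-dimensional invariant subspace spanned by $(u_1,0)$ and $(0,v_1)$ (unit left/right singular vectors for $\sigma_1$) the system acts as a rotation composed with a dilation, so solutions there have norm proportional to $e^{\mu^{*}t}$ with $\mu^{*}=h(1+\beta)\sigma_1^2/(2(1-\beta)^3)$.

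I do not expect a genuine obstacle here; the core is a short spectral computation. The points needing a little care are: (i) justifying $M=\CJ_S$ globally, which holds because $\CF$ is purely quadratic for bilinear $f$; (ii) the harmless nondegeneracy hypothesis $A\neq\mathbf{0}$ (if $A=\mathbf{0}$ the dynamics is stationary); and (iii) concluding actual divergence rather than mere non-convergence — an energy identity gives $\frac{d}{dt}\big(\|x\|^2+\|y\|^2\big)=\frac{h(1+\beta)}{(1-\beta)^3}\big(\|A^{\top}x\|^2+\|Ay\|^2\big)\ge 0$, which already rules out convergence, but since its right-hand side vanishes on $\ker A^{\top}\times\ker A$ it does not by itself deliver the exponential rate, so the eigenvalue argument above remains the essential step.
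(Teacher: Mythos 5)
Your proposal is correct, and it reaches exactly the same quantity as the paper's proof — an eigenvalue of $\CJ_S$ with real part $\tfrac{h(1+\beta)\sigma^2}{2(1-\beta)^3}>0$ — but by a somewhat different mechanism. The paper's proof of this lemma writes out the block Jacobian $\CJ_S$ for $f(x,y)=x^{\top}Ay$ explicitly, conjugates by the orthogonal matrix built from the SVD of $A$ to block-diagonalize it, and solves the resulting quadratic characteristic equation, finding $\lambda=\tfrac{h(1+\beta)\rho}{2(1-\beta)^3}\pm\sqrt{\Delta}$ with $\Delta\le 0$. You instead bypass the explicit Jacobian: you note that $\CF$ is purely quadratic so the flow is globally linear with matrix $\CJ_S$, use Proposition~\ref{jsjsjs} to write $\CJ_S=g(\CJ)$, identify $\mathrm{Sp}(\CJ)=\{\pm\mathrm{i}\sigma_j\}\cup\{0\}$ for the bilinear game, and evaluate $g(\mathrm{i}\sigma_1)$ via spectral mapping (the same device the paper uses later, via Lemma~\ref{grahammatrix}, to prove Theorem~\ref{t41}). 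What your route buys: it is shorter, it makes explicit the nondegeneracy hypothesis $A\neq\mathbf{0}$ needed for a \emph{strictly} positive spectral abscissa (the paper's proof only records $\Re(\lambda)\ge 0$ and implicitly assumes some $\rho>0$ before concluding exponential divergence), and your invariant-plane/energy-identity remarks make the passage from a positive eigenvalue to actual exponential growth explicit rather than appealing to a converse of Proposition~\ref{Jacobiancr}. What the paper's route buys: the explicit SVD-conjugated block form is reused essentially verbatim for the alternating case (where $\CJ_A$ is \emph{not} a polynomial in $\CJ$, so your spectral-mapping shortcut would not apply), giving a uniform treatment of both halves of Proposition~\ref{ft}.
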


\begin{proof}The Jacobin of \ref{continuousSim} is written as:
\begin{align}\label{cjs0}
        \CJ_{S} = \begin{bmatrix}
            \frac{h(1+\beta)}{2(1-\beta)^3} AA^{\top} & -\frac{1}{1-\beta}   A \\
            \\
            \frac{1}{1-\beta} A^{\top} & \frac{h(1+\beta)}{2(1-\beta)^3}A^{\top}A
        \end{bmatrix}.
\end{align}
Let $A = \CU \Sigma \CV^{*}$ be the SVD decomposition of the payoff matrix $A$. Thus, $\Sigma$ is a diagonal matrix, and $\CU,\CV$ are orthogonal matrices. Then the matrix in \eqref{cjs0} can be decomposed as
\begin{align}\label{cjs3}
   \CJ_{S} = \begin{bmatrix}
       \CU & \textbf{0}^{\top} \\
       \\
       \textbf{0} & \CV^{*}
   \end{bmatrix}
   \cdot
   \begin{bmatrix}
            \frac{h(1+\beta)}{2(1-\beta)^3} \Sigma \Sigma^{\top} & -\frac{1}{1-\beta} \Sigma \\
            \\
            \frac{1}{1-\beta} \Sigma^{\top} & \frac{h(1+\beta)}{2(1-\beta)^3}\Sigma^{\top} \Sigma
        \end{bmatrix}
   \cdot   
    \begin{bmatrix}
       \CU^{*} & \textbf{0} \\
       \\
       \textbf{0}^{\top} & \CV
    \end{bmatrix}.
\end{align}

Moreover, it is easy to see the matrix $\begin{bmatrix}
       \CU & \textbf{0}^{\top} \\
       \\
       \textbf{0} & \CV^{*}
   \end{bmatrix}$ is also an orthogonal matrix. Thus, the eigenvalues of \eqref{cjs0} are the same as the eigenvalues of
\begin{align}\label{cja01}
        \begin{bmatrix}
            \frac{h(1+\beta)}{2(1-\beta)^3} \Sigma \Sigma^{\top} & -\frac{1}{1-\beta} \Sigma \\
            \\
            \frac{1}{1-\beta} \Sigma^{\top} & \frac{h(1+\beta)}{2(1-\beta)^3}\Sigma^{\top} \Sigma
        \end{bmatrix}.
\end{align}
Any eigenvalue $\lambda$ of matrix \eqref{cja01} satisfies the following equation:
\begin{align}\label{cjs02}
    \left( \lambda -  \frac{h(1+\beta)}{2(1-\beta)^3} \rho \right)
    \left( \lambda -  \frac{h(1+\beta)}{2(1-\beta)^3} \rho \right)
    +
    \left( \frac{1}{1-\beta} \right)^2 \rho = 0,
\end{align}
here $\rho \ge 0$ is an eigenvalue of $\Sigma \Sigma^{\top}$.  The solution of \eqref{cjs02} is
\begin{align}
    \lambda = \frac{h(1+\beta)\rho}{2(1-\beta)^3} \pm \sqrt{\Delta},
\end{align}
where $\Delta = -\frac{\rho}{(1-\beta)^2} \le 0$.
Thus, $\sqrt{\Delta}$ is always a purely imaginary number, and 
\begin{align*}
    \Re(\lambda) = \frac{h(1+\beta)\rho}{2(1-\beta)^3} \ge 0,
\end{align*}
for any $\beta \in (-1,1)$ and $h>0$. Therefore, the equation always diverges at an exponential rate regardless of the choice of $h$ and $\beta$.
\end{proof}

\begin{lem}\label{Proposition3.22}
In matrix games $f(x,y) = x^{\top}Ay$, \ref{continuousAlt} converges exponentially if $\beta$ is negative and $h$ is sufficiently small.
\end{lem}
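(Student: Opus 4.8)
\emph{Proof plan for Lemma~\ref{Proposition3.22}.}

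The plan is to mirror the proof of Lemma~\ref{Proposition3.21}: write down the Jacobian $\CJ_A$ of \ref{continuousAlt} at the origin — which, for a bilinear payoff, is in fact the full (linear) right-hand side of the ODE — block-diagonalize it through the singular value decomposition of $A$, and then read off its spectrum from a family of $2\times 2$ matrices indexed by the singular values. By Proposition~\ref{Jacobiancr}, and since the dynamics are linear here, it suffices to show $\max_{\lambda\in\mathrm{Sp}(\CJ_A)}\Re(\lambda)<0$.

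First I would specialize Proposition~\ref{jajaja} to $f(x,y)=x^{\top}Ay$, where $\nabla^2_x f=\nabla^2_y f=\mathbf{0}$, $\nabla_{xy}f=A$ and $\nabla_{yx}f=A^{\top}$. Combining this with \eqref{cjs0} and using $\tfrac{h(1+\beta)}{2(1-\beta)^3}-\tfrac{h}{(1-\beta)^2}=\tfrac{h(3\beta-1)}{2(1-\beta)^3}$ gives
\[
\CJ_A=\begin{bmatrix}\dfrac{h(1+\beta)}{2(1-\beta)^3}AA^{\top} & -\dfrac{1}{1-\beta}A\\[2mm] \dfrac{1}{1-\beta}A^{\top} & \dfrac{h(3\beta-1)}{2(1-\beta)^3}A^{\top}A\end{bmatrix},
\]
so $\CJ_A$ differs from $\CJ_S$ only in its lower-right coefficient. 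Writing $A=\CU\Sigma\CV^{\top}$ and conjugating $\CJ_A$ by the orthogonal matrix $\mathrm{diag}(\CU,\CV)$ — exactly as in \eqref{cjs3} — replaces the off-diagonal blocks by $\pm\tfrac{1}{1-\beta}\Sigma$ and the diagonal blocks by $\tfrac{h(1+\beta)}{2(1-\beta)^3}\Sigma\Sigma^{\top}$ and $\tfrac{h(3\beta-1)}{2(1-\beta)^3}\Sigma^{\top}\Sigma$. Since $\Sigma$ is diagonal, a coordinate permutation decomposes this matrix into $2\times 2$ blocks, one per singular value $\sigma$ of $A$ (plus zero blocks if $A$ is rectangular or rank-deficient):
\[
M_\sigma=\begin{bmatrix}\dfrac{h(1+\beta)}{2(1-\beta)^3}\sigma^2 & -\dfrac{\sigma}{1-\beta}\\[2mm] \dfrac{\sigma}{1-\beta} & \dfrac{h(3\beta-1)}{2(1-\beta)^3}\sigma^2\end{bmatrix}.
\]

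It then remains to locate the eigenvalues of each $M_\sigma$. Its characteristic polynomial is $\lambda^2-\tau\lambda+\det M_\sigma$ with $\tau=\operatorname{tr}M_\sigma=\tfrac{2h\beta\sigma^2}{(1-\beta)^3}$, which is strictly negative whenever $\beta<0$ and $\sigma>0$. A direct computation (using the identity $4\beta^2-(1+\beta)(3\beta-1)=(1-\beta)^2$) gives the discriminant $\tau^2-4\det M_\sigma=\tfrac{\sigma^2}{(1-\beta)^2}\bigl(\tfrac{h^2\sigma^2}{(1-\beta)^2}-4\bigr)$. Hence, if $h<2(1-\beta)/\sigma_{\max}(A)$, this discriminant is negative for every $\sigma$, so the two eigenvalues of $M_\sigma$ form a complex-conjugate pair, each with real part $\tfrac{\tau}{2}=\tfrac{h\beta\sigma^2}{(1-\beta)^3}<0$. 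Taking the maximum over all singular values gives $\max_{\lambda\in\mathrm{Sp}(\CJ_A)}\Re(\lambda)=\tfrac{h\beta\,\sigma_{\min}^2(A)}{(1-\beta)^3}<0$ when $A$ is nonsingular (if $A$ is rank-deficient the vanishing singular values contribute a subspace on which the dynamics are stationary, and exponential convergence holds on its orthogonal complement). Proposition~\ref{Jacobiancr}, together with linearity of the flow, then yields exponential convergence of \ref{continuousAlt}.

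I do not expect a genuine obstacle: the argument is essentially the same computation as Lemma~\ref{Proposition3.21}. The two places that need care are (i) getting the lower-right block of $\CJ_A$ right, since this is where the alternating correction $-\tfrac{h}{(1-\beta)^2}\nabla_{yx}f\,\nabla_{xy}f$ enters and flips the coefficient from $1+\beta$ to $3\beta-1$; and (ii) the role of ``$h$ sufficiently small'': it is used only to force the spectrum of $M_\sigma$ to be complex, so that $\Re(\lambda)$ equals half the (negative) trace. For large $h$ the discriminant can become positive and the larger real root can cross zero, which is precisely the mechanism by which the alternating scheme fails outside the stated step-size regime.
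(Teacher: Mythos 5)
Your proposal is correct and follows essentially the same route as the paper's proof: specialize the Jacobian $\CJ_A$ to the bilinear case, block-diagonalize via the SVD of $A$, and show each $2\times 2$ block has eigenvalues with real part $\tfrac{h\beta\sigma^2}{(1-\beta)^3}<0$ once $h$ is small enough to force a negative discriminant. Your version is in fact slightly more explicit than the paper's (you give the concrete threshold $h<2(1-\beta)/\sigma_{\max}(A)$ and note the rank-deficient case), but the underlying argument is identical.
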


\begin{proof}
The Jacobin of \ref{continuousAlt} is written as:
    \begin{align}\label{cja0}
        \CJ_{A} = \begin{bmatrix}
            \frac{h(1+\beta)}{2(1-\beta)^3} AA^{\top} & -\frac{1}{1-\beta}   A \\
            \\
            \frac{1}{1-\beta} A^{\top} & - \frac{h(1-3\beta)}{2(1-\beta)^3}A^{\top}A
        \end{bmatrix}
    \end{align}

Let $A = U \Sigma V$ be the SVD decomposition of the payoff matrix $A$, where $\Sigma$ is a diagonal matrix. Then for a similar reason as in \eqref{cjs3}, the eigenvalues of matrix in \eqref{cja0} are same as the eigenvalues of
\begin{align}\label{cja05}
        \begin{bmatrix}
            \frac{h(1+\beta)}{2(1-\beta)^3} \Sigma \Sigma^{\top} & -\frac{1}{1-\beta} \Sigma \\
            \\
            \frac{1}{1-\beta} \Sigma^{\top} & - \frac{h(1-3\beta)}{2(1-\beta)^3}
            \Sigma^{\top} \Sigma
        \end{bmatrix}.
    \end{align}
Any eigenvalue $\lambda$ of matrix \eqref{cja05} satisfies the following equation:
\begin{align}\label{cja02}
    \left( \lambda -  \frac{h(1+\beta)}{2(1-\beta)^3} \rho \right)
    \left( \lambda +  \frac{h(1-3\beta)}{2(1-\beta)^3} \rho \right)
    +
    \left( \frac{1}{1-\beta} \right)^2 \rho = 0,
\end{align}
where $\rho \ge 0$ is an eigenvalue of $\Sigma \Sigma^{\top}$.   The solution of \eqref{cja02} are
\begin{align}\label{cja03}
        \lambda = \frac{h \beta \rho}{(1-\beta)^2} \pm \sqrt{\Delta},
\end{align}
where \begin{align}\label{cja04}
\Delta = \frac{h^2\beta^2\rho^2}{(1-\beta)^4} + \frac{h^2(1+\beta)(1-3\beta)\rho^2}{4(1-\beta)^6} - \frac{1}{(1-\beta)^2} \rho.
\end{align}

From \eqref{cja04}, we can choose sufficiently small $h$ so that $\Delta < 0$ because the negative term $- \frac{1}{(1-\beta)^2} \rho$ is independent of $h$. For such small $h$ and negative $\beta$, due to the negativity of $\Delta$, the real part of $\lambda$ is a negative number; therefore, the equation has local convergence.
\end{proof}


\section{Proof of Proposition~\ref{jsjsjs} and \ref{jajaja}}\label{pjsja}

\subsection{Proof of Proposition~\ref{jsjsjs}}\label{popsjsjsjs}
\begin{proof}Recall that we define
\begin{align*}
    \CF(x,y) =&\left( \frac{1}{1-\beta} \right)  f(x,y) \\
    &+\frac{h  (1+\beta)}{4(1-\beta)^3} \left( \lVert \nabla_{x} f(x,y) \lVert^2 - \lVert \nabla_{y} f(x,y) \lVert^2 \right),
\end{align*}
and the continuous-time model for  simultaneous heavy ball method is given by:
\begin{align*}
\dot{x}(t) = -  \nabla_{x}\CF(x,y),\ \ \ 
\dot{y}(t) = \nabla_{y}\CF(x,y).\tag{\textcolor{customBlue}{Continuous Sim-HB}}
\end{align*}
The gradient of $\CF(x,y)$ with respect to $x$ is calculated as
\begin{align*}
        \nabla_x \CF(x,y) & = \left( \frac{1}{1-\beta} \right) \nabla_x f(x,y) + \frac{h(1+\beta)}{4(1-\beta)^3} \nabla_x \left( \lVert \nabla_{x} f(x,y) \lVert^2 - \lVert \nabla_{y} f(x,y) \lVert^2 \right)\\
        & = \left( \frac{1}{1-\beta} \right) \nabla_x f(x,y) + \frac{h(1+\beta)}{2(1-\beta)^3} \left(
        \nabla^2_x f(x,y) \cdot \nabla_x f(x,y)
        -
        \nabla_{xy} f(x,y) \cdot \nabla_y f(x,y)
        \right).
\end{align*}

The gradient of $\CF(x,y)$ with respect to $y$ is calculated as
\begin{align*}
        \nabla_y \CF(x,y) & = \left( \frac{1}{1-\beta} \right) \nabla_y f(x,y) + \frac{h(1+\beta)}{4(1-\beta)^3} \nabla_y \left( \lVert \nabla_{x} f(x,y) \lVert^2 - \lVert \nabla_{y} f(x,y) \lVert^2 \right)\\
        & = \left( \frac{1}{1-\beta} \right) \nabla_y f(x,y) + \frac{h(1+\beta)}{2(1-\beta)^3} \left(
        \nabla_{yx} f(x,y) \cdot \nabla_x f(x,y)
        -
        \nabla^2_{y} f(x,y) \cdot \nabla_y f(x,y)
        \right).
\end{align*}
The second-order partial derivative matrix $\nabla^2_x \CF(x,y)$ is calculated as
\begin{align}\label{nablaxxCF}
        & \nabla^2_x  \CF(x,y) = \left( \frac{1}{1-\beta} \right) \nabla^2_x f(x,y) \nonumber \\
        &+ \frac{h(1+\beta)}{2(1-\beta)^3} \left(
        \nabla^3_x f(x,y) \cdot \nabla_x f(x,y) + \nabla^2_x f(x,y) \cdot \nabla^2_x f(x,y)
        -
        \nabla_{xyx} f(x,y) \cdot \nabla_y f(x,y) - 
        \nabla_{xy} f(x,y) \cdot \nabla_{yx} f(x,y)
        \right).
\end{align}
Let $(x^*,y^*)$ be the local equilibrium for which we calculate the Jacobian matrix. From the first-order optimality condition of $(x^*,y^*)$, we have the terms $\nabla_x f(x^*,y^*)$ and $\nabla_y f(x^*,y^*)$
in \eqref{nablaxxCF} are zero vectors. Thus, we have
\begin{align}\label{secondderivatex}
 \nabla^2_x \CF(x^*,y^*) =   \left( \frac{1}{1-\beta} \right) \nabla^2_x f(x^*,y^*) + \frac{h(1+\beta)}{2(1-\beta)^3} \left(
         \nabla^2_x f(x^*,y^*) \cdot \nabla^2_x f(x^*,y^*) - 
        \nabla_{xy} f(x^*,y^*) \cdot \nabla_{yx} f(x^*,y^*)
        \right).
\end{align}

Similarly, we can calculate the term $\nabla_{xy}\CF(x^*,y^*)$, $\nabla_{yx}\CF(x^*,y^*)$ and  $\nabla^2_{y}\CF(x^*,y^*)$ as follows:
\begin{align}
    \nabla_{xy}\CF(x^*,y^*) &= \left( \frac{1}{1-\beta} \right) \nabla_{xy} f(x^*,y^*) + \frac{h(1+\beta)}{2(1-\beta)^3} \left(
        \nabla^2_x f(x^*,y^*) \cdot \nabla_{xy} f(x^*,y^*)
        -
        \nabla_{xy} f(x^*,y^*) \cdot \nabla^2_y f(x^*,y^*)
        \right),\label{secondderivatexy} \\
    \nabla_{yx}\CF(x^*,y^*) &= \left( \frac{1}{1-\beta} \right) \nabla_{yx} f(x^*,y^*) + \frac{h(1+\beta)}{2(1-\beta)^3} \left(
        \nabla_{yx} f(x^*,y^*) \cdot \nabla^2_{x} f(x^*,y^*)
        -
        \nabla^2_{y} f(x^*,y^*) \cdot \nabla_{yx} f(x^*,y^*)
        \right),\label{secondderivateyx} \\
    \nabla^2_{y}\CF(x^*,y^*) &= \left( \frac{1}{1-\beta} \right) \nabla^2_{y} f(x^*,y^*) + \frac{h(1+\beta)}{2(1-\beta)^3} \left(
        \nabla_{yx} f(x^*,y^*) \cdot \nabla_{xy} f(x^*,y^*)
        -
        \nabla^2_{y} f(x^*,y^*) \cdot \nabla^2_y f(x^*,y^*)
        \right).\label{secondderivatey}
\end{align}

The Jacobian of \ref{continuousSim} at $(x^*,y^*)$ is written as
\begin{align*}
    \CJ_S = \begin{bmatrix}
       -\nabla^2_x \CF(x^*,y^*) &   -\nabla_{xy} \CF(x^*,y^*) \\
       \\
       \nabla_{yx} \CF(x^*,y^*) &  \nabla^2_{y} \CF(x^*,y^*)
    \end{bmatrix},
\end{align*}
where the terms $\nabla^2_x \CF(x^*,y^*), -\nabla_{xy} \CF(x^*,y^*), \nabla^2_{y} \CF(x^*,y^*)$ are given according to \eqref{secondderivatex}, \eqref{secondderivatexy}, \eqref{secondderivateyx} and \eqref{secondderivatey}.
\begin{align*}
    \CJ_S &= \left( \frac{1}{1-\beta} \right) 
    \begin{bmatrix}
       - \nabla^2_x f(x^*,y^*) & -\nabla_{xy} f(x^*,y^*) \\
       \\
       \nabla_{yx} f(x^*,y^*) & \nabla^2_y f(x^*,y^*)
    \end{bmatrix} \\
    & - \frac{h(1+\beta)}{2(1-\beta)^3}
    \begin{bmatrix}
       - \nabla^2_x f(x^*,y^*) & -\nabla_{xy} f(x^*,y^*) \\
       \\
       \nabla_{yx} f(x^*,y^*) & \nabla^2_y f(x^*,y^*)
    \end{bmatrix}
    \cdot 
    \begin{bmatrix}
       - \nabla^2_x f(x^*,y^*) & -\nabla_{xy} f(x^*,y^*) \\
       \\
       \nabla_{yx} f(x^*,y^*) & \nabla^2_y f(x^*,y^*)
    \end{bmatrix} \\
    & = \left(\frac{1}{1-\beta} \CI - \frac{h(1+\beta)}{2(1-\beta)^3} \CJ \right) \cdot \CJ,
\end{align*}
where $\CI$ is the identity matrix, and 
\begin{align*}
    \CJ =  \begin{bmatrix}
        -\nabla^2_{x}f(x^*,y^*) & -\nabla_{xy}f(x^*,y^*) \\
        \\
        \nabla_{yx}f(x^*,y^*) & \nabla^2_{y}f(x^*,y^*)
    \end{bmatrix}
\end{align*}
is the \eqref{Jacobian} for \eqref{GF}. This completes the proof of Proposition~\ref{jsjsjs}.
\end{proof}

\subsection{Proof of Proposition~\ref{jajaja}}
\begin{proof}Recall the continuous-time model for alternating heavy ball method is given by:
\begin{align*}
\dot{x}(t) &= -  \nabla_{x}\CF(x,y),\\
\dot{y}(t) &= \nabla_{y}
\left(\CF(x,y) - \frac{h}{2(1-\beta)^2} \lVert \nabla_x f(x,y) \lVert^2 \right).\tag{\textcolor{deepred}{Continuous Alt-HB}}
\end{align*}
With a similar calculation and the use of the first-order optimality condition of local equilibrium $(x^*,y^*)$ presented in Section~\ref{popsjsjsjs}, the Jacobian of \ref{continuousAlt} at $(x^*,y^*)$ can be formulated as 
\begin{align*}
    \CJ_A = \CJ_S  - \frac{h}{(1-\beta)^2} 
    \begin{bmatrix}
        0 & 0 \\
        \\
        \nabla_{yx}f  \nabla^2_{x}f & \nabla_{yx}f \nabla_{xy}f
    \end{bmatrix}_{(x^*,y^*)}.
\end{align*}  
\end{proof}


\section{Proof of Theorem~\ref{t41}}\label{pt41}

Recall from the local convergence condition presented in Proposition~\ref{Jacobiancr}, \ref{continuousSim} has local convergence if and only if 
$\max_{\lambda \in \mathrm{Sp}(\CJ_S)} \Re(\lambda) < 0. $
To analyze $\max_{\lambda \in \mathrm{Sp}(\CJ_S)} \Re(\lambda)$, we need the following lemma about the eigenvalues of a polynomial matrix.

\begin{lem}\label{grahammatrix}[\citet{graham2018matrix}] If $p(x)$ is a polynomial and $A \in \BR^{n \times n}$, then every eigenvalue of $p(A)$ can be represented by $p(\lambda)$, where $\lambda \in \mathrm{Sp}(\CJ)$. 
\end{lem}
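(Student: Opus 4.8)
\textbf{Proof proposal for Lemma~\ref{grahammatrix}.}

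The plan is to prove the spectral mapping statement for polynomials by reducing to the triangular case. First I would invoke the Schur decomposition: since $A \in \BR^{n\times n}$ is a real matrix, viewed over $\BC$ it admits a factorization $A = QTQ^{*}$, where $Q$ is unitary and $T$ is upper triangular. The key property is that the diagonal entries of $T$ are exactly the eigenvalues $\lambda_1, \dots, \lambda_n$ of $A$ (listed with algebraic multiplicity), and these are precisely the elements of $\mathrm{Sp}(\CJ)$ in the notation of the lemma (here $A$ plays the role of $\CJ$).

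The central step is then to observe that applying the polynomial $p$ commutes with this similarity transformation and preserves triangular structure. Concretely, $p(A) = p(QTQ^{*}) = Q\, p(T)\, Q^{*}$, because each power $A^{k} = QT^{k}Q^{*}$ and the unitary conjugation is linear. Since the product of upper triangular matrices is upper triangular and the diagonal of a product of triangular matrices is the entrywise product of diagonals, we get that $T^{k}$ is upper triangular with diagonal entries $\lambda_i^{k}$, and hence $p(T)$ is upper triangular with diagonal entries $p(\lambda_i)$. Because $Q$ is unitary, $p(A)$ and $p(T)$ are similar, so they have the same eigenvalues; and the eigenvalues of an upper triangular matrix are exactly its diagonal entries. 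Therefore $\mathrm{Sp}(p(A)) = \{\, p(\lambda_i) : i = 1, \dots, n \,\} \subseteq \{\, p(\lambda) : \lambda \in \mathrm{Sp}(\CJ) \,\}$, which is the claim.

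I do not expect a serious obstacle here, since this is a classical fact; the only point requiring mild care is bookkeeping with multiplicities and the fact that the Schur form must be taken over $\BC$ even though $A$ is real (so that all eigenvalues, including complex conjugate pairs, genuinely appear on the diagonal of $T$). One could alternatively phrase the argument via the existence of a common eigenvector: if $Av = \lambda v$ then $p(A)v = p(\lambda) v$, which shows every $p(\lambda)$ with $\lambda \in \mathrm{Sp}(A)$ is an eigenvalue of $p(A)$; combined with a dimension/count argument using the triangularization, this gives the reverse inclusion as well. I would present the Schur-form version as the cleaner route. This lemma is then applied in Appendix~\ref{pt41} with $p(t) = \bigl(\tfrac{1}{1-\beta} - \tfrac{h(1+\beta)}{2(1-\beta)^3} t\bigr) t$ and $A = \CJ$, so that $\mathrm{Sp}(\CJ_S)$ is read off directly from $\mathrm{Sp}(\CJ)$ via Proposition~\ref{jsjsjs}.
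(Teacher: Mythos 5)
Your proof is correct, but it takes a genuinely different route from the paper. The paper's argument (following the cited source) does not triangularize anything: given an eigenvalue $\mu$ of $p(A)$, it factors the polynomial $p(x)-\mu = a\prod_{i=1}^{n}(x-\lambda_i)$ over $\BC$, substitutes $A$ to get $p(A)-\mu \CI = a\prod_i (A-\lambda_i \CI)$, and notes that $\det\left(p(A)-\mu \CI\right)=0$ forces at least one factor $A-\lambda_i\CI$ to be singular, so $\lambda_i\in\mathrm{Sp}(A)$ and $\mu=p(\lambda_i)$. That argument needs only the fundamental theorem of algebra and multiplicativity of the determinant, and it delivers exactly the inclusion stated in the lemma, $\mathrm{Sp}(p(A))\subseteq p(\mathrm{Sp}(A))$. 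Your Schur-form argument is heavier machinery but buys more: it yields the full spectral mapping equality $\mathrm{Sp}(p(A)) = \{p(\lambda_i)\}$ with multiplicities, and in particular the reverse inclusion, which the application in Appendix~\ref{pt41} also quietly uses (``any number represented by \eqref{eigenvaluesimgf} is an eigenvalue of $\CJ_S$''); in the paper that direction is implicit via the trivial eigenvector observation $Av=\lambda v \Rightarrow p(A)v=p(\lambda)v$, which you also mention as your alternative route. One small point of care in your write-up is well taken and worth keeping explicit: the Schur factorization must be taken over $\BC$ even though $A$ is real, and the $\lambda$ in the statement should be read as ranging over $\mathrm{Sp}(A)$ (the lemma's ``$\mathrm{Sp}(\CJ)$'' reflects the intended application with $A=\CJ$). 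Neither your argument nor the paper's requires the diagonalizability assumption that the paper invokes elsewhere in Appendix~\ref{pt43}, so both are fully general.
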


\begin{proof}
    The following proof comes from  \cite{graham2018matrix}. A key point here is that we are considering eigenvalues of matrices over complex number field $\BC$, and every $n$-degree polynomial over $\BC$ has $n$ roots.

    Let $\mu \in \BC$ be an eigenvalue of $p(A)$, and we consider the polynomial $p(x)-\mu$. Over the complex number field $\BC$, the polynomial $p(x)-\mu$ can be factored as
    \begin{align}\label{polyfact}
        p(x)-\mu = a \prod^n_{i=1} (x - \lambda_i)
    \end{align}
    Moreover, since $\mu$ is an eigenvalue of $p(A)$, we have $\det(p(A) - \mu \CI) = 0$. Thus, from \eqref{polyfact}, as least one of $A - \lambda_i \CI$ is not invertible, i.e., $\exists v \in \BC^n$, such that $\left(A - \lambda_i \CI \right) v = 0$. This implies 
    $\mu = p(\lambda_i)$ and finishes the proof.
\end{proof}

In the following, we state a corollary of Assumption~\ref{Assu1}, which is proved in \cite{wang2024local}.

\begin{lem}\label{wangresult}[Theorem 2.1 in \citet{wang2024local}]
    Under Assumption~\ref{Assu1}, we have $\Re(\lambda) < 0$ for every $\lambda \in \mathrm{Sp}(\CJ)$.
\end{lem}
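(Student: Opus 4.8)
The plan is to prove Lemma~\ref{wangresult} directly from the additive splitting $\CJ = \CS + \CA$ of \eqref{deco}, exploiting that $\CS$ is real symmetric and negative semidefinite while $\CA$ is real antisymmetric. That $\CS \preccurlyeq \textbf{0}$ follows from Lemma~\ref{yakebi}, since the diagonal blocks of $\CS$ are $-\nabla^2_x f(x^*,y^*) \preccurlyeq \textbf{0}$ and $\nabla^2_y f(x^*,y^*) \preccurlyeq \textbf{0}$; and $\CA^\top = -\CA$ by inspection of its off-diagonal block structure.

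First I would take an arbitrary $\lambda \in \mathrm{Sp}(\CJ)$ together with a complex unit eigenvector $v \in \BC^d$, so $\CJ v = \lambda v$ and $v^* v = 1$. Left-multiplying by $v^*$ gives the Rayleigh-quotient identity $\lambda = v^* \CJ v = v^* \CS v + v^* \CA v$. Writing $v = v_1 + i v_2$ with $v_1, v_2 \in \BR^d$, symmetry of $\CS$ cancels the cross terms and yields $v^* \CS v = v_1^\top \CS v_1 + v_2^\top \CS v_2 \in \BR$, which is $\le 0$ because $\CS \preccurlyeq \textbf{0}$; antisymmetry of $\CA$ gives $\overline{v^* \CA v} = v^* \CA^\top v = - v^* \CA v$, so $v^* \CA v$ is purely imaginary. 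Hence $\Re(\lambda) = v^* \CS v \le 0$, which is already the weak bound.

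The substantive part is upgrading $\le 0$ to $< 0$ using Assumption~\ref{Assu1}. Suppose for contradiction that $\Re(\lambda) = 0$. Then $v_1^\top \CS v_1 + v_2^\top \CS v_2 = 0$ with both summands $\le 0$, so each vanishes; and for a semidefinite matrix $v_j^\top \CS v_j = 0$ forces $\CS v_j = 0$ (factor $-\CS = L^\top L$ and note $\lVert L v_j \rVert = 0$). Therefore $\CS v = 0$, i.e.\ $v \in \mathrm{Ker}(\CS)$, and the identity $\CJ v = \lambda v$ collapses to $\CA v = \lambda v$, so $v$ is an eigenvector of $\CA$ and lies in $\mathrm{EigVec}(\CA)$ (read as the set of complex eigenvectors of the antisymmetric $\CA$, whose spectrum is purely imaginary). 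Thus $v \in \mathrm{EigVec}(\CA) \cap \mathrm{Ker}(\CS) = \{\textbf{0}\}$ by Assumption~\ref{Assu1}, contradicting $v^* v = 1$; hence $\Re(\lambda) < 0$ for every $\lambda \in \mathrm{Sp}(\CJ)$.

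The step I expect to require the most care is the complex linear-algebra bookkeeping: checking that $v^* \CS v$ is genuinely real and $\le 0$ for complex $v$ (hence the real/imaginary split), that $v^* \CS v = 0 \Rightarrow \CS v = 0$ truly invokes semidefiniteness and survives complexification, and that $\mathrm{EigVec}(\CA)$ in Assumption~\ref{Assu1} is the set of complex eigenvectors rather than their span; everything else is a one-line computation. Once Lemma~\ref{wangresult} is in hand, it feeds into Theorem~\ref{t41}: Proposition~\ref{jsjsjs} writes $\CJ_S = p(\CJ)$ for the quadratic $p(z) = \tfrac{1}{1-\beta} z - \tfrac{h(1+\beta)}{2(1-\beta)^3} z^2$, Lemma~\ref{grahammatrix} gives $\mathrm{Sp}(\CJ_S) = \{ p(\lambda) : \lambda \in \mathrm{Sp}(\CJ) \}$, and Assumption~\ref{ibr} guarantees $\Im(\lambda)^2 > \Re(\lambda)^2$; computing $\Re(p(\lambda)) = \tfrac{\Re(\lambda)}{1-\beta} + \tfrac{h(1+\beta)}{2(1-\beta)^3}\big(\Im(\lambda)^2 - \Re(\lambda)^2\big)$ and imposing negativity for every $\lambda \in \mathrm{Sp}(\CJ)$ then reduces exactly to the step-size bound \eqref{simcon}.
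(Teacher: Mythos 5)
Your proof is correct, but note that the paper does not actually prove this lemma: it is imported verbatim as Theorem~2.1 of \citet{wang2024local} (with a sign flip on the Jacobian), so you have supplied a self-contained argument where the paper gives only a citation. Your argument is the natural one and every step checks out: $\CS \preccurlyeq \textbf{0}$ follows from Lemma~\ref{yakebi}, $\CA^{\top} = -\CA$ follows from $\nabla_{yx}f = (\nabla_{xy}f)^{\top}$, the Rayleigh identity $\lambda = v^{*}\CS v + v^{*}\CA v$ with $v^{*}\CS v$ real nonpositive and $v^{*}\CA v$ purely imaginary gives $\Re(\lambda) \le 0$, and the rigidity step $v^{*}\CS v = 0 \Rightarrow \CS v = 0$ (via a factorization $-\CS = L^{\top}L$ applied to the real and imaginary parts of $v$) correctly reduces the boundary case to $v \in \mathrm{EigVec}(\CA) \cap \mathrm{Ker}(\CS)$, which Assumption~\ref{Assu1} kills. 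Your interpretive choice that $\mathrm{EigVec}(\CA)$ denotes the set of eigenvectors of $\CA$ (union of eigenspaces) rather than their span is the right one and worth flagging explicitly: since $\CA$ is real antisymmetric, hence normal and diagonalizable over $\BC$, the span of its eigenvectors is all of $\BC^{d}$, and the ``span'' reading would collapse Assumption~\ref{Assu1} to $\mathrm{Ker}(\CS) = \{\textbf{0}\}$, contradicting the paper's explicit claim that the assumption tolerates many zero eigenvalues of the blocks of $\CS$. The only cosmetic slip is the sentence deriving $\Re(\lambda) = v^{*}\CS v \le 0$ before the contradiction argument, where you should make sure $v$ is genuinely a unit eigenvector of $\CJ$ over $\BC$ (it is, since $\CJ$ is real and we work with its complex spectrum); with that understood, nothing is missing.
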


Note that the definition of Jacobian for \eqref{GF} in \cite{wang2024local} is different from ours by a minus sign. Now we present the proof of Theorem~\ref{t41}.

\begin{proof}[Proof of Theorem~\ref{t41}.]According to Proposition~\ref{jsjsjs} and Lemma~\ref{grahammatrix}, every eigenvalue of $\CJ_S$ can be represented by a quadratic polynomial
\begin{align}\label{eigenvaluesimgf}
    \left(\frac{1}{1-\beta} - \frac{h(1+\beta)}{2(1-\beta)^3}\lambda\right) \cdot \lambda,
\end{align}
where $\lambda \in \mathrm{Sp}(\CJ)$ is an eigenvalue of \eqref{GF}, and any number represented by \eqref{eigenvaluesimgf} is an eigenvalue of $\CJ_S$.

Thus, to ensure the local convergence of \ref{continuousSim}, we need 
\begin{align}\label{eigenvaluesimgf2}
    &\Re \left( \left[\frac{1}{1-\beta} - \frac{h(1+\beta)}{2(1-\beta)^3}\lambda \right]  \cdot \lambda \right) \nonumber \\
    & \qquad = \frac{1}{1-\beta} \Re(\lambda) - \frac{h(1+\beta)}{2(1-\beta)^3} \left( \Re(\lambda)^2 - \Im(\lambda)^2 \right)
    < 0, \ \ \forall \lambda \in \mathrm{Sp}(\CJ).
\end{align}

From Lemma~\ref{wangresult}, $\Re(\lambda)$  is a negative number, and from Assumption~\ref{ibr}, the term $\Re(\lambda)^2 - \Im(\lambda)^2$ in \eqref{eigenvaluesimgf2} is a negative number. Thus, for some fixed momentum parameter $\beta$ and $\forall \lambda \in \mathrm{Sp}(\CJ)$, we need the step size $h$ in \eqref{eigenvaluesimgf2} satisfies
\begin{align}
    h <\min_{\lambda \in \mathrm{Sp}(\CJ)} \frac{2 (1-\beta)^2}{(1+\beta)}  \frac{\lvert\Re(\lambda)\lvert}{(\Im(\lambda)^2 - \Re(\lambda)^2)},
\end{align}
and this finishes the proof of Theorem~\ref{t41}.
\end{proof}


\section{Heavy Ball Momentum in Minimization Problems}\label{HBMMP}

In this section, we present the interaction between step sizes and momentum parameters of heavy ball momentum methods in minimization. We will see the phenomenon in minimization is significantly different from that in min-max games. The materials in this section are based on \cite{o2015adaptive} and \cite{goh2017why}.

In minimization problems\begin{align}\label{minimizationap}
    \min_{w \in \BR^n} f(w),\tag{Minimization}
\end{align}
the heavy ball momentum method is written as
\begin{align}\label{hbimin}
    w_{t+1} = w_{t} - \alpha \nabla_wf(w_t) + \beta (w_t - w_{t-1}),\tag{Heavy Ball in Min.}
\end{align}
where $\alpha > 0$ is the step size and $\beta \in (0,1)$ is the momentum parameter.

Let $w^*$ be a local minimum of \eqref{minimizationap}, and without loss of generality, we assume $f(w^*)=0$. From a local viewpoint, \eqref{minimizationap} can be approximated by the following quadratic problem near $w^*$:
\begin{align*}
    \frac{1}{2} w^{\top} A w + b^{\top}w,
\end{align*}
where $A = \nabla^2f(w^*)$ and $b = \nabla f(w^*)$.
Since $A$ is a symmetric Hessian matrix at a local minimum point, it has an eigenvalue decomposition
\begin{align*}
    A = Q \cdot \mathrm{diag}(\lambda_1, .., \lambda_n) \cdot Q^{\top},
\end{align*}
where $\lambda_1 < ... < \lambda_n$.

\begin{prop}[\citet{o2015adaptive}]\label{o2015adaptive2}
    To make \eqref{hbimin} achieve local convergence in the minimization problem, we need
    \begin{align*}
        0 < \alpha \lambda_n < 2+2\beta,\ \beta \in (0,1),
    \end{align*}
and therefore a larger momentum in the minimization problem allows the algorithm to achieve local convergence on a boarder range of step sizes.
\end{prop}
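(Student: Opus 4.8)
The plan is to linearize \eqref{hbimin} about the strict local minimum $w^*$, diagonalize using the eigenbasis $Q$ of $A=\nabla^2 f(w^*)$, and thereby reduce the question to the asymptotic stability of a decoupled family of scalar second-order recursions, one per eigendirection.

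First I would set $u_t = Q^{\top}(w_t - w^*)$ and use $\nabla f(w) = A(w-w^*)+o(\lVert w-w^*\rVert)$ near $w^*$, together with $A = Q\,\mathrm{diag}(\lambda_1,\dots,\lambda_n)\,Q^{\top}$. Substituting into \eqref{hbimin} and multiplying by $Q^{\top}$ makes the iteration decouple coordinate-wise: for each $i$,
$$u^{(i)}_{t+1} = (1+\beta-\alpha\lambda_i)\,u^{(i)}_t - \beta\,u^{(i)}_{t-1}.$$
By the discrete-time analogue of the linearization principle behind Proposition~\ref{Jacobiancr}, local convergence of \eqref{hbimin} is equivalent to each of these recursions being asymptotically stable, i.e.\ to the companion matrix
$$C_i = \begin{bmatrix} 1+\beta-\alpha\lambda_i & -\beta \\ 1 & 0 \end{bmatrix}$$
having spectral radius strictly less than $1$ for every $i$.

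Second, I would apply the Schur--Cohn (Jury) test to the characteristic polynomial $p_i(\mu) = \mu^2 - (1+\beta-\alpha\lambda_i)\,\mu + \beta$. For a monic real quadratic $\mu^2 + a_1\mu + a_0$, both roots lie strictly inside the unit disk iff $|a_0| < 1$, $p(1) = 1+a_1+a_0 > 0$, and $p(-1) = 1-a_1+a_0 > 0$. Here $a_0 = \beta$ and $a_1 = -(1+\beta-\alpha\lambda_i)$, so $|a_0|<1$ is exactly $\beta\in(-1,1)$ (true by hypothesis), $p_i(1)>0$ reduces to $\alpha\lambda_i > 0$, and $p_i(-1)>0$ reduces to $\alpha\lambda_i < 2+2\beta$. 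Since $\alpha>0$, the first inequality simply records $\lambda_i > 0$, consistent with $A \succ 0$ at a strict minimum. Intersecting these constraints over $i=1,\dots,n$ and using $0<\lambda_1<\cdots<\lambda_n$, the lower bound is automatic and the upper bound is tightest at $i=n$, so the stability condition collapses to $0 < \alpha\lambda_n < 2+2\beta$ with $\beta\in(0,1)$. The monotonicity claim is then immediate: the admissible step sizes form the interval $\bigl(0,(2+2\beta)/\lambda_n\bigr)$, whose right endpoint is strictly increasing in $\beta$.

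The argument is classical, so I do not anticipate a real obstacle; the only points requiring minor care are stating the stability criterion in the form above, treating the boundary values $\alpha\lambda_i\in\{0,\ 2+2\beta\}$ (where a root sits on the unit circle) as non-convergent, and observing that the linearization principle applies because under the strict inequalities the augmented first-order system is hyperbolic.
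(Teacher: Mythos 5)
Your proposal is correct and follows essentially the same route as the paper: both reduce \eqref{hbimin} to a family of $2\times 2$ linear iterations per eigendirection of $A$ (your companion matrix and the paper's matrix $\CR$ have the same characteristic polynomial $\mu^2-(1+\beta-\alpha\lambda_i)\mu+\beta$) and then demand spectral radius below one. In fact your write-up is slightly more complete, since the paper asserts the condition $0<\alpha\lambda_n<2+2\beta$ without carrying out the Schur--Cohn/Jury computation that you make explicit.
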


\begin{proof}
The following proof comes from \cite{o2015adaptive} and \cite{goh2017why}.

We introduce an intermediate variable $z_k$ as $z_{k+1} = \beta z_k + \nabla_w f(w_k)$. Then \eqref{hbimin} can be written as
\begin{align*}
    z_{k+1} &= \beta z_k + \nabla_w f(w_k),\\
    w_{k+1} &= w_k - \alpha z_{k+1}.
\end{align*}
Let $x_k = Q^\top (w_k -w^*)$ and $y_k = Q^\top z_k$. Then the above update rule can be written as
\begin{align*}
    &y_{i,k+1} = \beta y_{i,k} + \lambda_i x_{i,k},\\
    &x_{i,k+1} = x_{i,k} - \alpha y_{i,k+1},
\end{align*}
where $x_{i,k}$ and $y_{i,k}$ represent the $i$-th component of $x_k$ and $y_k$, respectively. These update rules can be written in a matrix iteration form
\begin{align*}
    \begin{bmatrix}
        y_{i,k}\\
        x_{i,k}
    \end{bmatrix}
    =
    \CR^{k} \cdot 
    \begin{bmatrix}
        y_{i,0}\\
        x_{i,0}
    \end{bmatrix},
\end{align*}
where 
$\CR = 
\begin{bmatrix}
        \beta & \lambda_i \\
        - \alpha \beta & 1- \alpha \lambda_i
\end{bmatrix}.$
%
To make the matrix iteration converge, we need $\max\{\lvert \sigma_1 \lvert, \lvert \sigma_2 \lvert \}<1$, where $\sigma_1$ and $\sigma_2$ are the eigenvalues of $\CR$. Given this condition and the formulation of $\CR$, it turns out that we need
\begin{align*}
     0 < \alpha \lambda_n < 2+2\beta,\ \beta \in (0,1).
\end{align*}
\end{proof}

In the following Figure~\ref{min_fig}, we present experiments on the dynamical behaviors of heavy ball momentum in minimization. We can observe that small momentum makes the algorithm diverge
while large momentum makes the algorithm converge, which supplies Proposition~\ref{o2015adaptive2}.



\begin{figure}[h]
    \centering
    \begin{minipage}{0.3\textwidth}
        \centering
        \includegraphics[width=\textwidth]{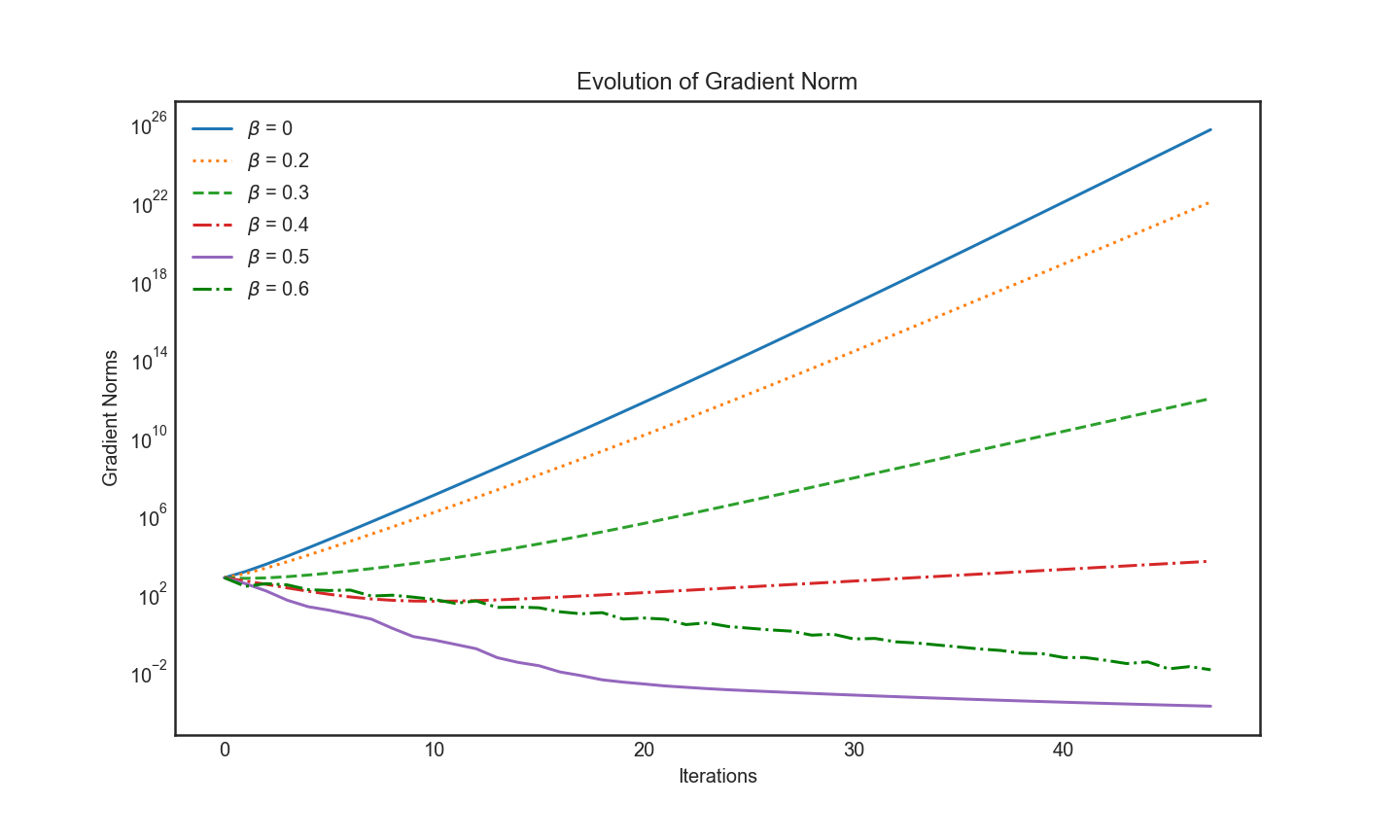}
        \\ 
        \small (a) Dimension = 200, $\alpha = 002$.
    \end{minipage}
    \hspace{0.02\textwidth}
    \begin{minipage}{0.3\textwidth}
        \centering
        \includegraphics[width=\textwidth]{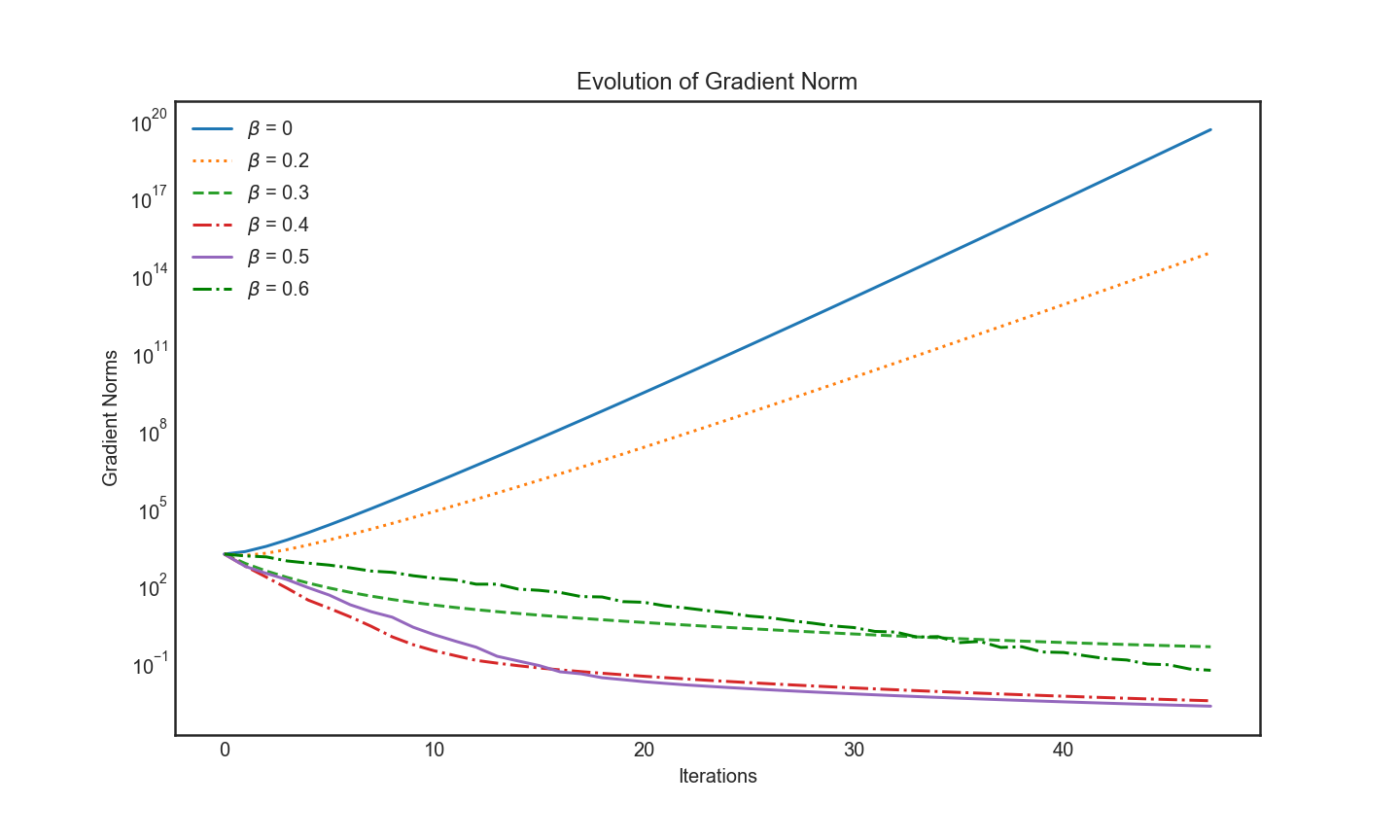}
        \\ 
        \small (b) Dimension = 400, $\alpha = 0.01$.
    \end{minipage}
    \hspace{0.02\textwidth}
    \begin{minipage}{0.3\textwidth}
        \centering
        \includegraphics[width=\textwidth]{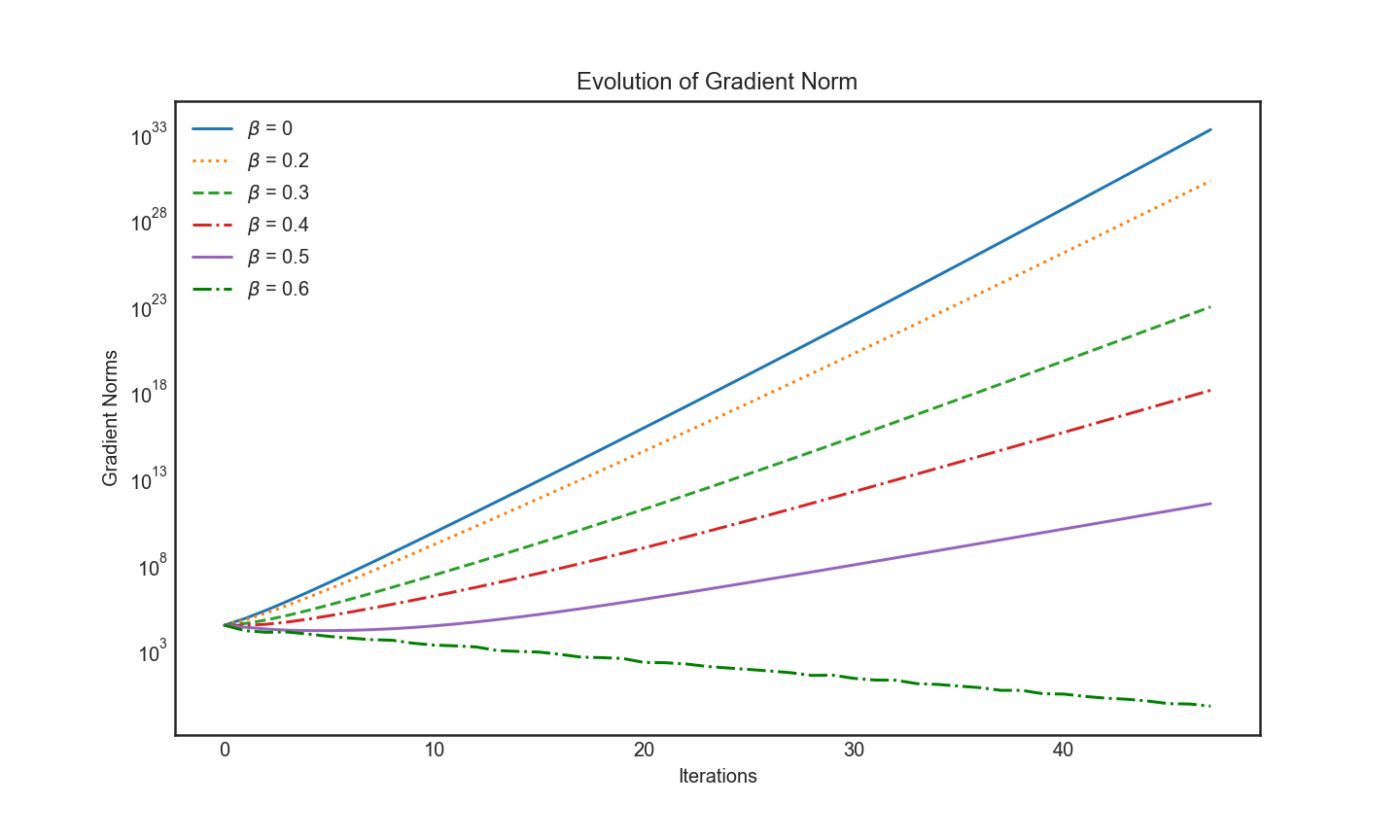}
        \\ 
        \small (c) Dimension = 800, $\alpha = 0.006$.
    \end{minipage}
    \\[0.2in] 
    \begin{minipage}{0.3\textwidth}
        \centering
        \includegraphics[width=\textwidth]{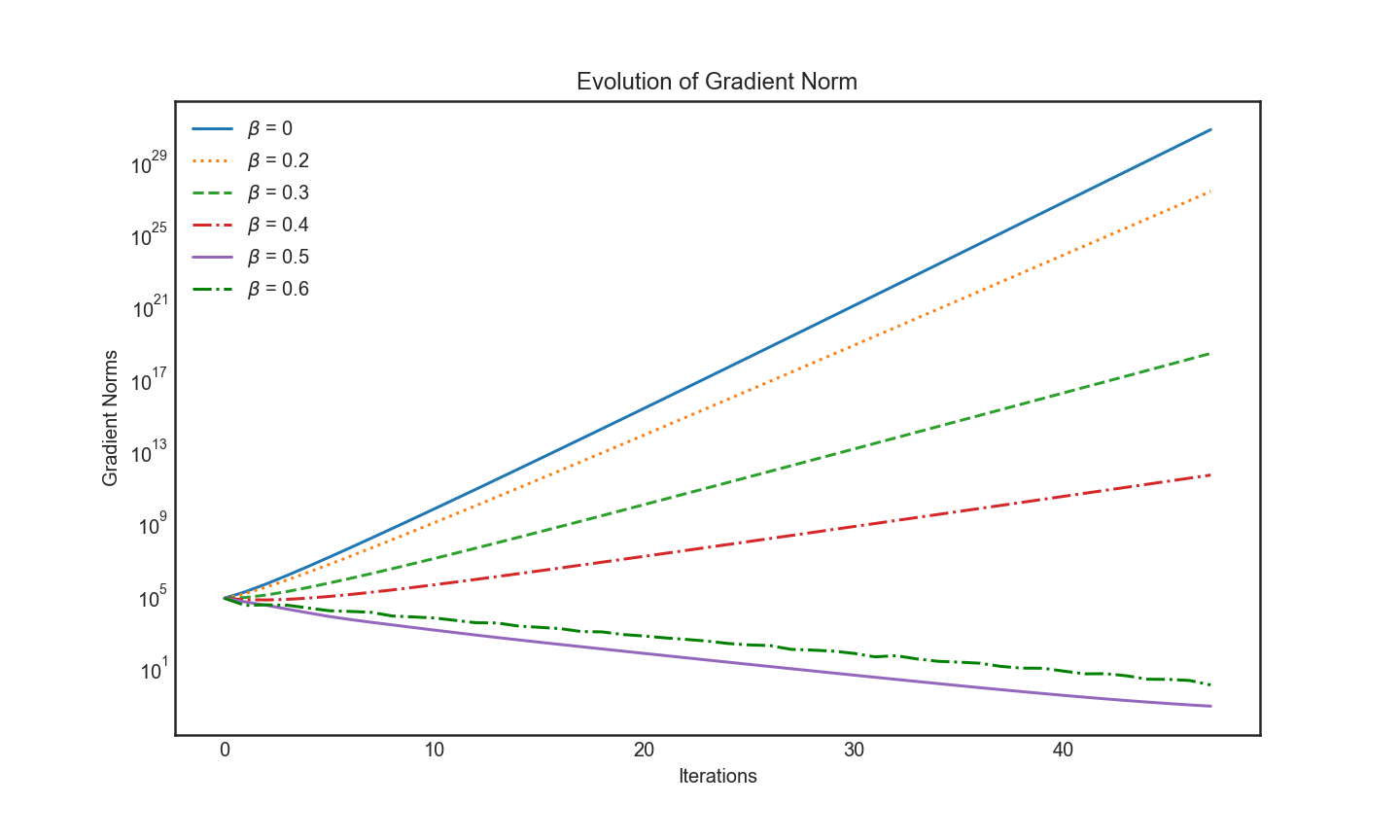}
        \\ 
        \small (d) Dimension = 1000, $\alpha = 0.0045$.
    \end{minipage}
    \hspace{0.02\textwidth}
    \begin{minipage}{0.3\textwidth}
        \centering
        \includegraphics[width=\textwidth]{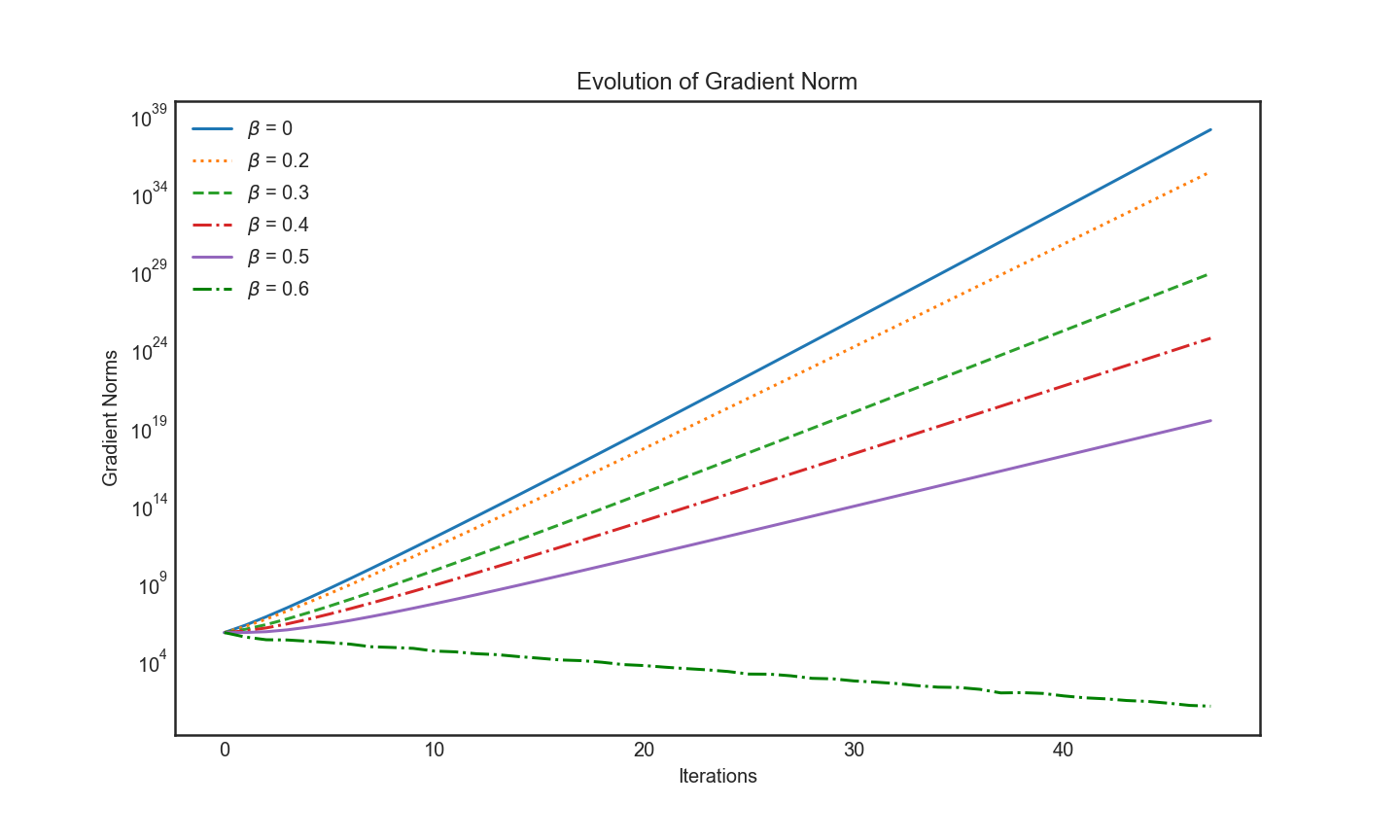}
        \\ 
        \small (e) Dimension = 2000, $\alpha = 0.0025$.
    \end{minipage}
    \hspace{0.02\textwidth}
    \begin{minipage}{0.3\textwidth}
        \centering
        \includegraphics[width=\textwidth]{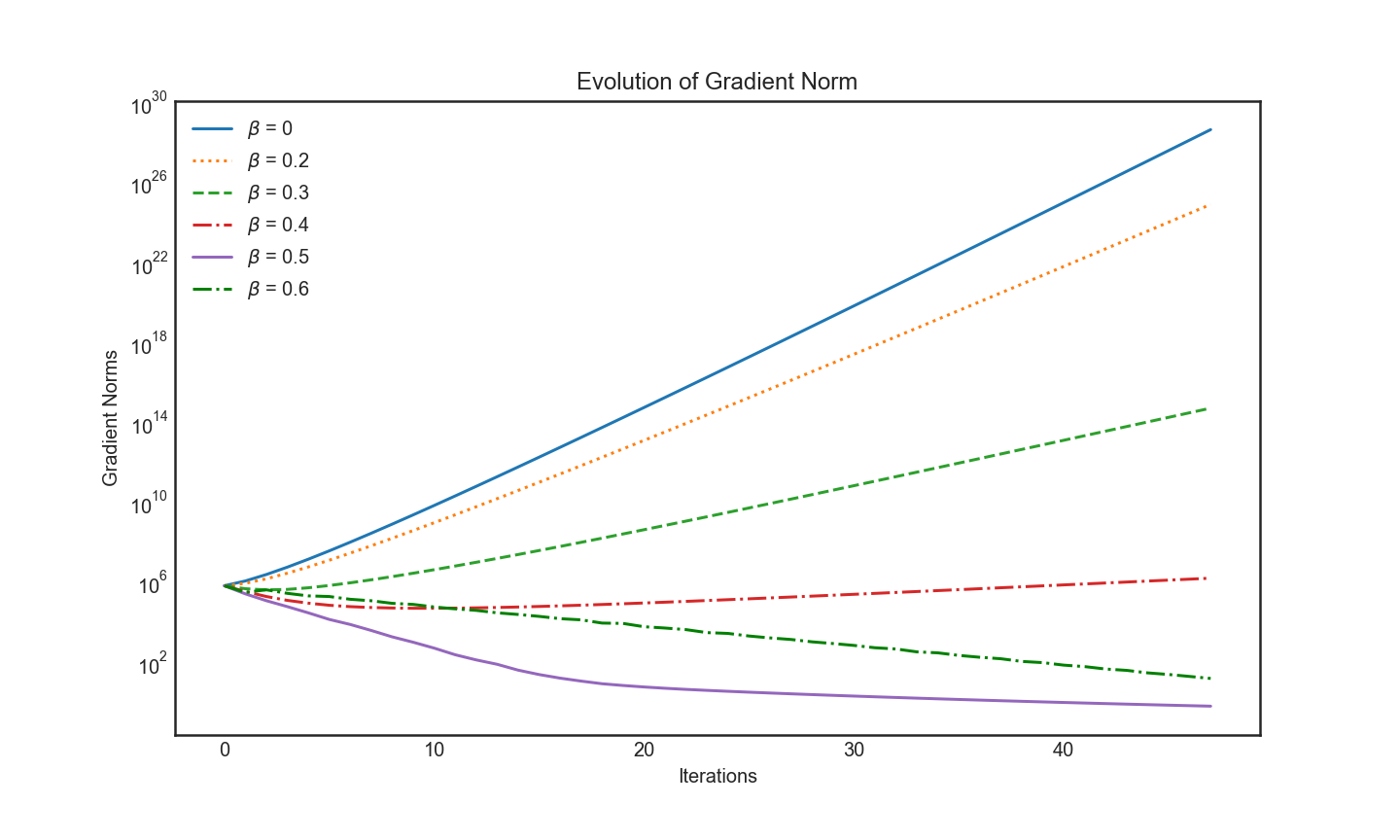}
        \\ 
        \small (f) Dimension = 2500, $\alpha = 0.0017$..
    \end{minipage}
    \caption{Experimental results for Proposition~\ref{o2015adaptive2}.}
    \label{min_fig}
\end{figure}


\section{Proof of Theorem~\ref{t42}}\label{pt42}

Recall from Proposition~\ref{Jacobiancr} and Proposition~\ref{jsjsjs}, the local convergence rate of \ref{continuousSim} is determined by 
\begin{align}\label{minmaxapp}
    \max_{\lambda \in \mathrm{Sp}(\CJ)} \frac{1}{1-\beta} \Re(\lambda) + \frac{h(1+\beta)}{2(1-\beta)^3}\left( \Im(\lambda)^2 -  \Re(\lambda)^2\right).
\end{align}

Moreover, recall that if $f(x,y)$ satisfies Assumption~\ref{ibr} and \ref{Assu1}, then from Lemma~\ref{wangresult}, we have $ \Re(\lambda) < 0$ and $ \Im(\lambda)^2 -  \Re(\lambda)^2 > 0$
in \eqref{minmaxapp}. Now let $\lambda \in \mathrm{Sp}(\CJ)$, the optimal local convergence rate is achieved by momentum parameter $\beta$ that satisfies the following condition:
\begin{align*}
    \mathrm{argmin}_{\beta \in (-1,1)} \left\{
     \frac{1}{1-\beta} \Re(\lambda) + \frac{h(1+\beta)}{2(1-\beta)^3}\left( \Im(\lambda)^2 -  \Re(\lambda)^2\right)
    \right\}.
\end{align*}

In the following, we will explicitly solve this question. Define function 
\begin{align*}
    g_{\lambda}(\beta) =  \frac{1}{1-\beta} \Re(\lambda) + \frac{h(1+\beta)}{2(1-\beta)^3}\left( \Im(\lambda)^2 -  \Re(\lambda)^2\right),
\end{align*}
and then we can calculate the derivative of $ g_{\lambda}(\beta)$ as:
\begin{align*}
    g'_{\lambda}(\beta) = \frac{\Re(\lambda)}{(1-\beta)^2} + 
    \frac{h \left(  \Im(\lambda)^2 -  \Re(\lambda)^2 \right)(\beta+2)}{(1-\beta)^4}.
\end{align*}
Thus, we have 
\begin{align}\label{inequt48}
    g'_{\lambda}(\beta) \ge 0 \Leftrightarrow  
    \Re(\lambda)(1-\beta)^2 + h \left(  \Im(\lambda)^2 -  \Re(\lambda)^2 \right) (\beta + 2) \ge 0.
\end{align}

Note that the right-hand side of \eqref{inequt48} is a quadratic polynomial of $\beta$, which can be written as
\begin{align}\label{inequt482}
D_{\lambda}(\beta) = \Re(\lambda) \beta^2 + \left[ h \left(  \Im(\lambda)^2 -  \Re(\lambda)^2 \right) - 2  \Re(\lambda) \right] \beta
    +  \Re(\lambda) + 2h \left(  \Im(\lambda)^2 -  \Re(\lambda)^2 \right).
\end{align}
When $\beta=1$, the above function equals to $3h\left(  \Im(\lambda)^2 -  \Re(\lambda)^2 \right)$, which is always a positive number from Assumption~\ref{ibr}, and when $\beta = -1$, we have 
\begin{align}
    D_{\lambda}(-1)= 4  \Re(\lambda) + h \left(  \Im(\lambda)^2 -  \Re(\lambda)^2 \right).
\end{align}

Thus, the discussion is decomposed into two cases:
\paragraph{Case 1, $D_{\lambda}(-1) > 0$: } In this case, we can conclude that in the whole region $\beta \in (-1,1)$, $ g_{\lambda}(\beta)$ is an increasing function, and therefore the minimum is taken at point $\beta = -1$. Thus, in this case, the optimal convergence rate is achieved at $\beta = -1$. This case corresponds to the step size $h$ taken in the region
\begin{align*}
    h > \frac{4 \lvert \Re(\lambda) \lvert}{\Im(\lambda)^2 - \Re(\lambda)^2}
\end{align*}
and is excluded from the step size region stated as in Theorem~\ref{t42}.

\paragraph{Case 2, $D_{\lambda}(-1) \le 0$: } This case covers the step size region 
\begin{align}\label{timeregion123}
    h \le \frac{4 \lvert \Re(\lambda) \lvert}{\Im(\lambda)^2 - \Re(\lambda)^2}.
\end{align}
For a fixed step size $h$ defined in the region of \eqref{timeregion123}, we denote $\beta(h)$ as the corresponding momentum parameter to achieve the optimal local convergence rate for \ref{continuousSim}. Note that the function $D_{\lambda}(\beta)$ is a quadratic function of $\beta$, and the coefficient of $\beta^2$-term is $\Re(\lambda)$, which is a negative number according to Lemma~\ref{wangresult}. Thus, this optimal $\beta(h)$ should satisfy the condition:
\begin{align}\label{conditionDbeta}
    D_{\lambda}(\beta(h)) = 0,\ \text{and}\ \ \beta(h) \in (-1,1).
\end{align}

From conditions \eqref{conditionDbeta}, $D_\lambda(1) > 0$ and $D_\lambda(-1) \le 0$ imply that $\beta(h)$ is the smaller root of $D_\lambda$.
Using the roots formula of a quadratic function and $\Re(\lambda) < 0$, we obtain 
\begin{align}\label{conditionDbeta2}
    \beta(h)& = \frac{2 \Re(\lambda) - h \left( \Im(\lambda)^2 - \Re(\lambda)^2 \right)
    +
    \sqrt{\left[ h \left( \Im(\lambda)^2 - \Re(\lambda)^2 \right) - 2 \Re(\lambda)\right]^2 - 4 \Re(\lambda)^2 - 8h\Re(\lambda)  \left( \Im(\lambda)^2 - \Re(\lambda)^2 \right)}
    }{2 \Re(\lambda)} \nonumber\\
    \\
    & = 1 + \frac{h \left( \Im(\lambda)^2 - \Re(\lambda)^2 \right)}{2 \lvert  \Re(\lambda) \lvert}
    -
    \frac{ \sqrt{h^2 \left( \Im(\lambda)^2 - \Re(\lambda)^2 \right)^2 + 12 \lvert  \Re(\lambda) \lvert  h \left( \Im(\lambda)^2 - \Re(\lambda)^2 \right)  }  }{2 \lvert  \Re(\lambda) \lvert }.
\end{align}

From \eqref{conditionDbeta2}, we have
\begin{align*}
     \beta(h) > 0\ & \Longleftrightarrow 2\ \lvert  \Re(\lambda) \lvert 
     + h \left( \Im(\lambda)^2 - \Re(\lambda)^2 \right) - \sqrt{h^2 \left( \Im(\lambda)^2 - \Re(\lambda)^2 \right)^2 + 12 \lvert  \Re(\lambda) \lvert  h \left( \Im(\lambda)^2 - \Re(\lambda)^2 \right)  } >0 \\
     \\
     & \Longleftrightarrow \ h^2 \left( \Im(\lambda)^2 - \Re(\lambda)^2 \right)^2 - 12 \Re(\lambda) h \left( \Im(\lambda)^2 - \Re(\lambda)^2 \right) < \left(
     2 \lvert \Re(\lambda) \lvert + h  \left( \Im(\lambda)^2 - \Re(\lambda)^2 \right)
     \right)^2 \\
     \\
     &  \Longleftrightarrow \ 12 \lvert \Re(\lambda) \lvert h 
     \left( \Im(\lambda)^2 - \Re(\lambda)^2 \right) < 4 \lvert \Re(\lambda) \lvert^2 + 4h \lvert \Re(\lambda) \lvert  \left( \Im(\lambda)^2 - \Re(\lambda)^2 \right)\\
     \\
     &  \Longleftrightarrow \ 8 h  \left( \Im(\lambda)^2 - \Re(\lambda)^2 \right) < 4  \lvert \Re(\lambda) \lvert \\
     \\
     &  \Longleftrightarrow \ h < \frac{ \lvert \Re(\lambda) \lvert}{2 \left( \Im(\lambda)^2 - \Re(\lambda)^2 \right)}.
\end{align*}
Since $\beta(h)$ is the momentum parameter that achieves the optimal local convergence rate for \ref{continuousSim} under step size $h$, this is exactly what we want to prove in Theorem~\ref{t42}. Thus, the step size region 
\begin{align*}
    h < \frac{ \lvert \Re(\lambda) \lvert}{2 \left( \Im(\lambda)^2 - \Re(\lambda)^2 \right)}
\end{align*}
makes the optimal momentum parameter $\beta(h)$ be a positive number. 


\section{Proof of Theorem~\ref{t43}}\label{pt43}
In Section~\ref{Lffopc}, we provide some notations and lemmas that are useful in the proof of Theorem~\ref{t43}. In Section~\ref{pof43}, we provide the proof of Theorem~\ref{t43}.

\subsection{Lemmas from first-order matrix perturbation calculation}\label{Lffopc}

In this section, we summarize several useful results from matrix perturbation theory, which are important for the proof of  Theorem~\ref{t43}. Our main reference in this section is \cite{bamieh2020tutorial}. 

Let $\epsilon$ be a small positive number, we consider the behaviors of eigenvalues and eigenvectors of the matrix
\begin{align}\label{pmpmap}
    \CA_{\epsilon} = \CA_0 + \epsilon \CA_1 \in \BC^{n \times n}.\tag{Perturbed Matrix}
\end{align}
A vector $v_i \in \BC^{n}$ ($w_i^* \in \BC^{n}$) is a right (left) eigenvector of a matrix $\CA$ with eigenvalue $\lambda \in \BC$ if $\CA v_i = \lambda_i v_i \ \ \ w_i^*\CA = \lambda_i w_i^*$. Here we use $w^*$ to represent the conjugate transpose of $w$, i.e., $w^* = \bar{w}^{\top}$. Given that under the Lebesgue
measure on $\BC^{n^2}$, almost all matrices over the complex number field $\BC$ are diagonalizable, we can use 
\begin{align}\label{eigenvalueapp}
    \Lambda = \begin{bmatrix}
        \lambda_1 & & \\
        & \ddots & \\
        & & \lambda_n
    \end{bmatrix}
\end{align}
to denote the diagonal matrix consisting of the eigenvalues of $\CA$. Moreover, we use 
\begin{align*}
    \CV = [v_1, ..., v_n ]\ \ \ \text{and}  \ \ \  \CW = [w_1, ..., w_n]
\end{align*}
to denote the matrix consisting of the right and left eigenvectors of $\CA$.  As a consequence, we have
\begin{align}\label{eigenvectorapp}
    \CA \CV = \CV \Lambda\ \ \ \text{and} \ \ \ \CW^* \CA = \Lambda \CW^*.
\end{align}

In \eqref{pmpmap}, we assume that the eigenvectors and eigenvalues are analytic functions of $\epsilon$ in a neighborhood of $0$. We use $\CV(\epsilon), \CW^*(\epsilon)$, and $\Lambda(\epsilon)$ to denote the matrices consisting of the right eigenvectors, left eigenvectors, and eigenvalues as in \eqref{eigenvectorapp} and \eqref{eigenvalueapp}.  We assume $  \CW^*(\epsilon),  \CV(\epsilon)$ have asymptotic expansions as
\begin{align*}
    \CW^*(\epsilon) & = \CW^*_{0} + \epsilon \CW^*_{1} + ... \\
    \\
    \CV(\epsilon) & = \CV_{0} + \epsilon \CV_{1} + ... ,
\end{align*}
and then by definition we have 
\begin{align}
    \left( A_0 + \epsilon A_1 \right) \left(\CV_0 + \epsilon \CV_1 + ...\right) &=  \left(\CV_0 + \epsilon \CV_1 + ...\right) \left( \Lambda_0 + \epsilon \Lambda_1 + ... \right)\label{expansionap1}\\
   \nonumber \\
   \left(\CW^*_0 + \epsilon \CW^*_1 + ...\right) \left( A_0 + \epsilon A_1 \right) &= \left( \Lambda_0 + \epsilon \Lambda_1 + ... \right)\left(\CW^*_0 + \epsilon \CW^*_1 + ...\right)\label{expansionap2}
\end{align}
Moreover, given the potential for an infinite number of choices for $\CW^*(\epsilon)$ and $\CV(\epsilon)$, it is necessary to impose constraints on these variables. A common choice is the reciprocal basis constraint, which requires that $\CW^*(\epsilon) \CV(\epsilon) = \CI$. This constraint simplifies certain perturbation calculations. As a consequence of $\CW^*(\epsilon) \CV(\epsilon) = \CI$, we need the matrices $\CV_0$ and $\CW^*_0$ also satisfy the reciprocal basis constraint.

\begin{lem}[\citet{bamieh2020tutorial}]\label{bamieh2020tutorial1}
    Let $\CV_0, \CW^*_0$  denote the matrices consisting of the right and left eigenvectors of $\CA_0$, and assume they have been normalized according to the  reciprocal basis constraint, i.e., $W^*_0 V_0 = \CI$. Then the diagonal matrix $\Lambda_1$ defined in \eqref{expansionap1} and \eqref{expansionap2} can be represented by 
    \begin{align*}
        \Lambda_1 = \mathrm{dg}\left( \CW^*_0 \CA_1 \CV_0 \right),
    \end{align*}
    where $\mathrm{dg}\left( \CM \right)$ represents the diagonal elements of matrix $\CM$.
\end{lem}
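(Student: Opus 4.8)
\textbf{Proof proposal for Lemma~\ref{bamieh2020tutorial1}.}
The plan is to carry out the standard order-by-order matching in the perturbation expansion \eqref{expansionap1} and then isolate the diagonal part. First I would collect the coefficients of $\epsilon^0$ and $\epsilon^1$ on both sides of \eqref{expansionap1}. The $\epsilon^0$ terms give $\CA_0 \CV_0 = \CV_0 \Lambda_0$, which merely restates that the columns of $\CV_0$ are right eigenvectors of $\CA_0$ with eigenvalues $\Lambda_0$. Matching the $\epsilon^1$ terms gives
\begin{align*}
\CA_0 \CV_1 + \CA_1 \CV_0 = \CV_0 \Lambda_1 + \CV_1 \Lambda_0,
\end{align*}
which I would rearrange as $\CA_0 \CV_1 - \CV_1 \Lambda_0 = \CV_0 \Lambda_1 - \CA_1 \CV_0$.

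Next I would left-multiply this identity by $\CW^*_0$ and invoke the two structural facts available in the setup: the order-$\epsilon^0$ part of \eqref{expansionap2}, namely $\CW^*_0 \CA_0 = \Lambda_0 \CW^*_0$, and the reciprocal basis normalization $\CW^*_0 \CV_0 = \CI$. This turns the left-hand side first term into $\Lambda_0 (\CW^*_0 \CV_1)$ and the right-hand side first term into $\Lambda_1$, yielding
\begin{align*}
\Lambda_0 (\CW^*_0 \CV_1) - (\CW^*_0 \CV_1) \Lambda_0 = \Lambda_1 - \CW^*_0 \CA_1 \CV_0.
\end{align*}
The left-hand side is the commutator $[\Lambda_0, \CW^*_0 \CV_1]$ of the diagonal matrix $\Lambda_0$ with $\CW^*_0 \CV_1$; for any matrix $M$ one has $(\Lambda_0 M - M \Lambda_0)_{ii} = (\lambda_i - \lambda_i) M_{ii} = 0$, so every diagonal entry of this commutator vanishes. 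Applying $\mathrm{dg}(\cdot)$ to both sides then kills the left-hand side, and since $\Lambda_1$ is already diagonal (it is the $\epsilon^1$ coefficient of the diagonal matrix $\Lambda(\epsilon)$) we get $\mathrm{dg}(\Lambda_1) = \Lambda_1$, hence $\Lambda_1 = \mathrm{dg}\!\left(\CW^*_0 \CA_1 \CV_0\right)$, as claimed.

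I do not expect a genuine obstacle here. The analyticity of the eigendata in $\epsilon$ and the propagation of the reciprocal basis constraint to the zeroth-order factors $\CV_0, \CW^*_0$ are already assumed in the paragraphs preceding the lemma, so no extra work is needed to justify the expansions. It is worth emphasizing that the argument for the diagonal entries requires no hypothesis on the multiplicity or separation of the eigenvalues of $\CA_0$; such a hypothesis would only be needed to pin down the off-diagonal block of $\CW^*_0 \CV_1$ from the same identity, which the lemma does not address. The single point to state with care is the elementary observation that $\mathrm{dg}$ annihilates any commutator involving a diagonal matrix, which is what makes the $\CV_1$-dependent term drop out.
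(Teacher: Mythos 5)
Your proof is correct and is the standard first-order matrix perturbation argument (match the $\epsilon^1$ coefficients, left-multiply by $\CW_0^*$, use $\CW_0^*\CA_0=\Lambda_0\CW_0^*$ and $\CW_0^*\CV_0=\CI$, and observe that the diagonal of the commutator $[\Lambda_0,\CW_0^*\CV_1]$ vanishes). The paper does not prove this lemma itself but cites it from the Bamieh tutorial, whose derivation proceeds exactly along these lines, so there is nothing to reconcile.
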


From Lemma~\ref{bamieh2020tutorial1}, to calculate the first-order eigenvalues, we need to understand the
left and right eigenvectors of $\CA_0$ satisfy the reciprocal basis constraint.

\subsection{Proof of the theorem}\label{pof43}
Recall from Proposition~\ref{jajaja}, we have
\begin{align}\label{jaasaperofjs}
&\CJ_{A} = \CJ_{S} - \frac{h}{(1-\beta)^2}
\begin{bmatrix}
\textbf{0} & \textbf{0} \\
\\
 \nabla_{yx}f  \nabla^2_{x}f & \nabla_{yx}f \nabla_{xy}f
\end{bmatrix}_{(x^*,y^*)}.
\end{align}

Since $h$ is a small positive number, we will treat the term 
\begin{align*}
\frac{h}{(1-\beta)^2} \cdot  \begin{bmatrix} 
\textbf{0} & \textbf{0} \\
\\
 \nabla_{yx}f  \nabla^2_{x}f & \nabla_{yx}f \nabla_{xy}f
\end{bmatrix}_{(x^*,y^*)}
\end{align*}
in \eqref{jaasaperofjs} as a perturbation of the matrix $\CJ_S$, and use the matrix perturbation theory \cite{bamieh2020tutorial} to calculate the eigenvalue of $\CJ_A$. 

Denote \begin{align*}
    \CP = \begin{bmatrix} 
\textbf{0} & \textbf{0} \\
\\
 \nabla_{yx}f  \nabla^2_{x}f & \nabla_{yx}f \nabla_{xy}f
\end{bmatrix}_{(x^*,y^*)}.
\end{align*}
This matrix $\CP$ will play the role of the matrix $\CA_1$ in Section~\ref{Lffopc}, and the constant $\frac{h}{(1-\beta)^2} $ will play the role of $\epsilon$. According to Lemma~\ref{bamieh2020tutorial1}, we have
\begin{lem}\label{bamieh2020tutorial2}
    Let $\Lambda_A$ denote the diagonal matrix consisting of the eigenvalues of $\CJ_A$, and $\Lambda_S$ denote the diagonal matrix consisting of the eigenvalues of $\CJ_S$, then we have
    \begin{align*}
        \Lambda_A = \Lambda_S - \frac{h}{(1-\beta)^2} \cdot \mathrm{dg} \left( \CW^* \CP \CV \right) + \CO(h^2).
    \end{align*}
    Here $\CV,\CW^*$ are the matrices consisting of the right and left eigenvector of $\CJ_S$. Moreover, they need to satisfy the reciprocal basis constraint, i.e., $\CW^*\CV = \CI$.
\end{lem}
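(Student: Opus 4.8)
The plan is to recognize Lemma~\ref{bamieh2020tutorial2} as a direct instance of the first-order eigenvalue perturbation formula recorded in Lemma~\ref{bamieh2020tutorial1}, under the dictionary $\CA_0 \mapsto \CJ_S$, $\CA_1 \mapsto -\CP$, and $\epsilon \mapsto h/(1-\beta)^2$. By Proposition~\ref{jajaja} we have the exact identity $\CJ_A = \CJ_S - \frac{h}{(1-\beta)^2}\CP$, so $\CJ_A$ is precisely a one-parameter perturbation of $\CJ_S$ of the type \ref{pmpmap}. Granting the hypotheses of Lemma~\ref{bamieh2020tutorial1}, it yields $\Lambda_A = \Lambda_S + \frac{h}{(1-\beta)^2}\,\mathrm{dg}(\CW^*(-\CP)\CV) + \CO(\epsilon^2) = \Lambda_S - \frac{h}{(1-\beta)^2}\,\mathrm{dg}(\CW^*\CP\CV) + \CO(h^2)$, where the last equality uses that $\beta \in (-1,1)$ is fixed, so $\epsilon^2 = \CO(h^2)$; here $\CV$ is any invertible matrix whose columns are eigenvectors of $\CJ_S$ and $\CW^* := \CV^{-1}$, which automatically satisfies the reciprocal basis constraint $\CW^*\CV = \CI$ and whose rows are then left eigenvectors of $\CJ_S$.

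The one point requiring care is the standing assumption behind Lemma~\ref{bamieh2020tutorial1}: that the eigenvalues and eigenvectors of $\CJ_S + \epsilon(-\CP)$ are analytic in $\epsilon$ near $0$, and in particular that $\CJ_S$ is diagonalizable. I would establish this from the limiting form of $\CJ_S$. Combining Proposition~\ref{jsjsjs} with the refined decomposition \eqref{alpha}, as $h \to 0$ and $\alpha \to 0$ (recall $m=n$) we have $\CJ_S \to \CA$, the Hamiltonian part \eqref{Interaction part}, whose spectrum is $\{\pm i\sigma_1, \dots, \pm i\sigma_n\}$ with $\sigma_1, \dots, \sigma_n$ the singular values of $\nabla_{xy}f(x^*,y^*)$ (this follows from $\CA^2 = \mathrm{diag}(-\nabla_{xy}f\,\nabla_{yx}f,\, -\nabla_{yx}f\,\nabla_{xy}f)$). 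Since by hypothesis $\nabla_{xy}f(x^*,y^*)$ is full-rank with distinct singular values, these $2n$ numbers are pairwise distinct, so $\CA$ has simple spectrum; having $2n$ distinct eigenvalues is an open condition, so $\CJ_S$ has $2n$ simple eigenvalues — hence is diagonalizable, with eigendata depending analytically on a further small perturbation — for all sufficiently small $h$ and $\alpha$. The corresponding spectral-gap lower bound is uniform over such $h,\alpha$, which is exactly what makes the $\CO(\epsilon^2)$ remainder uniform and justifies rewriting it as $\CO(h^2)$.

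Assembling these observations completes the proof of Lemma~\ref{bamieh2020tutorial2}. I do not expect a substantive obstacle here: the content is precisely the classical Rayleigh--Schr\"odinger first-order correction, and the hypotheses of Theorem~\ref{t43} enter only to guarantee the simple-spectrum condition that licenses the expansion. The genuinely heavier computation — reading off from $\mathrm{dg}(\CW^*\CP\CV)$ the explicit contribution $-\frac{h}{(1-\beta)^2}\lvert\sigma\rvert^2$ to the real part of the dominant eigenvalue, and identifying which singular value $\sigma$ appears — is deferred to the remainder of the proof of Theorem~\ref{t43}.
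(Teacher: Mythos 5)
Your proposal is correct and follows essentially the same route as the paper: Lemma~\ref{bamieh2020tutorial2} is obtained there by directly applying the first-order perturbation formula of Lemma~\ref{bamieh2020tutorial1} to $\CJ_A = \CJ_S - \frac{h}{(1-\beta)^2}\CP$, exactly as you do, with your additional justification of the simple-spectrum/analyticity hypothesis (left implicit in the paper and only secured later via Lemma~\ref{bamieh2020tutorial40}) being a welcome supplement. The only slip is cosmetic: as $h,\alpha \to 0$ one has $\CJ_S \to \frac{1}{1-\beta}\CA$ rather than $\CA$, which rescales the limiting eigenvalues to $\pm \mathrm{i}\sigma_j/(1-\beta)$ but leaves your distinct-eigenvalue argument intact.
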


From Lemma~\ref{bamieh2020tutorial2}, the problem of calculating the first-order approximation of the eigenvalues of $\CJ_A$ is reduced to finding a pair of left and right eigenvectors of $\CJ_S$ that satisfy the reciprocal basis constraint.

\begin{lem}\label{bamieh2020tutorial3}
Let \begin{align}\label{cjcjapp}
\CJ =  \alpha 
\begin{bmatrix}
- \CQ & 0 \\
\\
0^{\top} & \CR
\end{bmatrix} + 
\begin{bmatrix}
\textbf{0} & - \nabla_{xy}f \\
\\
\nabla_{yx}f & \textbf{0}
\end{bmatrix}_{(x^*,y^*)} ,
\end{align}
where $\alpha \CQ = \nabla^2_{x}f(x^*,y^*)$ and $\alpha \CR =  \nabla^2_{y}f(x^*,y^*)$. Let $\CV(\alpha)$ and $\CW^*(\alpha)$ be the right and left eigenvectors of $\CJ$ such that they satisfy the reciprocal basis constraint, i.e., $\CW^*(\alpha) \cdot \CV(\alpha) = \CI$. Then the matrices $\CW^*$ and $\CV$ in Lemma~\ref{bamieh2020tutorial2} satisfy
\begin{align*}
    \CW^* = \CW^*(\alpha),\ \ \CV = \CV(\alpha).
\end{align*}
\end{lem}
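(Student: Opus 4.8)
The plan is to leverage Proposition~\ref{jsjsjs}, which expresses $\CJ_S$ as a matrix polynomial in $\CJ$: with $p(t) := \frac{1}{1-\beta}\,t - \frac{h(1+\beta)}{2(1-\beta)^3}\,t^2$ one has $\CJ_S = p(\CJ)$. The elementary fact I would exploit is that \emph{a polynomial in a matrix leaves its eigenvectors unchanged}: if $\CJ v = \lambda v$ then $\CJ_S v = p(\CJ)v = p(\lambda)v$, and dually $w^*\CJ = \lambda w^*$ gives $w^*\CJ_S = p(\lambda) w^*$. Consequently any reciprocal eigenvector pair of $\CJ$ is simultaneously a reciprocal eigenvector pair of $\CJ_S$, and — the point that makes the lemma worth isolating — these eigenvectors depend on $\alpha$ only and not on $h$ (nor on $\beta$), even though $\CJ_S$ itself does; only the eigenvalues $p(\lambda_i)$ absorb that dependence.

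To make this precise I would first record that $\CJ$ is diagonalizable near $\alpha=0$. At $\alpha=0$ the matrix in~\eqref{alpha} reduces to the skew-symmetric Hamiltonian part~\eqref{Interaction part}, whose eigenvalues are the numbers $\pm i\sigma$ running over the singular values $\sigma$ of $\nabla_{xy}f(x^*,y^*)$; under the hypotheses of Theorem~\ref{t43} ($m=n$, $\nabla_{xy}f$ full rank with distinct singular values) these $2n$ eigenvalues are distinct, so $\CJ$ has simple spectrum and is diagonalizable, and by analytic perturbation theory this persists for all small $\alpha$ with analytic eigenvalue matrix $\Lambda(\alpha)$ and reciprocal eigenvector matrices $\CV(\alpha)$, $\CW^*(\alpha)=\CV(\alpha)^{-1}$. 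Applying $p$ to $\CJ = \CV(\alpha)\,\Lambda(\alpha)\,\CW^*(\alpha)$ yields
\begin{align*}
\CJ_S \;=\; p(\CJ) \;=\; \CV(\alpha)\, p\bigl(\Lambda(\alpha)\bigr)\, \CW^*(\alpha),
\end{align*}
which simultaneously establishes that $\CJ_S$ is diagonalizable (as Lemma~\ref{bamieh2020tutorial1} requires) and exhibits $\CV(\alpha)$, $\CW^*(\alpha)$ as right- and left-eigenvector matrices of $\CJ_S$ still obeying the reciprocal basis constraint $\CW^*(\alpha)\CV(\alpha)=\CI$. Hence the matrices $\CV$, $\CW^*$ called for in Lemma~\ref{bamieh2020tutorial2} may be taken to be exactly $\CV(\alpha)$, $\CW^*(\alpha)$.

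The only point needing care is well-posedness of this choice. A reciprocal eigenvector pair of $\CJ_S$ is unique only up to a common diagonal rescaling $\CV\mapsto \CV D$, $\CW^*\mapsto D^{-1}\CW^*$, but the quantity $\mathrm{dg}(\CW^*\CP\CV)$ that feeds Lemma~\ref{bamieh2020tutorial2} is invariant under it, so the identification is harmless; moreover, for $h$ sufficiently small $p$ is injective on the bounded set $\mathrm{Sp}(\CJ)$, so $\CJ_S$ inherits a simple spectrum and this rescaling is the only ambiguity. I do not expect this lemma to be the hard part of the section: the substantive work is downstream — computing $\CV(\alpha)$, $\CW^*(\alpha)$ for the structured matrix~\eqref{alpha} by expanding the skew-symmetric eigenpairs (obtained from the SVD of $\nabla_{xy}f(x^*,y^*)$) to first order in $\alpha$, and, as the paper already flags, pinning down the $\CO(h\alpha)$ coefficient in~\eqref{asyformulaas}, which requires the perturbed \emph{normalized} eigenvectors.
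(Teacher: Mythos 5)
Your proposal is correct and follows essentially the same route as the paper's own proof: both rely on the polynomial relation $\CJ_S = \left(\frac{1}{1-\beta}\CI - \frac{h(1+\beta)}{2(1-\beta)^3}\CJ\right)\CJ$ from Proposition~\ref{jsjsjs} to conclude that every right and left eigenvector of $\CJ$ is an eigenvector of $\CJ_S$, so the reciprocal-basis pair $\CV(\alpha)$, $\CW^*(\alpha)$ carries over unchanged. The additional care you take about simple spectrum near $\alpha=0$ and the diagonal-rescaling ambiguity is a welcome tightening of details the paper leaves implicit, but it does not change the argument.
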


\begin{proof}
    Recall from Proposition~\ref{jsjsjs}, the Jacobian matrix of \ref{continuousSim} and the Jacobian matrix of \eqref{GF} satisfies the following quadratic polynomial relation:
    \begin{align*}
    \CJ_S = \left(\frac{1}{1-\beta} \CI - \frac{h(1+\beta)}{2(1-\beta)^3} \CJ \right) \cdot \CJ.
\end{align*}
Thus, if $v$ is a right eigenvector of $\CJ$ with respect to an eigenvalue $\lambda$, then we have
\begin{align*}
    \CJ_S v & = \left(\frac{1}{1-\beta} \CI - \frac{h(1+\beta)}{2(1-\beta)^3} \CJ \right) \cdot \CJ v  = \left(\frac{1}{1-\beta} \lambda - \frac{h(1+\beta)}{2(1-\beta)^3} \lambda \right) \lambda v.
\end{align*}
Thus, every right eigenvector of $\CJ$ is also a right eigenvector of $\CJ_S$. Similarly, one can prove every left eigenvector of $\CJ$ is also a left eigenvector of $\CJ_S$.
\end{proof}

From Lemma~\ref{bamieh2020tutorial3} and Lemma~\ref{bamieh2020tutorial2}, the problem of calculating the first-order approximation of the eigenvalues of $\CJ_A$ is reduced to finding a pair of left and right eigenvectors of $\CJ$ in \eqref{cjcjapp} that satisfies the reciprocal basis constraint.

In the following, we will treat the matrix $\CJ$ in \eqref{cjcjapp} as a perturbation of the matrix 
\begin{align*}
    \begin{bmatrix}
\textbf{0} & - \nabla_{xy}f \\
\\
\nabla_{yx}f & \textbf{0}
\end{bmatrix}_{(x^*,y^*)}.
\end{align*}

If $\CV_0$ and $\CW_0$ are the matrices consisting of the right and left eigenvectors of $\CJ$, then from the asymptotic expansion $\CW^*(\alpha) = \CW_0^* + \CO(\alpha)$ and $\CV(\alpha) = \CV_0 + \CO(\alpha)$, we have
\begin{align*}
    \CW^*(\alpha) \cdot \CV(\alpha)  = \CW_0^* \cdot \CV_0 + \CO(\alpha) = \CI,
\end{align*}
and this requires $ \CW_0^* \cdot \CV_0 = \CI$
as $\alpha$ is an arbitrary small number.
Here we provide a concrete construction of $\CV_0$ and $\CW_0$ from the singular value decomposition of $\nabla_{xy} f(x^*, y^*)$.
\begin{lem}\label{bamieh2020tutorial40}
    Suppose the matrix $\nabla_{xy}f(x^*,y^*)$ is full-rank and has distinct singular values, and
    \begin{align*}
        \nabla_{xy}f(x^*,y^*) = \CU \Sigma \CV^{\top} = \sum^n_{s=1} \sigma_s u_s v^{\top}_s 
    \end{align*}
    is its singular value decomposition. Then the matrix
\begin{align*}
\begin{bmatrix}
\textbf{0} & - \nabla_{xy}f(x^*,y^*) \\
\\
\nabla_{yx}f(x^*,y^*) & \textbf{0}
\end{bmatrix}
\end{align*}
has distinct eigenvalues $\{ \pm \mathrm{i} \sigma_j \}$. Moreover, the right unit-norm eigenvectors of eigenvalue $- \mathrm{i} \sigma_j$ is $(-\mathrm{i} u^{\top}_j/\sqrt{2}, v^{\top}_j/\sqrt{2})^{\top}$, and the left unit-norm eigenvectors of eigenvalue $\mathrm{i} \sigma_j$ is $(\mathrm{i} u^{\top}_j/\sqrt{2},  v^{\top}_j/\sqrt{2})^{\top}$.
\end{lem}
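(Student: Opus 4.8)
The plan is to reduce the statement to a short computation with the singular value decomposition of $B := \nabla_{xy}f(x^*,y^*)$. First I would note that smoothness of $f$ gives $\nabla_{yx}f(x^*,y^*) = B^{\top}$ (equality of mixed partials), so the matrix in question is
\begin{align*}
M \;=\; \begin{bmatrix} \textbf{0} & -B \\ B^{\top} & \textbf{0} \end{bmatrix},
\end{align*}
which is real and obeys $M^{\top} = -M$; that is, $M$ is skew-symmetric, hence normal with purely imaginary spectrum. Normality buys one free structural fact: if $Mw=\lambda w$ then $M^{*}w=\bar\lambda w$, so $w^{*}M = \lambda w^{*}$, i.e. every right eigenvector of $M$ for an eigenvalue $\lambda$ is also a left eigenvector of $M$ for the same $\lambda$. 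I would also record the identities coming from $B=\sum_{s=1}^{n}\sigma_s u_s v_s^{\top}$ with $\{u_s\},\{v_s\}$ orthonormal, namely $Bv_k=\sigma_k u_k$ and $B^{\top}u_k=\sigma_k v_k$.

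Next I would verify the eigenpairs by direct substitution. Set $r_j^{\pm} := \tfrac{1}{\sqrt{2}}\bigl(\pm i\,u_j^{\top},\, v_j^{\top}\bigr)^{\top}\in\BC^{2n}$. Using the block action $M\bigl(a^{\top},b^{\top}\bigr)^{\top} = \bigl((-Bb)^{\top},(B^{\top}a)^{\top}\bigr)^{\top}$ together with the SVD identities, one line gives $M r_j^{\pm} = \pm i\sigma_j\, r_j^{\pm}$; in particular $r_j^{-}$ is a right eigenvector for $-i\sigma_j$ and $r_j^{+}$ a right eigenvector for $+i\sigma_j$, exactly as asserted. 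Unit norm is immediate since $\lVert r_j^{\pm}\rVert^{2}=\tfrac12(\lVert u_j\rVert^{2}+\lVert v_j\rVert^{2})=1$. For the left eigenvector of $+i\sigma_j$ I would either invoke the normality remark above (yielding again $r_j^{+}$), or verify $(r_j^{+})^{*}M = i\sigma_j (r_j^{+})^{*}$ by hand; here the only subtlety is that $(r_j^{+})^{*}=\tfrac{1}{\sqrt{2}}(-i\,u_j^{\top},\,v_j^{\top})$ because $u_j,v_j$ are real, after which the check again collapses to $u_j^{\top}B=\sigma_j v_j^{\top}$ and $v_j^{\top}B^{\top}=\sigma_j u_j^{\top}$.

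Finally I would dispatch distinctness and completeness. Full rank of $B$ forces $\sigma_j>0$, and distinctness of the singular values makes $\{\pm i\sigma_1,\dots,\pm i\sigma_n\}$ pairwise distinct (the two sign families are separated by the sign of the imaginary part, and within a family by the distinct $\sigma_j$). We have produced $2n$ eigenvectors of the $2n\times 2n$ matrix $M$ attached to these $2n$ distinct numbers; they are therefore linearly independent, span $\BC^{2n}$, and exhaust the spectrum, which simultaneously shows that $M$ is diagonalizable and that $\{\pm i\sigma_j\}$ are all of its (simple) eigenvalues. I do not expect a genuine obstacle: the argument is a compact linear-algebra verification, and the only places demanding care are the complex conjugation when passing to left eigenvectors and the bookkeeping matching the $u$- and $v$-halves of the SVD inside the block vectors.
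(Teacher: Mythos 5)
Your proof is correct and takes essentially the same route as the paper: a direct verification of the eigenvector identities using the SVD relations $Bv_j=\sigma_j u_j$ and $B^{\top}u_j=\sigma_j v_j$ for $B=\nabla_{xy}f(x^*,y^*)$, plus the unit-norm and distinctness bookkeeping. The only differences are additions on your side — the skew-symmetry/normality remark (or the explicit conjugate-transpose check) that cleanly settles the \emph{left}-eigenvector claim, and the count showing the $2n$ distinct values $\pm\mathrm{i}\sigma_j$ exhaust the spectrum — both of which the paper leaves to ``a similar calculation'' and to the distinct-singular-value hypothesis, so your write-up is, if anything, slightly more complete.
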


\begin{proof}This can be verified through a direct calculation.

\begin{align}\label{zdwul}
\begin{bmatrix}
\textbf{0} & - \nabla_{xy}f(x^*,y^*) \\
\\
\nabla_{yx}f(x^*,y^*) & \textbf{0}
\end{bmatrix} 
& \cdot
\begin{bmatrix}
-\mathrm{i} u_j/\sqrt{2} \\
\\
 v_j/\sqrt{2} 
\end{bmatrix} =
\begin{bmatrix}
- \nabla_{xy}f(x^*,y^*) \cdot  v_j/\sqrt{2} \\
\\
- \nabla_{yx}f(x^*,y^*) \cdot \mathrm{i} u_j/\sqrt{2} 
\end{bmatrix} \nonumber \\
\nonumber \\
& = \begin{bmatrix}
- \left(\sum^n_{s=1} \sigma_s u_s v^{\top}_s \right) \cdot  v_j/\sqrt{2} \\
\\
- \left(\sum^n_{s=1} \sigma_s v_s u^{\top}_s \right) \cdot \mathrm{i} u_j/\sqrt{2} 
\end{bmatrix}
\end{align}
Moreover, since  $\nabla_{xy}f(x^*,y^*) = \CU \Sigma \CV^{\top}$ is the SVD of $\nabla_{xy}f(x^*,y^*)$, we have
\begin{align*}
   v^{\top}_s v_j = \begin{cases} 
1, & \text{if } s =j, \\
\\
0, & \text{if } s \ne j.
\end{cases}
\ \ \ \ \text{and} \ \ \ \ 
   u^{\top}_s u_j = \begin{cases} 
1, & \text{if } s =j, \\
\\
0, & \text{if } s \ne j.
\end{cases}
\end{align*}
Thus, \eqref{zdwul} can be continued calculated as
\begin{align*}
  \begin{bmatrix}
- \left(\sum^n_{s=1} \sigma_s u_s v^{\top}_s \right) \cdot  v_j/\sqrt{2} \\
\\
- \left(\sum^n_{s=1} \sigma_s v_s u^{\top}_s \right) \cdot \mathrm{i} u_j/\sqrt{2} 
\end{bmatrix} = \begin{bmatrix}
        - \sigma_j u_j/\sqrt{2}\\
        \\
        -\mathrm{i} \sigma_j v_j/\sqrt{2}
    \end{bmatrix} = 
    - \mathrm{i}\sigma_j \cdot \begin{bmatrix}
        - \mathrm{i} u_j/\sqrt{2}\\
        \\
     v_j/\sqrt{2}
    \end{bmatrix},
\end{align*}
this proves $(-\mathrm{i} u^{\top}_j/\sqrt{2}, v^{\top}_j/\sqrt{2})^{\top}$ is an eigenvector of $ - \mathrm{i}\sigma_j$. We can prove $(\mathrm{i} u^{\top}_j/\sqrt{2},  v^{\top}_j/\sqrt{2})^{\top}$ is an eigenvector of  $\mathrm{i} \sigma_j$ through a similar calculation, and we omit it here. 
\end{proof}

\begin{lem}\label{bamieh2020tutorial30}
    Denote the matrix
    \begin{align*}
        \CV_0 =\left(
        \begin{bmatrix}
            -\mathrm{i}u_j/\sqrt{2} \\
            \\
            \mathrm{i}v_j/\sqrt{2}
        \end{bmatrix},
        \begin{bmatrix}
            \mathrm{i}u_j/\sqrt{2} \\
            \\
            \mathrm{i}v_j/\sqrt{2}
        \end{bmatrix}
        \right)^n_{j=1}
    \end{align*}
and $\CW^*_0 = \CV^*_0$. Then we have $\CW^*_0 \cdot \CV_0 = \CI_{2n \times 2n}$, and the matrices $\CW(\alpha)$ and $ \CV(\alpha)$ we are looking for in Lemma~\ref{bamieh2020tutorial3} satisfies
\begin{align*}
    \CV(\alpha) = \CV_0 + \CO(\alpha),\ \ \  \CW(\alpha) = \CW_0 + \CO(\alpha).
\end{align*}
\end{lem}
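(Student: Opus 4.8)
The plan is to split the argument into two independent parts: first the purely algebraic identity $\CW^*_0\CV_0 = \CI_{2n\times 2n}$, and then an analytic-perturbation argument showing that the explicit orthonormal eigenbasis $\CV_0$ of the interaction matrix lifts to an analytic family of eigenvectors of $\CJ$ that still obeys the reciprocal basis constraint, with $\CV_0,\CW_0$ as its $\alpha=0$ values.

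For the first part, recall from Lemma~\ref{bamieh2020tutorial40} that the columns of $\CV_0$ are unit-norm right eigenvectors of the interaction matrix $\CA_0 := \begin{bmatrix}\textbf{0} & -\nabla_{xy}f\\ \nabla_{yx}f & \textbf{0}\end{bmatrix}_{(x^*,y^*)}$, attached to the $2n$ pairwise distinct eigenvalues $\{\pm\mathrm{i}\sigma_j\}$ and built from the singular vectors $u_j,v_j$ of $\nabla_{xy}f(x^*,y^*)$. With $\CW_0=\CV_0$, checking $\CW^*_0\CV_0 = \CV^*_0\CV_0 = \CI_{2n\times 2n}$ is a short direct computation: expanding the inner product of two columns and using the orthonormality relations $u_s^\top u_j = \delta_{sj}$, $v_s^\top v_j = \delta_{sj}$ from the SVD shows that distinct columns are orthogonal and each has unit norm. (Structurally this is just the observation that $\CA_0$ is real skew-symmetric, since $\nabla_{yx}f = (\nabla_{xy}f)^\top$ by symmetry of mixed partials, hence normal, so its right eigenvectors form an orthonormal basis and its left eigenvectors are their conjugates.)

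For the second part, write $\CJ = \CA_0 + \alpha\CA_1$ as in \eqref{cjcjapp}, where $\CA_1 = \mathrm{diag}(-\CQ,\CR)$. Under the hypotheses of Theorem~\ref{t43} the matrix $\nabla_{xy}f(x^*,y^*)$ is full rank with distinct singular values, so by Lemma~\ref{bamieh2020tutorial40} all $2n$ eigenvalues of $\CA_0$ are simple. Analytic perturbation theory for matrices with simple eigenvalues (the framework of \cite{bamieh2020tutorial}) then provides eigenvalues $\lambda_j(\alpha)$ and right and left eigenvectors $v_j(\alpha),w_j(\alpha)$ of $\CJ$ that are analytic in $\alpha$ near $0$ and reduce at $\alpha=0$ to eigenvectors of $\CA_0$; I would take those $\alpha=0$ values to be precisely the columns of $\CV_0$ and $\CW_0$. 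Since the spectrum stays simple for small $\alpha$, one has $w_j(\alpha)^*v_k(\alpha)=0$ for $j\ne k$, and $c_j(\alpha):=w_j(\alpha)^*v_j(\alpha)$ is analytic with $c_j(0)=1$ (left and right eigenvectors of a simple eigenvalue are never orthogonal); rescaling $v_j(\alpha)\mapsto v_j(\alpha)/c_j(\alpha)$ preserves analyticity and the $\alpha=0$ value while forcing $\CW^*(\alpha)\CV(\alpha)=\CI$ for all small $\alpha$. This is exactly the pair $\CV(\alpha),\CW^*(\alpha)$ sought in Lemma~\ref{bamieh2020tutorial3}, and its analyticity yields $\CV(\alpha)=\CV_0+\CO(\alpha)$ and $\CW(\alpha)=\CW_0+\CO(\alpha)$.

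The delicate point is this normalization step: one needs a single analytic choice of $\CV(\alpha)$ that simultaneously (i) equals the prescribed matrix $\CV_0$ at $\alpha=0$, (ii) consists of genuine eigenvectors of $\CJ$ for all small $\alpha$, and (iii) satisfies $\CW^*(\alpha)\CV(\alpha)=\CI$ identically near $0$. Simplicity of the eigenvalues of $\CA_0$ — which is exactly where the full-rank and distinct-singular-value hypotheses on $\nabla_{xy}f$ are used — is what reconciles these three demands; verifying the entrywise identity in the first part and expanding the rescaled family in the second are then routine.
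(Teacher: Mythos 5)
Your proposal is correct and follows essentially the same route as the paper: the identity $\CW^*_0\CV_0=\CI$ is checked from the orthonormality of the singular vectors $u_j,v_j$ (equivalently, from the fact that the interaction matrix is real skew-symmetric, hence normal), and the expansion $\CV(\alpha)=\CV_0+\CO(\alpha)$, $\CW(\alpha)=\CW_0+\CO(\alpha)$ is read off from the perturbation framework of Lemma~\ref{bamieh2020tutorial3}. The only difference is that you justify the existence and analyticity of the normalized eigenvector family via simplicity of the eigenvalues of the unperturbed matrix, whereas the paper takes this asymptotic expansion as a standing assumption of the perturbation setup and treats the lemma as a direct corollary; your added normalization argument (and your implicit use of the eigenvectors from Lemma~\ref{bamieh2020tutorial40}, which corrects the second block $\mathrm{i}v_j/\sqrt{2}$ displayed in the lemma to $v_j/\sqrt{2}$) is a strictly more careful version of the same argument.
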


\begin{proof} Since the matrices $\CU$ and $\CV$ come from the SVD of $\nabla_{xy}f(x^*,y^*)$, $\CU \CU^{\top}$ and $\CV \CV^{\top}$
are identity matrices, and this implies $\CW^*_0 \cdot \CV_0 = \CI_{2n \times 2n}$. Then this lemma is a direct corollary of Lemma~\ref{bamieh2020tutorial3}.
\end{proof}

Now we can calculate the eigenvalues of $\CJ_A$. From Lemma~\ref{bamieh2020tutorial2}, Lemma~\ref{bamieh2020tutorial3} and \ref{bamieh2020tutorial30}, we have
\begin{align*}
\Lambda_A & = \Lambda_S - \frac{h}{(1-\beta)^2} \mathrm{dg} \left(
\left(\CW^*_0 + \CO(\alpha) \right) \cdot \CP \cdot \left( \CV_0 + \CO(\alpha) \right)  \right) \\
& = \Lambda_S - \frac{h}{(1-\beta)^2} \begin{bmatrix}
    w^{\top}_{0,1} \cdot  \CP \cdot  v_{0,1} & &  \\
    & \ddots & \\
    & & w^{\top}_{0,2n} \cdot  \CP \cdot v_{0,2n}
\end{bmatrix}
+ \CO(h^2 + \alpha h).
\end{align*}
From Lemma~\ref{bamieh2020tutorial40}, the vectors pair
$(\CW^{\top}_{0,j},\CV_{0,j})$ can be chosen as 
\begin{align*}
   w^{\top}_{0,j} = \left( \mathrm{i} u^{\top}_j/\sqrt{2}, \ v^{\top}_j/\sqrt{2} \right)  \ \ \text{and} \ \ \  v^{\top}_{0,j} = \left( -\mathrm{i} u^{\top}_j/\sqrt{2}, \ v^{\top}_j/\sqrt{2} \right),
\end{align*}
where  $ u_j$ and $v_j$ come from the singular value decomposition of $\nabla_{xy}f(x^*,y^*)$ as defined in Lemma~\ref{bamieh2020tutorial40}.  

Thus, if $\lambda_A$ is an eigenvalue of $\CJ_A$, then there exists $\lambda_S \in \mathrm{\CJ_S}$ and a pair of $(u_j,v_j)$, such that  
\begin{align}\label{zuihaoyigeshizi}
    \lambda_A & = \lambda_S - \frac{h}{(1-\beta)^2} \cdot (\mathrm{i}u^{\top}_j/\sqrt{2},\ v^{\top}_j/\sqrt{2}) 
    \cdot
\begin{bmatrix}
\textbf{0} & - \nabla_{xy}f (x^*,y^*)\nonumber\\
\\
\nabla_{yx}f(x^*,y^*) & \textbf{0}
\end{bmatrix}
\cdot 
\begin{pmatrix}
    - \mathrm{i}u_j/\sqrt{2}\\
    \\
    \mathrm{i}v_j/\sqrt{2}
\end{pmatrix}\\
\nonumber\\
& = \lambda_S - \frac{h}{(1-\beta)^2} \left( v^{\top}_j \cdot \nabla_{yx}f(x^*,y^*) \cdot \nabla_{xy}f(x^*,y^*) \cdot v_j \right) + \CO(h^2 + \alpha h).
\end{align}
Moreover, as defined in Lemma~\ref{bamieh2020tutorial40}, 
\begin{align*}
        \nabla_{xy}f(x^*,y^*) = \CU \Sigma \CV^{\top} = \sum^n_{s=1} \sigma_s u_s v^{\top}_s 
    \end{align*}
is the singular value decomposition of $ \nabla_{xy}f(x^*,y^*) $. Thus, the term $v^{\top}_j \cdot \nabla_{yx}f(x^*,y^*) \cdot \nabla_{xy}f(x^*,y^*) \cdot v_j$ in \eqref{zuihaoyigeshizi} is equal to $\lvert \sigma_j \lvert^2$, and this gives
\begin{align}
    \Re(\lambda_A) = \Re(\lambda_S) - \frac{h}{(1-\beta)^2} \lvert \sigma \lvert^2 + \CO(h^2 + h\alpha),
\end{align}
which is exactly the formula that we want to prove in Theorem~\ref{t43}.

\newpage
\section{Additional Experiments on Local Convergence Results}\label{moreexperiments}

If $(x^*,y^*)$ is a local Nash equilibrium of a payoff function $f(x,y)$, then the local structure of $f(x,y)$ is essentially characterized by the following quadratic games:

\begin{align}\label{QA}
    \frac{1}{2}(x - x^*)^{\top} \nabla^2_{x} f(x^*,y^*) (x - x^*)
    +
    \frac{1}{2}(y - y^*)^{\top} \nabla^2_{y} f(x^*,y^*) (y - y^*)
    +
    (x - x^*)^{\top} \nabla_{xy} f(x^*,y^*) (y - y^*). \tag{Quadratic Approximation}
\end{align}
Since $(x^*,y^*)$ is a local equilibrium, from its second-order optimality condition, $ \nabla^2_{x} f(x^*,y^*) \succcurlyeq \textbf{0}$ is a positive semi-definite matrix and $ \nabla^2_{y} f(x^*,y^*) \preccurlyeq \textbf{0}$ is a negative semi-definite matrix.

In the following, we will use \eqref{QA} to test our theoretical results presented in Section~\ref{theoretical results}. In particular, we will test these results using the discrete-time algorithms \ref{sim} and \ref{alt}. The experimental results show that the theoretical results that we obtain from the continuous-time dynamics can precisely predict the behaviors of these discrete-time algorithms. In the following experiment, we randomly generate $\nabla^2_{x} f(x^*,y^*) \succcurlyeq \textbf{0} \in \BR^{n \times n}$, $\nabla^2_{y} f(x^*,y^*) \preccurlyeq \textbf{0} \in \BR^{m \times m}$, and $\nabla_{xy} f(x^*,y^*) \in \BR^{n \times m}$. Moreover, we allow $\nabla^2_{x} f(x^*,y^*)$ and $\nabla^2_{y} f(x^*,y^*)$ to have many zero eigenvalues, so that \eqref{QA} is \textbf{not} a SCSC function. We achieve this by randomly generating a matrix $A \in \BR^{n \times n'}$ with $n'<n$ and setting $\nabla^2_{x} f(x^*,y^*) = AA^{\top}$. Thus, $\nabla^2_{x} f(x^*,y^*)$ produced in this way is a positive semi-definite matrix with $n-n'$ zero eigenvalues. Similarly, we can produce $\nabla^2_{y} f(x^*,y^*)$ as a negative semi-definite matrix with arbitrary number of zero eigenvalues.

\subsection{Experiments for Theorem~\ref{t41} and Corollary~\ref{nmics}}\label{ET41C}


The aim of the experiments in this section is to show that smaller momentum can make an algorithm converge across a larger range of step sizes, thus supporting the main results of Theorem~\ref{t41} and Corollary~\ref{nmics}.

\paragraph{Experiment 1, 20 dimension problems:} In this experiment, we randomly generate $\nabla^2_{x} f(x^*,y^*) \succcurlyeq \textbf{0} \in \BR^{20 \times 20}$, $\nabla^2_{y} f(x^*,y^*) \preccurlyeq \textbf{0} \in \BR^{20 \times 20}$, and $\nabla_{xy} f(x^*,y^*) \in \BR^{20 \times 20}$. Moreover,  we make each of them have $10$ zero eigenvalues. The results are presented in the first row of Figure~\ref{fig:G1}. From small to large, the step sizes are chosen as $0.03, 0.08$ and $0.16$. We can observe that smaller momentum can make the algorithm converge under larger step sizes.

\paragraph{Experiment 2, 200 dimension problems:} In this experiment, we randomly generate $\nabla^2_{x} f(x^*,y^*) \succcurlyeq \textbf{0} \in \BR^{20 \times 20}$, $\nabla^2_{y} f(x^*,y^*) \preccurlyeq \textbf{0} \in \BR^{20 \times 20}$, and $\nabla_{xy} f(x^*,y^*) \in \BR^{200 \times 200}$. Moreover,  we make each of them have $100$ zero eigenvalues. From small to large, the step sizes are chosen as $0.005, 0.01$ and $0.015$. The results are presented in the second row of Figure~\ref{fig:G1}.

\subsection{Experiments for Theorem~\ref{t42}}

In this section, we present experiments on Theorem~\ref{t43}. Our aim is to show that under small step sizes, positive momentum can achieve a faster convergence rate compared to negative momentum. The results are presented in the first row of Figure~\ref{fig:G2}. Compared with the experiments in \ref{ET41C}, we choose smaller step sizes to guarantee convergence and use larger momentum to compare the convergence rates in the experiments of this section. We can observe that under these small step sizes, there exists a certain algorithm with positive momentum that can achieve the best convergence rate.

\paragraph{Experiment 1, 20 dimension problems:} In this experiment, we randomly generate $\nabla^2_{x} f(x^*,y^*) \succcurlyeq \textbf{0} \in \BR^{20 \times 20}$, $\nabla^2_{y} f(x^*,y^*) \preccurlyeq \textbf{0} \in \BR^{20 \times 20}$, and $\nabla_{xy} f(x^*,y^*) \in \BR^{20 \times 20}$. Moreover,  we make each of them have $10$ zero eigenvalues. The results are presented in the first row in Figure~\ref{fig:G2}. From small to large, the step sizes are chosen as 0.003, 0.005 and 0.008. 

\paragraph{Experiment 2, 200 dimension problems:} In this experiment, we randomly generate $\nabla^2_{x} f(x^*,y^*) \succcurlyeq \textbf{0} \in \BR^{200 \times 200}$, $\nabla^2_{y} f(x^*,y^*) \preccurlyeq \textbf{0} \in \BR^{200 \times 200}$, and $\nabla_{xy} f(x^*,y^*) \in \BR^{200 \times 200}$. Moreover, we make each of them have $100$ zero eigenvalues. The results are presented in the first row of Figure~\ref{fig:G2}. 
The results are presented in the second row in Figure~\ref{fig:G2}. From small to large, the step sizes are chosen as 0.003, 0.005 and 0.008.

\subsection{Experiments for Theorem~\ref{t43}}
\label{exp_t43}
In this section, we provide experiments to verify our theoretical results in Theorem~\ref{t43}. Our aim here is to show that for games that satisfy the assumptions in Theorem~\ref{t43}, alternating updates can converge faster than simultaneous updates under the same step size and momentum parameter.

\paragraph{Experiment 1, 100 dimension problems:} 
In this experiment, we present three randomly generated 100-dimensional games. For each game, we use the same step size \( h = 0.01 \) and momentum parameter \( \beta = 0.4 \). The initial conditions are randomly generated. Specifically, we use a small constant number ($0.5$ here) to scale the matrices \( \nabla^2_x f(x^*, y^*) \) and \( \nabla^2_y f(x^*, y^*) \) to ensure that the settings satisfy the assumptions in Theorem~\ref{t43}. In the first row of Figure~\ref{fig:G3}, we present the convergence curves. From these curves, we can observe that alternating updates have a faster convergence rate than simultaneous updates. In the second row, we present the trajectories of the first coordinate variables of the \( x \)- and \( y \)-players.

\paragraph{Experiment 2, 1000 dimension problems:} 
In this experiment, we present three randomly generated 1000-dimensional games. For each game, we use the same step size \( h = 0.001  \) and momentum parameter \( \beta = 0.2 \). The initial conditions are randomly generated. As in previous experiments, we use a small constant number ($0.1$ here) to scale the diagonal matrices to ensure that the settings satisfy the assumptions in Theorem~\ref{t43}. In the first row of Figure~\ref{fig:G4}, we present the convergence curves. From these curves, we can observe that alternating updates have a faster convergence rate than simultaneous updates. In the second row, we present the trajectories of the first coordinate variables of the \( x \)- and \( y \)-players. Especially, in the second and third experiments in Figure~\ref{fig:G4} we can observe a case where alternating updates can converge, while simultaneous updates diverge.

\subsection{Example in Remark~\ref{examplesimbetter}}\label{Example in Remark}

\begin{example}\label{example1}
   Let $f(x,y) = ax^2 - by^2 + cxy$, where $a \simeq 2.537$, $b = 0.0003$, and $c = 0.801$. In this example, $a$, $b$, and $c$ correspond to $\alpha Q, \alpha R$, and $P$ in \eqref{alpha}. There is a large diagonal element in Jacobian, and therefore the effect of $\alpha$ cannot be ignored. With a choice of $h=0.4$ and $\beta = 0.2$, the evolution of the distance to equilibrium for simultaneous and alternating updates is presented in Figure~\ref{sbta}, where we can observe simultaneous updates converge faster than alternating updates.
\end{example}

\clearpage
\begin{figure}[t]
    \centering
    \includegraphics[width=2in]{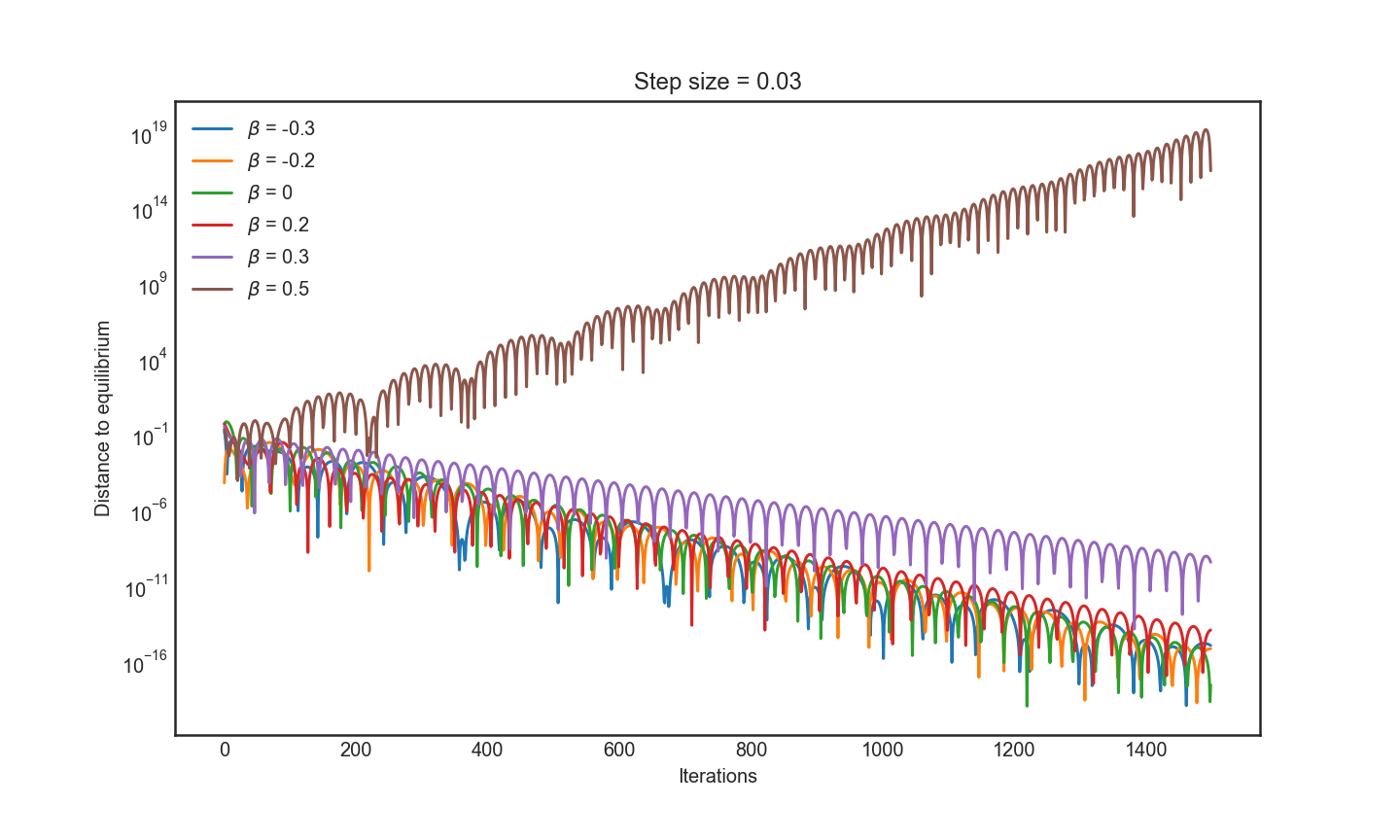}
    \includegraphics[width=2in]{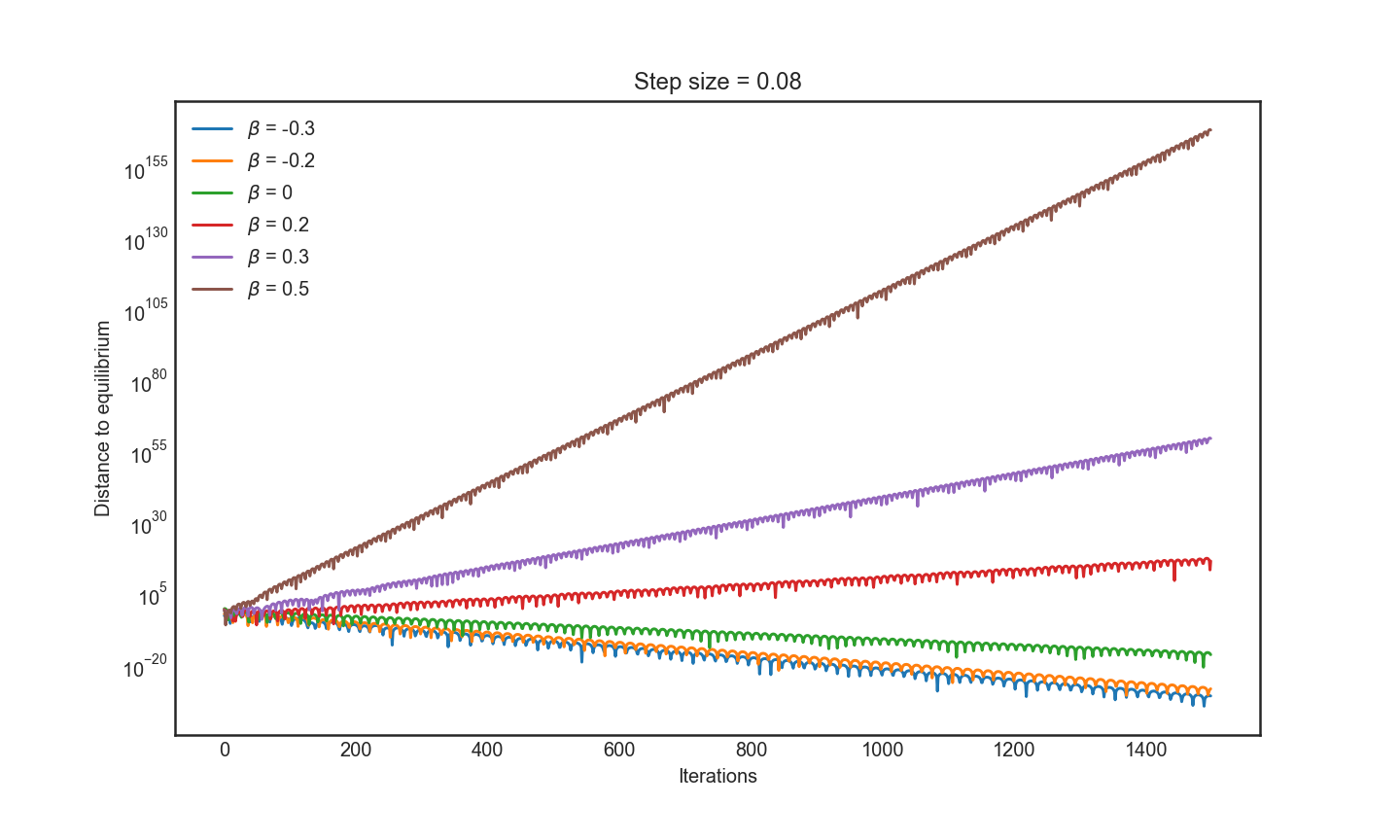}
    \includegraphics[width=2in]{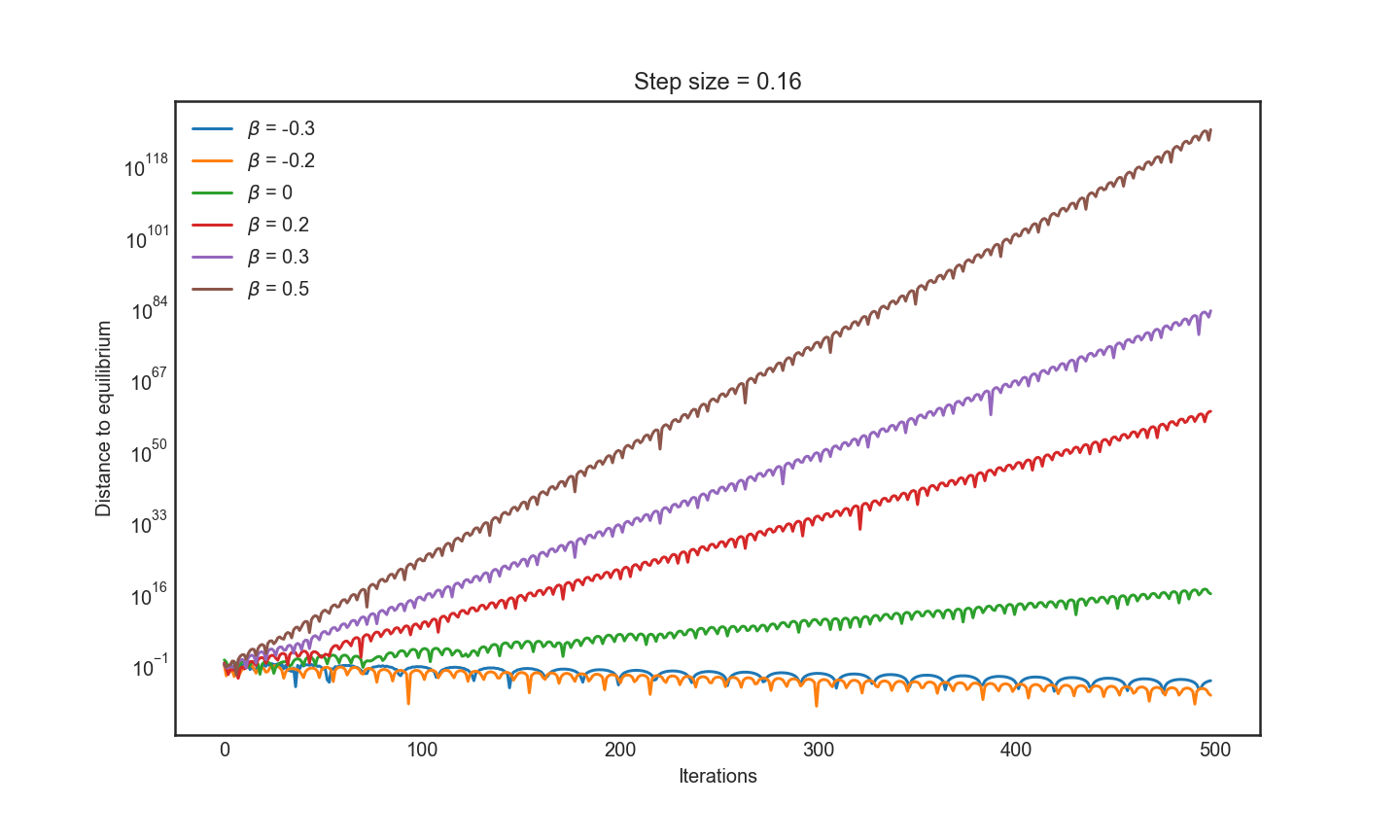}
    \\ 
    \includegraphics[width=2in]{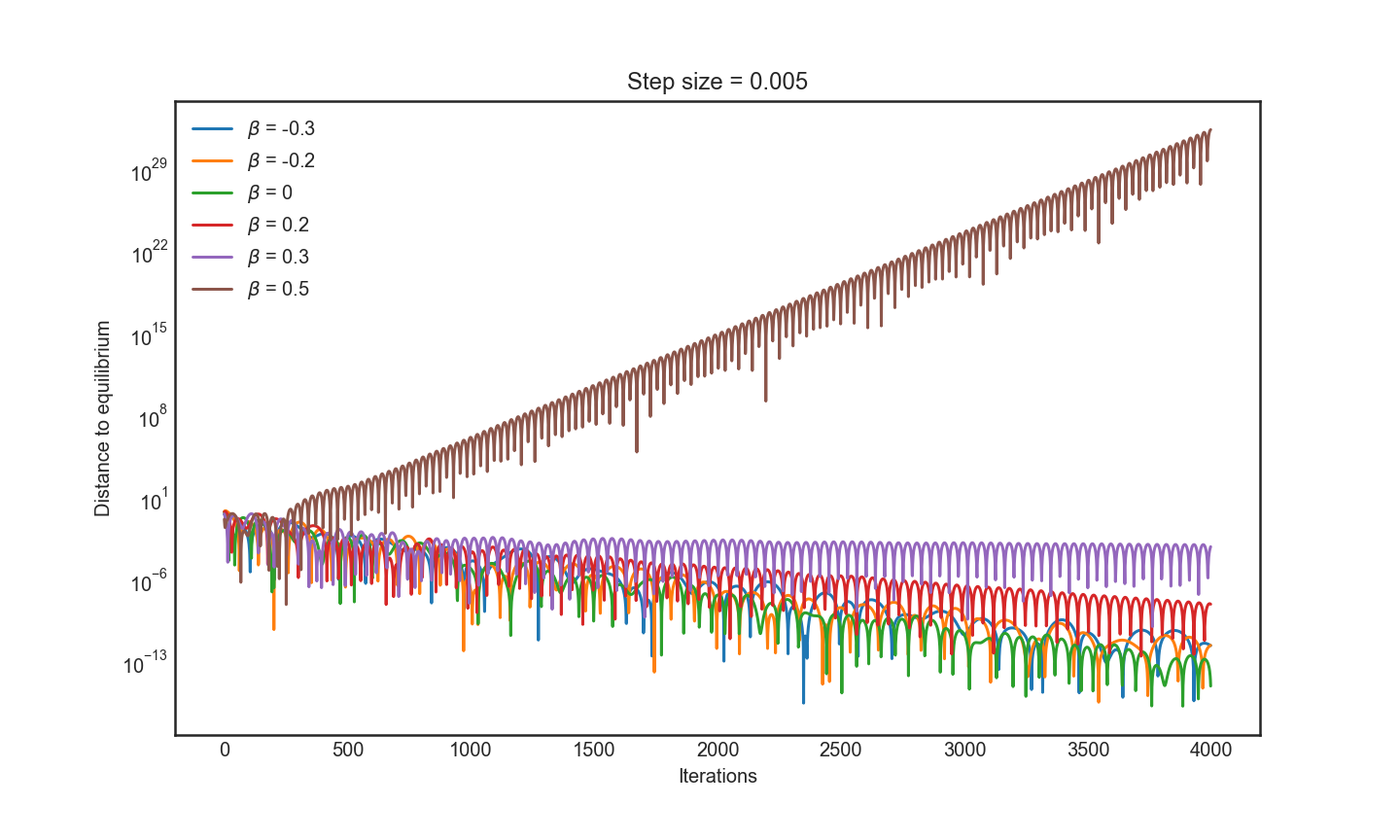}
    \includegraphics[width=2in]{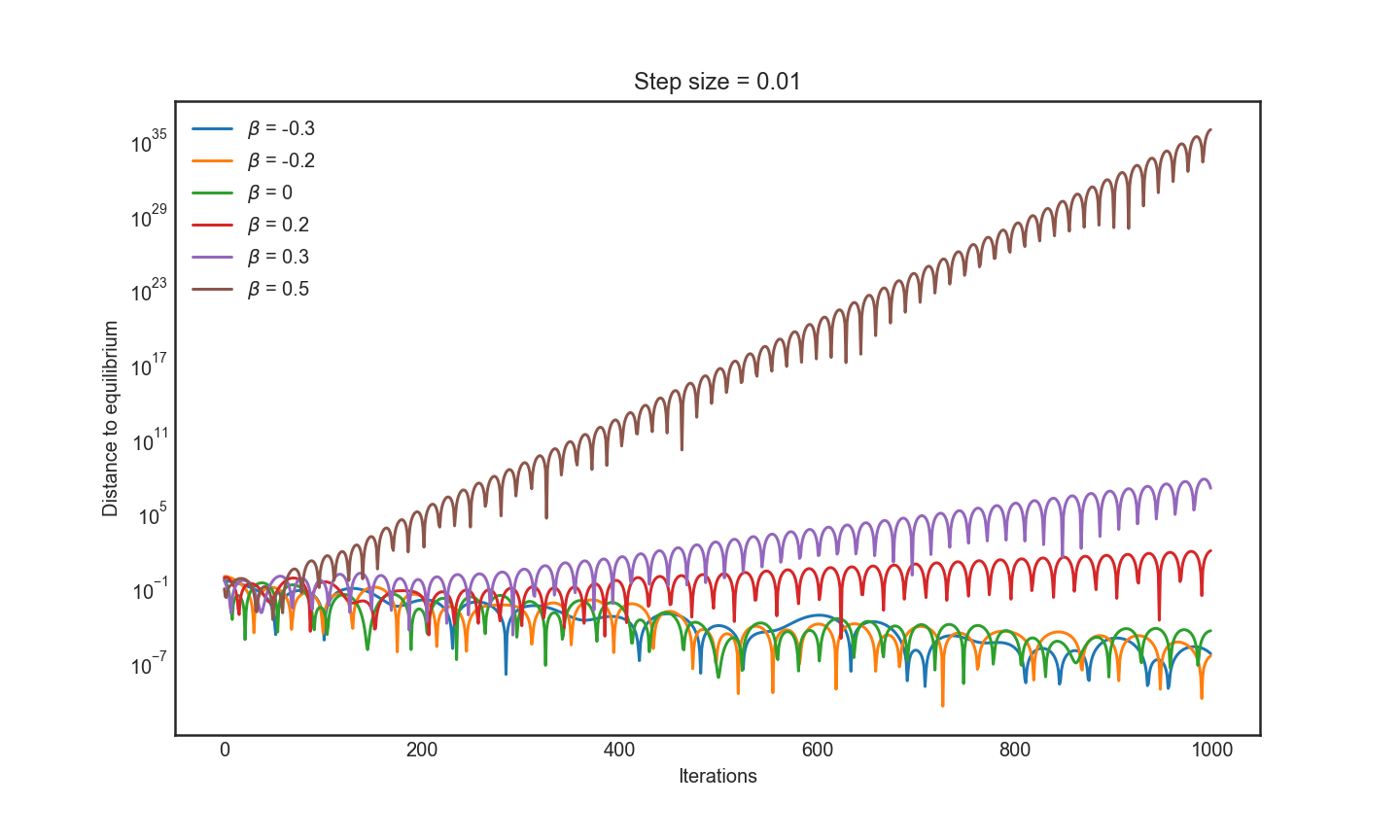}
    \includegraphics[width=2in]{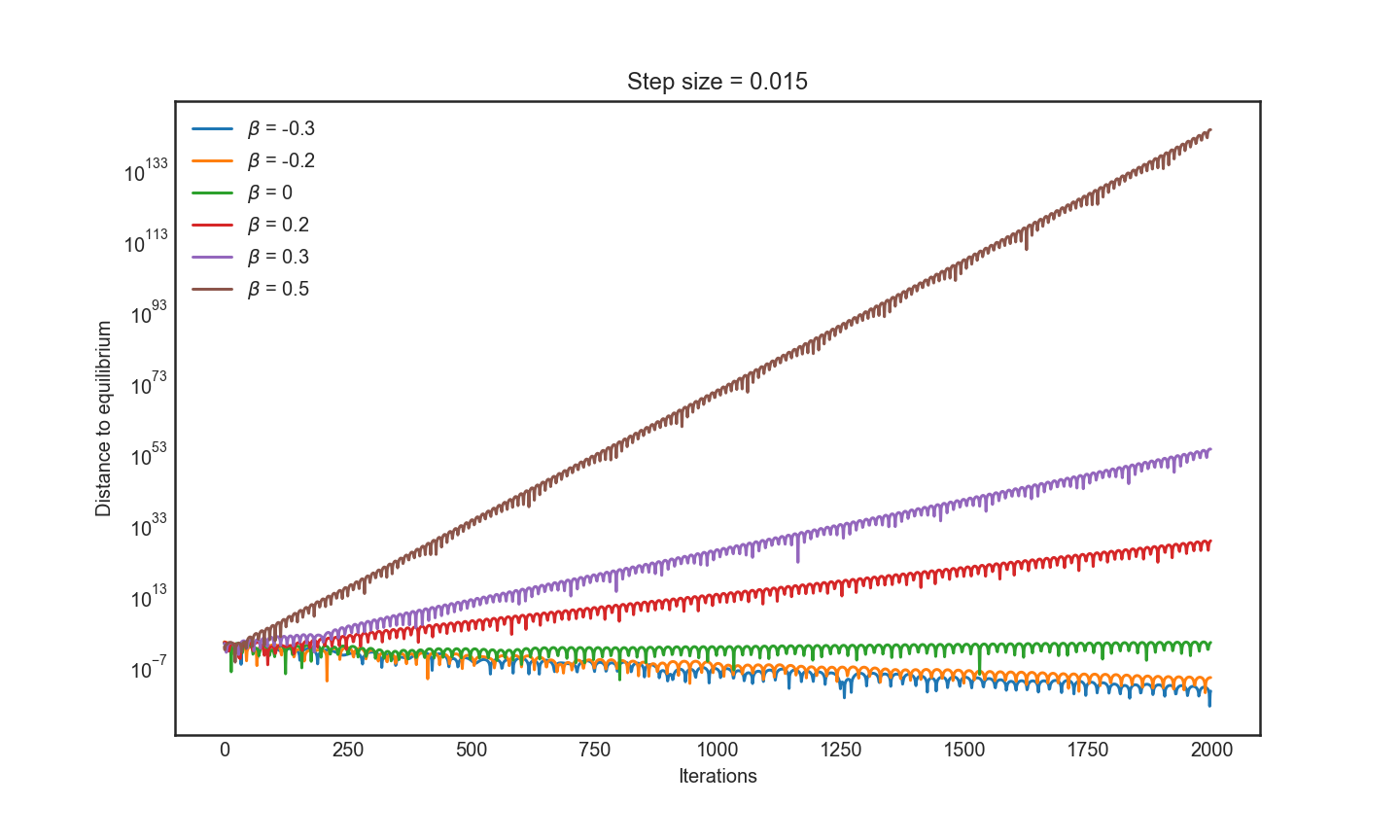}

    \caption{Results for experiments on problems with dimension 20 (first row) and dimension 200 (second row). In each row, from left to right, the step sizes increase from small to large. }
    \label{fig:G1}
\end{figure}

\begin{figure}[h]
    \centering
    \includegraphics[width=2in]{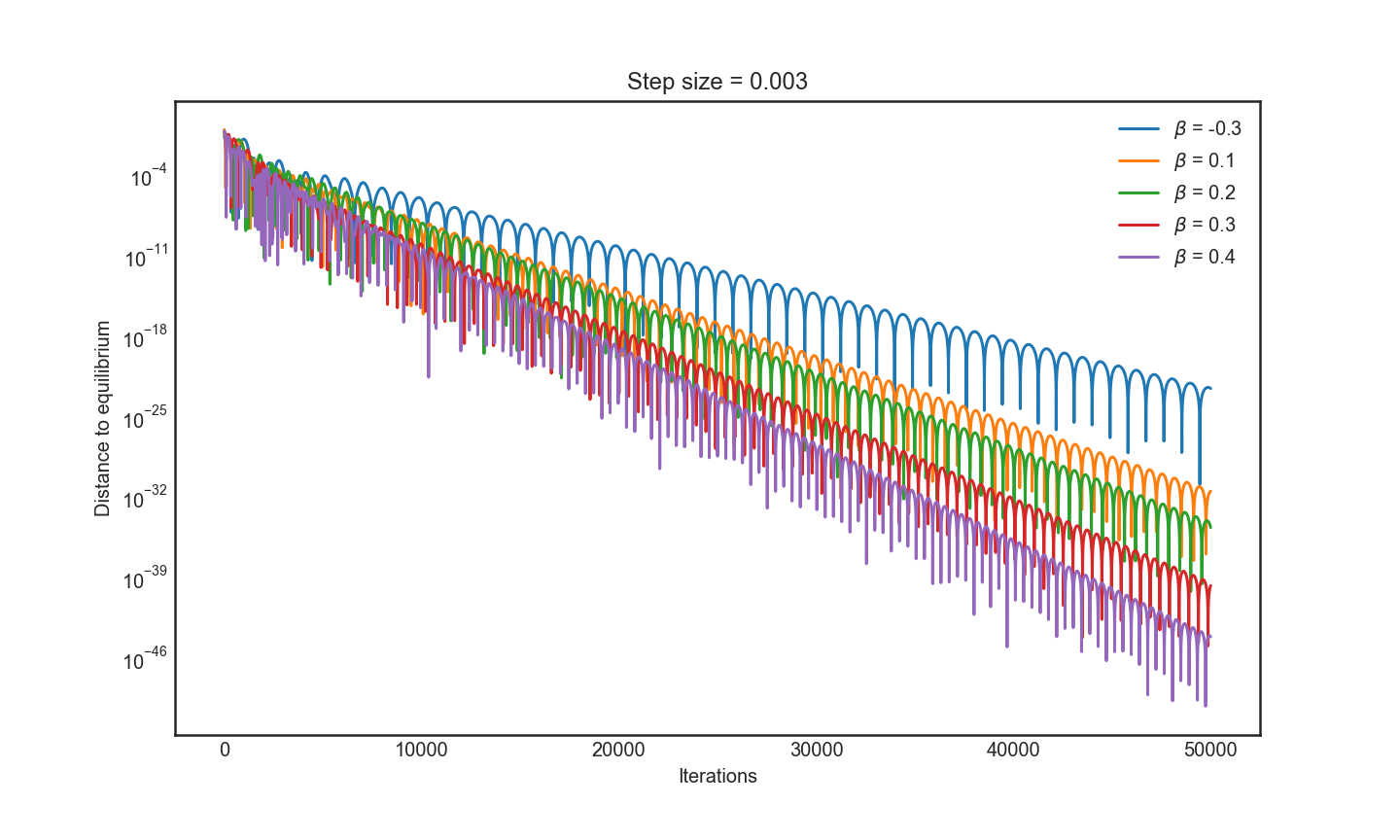}
    \includegraphics[width=2in]{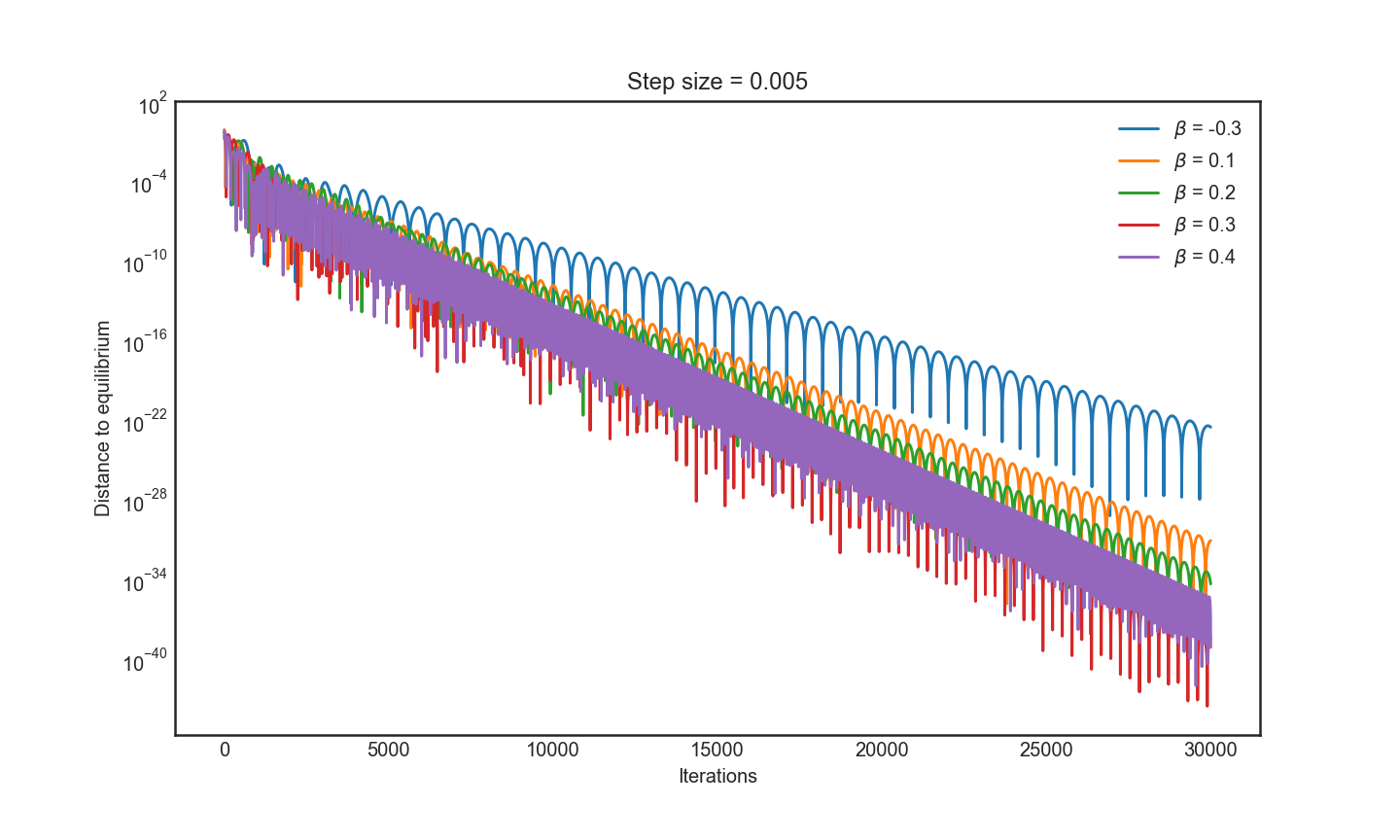}
    \includegraphics[width=2in]{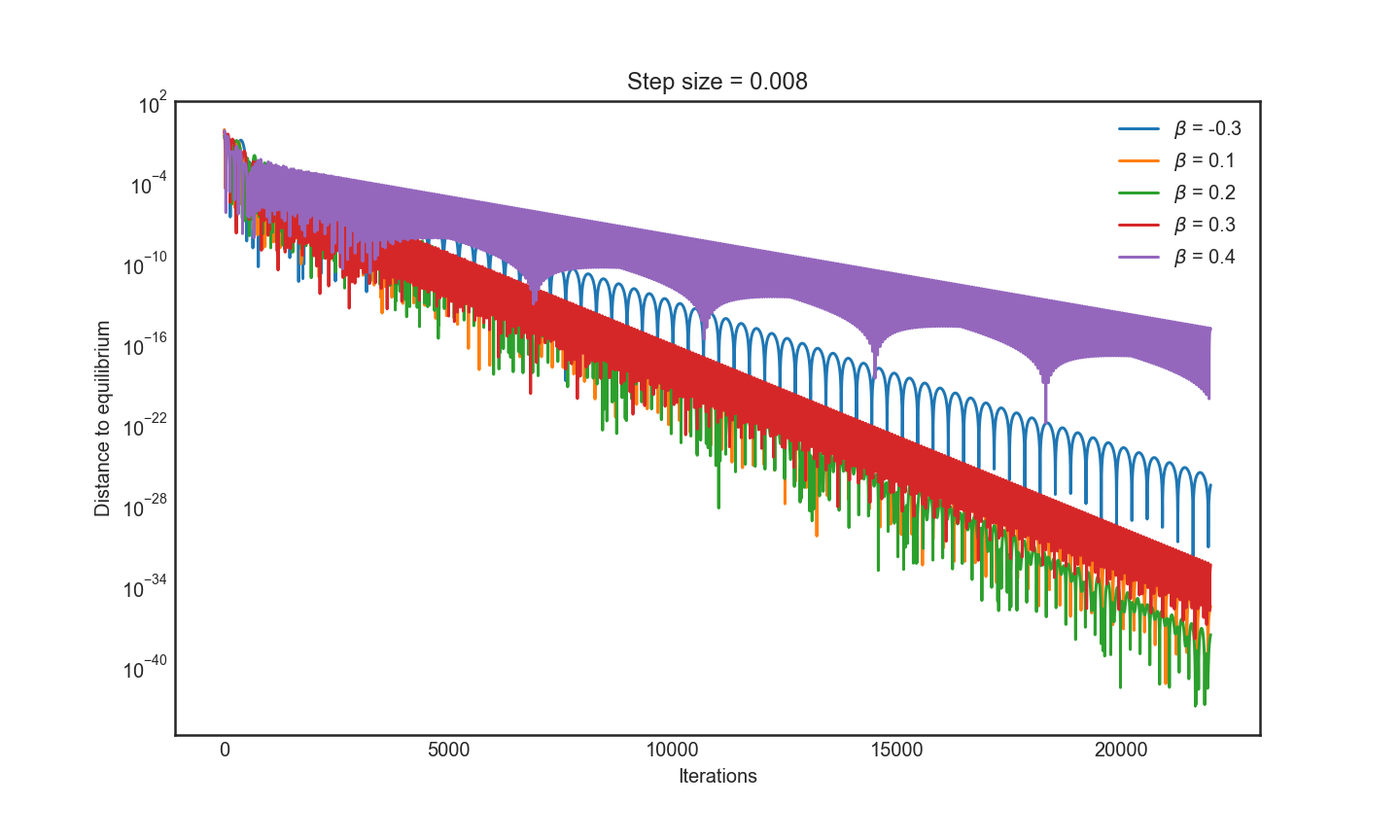}
    \\ 
    \includegraphics[width=2in]{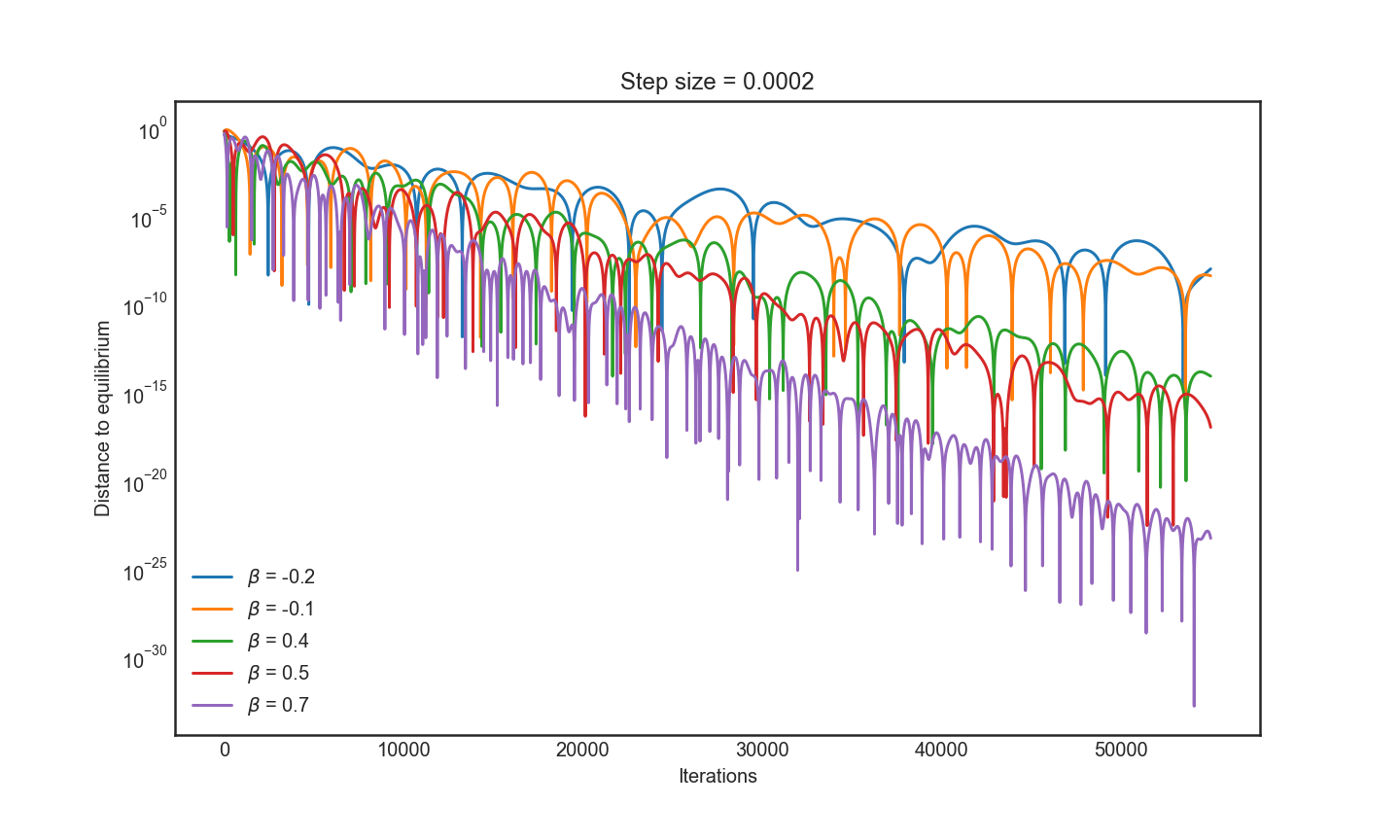}
    \includegraphics[width=2in]{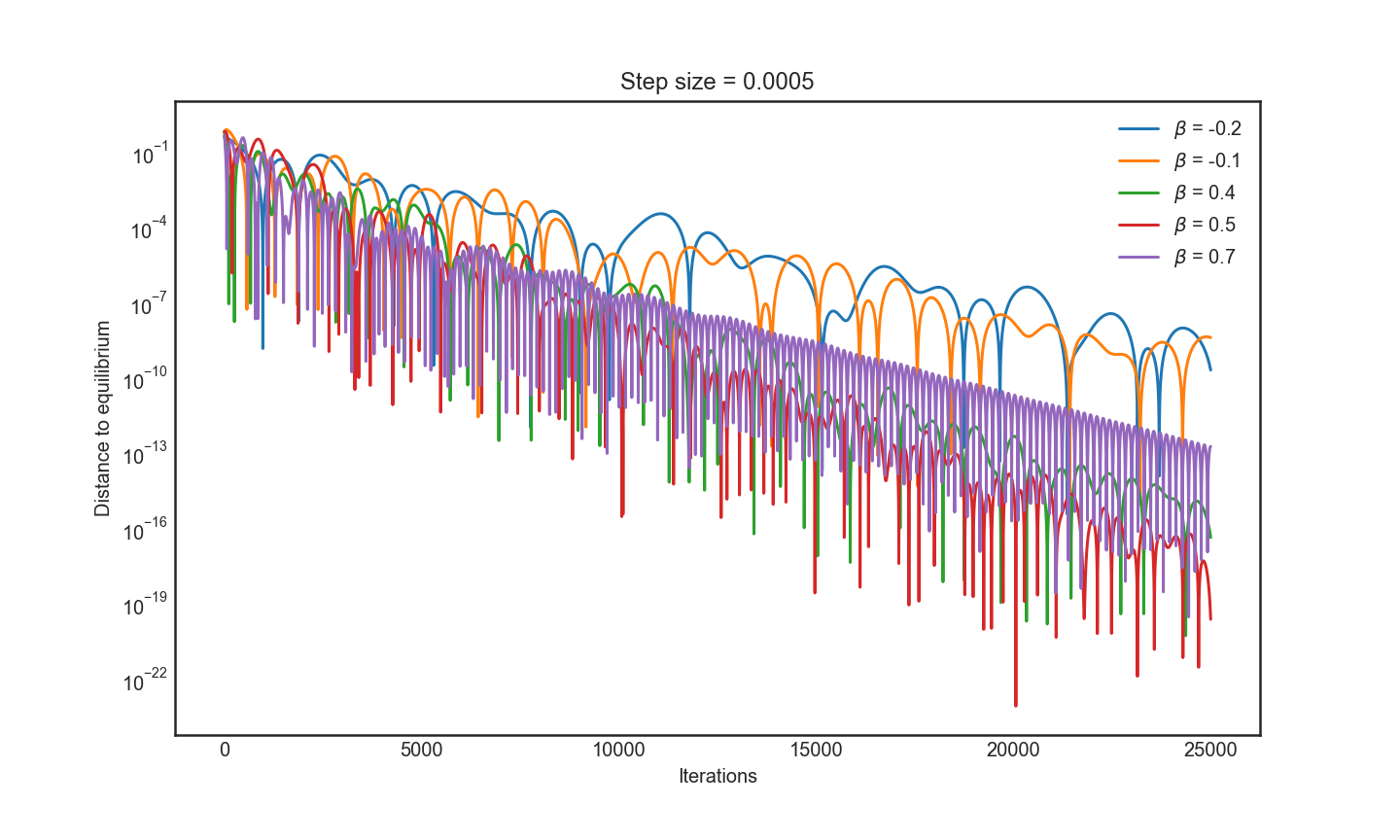}
    \includegraphics[width=2in]{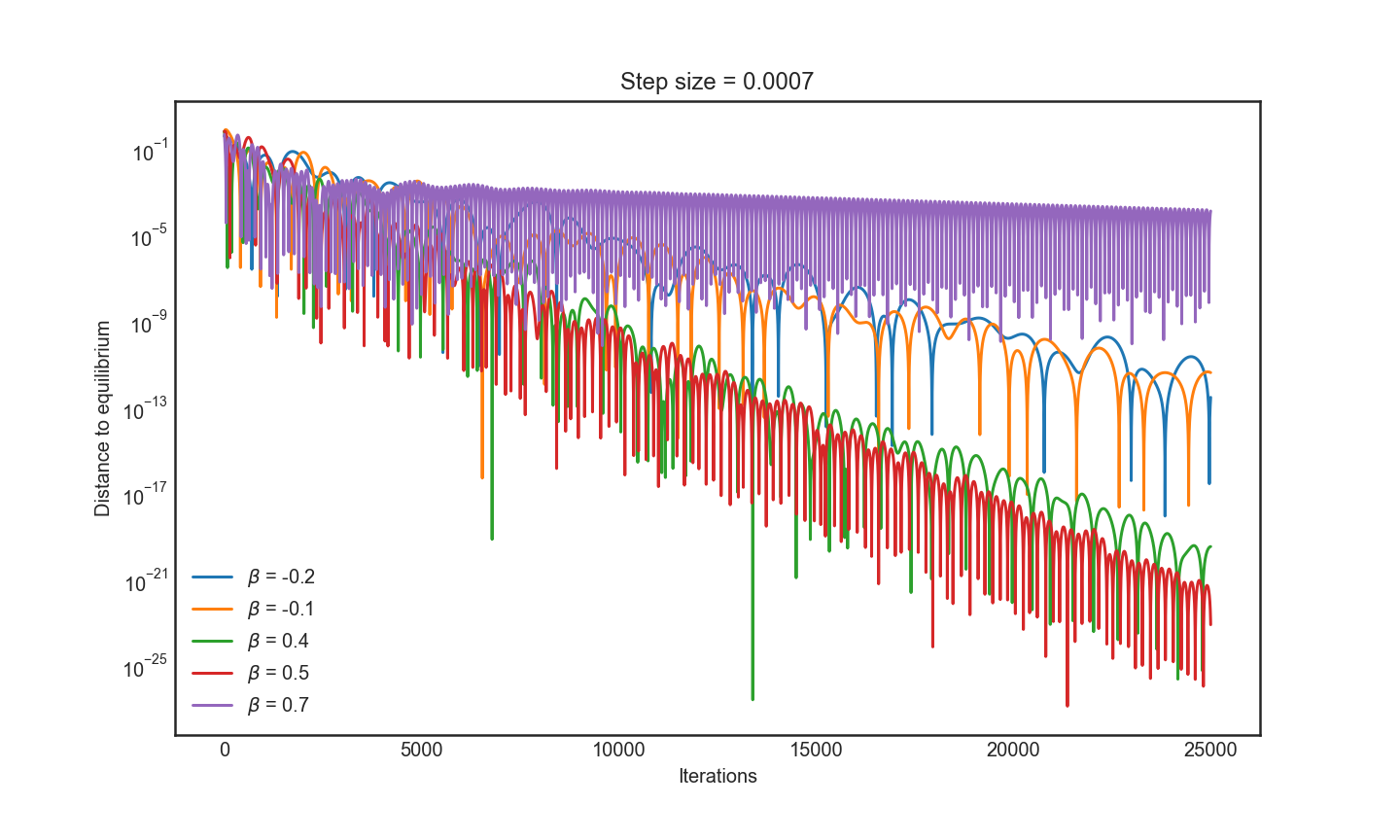}

    \caption{Results for experiments on problems with dimension 20 (first row) and dimension 200 (second row). In each row, from left to right, the step sizes increase from small to large. }
    \label{fig:G2}
\end{figure}

\begin{figure}[h]
    \centering
    \includegraphics[width=2in]{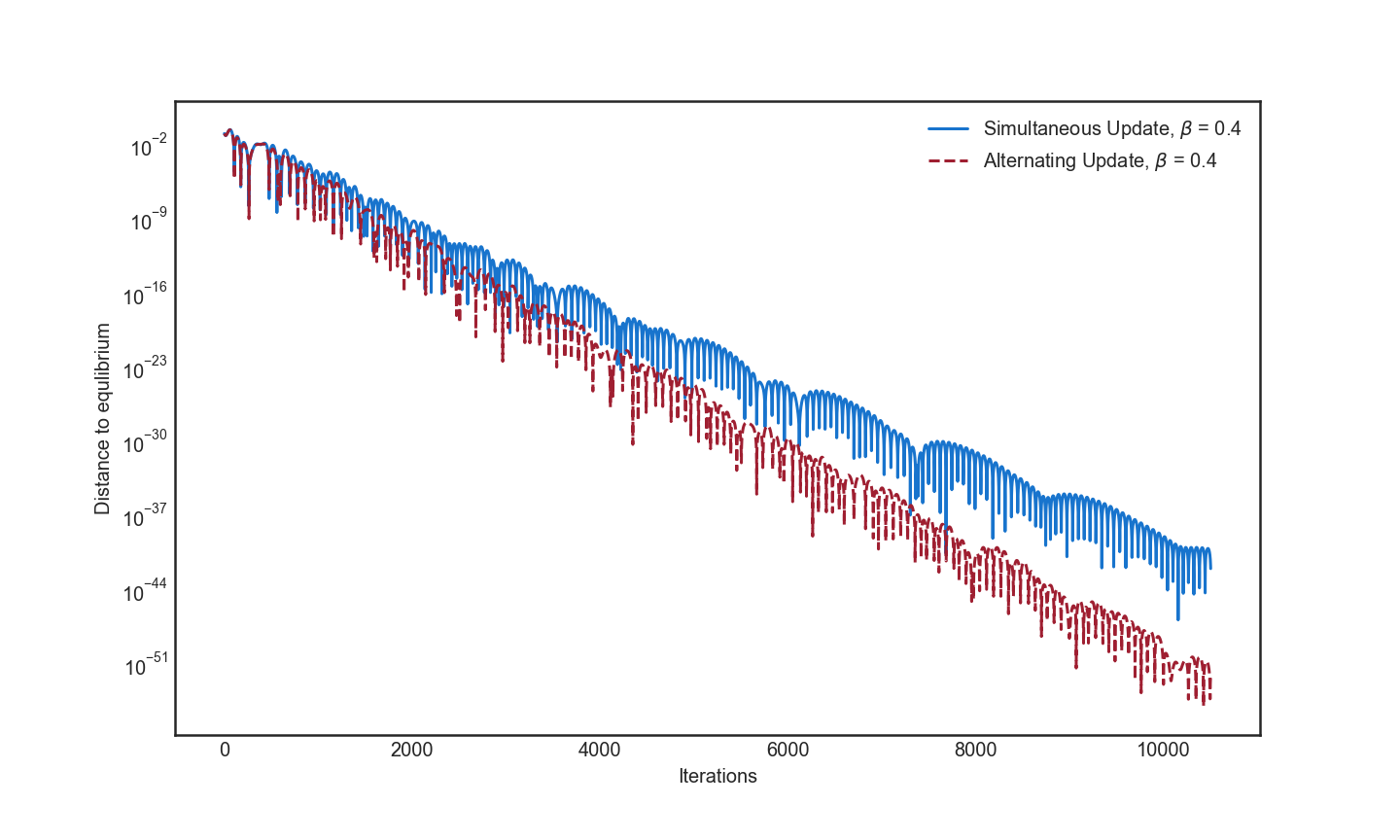}
    \includegraphics[width=2in]{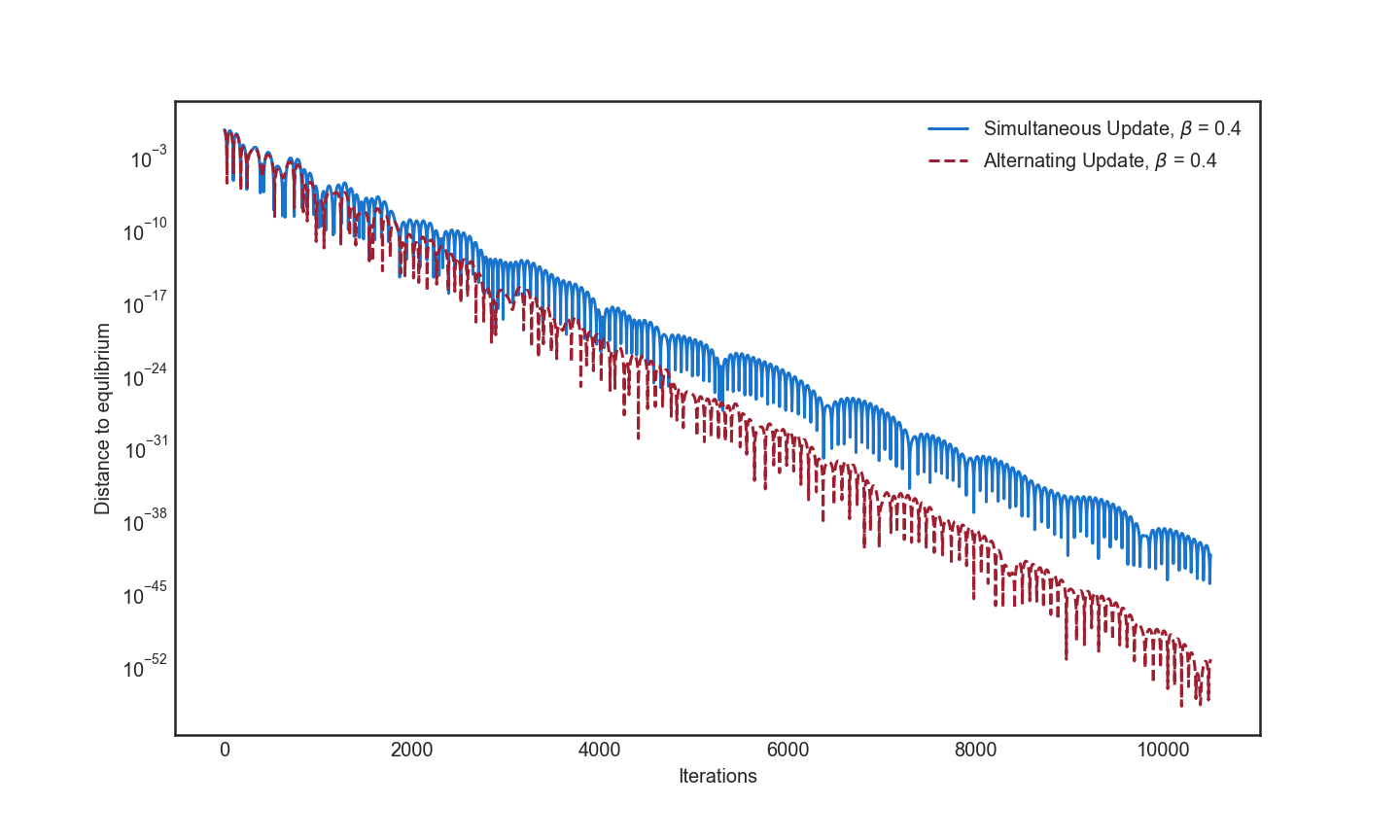}
    \includegraphics[width=2in]{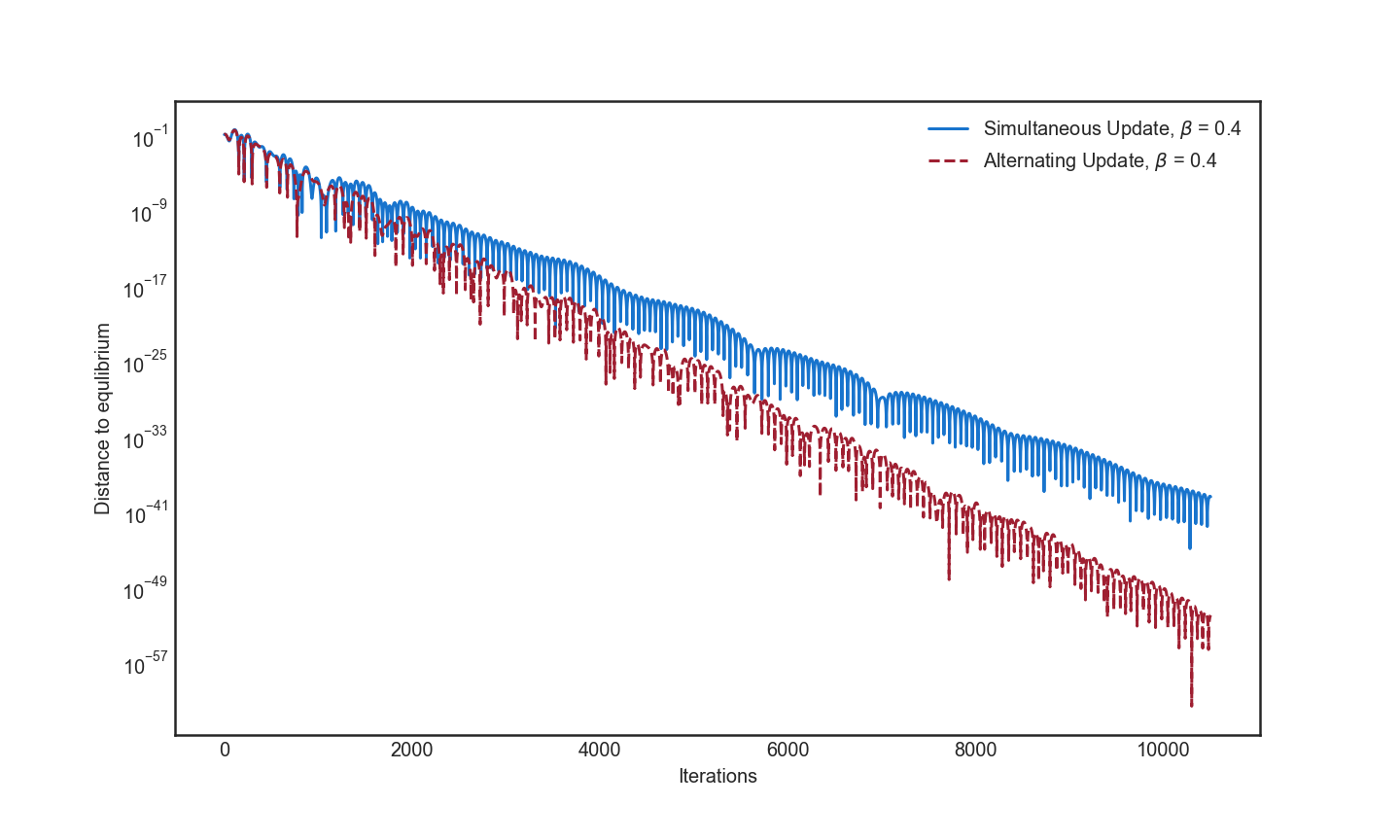}
    \\ 
    \includegraphics[width=2in]{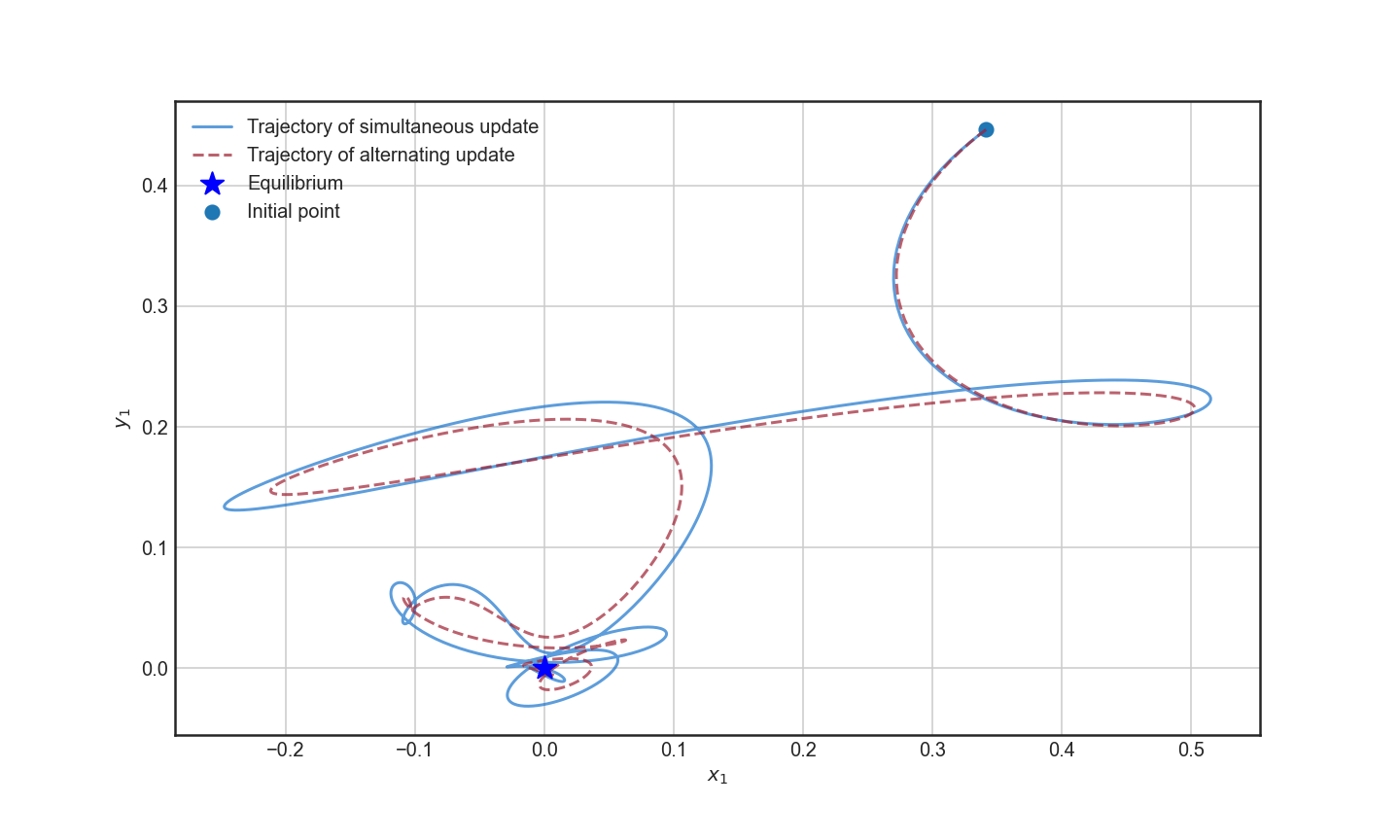}
    \includegraphics[width=2in]{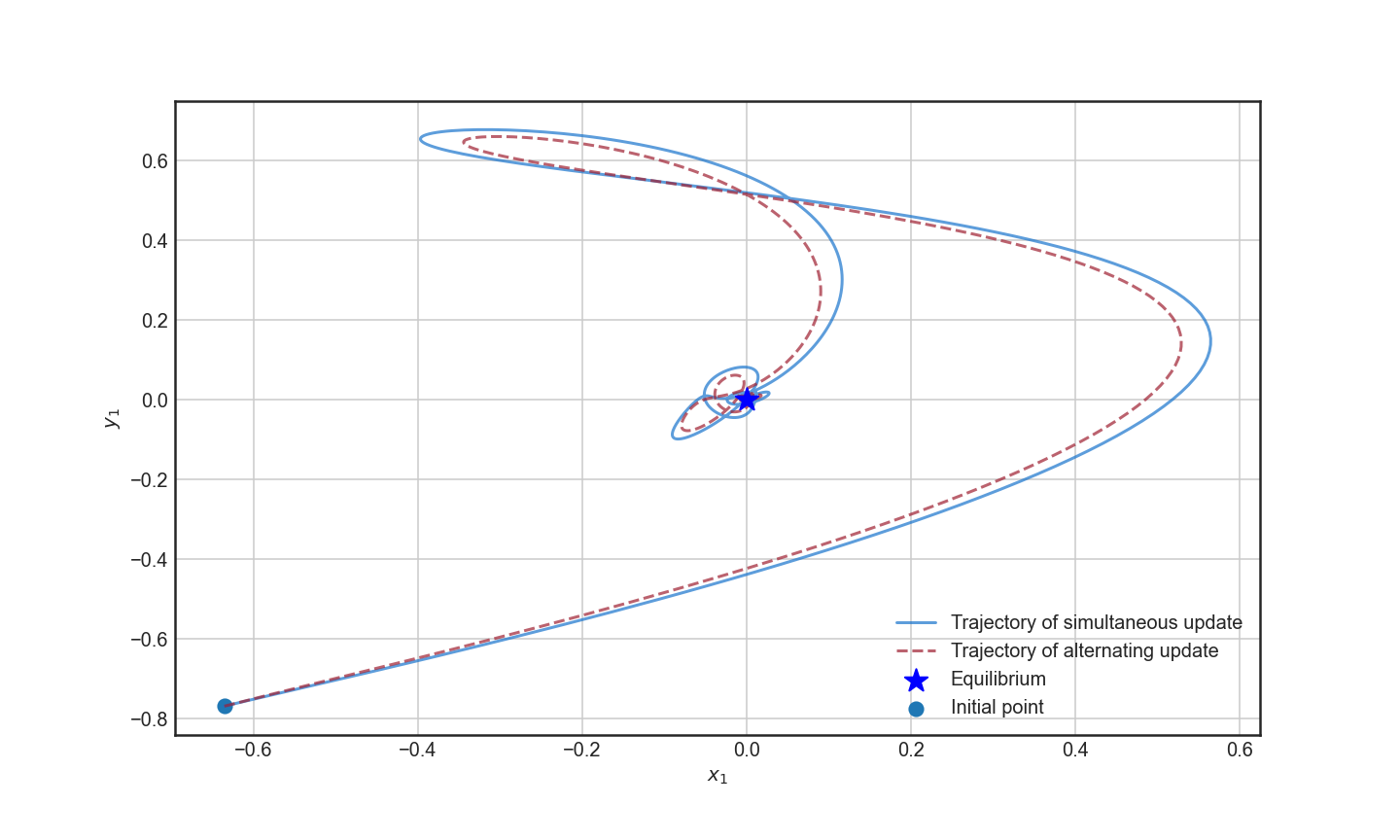}
    \includegraphics[width=2in]{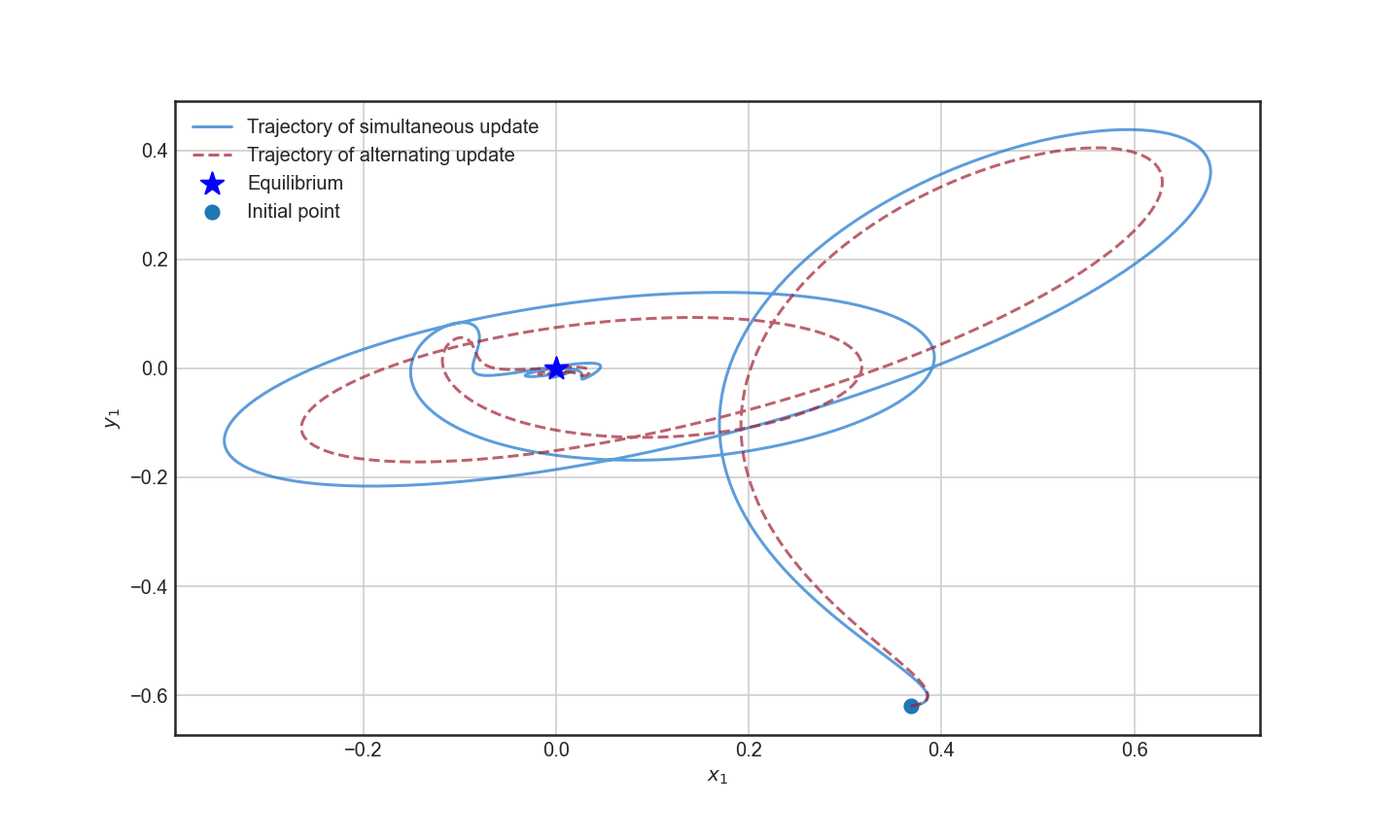}

    \caption{Experimental results on problems with dimension 100 for Theorem~\ref{t43}.}
    \label{fig:G3}
\end{figure}

\begin{figure}[h]
    \centering
    \includegraphics[width=2in]{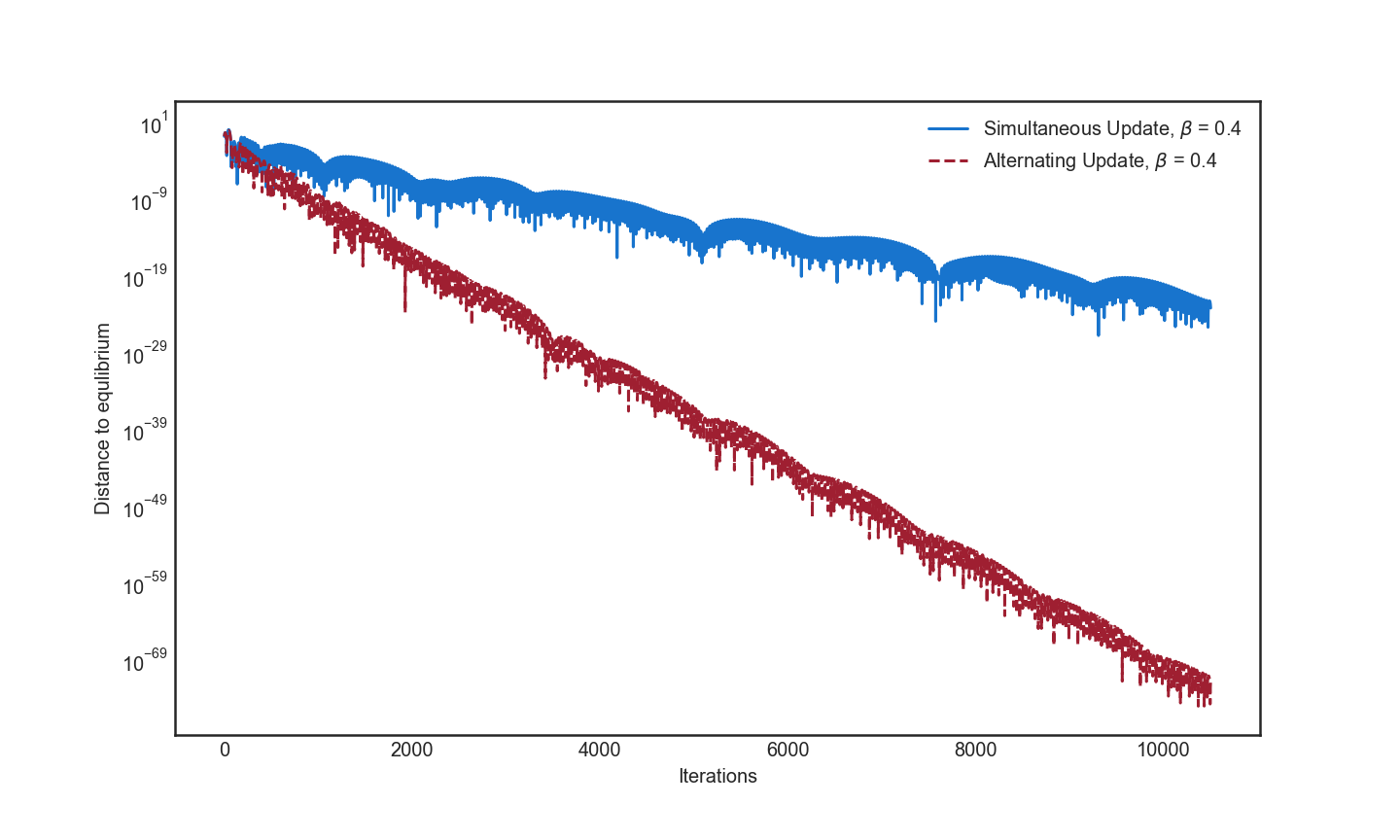}
    \includegraphics[width=2in]{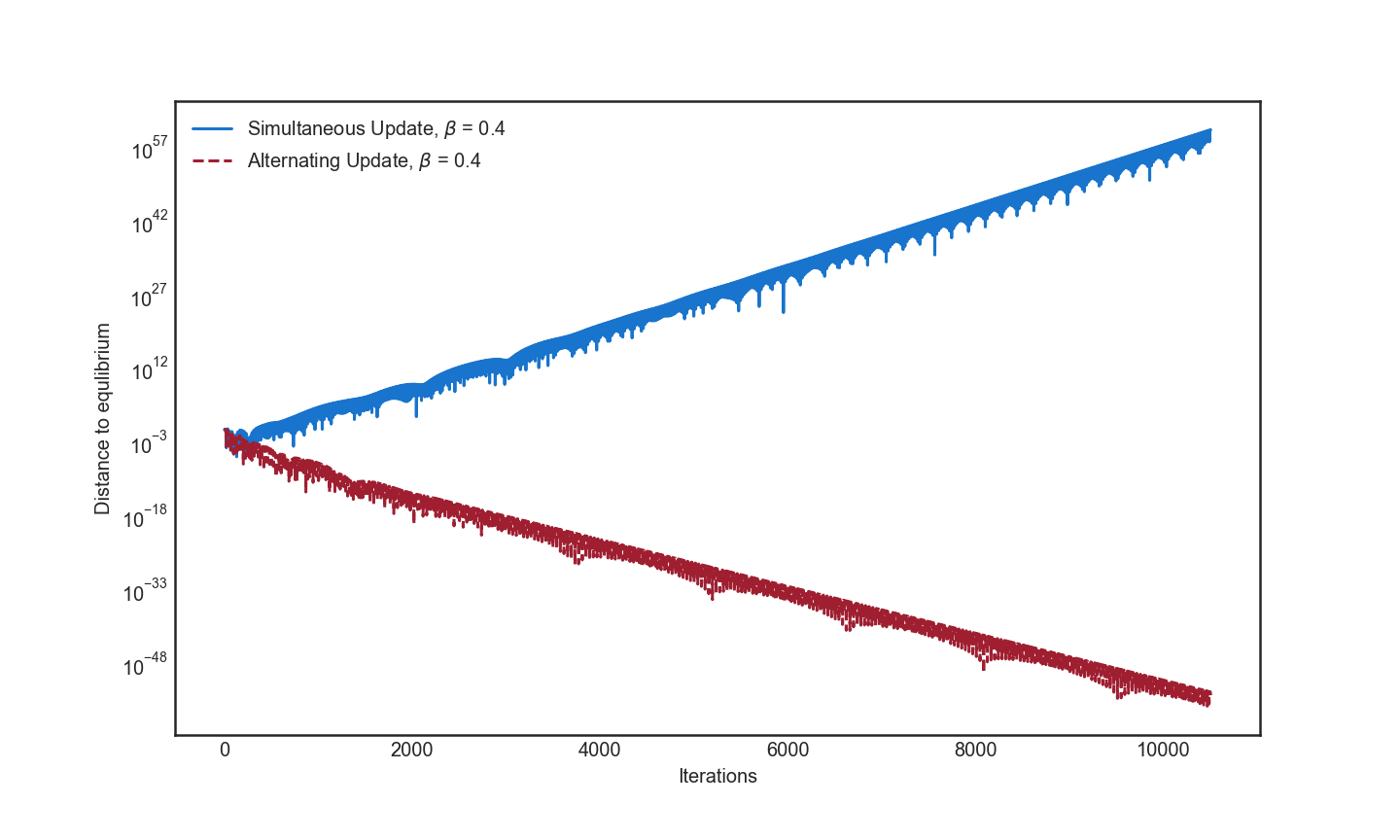}
    \includegraphics[width=2in]{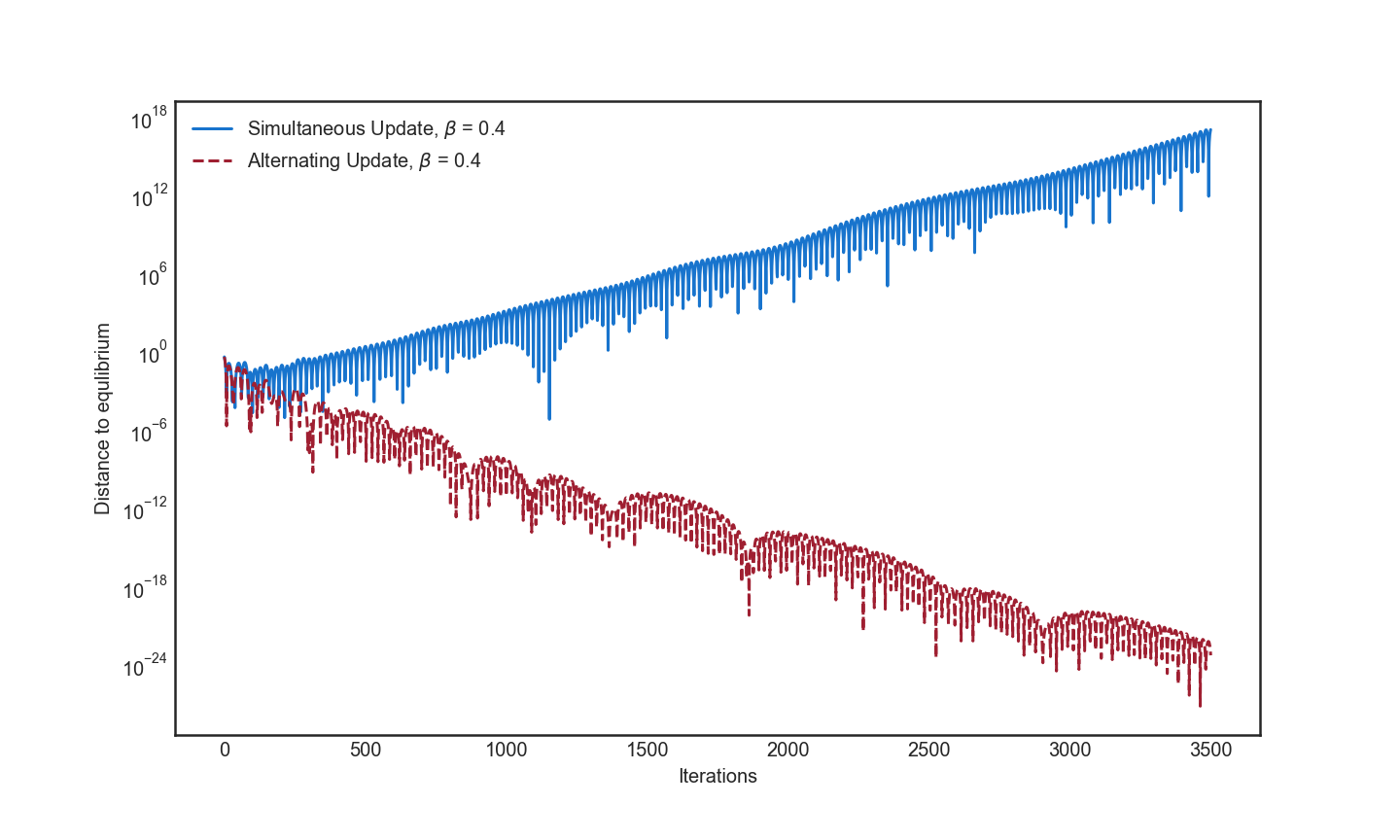}
    \\ 
    \includegraphics[width=2in]{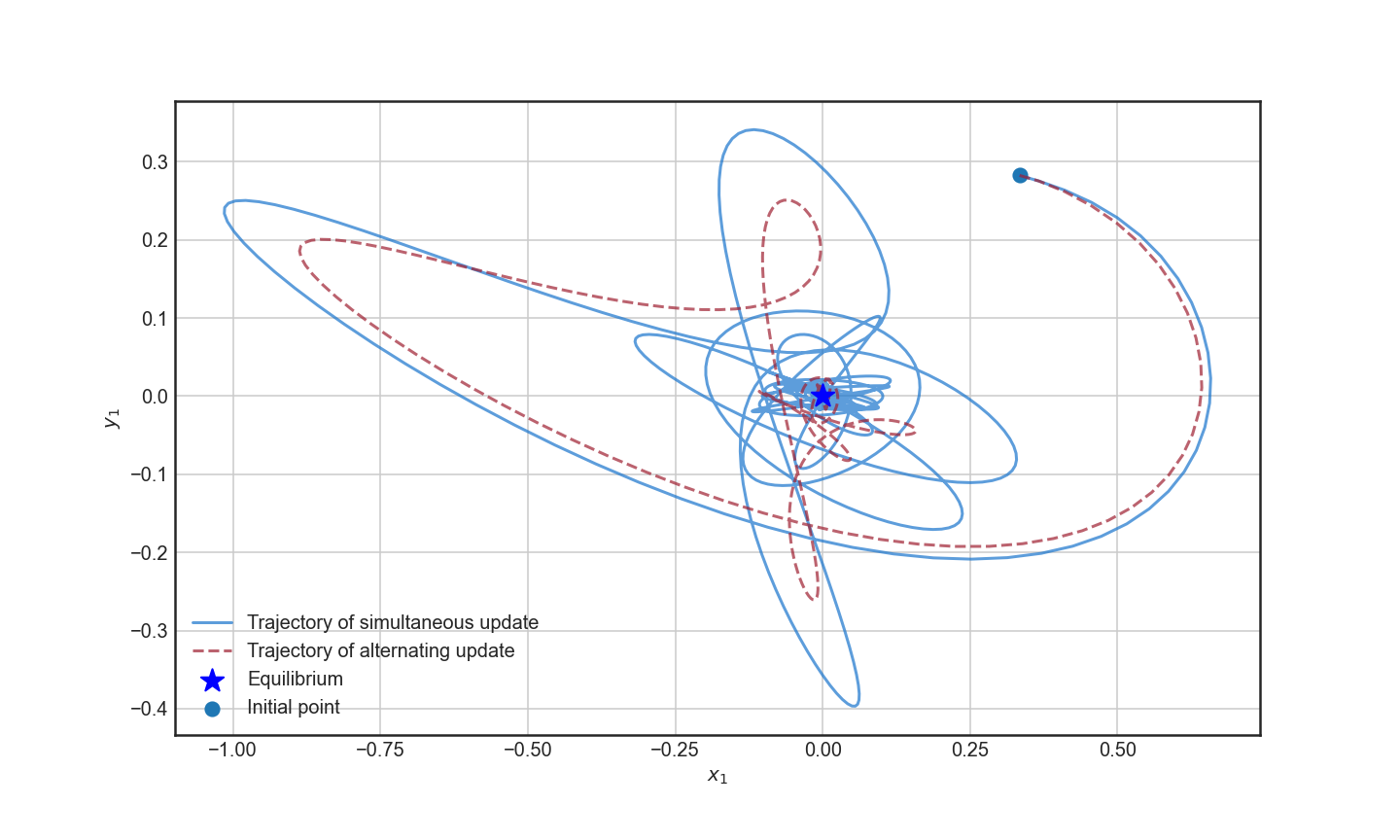}
    \includegraphics[width=2in]{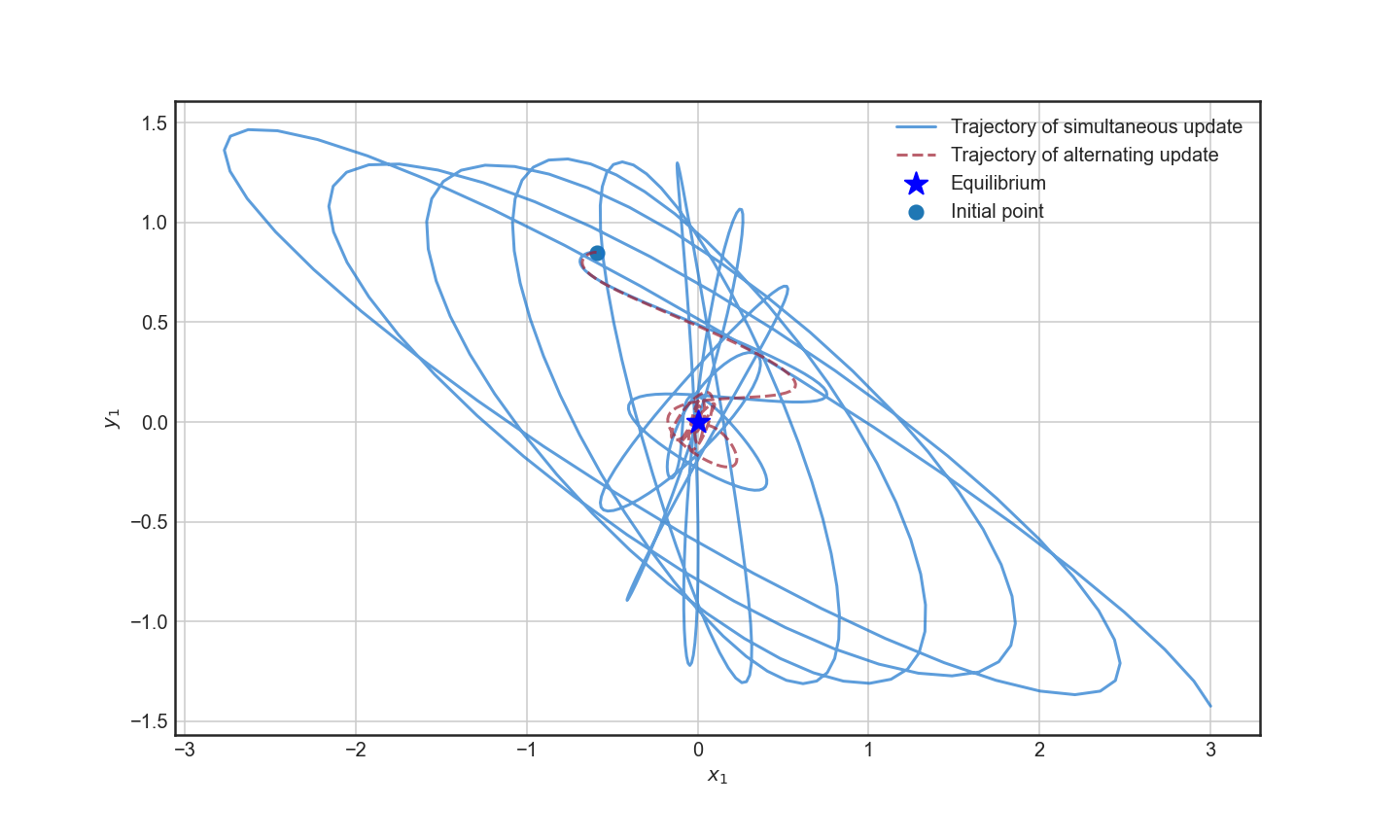}
    \includegraphics[width=2in]{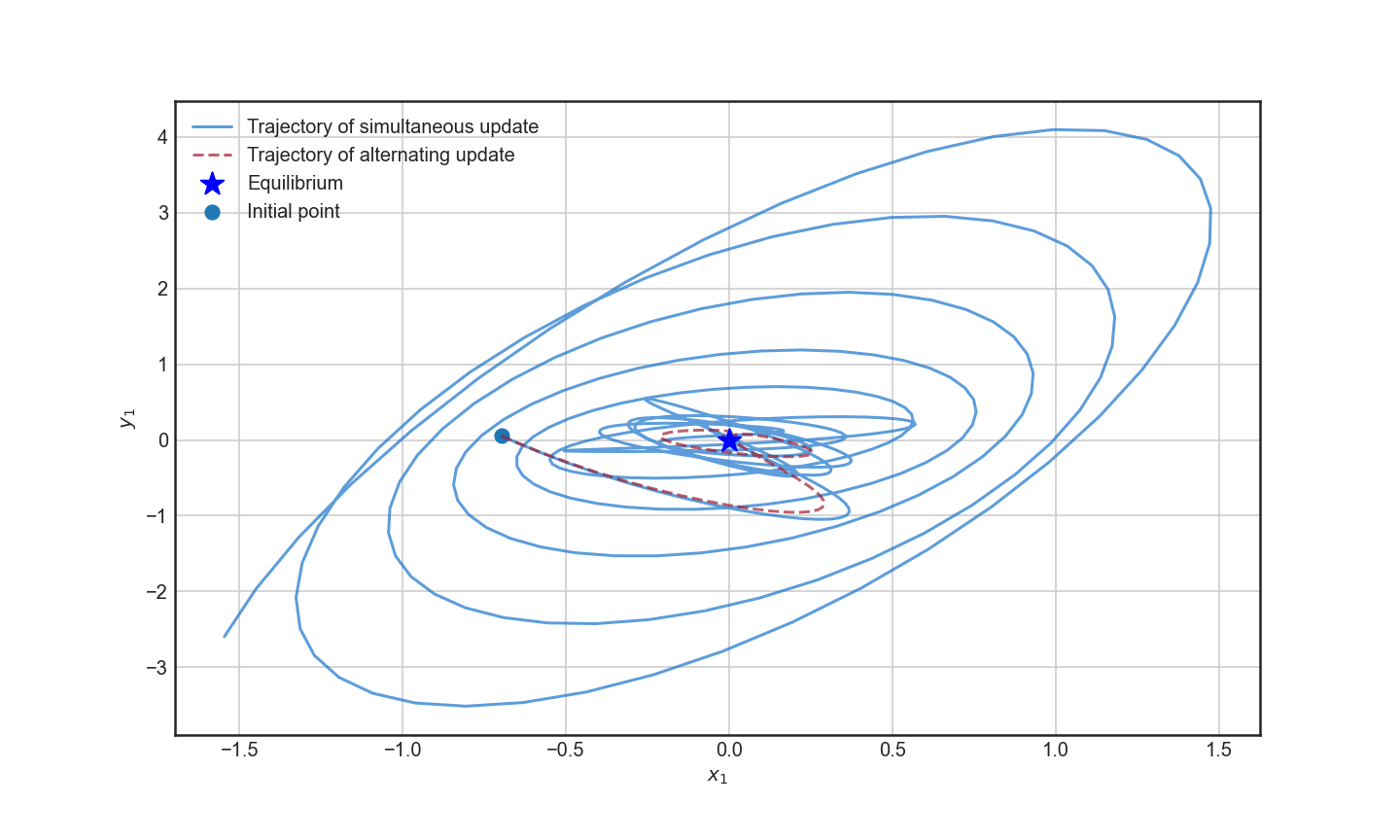}

    \caption{Experimental results on problems with dimension 1000 for Theorem~\ref{t43}.}
    \label{fig:G4}
\end{figure}


\begin{figure}[h]
    \centering
    \subfigure[Evolution of distances from equilibrium]{
        \includegraphics[width=2.7in]{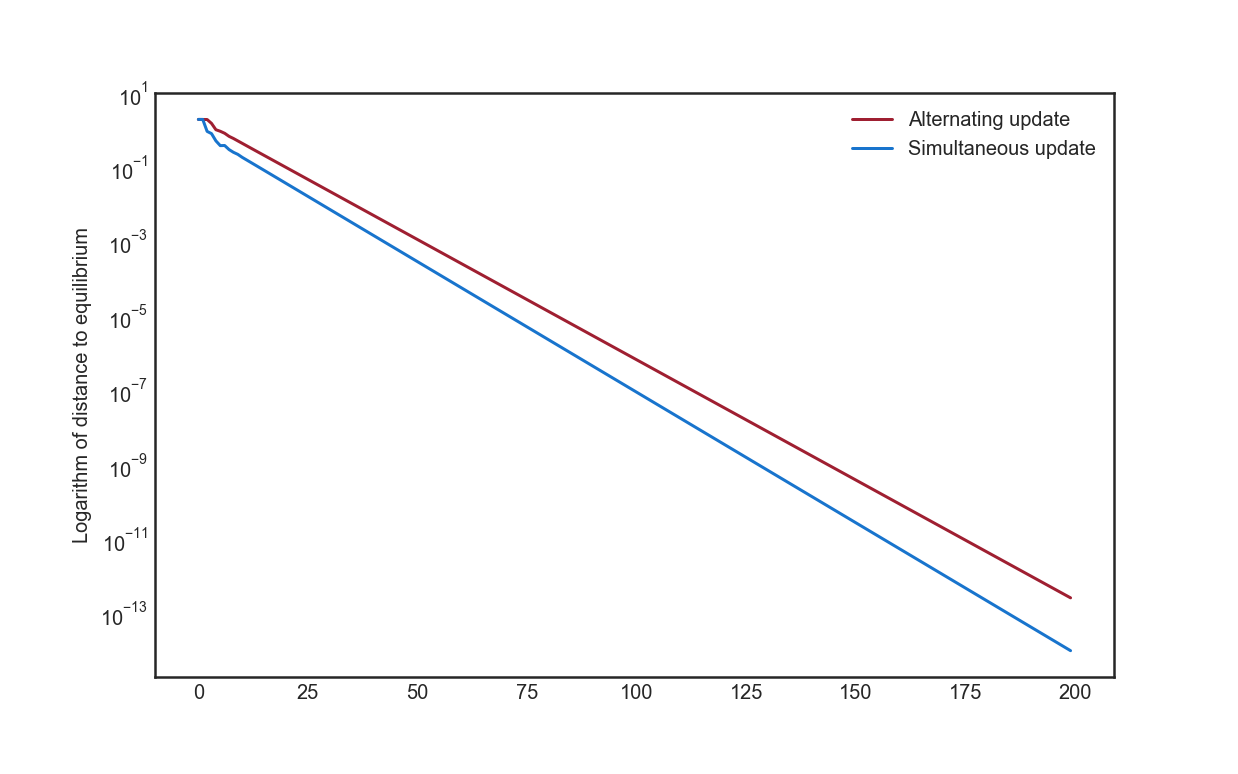}
        \label{}
    }
    \subfigure[Evolution of trajectories]{
	\includegraphics[width=2.5in]{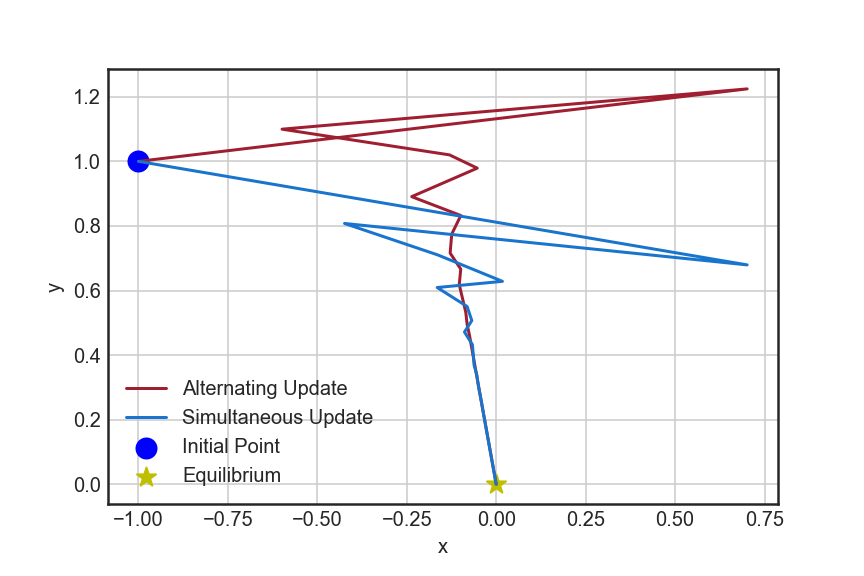}
        \label{}
    }
    \caption{Illustration for Example~\ref{example1}, simultaneous updates converge faster than alternating updates.}
    \label{sbta}
\end{figure}

\clearpage 
\section{Additional Experiments on Implicit Gradient Regularization}\label{moreexperimentsigr}

\subsection{Adam with negative heavy ball momentum}\label{adam}

In Algorithm~\ref{alg:adam_negative_beta1}, we provide details of Adam algorithm \cite{kingma2014adam} with negative momentum used in the GANs training experiments in Section~\ref{Experimental Results}. The main difference from the standard Adam algorithm is that the heavy ball momentum parameter $\beta_1$ is chosen as a negative number here. 
\begin{algorithm}[h]
   \caption{Adam with Negative \(\beta_1\)}
   \label{alg:adam_negative_beta1}
\begin{algorithmic}[1]
   \STATE \textbf{Input:} \\
   \quad Parameters \(\theta\); Learning rate \(\alpha\); \\
   \quad \textbf{\(\beta_1 < 0\) (negative momentum factor; differs from classic Adam)}; \\
   \quad \(\beta_2 \in [0,1)\); \(\varepsilon\) (a small constant)
   \STATE \textbf{Output:} \\
   \quad Updated parameters \(\theta\).
   \STATE Initialize \(\mathbf{m}_0 \leftarrow \mathbf{0}, \mathbf{v}_0 \leftarrow \mathbf{0}\); \(t\leftarrow 0\).
   \FOR{each iteration}
     \STATE \(t \leftarrow t + 1\)
     \STATE \(\mathbf{g}_t \leftarrow \nabla_{\theta} f(\theta_{t-1})\)

     \STATE \(\mathbf{m}_t \leftarrow \beta_1 \mathbf{m}_{t-1} + (1-\beta_1)\,\mathbf{g}_t\)
     \COMMENT{\textbf{Key difference:} \(\beta_1\) can be negative.}

     \STATE \(\mathbf{v}_t \leftarrow \beta_2 \mathbf{v}_{t-1} + (1-\beta_2)\,\mathbf{g}_t^2\)

     \STATE \(\hat{\mathbf{m}}_t \leftarrow \frac{\mathbf{m}_t}{1 - \beta_1^t}\)
     \STATE \(\hat{\mathbf{v}}_t \leftarrow \frac{\mathbf{v}_t}{1 - \beta_2^t}\)

     \STATE \(\theta_t \leftarrow \theta_{t-1} \;-\; \alpha \frac{\hat{\mathbf{m}}_t}{\sqrt{\hat{\mathbf{v}}_t} + \varepsilon}\)
   \ENDFOR
\end{algorithmic}
\end{algorithm}

\newpage
\subsection{GANs training experiments}\label{GTE}

\paragraph{Additional results for Wasserstein GANs.}

We firstly present the details of our experimental setting in Section~\ref{discussions on implicit regularization}.

For Figure~\ref{gd_sa1}, the experimental setting is:
\begin{itemize}
    \item Dataset: CIFAR-10.
    \item Neural network architecture: Both generator and discriminator use the ResNet-32 architecture.
    \item Learning rate: Both generator and discriminator use the learning rate 2e-4, with a linearly decreasing step size schedule.
    \item In each training iteration, we update both the discriminator and the generator since in this paper we mainly focus on the setting where both players have the same update steps. Standard Wasserstein GAN training suggests updating the generator every 5 steps \cite{gulrajani2017improved}.
    \item The gradient penalty coefficient is chosen as $10$.
\end{itemize}

In Figure~\ref{ISwg}, we present the inception scores \cite{salimans2016improved} of the GANs training experiments in the main text. A higher IS indicates that the GANs have better performance. We can observe that lower momentum results in a higher IS. Moreover, compared with simultaneous updates, alternating updates yield higher inception scores. This is consistent with the results reported in the main text and further suggests that the implicit regularization effect is linked to the training quality. Sampled images for alternating training and simultaneous training are presented in Figure~\ref{A_samf} and Figure~\ref{S_samf}. From these pictures, we can observe that the images generated from alternating training look better.

\begin{figure}[h]
    \centering
    \subfigure[IS, different $\beta$]{
        \includegraphics[width=2.5in]{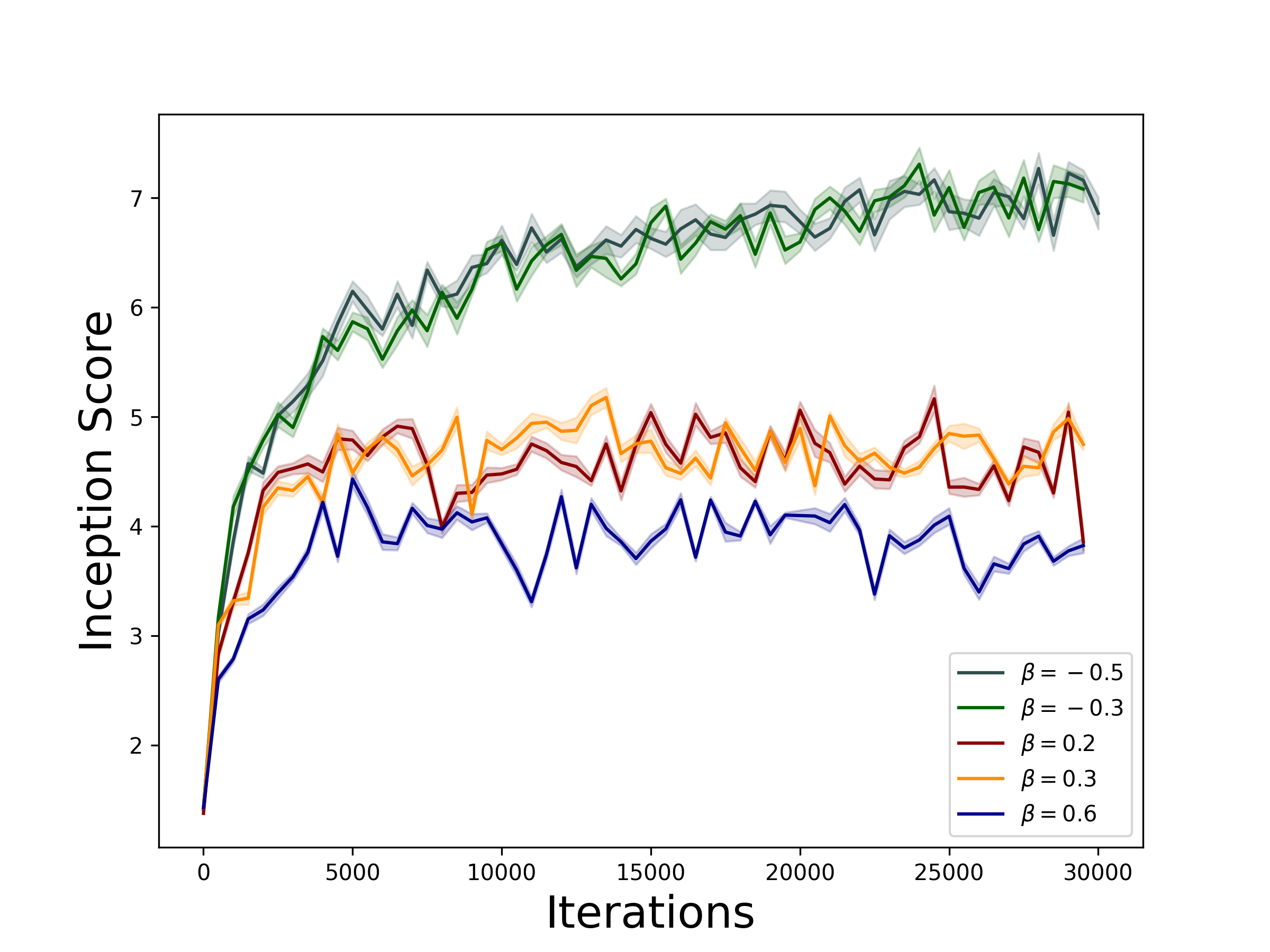}
        \label{}
    }
    \subfigure[IS, sim vs. alt]{
	\includegraphics[width=2.5in]{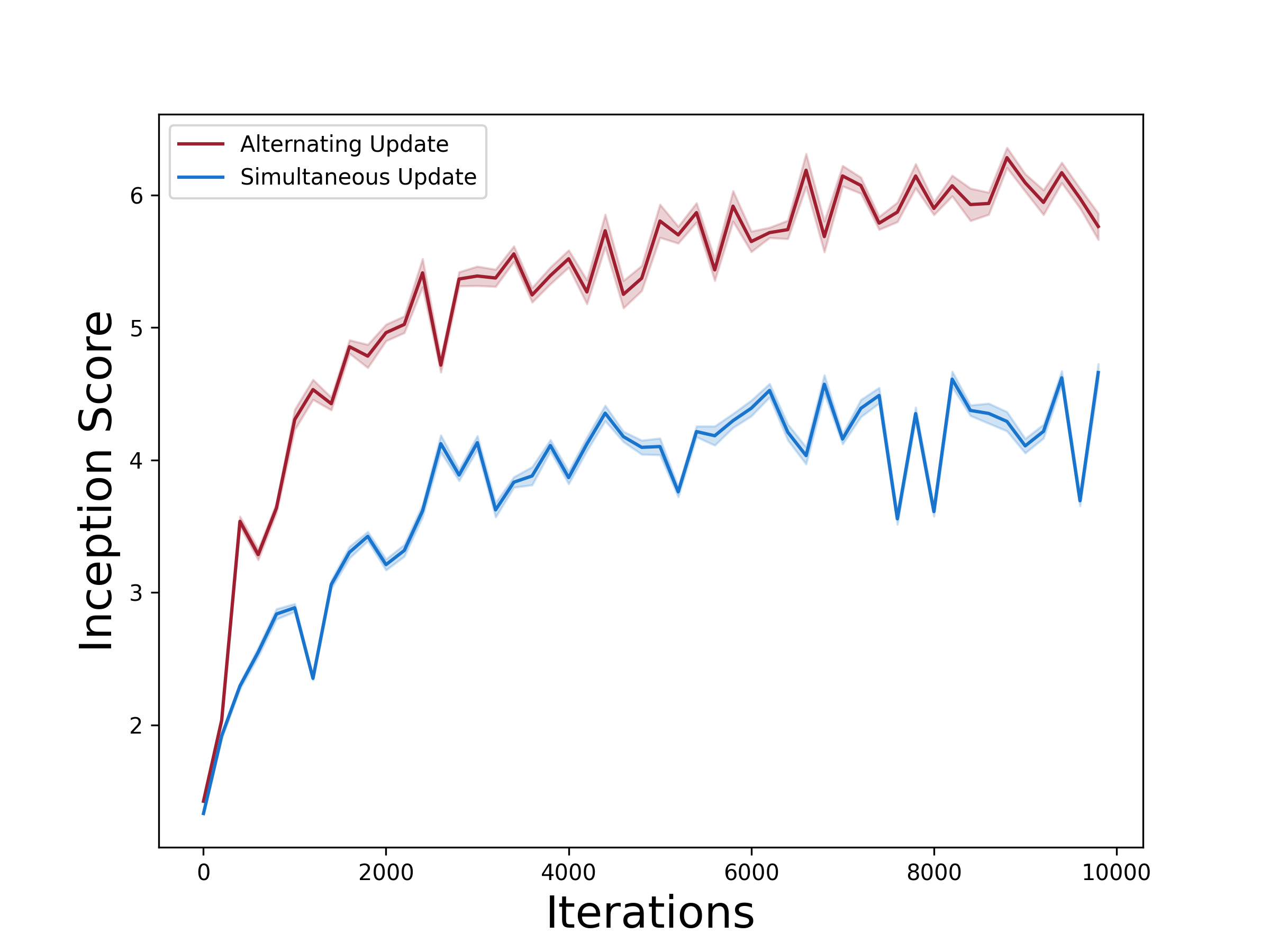}
        \label{}
    }
    \caption{Inception Scores of Wasserstain GANs training. Smaller momentum leads to a larger inception score, indicating that they have better performance. Moreover, alternating updates lead to a better inception score compared to simultaneous updates.}
    \label{ISwg}
\end{figure}

\newpage
\paragraph{Additional results for  Vanilla GANs.}
In this section we present further results for Vanilla GANs training \cite{goodfellow2014generative}. The experimental setting is as follows:

\begin{itemize}
    \item Dataset: MNIST
    \item Neural network architecture: MLP
    \item Learning rate: Both generator and discriminator use the learning rate 0.001, 
    \item In each training iteration, we update both the discriminator and the generator.
    \item In Figure~\ref{MINST_Avg_Slope1}, the corresponding heavy ball momentum parameter in Adam is $\beta = 0.6,\ 0.5,\ 0.3,\ 0,\ -0.3,\ -0.5$. In Figure~\ref{MINST_Avg_Slope2}, both the simultaneous updates and alternating updates use a momentum parameter $\beta = -0.5$.
\end{itemize}
From Figure~\ref{aver_slop_MNI}, we can observe that the experimental results support our thesis regarding the implicit regularization effects in min-max games. Firstly, from Figure~\ref{MINST_Avg_Slope1}, we can observe that smaller momentum makes the algorithms' trajectories have smaller average slopes. Secondly, from Figure~\ref{MINST_Avg_Slope2}, we can observe that alternating updates have smaller average slopes compared to simultaneous updates. Since the popular inception score and FID are not suitable for measuring the performance of GANs training on MNIST dataset, we present some sampled images of training with alternating updates and simultaneous updates. In Figure~\ref{minst}, we present sampled images for alternating updates. In Figure~\ref{minst2}, we present sampled images for simultaneous updates. We can observe that alternating updates lead to better training results. Especially, in Figure~\ref{minst2} we can observe that under certain parameters, the training has already failed.

\begin{figure}[h]
    \centering
    \subfigure[Average slope, different $\beta$]{
        \includegraphics[width=2.5in]{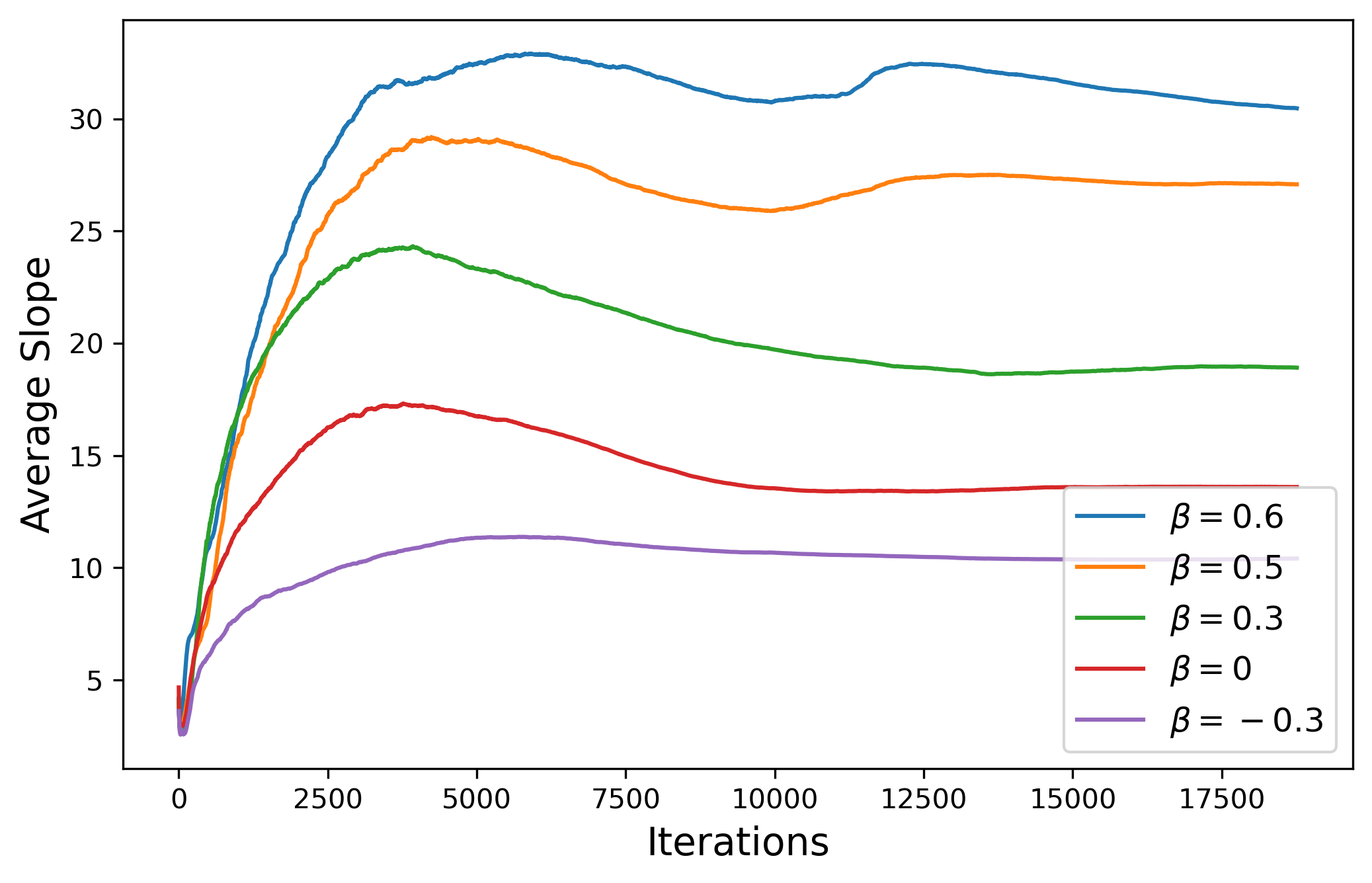}
        \label{MINST_Avg_Slope1}
    }
    \subfigure[Average slope,, sim vs. alt]{
	\includegraphics[width=2.5in]{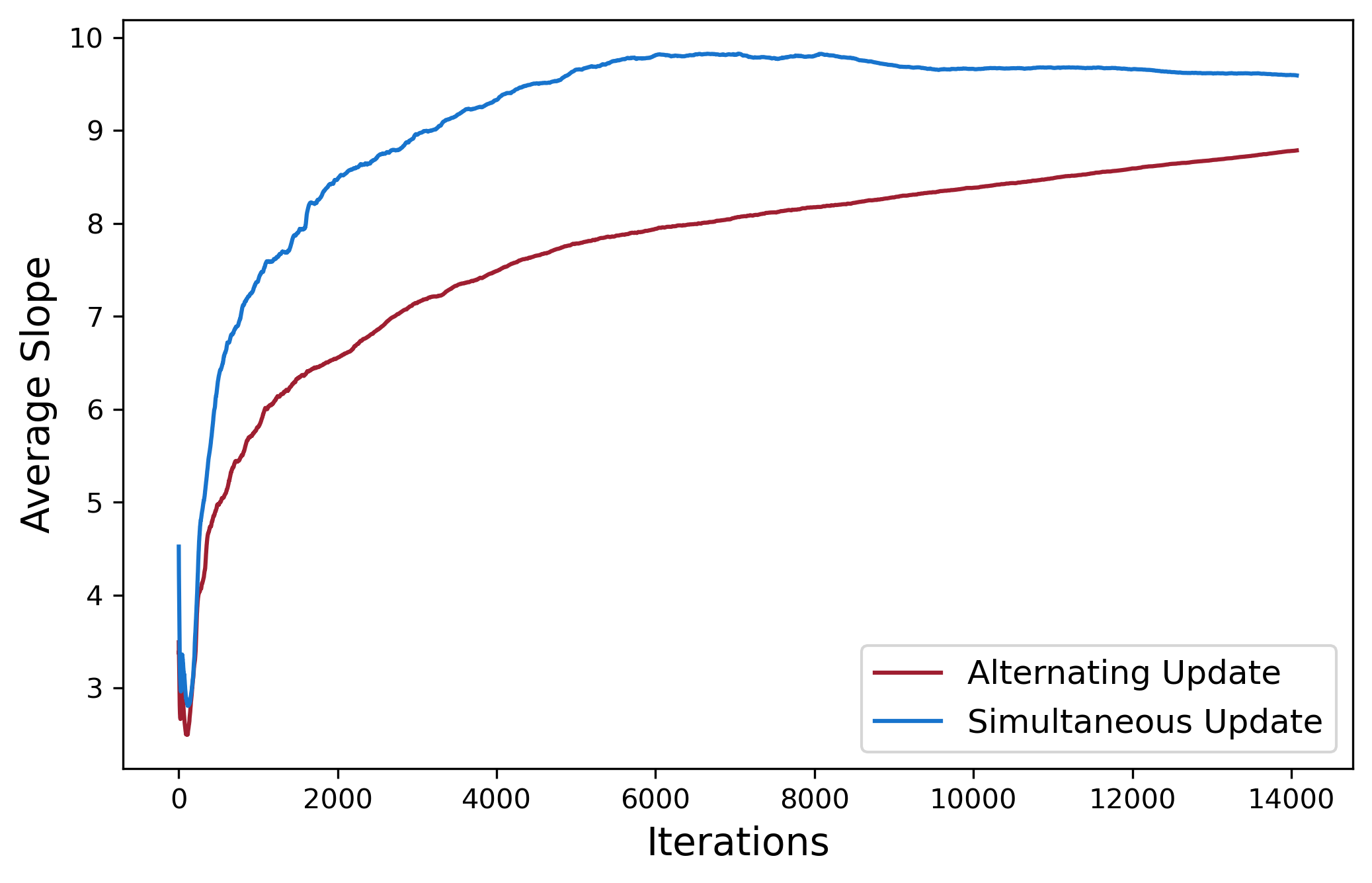}
        \label{MINST_Avg_Slope2}
    }
    \caption{Average slope of Vanilla GANs training.}
    \label{aver_slop_MNI}
\end{figure}

\newpage
\begin{figure}[t]
    \centering
    \begin{minipage}{0.25\textwidth}
        \centering
        \includegraphics[width=\textwidth]{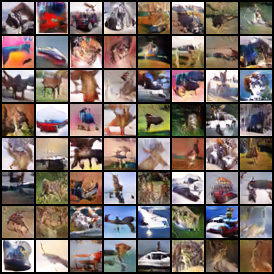}
        \\ 
        \small
    \end{minipage}
    \hspace{0.02\textwidth}
    \begin{minipage}{0.25\textwidth}
        \centering
        \includegraphics[width=\textwidth]{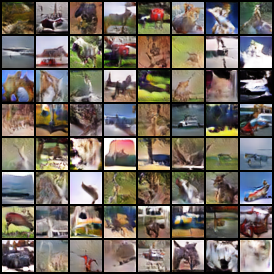}
        \\ 
        \small
    \end{minipage}
    \hspace{0.02\textwidth}
    \begin{minipage}{0.25\textwidth}
        \centering
        \includegraphics[width=\textwidth]{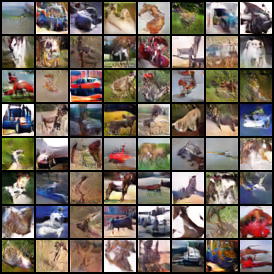}
        \\ 
        \small
    \end{minipage}
    \\[0.2in] 
    \begin{minipage}{0.25\textwidth}
        \centering
        \includegraphics[width=\textwidth]{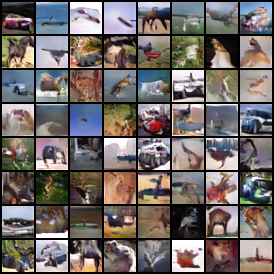}
        \\ 
        \small
    \end{minipage}
    \hspace{0.02\textwidth}
    \begin{minipage}{0.25\textwidth}
        \centering
        \includegraphics[width=\textwidth]{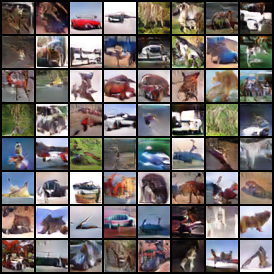}
        \\ 
        \small
    \end{minipage}
    \hspace{0.02\textwidth}
    \begin{minipage}{0.25\textwidth}
        \centering
        \includegraphics[width=\textwidth]{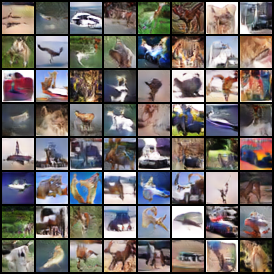}
        \\ 
        \small
    \end{minipage}
    \caption{Sampled images of Wasserstein GANs with CIFAR-10 dataset, trained by alternating updates.}
    \label{A_samf}
\end{figure}

\begin{figure}[h]
    \centering
    \begin{minipage}{0.25\textwidth}
        \centering
        \includegraphics[width=\textwidth]{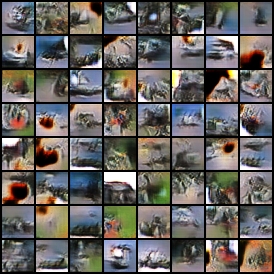}
        \\ 
        \small
    \end{minipage}
    \hspace{0.02\textwidth}
    \begin{minipage}{0.25\textwidth}
        \centering
        \includegraphics[width=\textwidth]{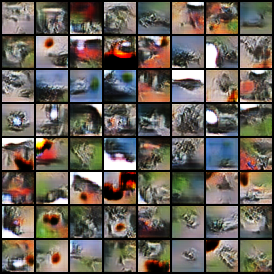}
        \\ 
        \small
    \end{minipage}
    \hspace{0.02\textwidth}
    \begin{minipage}{0.25\textwidth}
        \centering
        \includegraphics[width=\textwidth]{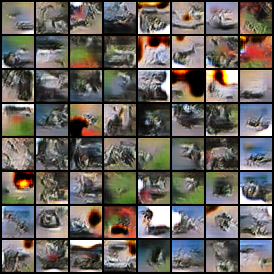}
        \\ 
        \small
    \end{minipage}
    \\[0.2in] 
    \begin{minipage}{0.25\textwidth}
        \centering
        \includegraphics[width=\textwidth]{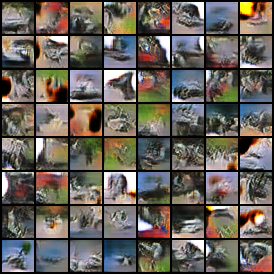}
        \\ 
        \small
    \end{minipage}
    \hspace{0.02\textwidth}
    \begin{minipage}{0.25\textwidth}
        \centering
        \includegraphics[width=\textwidth]{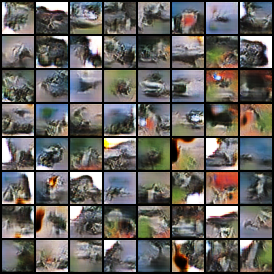}
        \\ 
        \small
    \end{minipage}
    \hspace{0.02\textwidth}
    \begin{minipage}{0.25\textwidth}
        \centering
        \includegraphics[width=\textwidth]{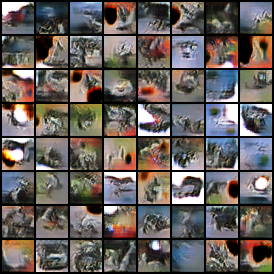}
        \\ 
        \small
    \end{minipage}
    \caption{Sampled images of Wasserstein GANs with CIFAR-10 dataset, trained by simultaneous updates.}
    \label{S_samf}
\end{figure}

\newpage
\begin{figure}[t]
    \centering
    \begin{minipage}{0.25\textwidth}
        \centering
        \includegraphics[width=\textwidth]{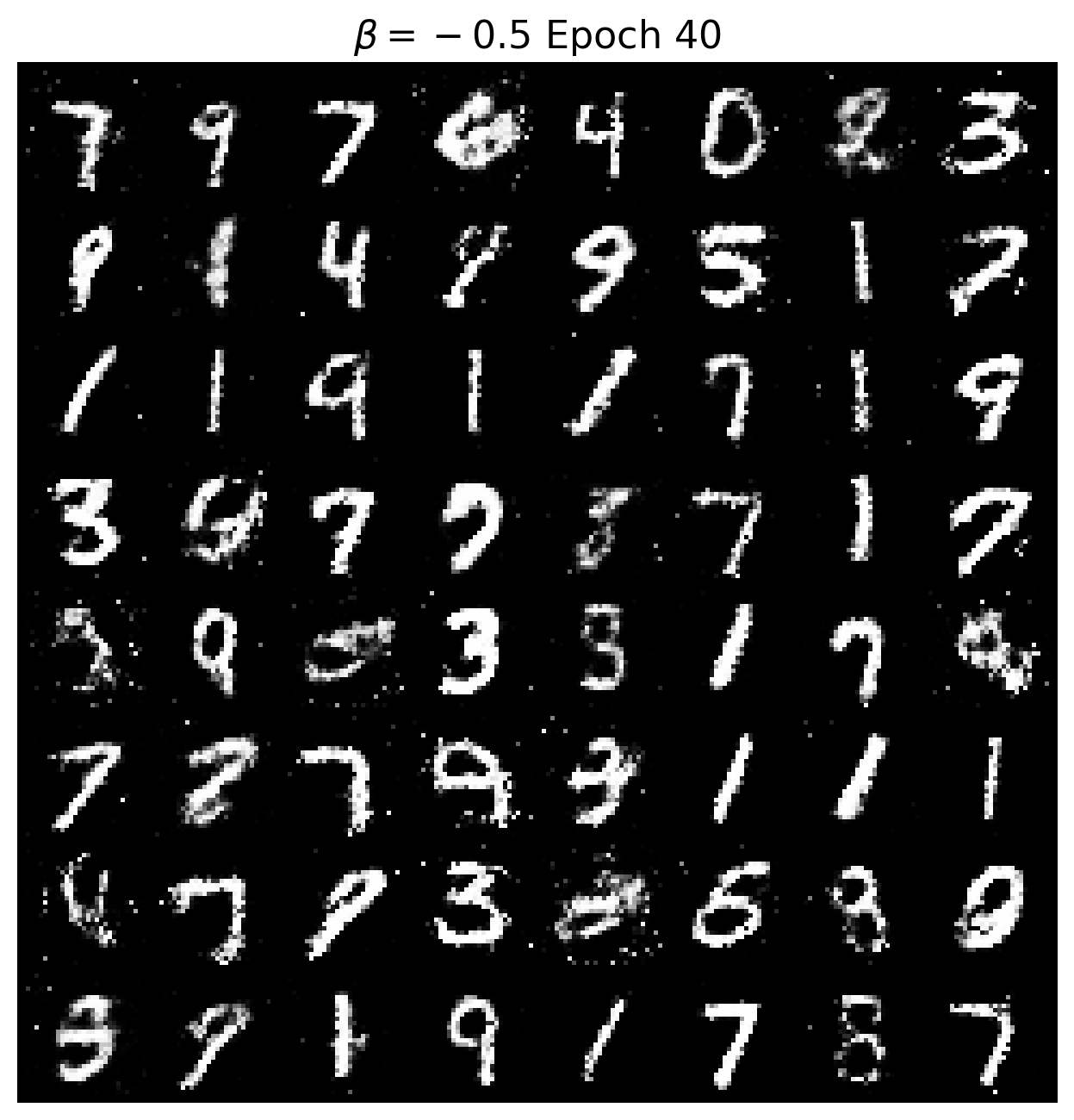}
        \\ 
        \small
    \end{minipage}
    \hspace{0.02\textwidth}
    \begin{minipage}{0.25\textwidth}
        \centering
        \includegraphics[width=\textwidth]{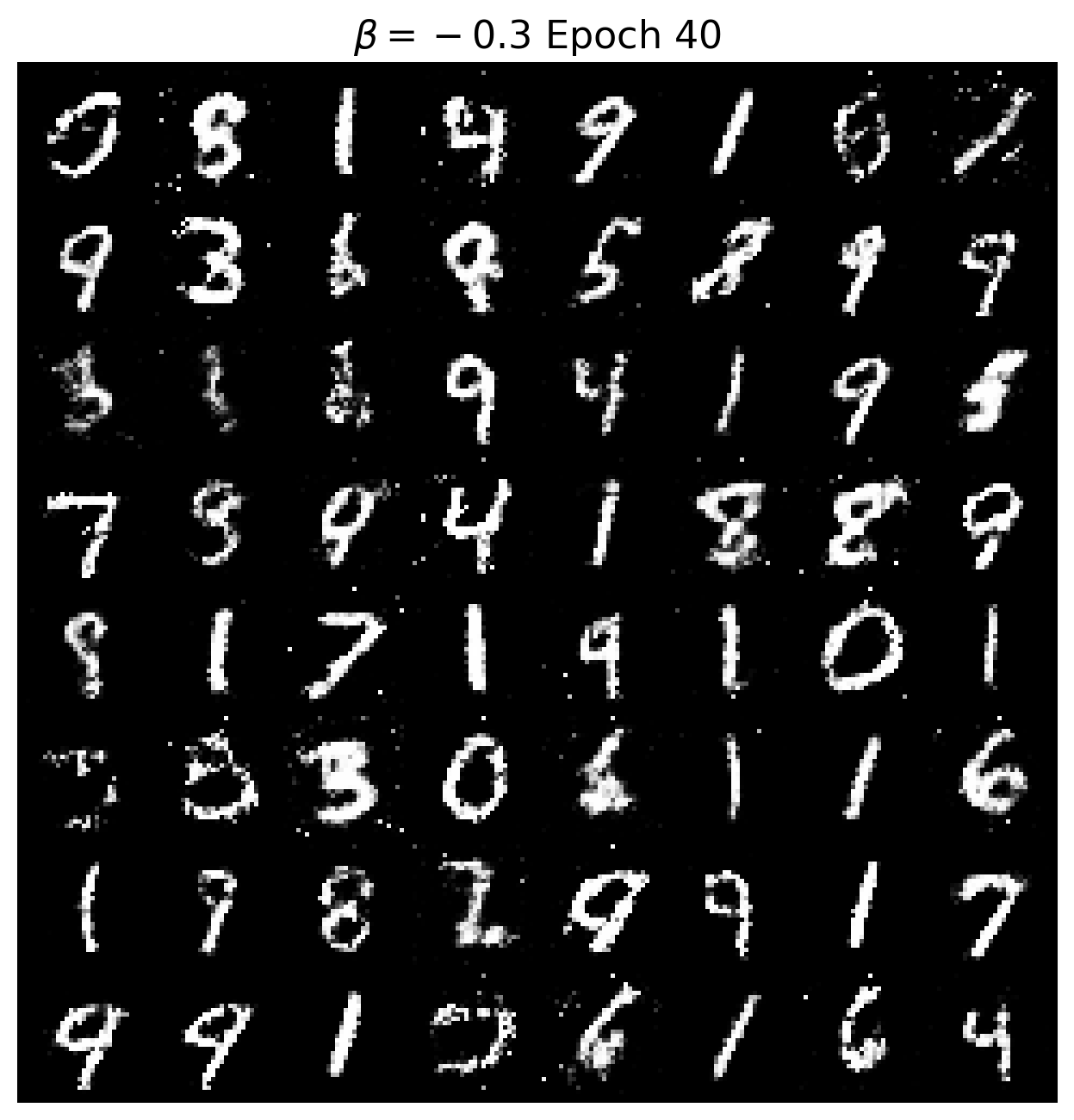}
        \\ 
        \small
    \end{minipage}
    \hspace{0.02\textwidth}
    \begin{minipage}{0.25\textwidth}
        \centering
        \includegraphics[width=\textwidth]{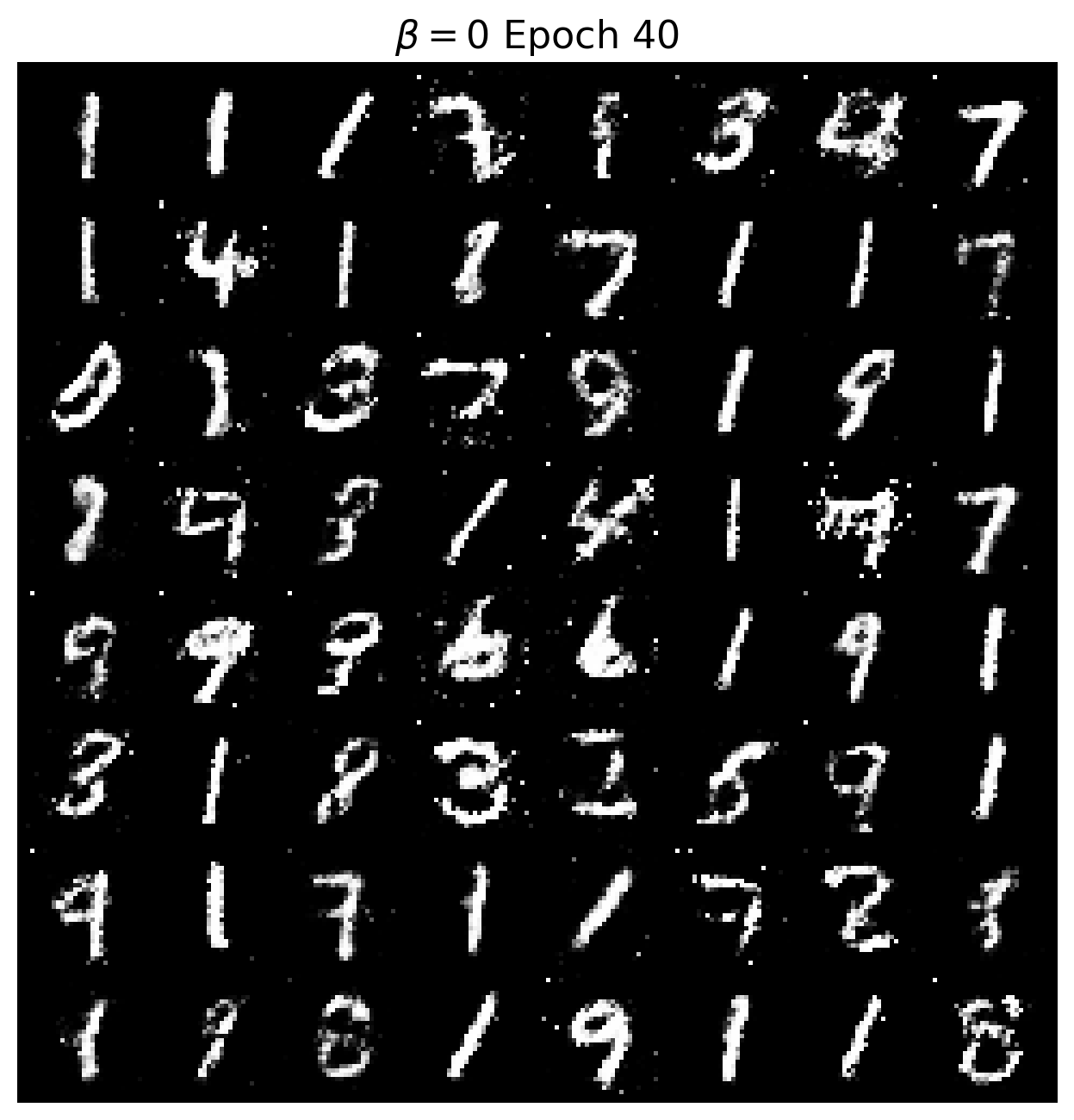}
        \\ 
        \small
    \end{minipage}
    \\[0.2in] 
    \begin{minipage}{0.25\textwidth}
        \centering
        \includegraphics[width=\textwidth]{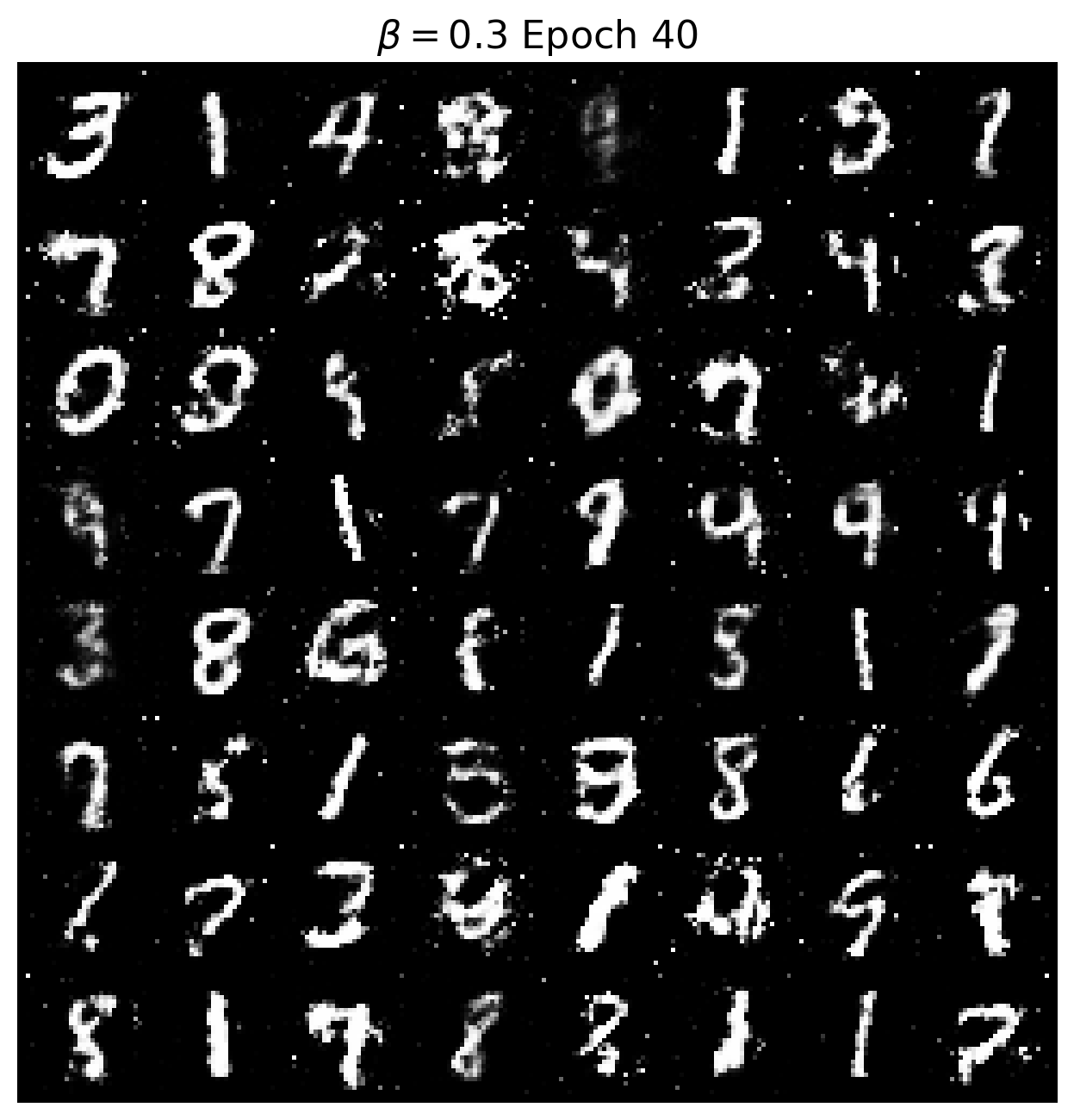}
        \\ 
        \small
    \end{minipage}
    \hspace{0.02\textwidth}
    \begin{minipage}{0.25\textwidth}
        \centering
        \includegraphics[width=\textwidth]{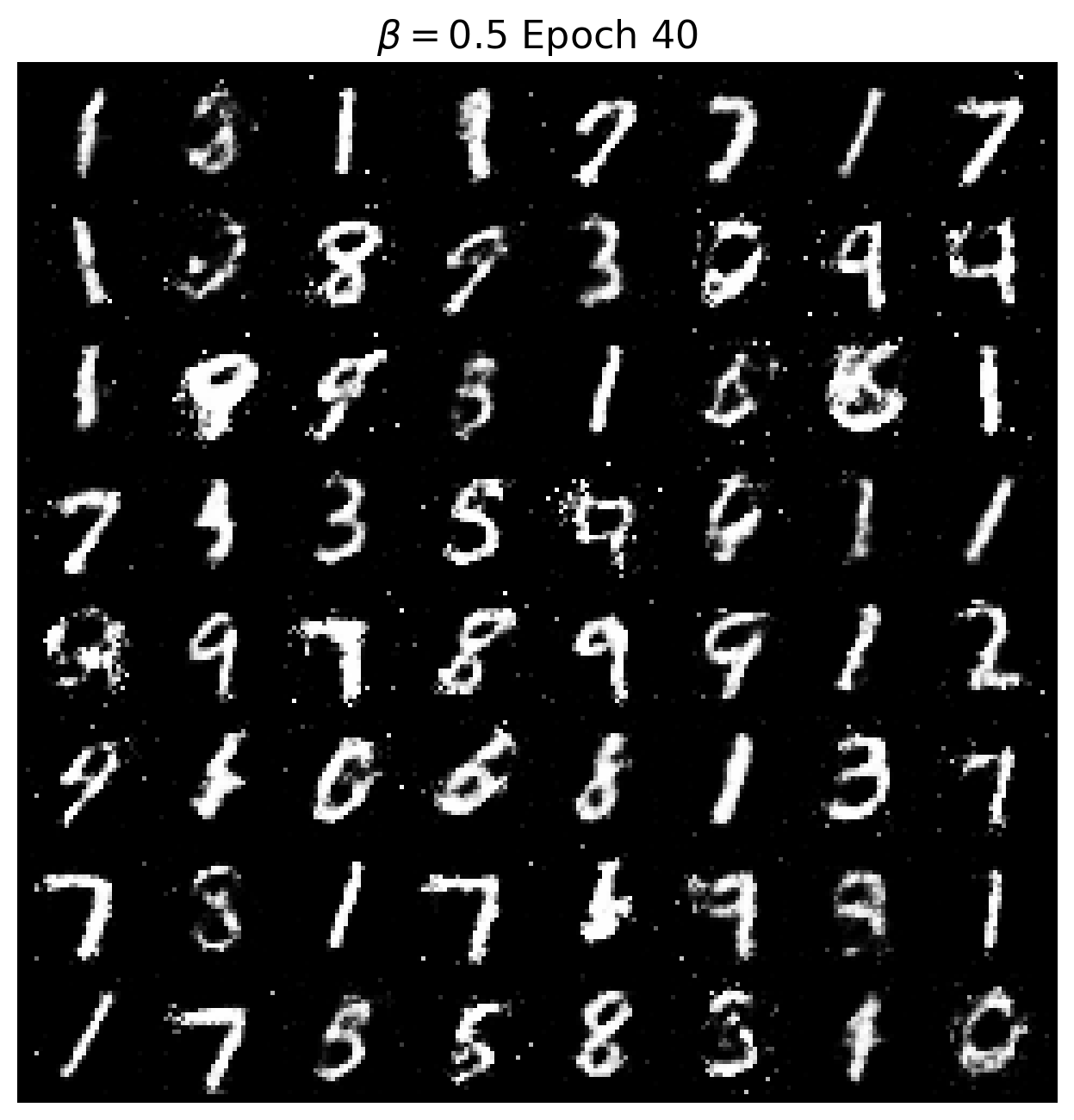}
        \\ 
        \small
    \end{minipage}
    \hspace{0.02\textwidth}
    \begin{minipage}{0.25\textwidth}
        \centering
        \includegraphics[width=\textwidth]{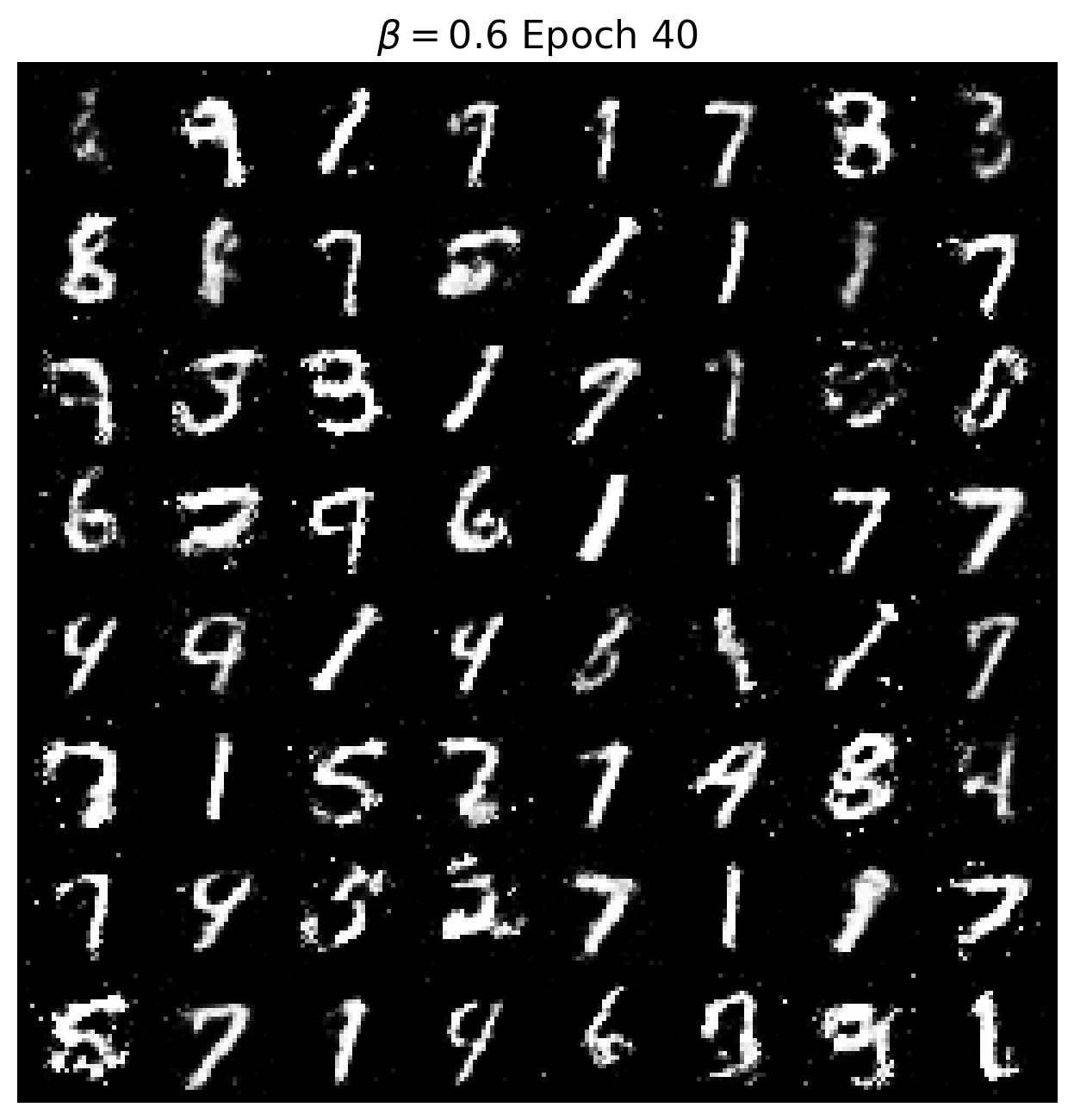}
        \\ 
        \small
    \end{minipage}
    \caption{Sampled images of Vanilla GANs with MNIST dataset, trained by alternating updates.}
    \label{minst}
\end{figure}

\begin{figure}[h]
    \centering
    \begin{minipage}{0.25\textwidth}
        \centering
        \includegraphics[width=\textwidth]{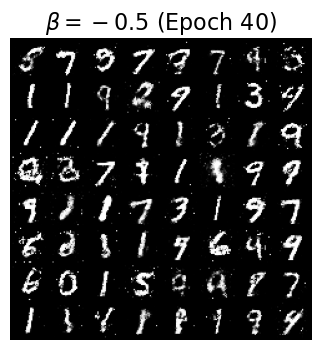}
        \\ 
        \small
    \end{minipage}
    \hspace{0.02\textwidth}
    \begin{minipage}{0.25\textwidth}
        \centering
        \includegraphics[width=\textwidth]{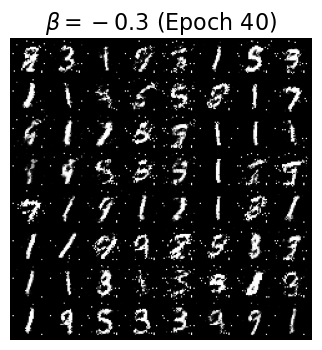}
        \\ 
        \small
    \end{minipage}
    \hspace{0.02\textwidth}
    \begin{minipage}{0.25\textwidth}
        \centering
        \includegraphics[width=\textwidth]{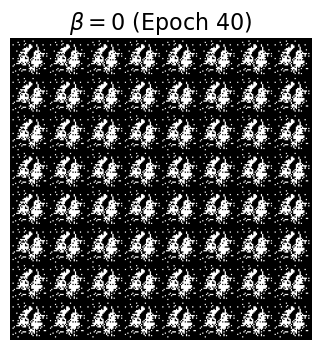}
        \\ 
        \small
    \end{minipage}
    \\[0.2in] 
    \begin{minipage}{0.25\textwidth}
        \centering
        \includegraphics[width=\textwidth]{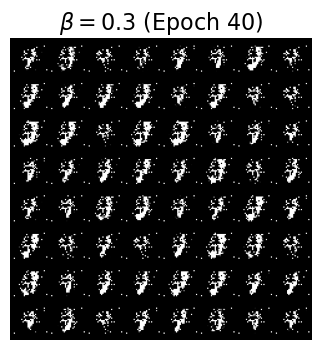}
        \\ 
        \small
    \end{minipage}
    \hspace{0.02\textwidth}
    \begin{minipage}{0.25\textwidth}
        \centering
        \includegraphics[width=\textwidth]{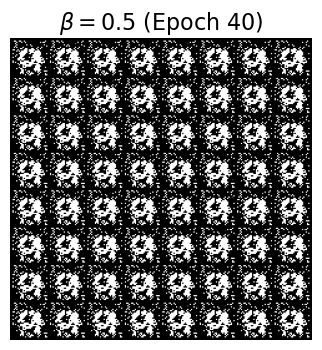}
        \\ 
        \small
    \end{minipage}
    \hspace{0.02\textwidth}
    \begin{minipage}{0.25\textwidth}
        \centering
        \includegraphics[width=\textwidth]{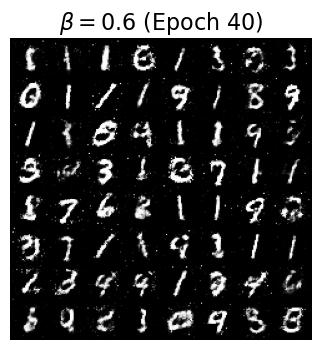}
        \\ 
        \small
    \end{minipage}
    \caption{Sampled images of Vanilla GANs with MNIST dataset, trained by simultaneous updates.}
    \label{minst2}
\end{figure}

\end{document}